\documentclass{amsart}

\usepackage{fullpage}
\usepackage[utf8]{inputenc}
\usepackage[T1]{fontenc}
\usepackage{amsmath,amstext,amsthm,amsfonts,amssymb}
\usepackage{booktabs}
\usepackage[dvipsnames]{xcolor}
\usepackage{tikz}
\usetikzlibrary{arrows.meta,positioning,calc,decorations.pathreplacing}
\usepackage{float}
\usepackage[skins,breakable]{tcolorbox}
\usepackage{microtype}
\usepackage[linesnumbered,ruled]{algorithm2e}
\usepackage[colorlinks=true, linkcolor=blue, citecolor=green, urlcolor=cyan]{hyperref}
\usepackage[capitalize,nameinlink]{cleveref}

\newcommand{\nocolorref}[2]{\hypersetup{linkcolor=black}\hyperref[#1]{#2}\hypersetup{linkcolor=blue}}

\newtheorem*{theor*}{Theorem}
\newtheorem*{prop*}{Proposition}
\newtheorem*{theorem*}{Theorem}
\newtheorem*{lemma*}{Lemma}

\newtheorem{theor}{Theorem}[section]
\newtheorem{assump}[theor]{Assumption}
\newtheorem{lemma}[theor]{Lemma}
\newtheorem{prop}[theor]{Proposition}
\newtheorem{cor}[theor]{Corollary}

\newtheorem{defi}[theor]{Definition}

\theoremstyle{definition}
\newtheorem{rem}[theor]{Remark}

\theoremstyle{plain}

\crefname{theor}{Theorem}{Theorems}
\Crefname{theor}{Theorem}{Theorems}
\crefname{assump}{Assumption}{Assumptions}
\Crefname{assump}{Assumption}{Assumptions}
\crefname{lemma}{Lemma}{Lemmas}
\Crefname{lemma}{Lemma}{Lemmas}
\crefname{prop}{Proposition}{Propositions}
\Crefname{prop}{Proposition}{Propositions}
\crefname{cor}{Corollary}{Corollaries}
\Crefname{cor}{Corollary}{Corollaries}
\crefname{problem}{Problem}{Problems}
\Crefname{problem}{Problem}{Problems}
\crefname{conjecture}{Conjecture}{Conjectures}
\Crefname{conjecture}{Conjecture}{Conjectures}
\crefname{defi}{Definition}{Definitions}
\Crefname{defi}{Definition}{Definitions}
\crefname{definition}{Definition}{Definitions}
\Crefname{definition}{Definition}{Definitions}
\crefname{claim}{Claim}{Claims}
\Crefname{claim}{Claim}{Claims}
\crefname{fact}{Fact}{Facts}
\Crefname{fact}{Fact}{Facts}
\crefname{rem}{Remark}{Remarks}
\Crefname{rem}{Remark}{Remarks}
\crefname{remark}{Remark}{Remarks}
\Crefname{remark}{Remark}{Remarks}

\newcommand{\Prob}{\mathbb{P}}
\newcommand{\E}{\mathbb{E}}

\newcommand{\cal}{\mathcal}

\newcommand{\bfV}{\mathbf{V}}
\newcommand{\D}[2]{{\rm d}({#1},{#2})}
\newcommand{\rmp}{\mathrm{p}}
\newcommand{\Ksg}{{\rm K}_{\mathrm{sg}}}
\renewcommand{\sp}{\mathsf{s}}
\newcommand{\cn}[2]{\nocolorref{eq:def-normalized-average}{{\rm N}_{#1}(#2)}}
\newcommand{\acn}[2]{\nocolorref{def:latent-empirical-link-averages}{{\rmp}_{#1}(#2)}}
\newcommand{\pav}[1]{\overline{N}\!\left(#1\right)}
\newcommand{\apav}[1]{\overline{\rmp}\!\left(#1\right)}

\newcommand{\epsnav}[2]{\nocolorref{eq:def-fluctuation-scale}{\varepsilon_{#1}(#2)}}
\newcommand{\Ept}[2]{\nocolorref{eq:def-lower-occupancy-event}{\mathcal E_{\tr{pt}}(#1,#2)}}
\newcommand{\Enavi}[3]{\nocolorref{eq:def-navigation-event}{\mathcal E_{\rm link}(#1,#2;#3)}}
\newcommand{\Enn}[3]{\nocolorref{eq:def-pair-average-event}{\mathcal E_{\tr{avg}}(#1,#2,#3)}}
\newcommand{\step}[1]{\medskip  \noindent \underline{#1.\,}}
\newcommand{\tr}[1]{\text{\tiny{\rm #1}}}
\newcommand{\tref}[1]{\text{\tiny{\ref{#1}}}}

\newcommand{\keyterm}[1]{\textcolor{PineGreen}{\emph{#1}}}
\newcommand{\termref}[2]{\hyperref[#1]{\keyterm{#2}}}
\newcommand{\ora}{\mathrm{ora}}
\newcommand{\loc}{\mathrm{loc}}
\newcommand{\win}{\mathrm{win}}
\newcommand{\lowerphiregularity}{\termref{def:lower-phi-regularity}{lower \(\phi\)-regularity}}
\newcommand{\lowerregularityfunction}{\termref{def:lower-phi-regularity}{lower-regularity function}}
\newcommand{\latentobservationmodel}{\termref{def:graph_model}{latent-distance observation model}}
\newcommand{\observedweightedgraph}{\termref{rem:observed-weighted-graph}{observed weighted graph}}
\newcommand{\observededgeweight}{\termref{def:graph_model}{observed edge weight}}
\newcommand{\locallinkassumption}{\termref{assump:link-function}{local link assumption}}
\newcommand{\localbilipschitzwindow}{\termref{assump:link-function}{local bi-Lipschitz window}}
\newcommand{\latentlinkaverage}{\termref{def:latent-empirical-link-averages}{latent link average}}
\newcommand{\empiricallinkaverage}{\termref{def:latent-empirical-link-averages}{empirical link average}}
\newcommand{\linkaverageconcentrationevent}{\termref{eq:def-navigation-event}{link-average concentration event}}
\newcommand{\linkaveragethresholdrefinement}{\termref{prop:abstract-score-threshold-refinement}{link-average threshold refinement}}
\newcommand{\linkaverageseparation}{\termref{def:link-average-separated-seed}{link-average separation}}
\newcommand{\latentscorefunction}{\latentlinkaverage}
\newcommand{\normalizedaverage}{\empiricallinkaverage}
\newcommand{\pairaverage}{\termref{def:pair-averages}{pair average}}
\newcommand{\loweroccupancyevent}{\termref{eq:def-lower-occupancy-event}{lower occupancy event}}
\newcommand{\navigationevent}{\linkaverageconcentrationevent}
\newcommand{\uniformpairaverageevent}{\termref{eq:def-pair-average-event}{uniform pair-average event}}
\newcommand{\internalaveragecandidate}{\termref{def:internal-average-candidate}{internal-average candidate}}
\newcommand{\candidatefamily}{\termref{prop:inner-density-candidates}{candidate family}}
\newcommand{\candidatenet}{\termref{prop:candidate-refinement-net}{candidate net}}
\newcommand{\windowsafe}{\termref{def:window-safe}{window-safe}}
\newcommand{\fuzzywindoworacle}{\termref{def:fuzzy-window-oracle}{fuzzy window oracle}}
\newcommand{\scorethresholdrefinement}{\linkaveragethresholdrefinement}
\newcommand{\ORA}{\termref{rem:oracle-local-routes}{Oracle Route}}
\newcommand{\LOC}{\termref{rem:oracle-local-routes}{Local Route}}
\newcommand{\oracleroute}{\termref{rem:oracle-local-routes}{oracle route}}
\newcommand{\localroute}{\termref{rem:oracle-local-routes}{local route}}
\newcommand{\threeblockextraction}{\termref{thm:three-block-extraction}{three-block extraction}}

\newcommand{\chainproperty}{\termref{def:chain-property-rho0}{chain property}}

\title{Denoising Distances in Metric Measure Spaces}

\author{Han Huang}
\address{Department of Mathematics, University of Missouri, Columbia, Columbia, MO 65203}
\email{hhuang@missouri.edu}

\author{Pakawut Jiradilok}
\address{Science Division, Mahidol University International College, Nakhon Pathom 73170, Thailand}
\email{pakawut.jir@mahidol.edu}

\author{Elchanan Mossel}
\address{Department of Mathematics, Massachusetts Institute of Technology, Cambridge, MA 02139}
\email{elmos@mit.edu}

\begin{document}

\begin{abstract}
    Recent work studied the problem of finding clusters and denoising pairwise
distances from noisy distances of points sampled on a manifold. We study the same
problems in more general metric measure spaces under a lower mass condition. We
give an algorithm that extracts large localized clusters around every sampled
point, which can be used to denoise distances, with near-linear running time in
the dense regime for fixed target distance error $r$. When the target distance error \(r\) is
allowed to vanish as \(n\to\infty\), we identify the sharp information-theoretic
scale for achieving distance error \(r\), suggesting a statistical-computational
gap for high-accuracy denoising beyond the Riemannian setting.

\end{abstract}

\maketitle

\section{Introduction}
\label{sec:introduction}
\subsection*{A general model for noisy distance data.}
A common way to model noisy metric data is through latent-distance observations.
There are unobserved points \(X_1,\ldots,X_n\) in a metric space
\((M,{\rm d})\), sampled according to a probability measure \(\mu\). Instead of
observing the distances \({\rm d}(X_u,X_v)\) directly, we observe noisy
pairwise measurements $Z_{u,v}$ between vertices \(u\) and \(v\). The distribution of
these measurements depends on the latent distance, typically through a monotone
link function. This viewpoint includes latent-space network models, random
geometric graphs, and latent-distance graph models
\cite{HoffRafteryHandcock2002LatentSpace,Penrose2003RGG,ArayaDeCastro2019LatentDistance}.
The statistical goal is to use these noisy pairwise observations to recover the
latent metric, or at least to estimate latent distances up to a prescribed
accuracy.

\step{Lower \(\phi\)-regularity}
A standard assumption that makes local metric recovery statistically meaningful
is a lower bound on the mass of small balls. Conditions of this type appear in
nearest-neighbor analysis, nonparametric estimation, random geometric graphs,
graph-based manifold learning, and analysis on metric measure spaces, under
names such as lower mass conditions, small-ball assumptions, minimal mass
assumptions, or lower volume growth
\cite{ChaudhuriDasgupta2014NN,GadatKleinMarteau2016KNN,Penrose2003RGG,TenenbaumDeSilvaLangford2000Isomap,BelkinNiyogi2003LaplacianEigenmaps,HeinAudibertLuxburg2007GraphLaplacians,Heinonen2001LecturesMetricSpaces}.
At a high level, they ensure that local
neighborhoods contain enough probability mass for statistical estimates to
concentrate uniformly across the space.

In our notation, \lowerphiregularity{} means that there is a nondecreasing
\lowerregularityfunction{} \(\phi\) such that
\[
\mu(B(x,r))\ge \phi(r)
\]
for every support point \(x\) and every relevant small radius \(r\), where $B(x,r)$ denotes the open ball of radius $r$ centered at $x$. In a
\(d\)-dimensional space, one can think of \(\phi(r)\) as proportional to
\(r^d\), meaning that every radius-\(r\) ball has mass at least a constant times
\(r^d\).

\step{Local link assumption}
The second structural assumption concerns the link between latent distance and
observed edge statistics. By link function, we mean a function $\rmp$ such that our noisy observation $Z_{u,v}$ has expectation $\E[Z_{u,v}\mid X_u,X_v] =  \rmp(\D{X_u}{X_v})$.
Monotonicity says that the observations are ordered by
distance: in similarity graphs, nearby points tend to produce stronger or more
frequent edges, while in noisy distance models the expected observation
increases with distance. Monotonicity alone, however, is not enough for metric
recovery at small scales.
Close points may become statistically indistinguishable from one another.

The \locallinkassumption{} is the following local identifiability condition:
the link function is bi-Lipschitz on a neighborhood \([0,r_\rmp]\) of zero.
We refer to this neighborhood as the \localbilipschitzwindow{}. This means
that, for small distances, a change in latent distance of order \(r\) produces
a change in the expected observation of order \(r\).

The assumption is deliberately local. We do not require the link to remain
informative at large distances; it may flatten, saturate, or otherwise lose
metric information outside the \localbilipschitzwindow{}. This is natural in similarity
graphs, where faraway points may all have nearly indistinguishable connection
probabilities.

\step{Latent-distance observation model}
We work with a general \latentobservationmodel{}. The latent points
\(X_v\) are sampled independently from a metric probability space
\((M,{\rm d},\mu)\) satisfying the \lowerphiregularity{} condition above. For each pair
\(u\neq v\), there is a noisy weighted measurement whose mean depends only on
the latent distance \({\rm d}(X_u,X_v)\). Informally,
\[
Z_{u,v} = B_{u,v} \tilde{Z}_{u,v},
\]
where $\tilde{Z}_{u,v}$ is a subgaussian random variable with mean $\rmp({\rm d}(X_u,X_v))$, and $B_{u,v}$ is a Bernoulli random variable with parameter $\sp$, independent of $\tilde{Z}_{u,v}$, which models the sparsity of the \observedweightedgraph{} ($\sp$ can be $n$ dependent). Meaning that with probability $\sp$, the noisy measurement is observed, and with probability $1-\sp$, it is $0$. The link function \(\rmp\) is a monotone link function satisfying the \locallinkassumption{} described above, and bounded on $[0, {\rm diam}(M)]$. Further, we assume $Z_{u,v}$ are independent across pairs
up to symmetry $Z_{u,v}=Z_{v,u}$, conditional on the latent points.
The precise formulation is given in
Definition~\ref{def:graph_model} and
Assumptions~\ref{assump:lower-regularity}, \ref{assump:link-function}, and
\ref{assump:graph_model}.

For terminology, we view the observed data as an \observedweightedgraph{}:
a weighted graph \(G\) on vertex set \(\{1,\ldots,n\}\), where the weight of
the edge \((u,v)\) is \(Z_{u,v}\).  The latent points \(X_v\) are unobserved.
This leads to the central question of the paper:

\emph{Under these two assumptions, and without any knowledge of the structure of the latent metric
space, how much can we recover latent distance between the latent points from the \observedweightedgraph{}?}

\subsection*{Examples}
Before stating our results, we discuss several examples that fit into this framework.

\step{Noisy distance data on a manifold}
The first motivating example is noisy distance data on a manifold. Suppose that
\(M\) is a \(d\)-dimensional Riemannian manifold, the latent points
\(X_1,\ldots,X_n\) are sampled from a regular probability measure on \(M\), and
we observe
\[
Z_{u,v}
=
q({\rm d}(X_u,X_v))+\xi_{u,v},
\]
where the raw mean link is monotone increasing and the noises \(\xi_{u,v}\)
are centered subgaussian random variables. For instance, the natural raw link
is \(q(t)=t\).  The usual regularity assumptions on a \(d\)-dimensional manifold
imply that small balls have
mass at least a constant times \(r^d\), so in our notation one may take
\(\phi(r)\gtrsim r^d\).

\step{Soft random geometric graphs}
The Bernoulli soft random geometric graph is a basic example
\cite{Penrose2003RGG,DucheminDeCastro2022RGGSurvey}. In this model,
\(Z_{u,v}\in\{0,1\}\), and
\[
\mathbb P(Z_{u,v}=1\mid X_u,X_v)
=
\mathsf{s}\,\rmp({\rm d}(X_u,X_v)),
\]
where \(\rmp:\mathbb R_+\to[0,1]\) is non-increasing. Thus nearby latent points
are more likely to be connected by an edge. This captures, for instance, models
in which two devices are more likely to communicate when their latent distance
is small, or more generally similarity graphs in which edge probabilities
decrease with distance.

These two examples are the main objects in recent work on manifold distance
denoising and random geometric graphs. Huang, Jiradilok, and
Mossel~\cite{HuangJiradilokMossel2024GeometryRGG,HuangJiradilokMossel2025RiemannianMetrics,HuangJiradilokMossel2026BeyondVolumetric}
and Fefferman, Marty, and Ren~\cite{FeffermanMartyRen2025NoisyDistances} study
closely related problems for points sampled from \(d\)-dimensional Riemannian
manifolds or similar geometric spaces, under regularity assumptions that imply
\lowerphiregularity{} of the form
\[
\mu(B(x,r))\gtrsim r^d.
\]
Their algorithms recover latent distances at polynomial rates $n^{-\Theta(1/d)}$, where $n$ is the number of sampled points and $d$ is the intrinsic dimension. Latent-distance estimation in random geometric graphs has also been studied through spectral and graph-distance methods in Euclidean or
spherical settings
\cite{ArayaDeCastro2019LatentDistance,AriasCastroChannarondPelletierVerzelen2021GraphDistances}, though these methods usually assume concrete knowledge of the latent geometry but in the hard disc setting, where the link function is an indicator of whether the distance is below a threshold.

\step{Beyond manifolds}
Beyond manifold models from~\cite{HuangJiradilokMossel2024GeometryRGG,HuangJiradilokMossel2025RiemannianMetrics,HuangJiradilokMossel2026BeyondVolumetric,FeffermanMartyRen2025NoisyDistances}, our framework also covers a variety of non-Euclidean latent
geometries, provided the sampling measure satisfies appropriate
\lowerphiregularity{}.

This includes more singular metric measure spaces, such as compact
metric trees, finite metric graphs, stratified spaces, or Ahlfors-regular
fractal sets, the Heisenberg group with the Carnot–Carath\'{e}odory metric as well as subsets of Euclidean spaces with rough boundaries.
In such examples, the \lowerregularityfunction{} \(\phi\) records the relevant local
mass growth; for instance, an Ahlfors-regular fractal with dimension
\(d_f\) has \(\phi(r)\) comparable to \(r^{d_f}\), where \(d_f\) need not be an
integer.

\step{Stochastic block models}
Here we illustrate an example at the opposite extreme. Let \(M=\{a_1,\ldots,a_k\}\) be a finite metric space with
\[
{\rm d}(a_i,a_i)=0,
\qquad
{\rm d}(a_i,a_j)=1
\quad (i\neq j),
\]
and let \(\mu\) be a probability distribution on \(M\). Sampling
\(X_v\sim\mu\) assigns each vertex to one of \(k\) latent classes. If
\[
\mathbb P(Z_{u,v}=1\mid X_u,X_v)
=
\mathsf{s}\,\rmp({\rm d}(X_u,X_v)),
\]
then vertices in the same class connect with probability
\(
\mathsf{s}\,\rmp(0),
\)
while vertices in different classes connect with probability
\(
\mathsf{s}\,\rmp(1).
\)
Thus the model reduces to a stochastic block model with \(k\) blocks, within
block probability \(\mathsf{s} \, \rmp(0)\), and between-block probability
\(\mathsf{s} \, \rmp(1)\). This connects the finite-metric special case to the
classical stochastic block model
\cite{HollandLaskeyLeinhardt1983SBM,Abbe2017SBMSurvey}. In this finite
example, \lowerphiregularity{} is simply a lower bound on the smallest block
mass. This example illustrates that the metric measure framework is not
restricted to smooth spaces; it also includes highly discrete latent
geometries.

\step{Monotonicity convention}
The preceding examples illustrate the two possible orientations:
noisy-distance observations naturally have a non-decreasing mean link, while
similarity graphs naturally have a non-increasing link. For real-valued
observations this orientation is immaterial. If the raw observations have
conditional mean
\[
\mathbb E[Z_{u,v}\mid X_u,X_v]
=
q(\D{X_u}{X_v})
\]
with \(q\) non-decreasing, we replace \(Z_{u,v}\) by \(-Z_{u,v}\) and use the
link function \(\rmp=-q\). Then \(\rmp\) is non-increasing, with the same local
bi-Lipschitz constants up to sign. In all formal statements and proofs below,
we adopt this non-increasing similarity convention, relabeling the transformed
observations as \(Z_{u,v}\).

\subsection*{Results}
\step{From distance recovery to local cluster extraction}
A natural way to estimate latent distances is to first construct local
neighborhoods around the sampled points.  Suppose, informally, that for each
vertex \(v\) we could find (by examining the \observedweightedgraph{}) a large set \(U_v\), whose latent points all lie
within a small radius \(r\) of \(X_v\).  Then \(U_v\) can be used as a local
statistical proxy for \(X_v\).  For two vertices \(v,w\), the average observed
weight between \(U_v\) and \(U_w\) concentrates around its conditional
expectation; because the clusters are localized, this expectation is close to
\(\rmp({\rm d}(X_v,X_w))\).  Thus, once such local clusters are available,
distance recovery can be reduced to averaging edge weights between clusters and
inverting the link function if it is known.

Our main technical result is a cluster-extraction theorem: under
\lowerphiregularity{} and the \locallinkassumption{}, we construct large localized clusters
around all target vertices.  The corresponding distance-denoising statements
then follow as consequences.

In the context below, we always assume any algorithm has access to the \observedweightedgraph{},
the parameters of $\rmp$ (bi-Lipschitz constant, the radius where $\rmp$ is bi-Lipschitz, and the $\max_{t \in [0,\operatorname{diam}(M)]} |\rmp(t)|$), the \lowerregularityfunction{} \(\phi\), and the sparsity parameter \(\mathsf{s}\), but not the latent points, any other structural information about the latent metric space, or the link function \(\rmp\) beyond the stated assumptions.

\begin{theor}[Informal statement of Theorem \ref{theor:main}]
There exists an algorithm that takes as input the \observedweightedgraph{} and a target scale \(r\), and outputs, with high
probability, for every target vertex \(v\) a set \(U_v\) satisfying
\[
v\in U_v,
\qquad
X_{U_v}\subseteq B(X_v,Cr),
\qquad
|U_v|\ge c\,\phi(r)n,
\]
provided \(0<\sqrt r\le c\min\{1,r_\mu,r_\rmp\}\) and
\[
\mathsf{s}n\,\phi(r)r^2
\ge
C\Lambda \log n\,.
\]
The algorithm has running time
$$
    \exp\left( C \log^2\!\left(\frac{3}{r\phi(r)}\right)
    \frac{1}{\sp r^2}\right)n\Lambda\log n\,.
$$
Further, the collection
$$
    \{U_v\}_{v=1}^n
$$
without counting multiplicity is bounded by $$
    \exp\left( C \log^2\!\left(\frac{3}{r\phi(r)}\right)
    \frac{1}{\sp r^2}\right).
$$
The constants \(C,c\), and the required lower bound on \(\Lambda\), depend only
on the model parameters.
\end{theor}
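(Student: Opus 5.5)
We would first fix a concentration framework and then build the clusters by an iterative refinement whose candidate sets are enumerated, so that only a bounded number of distinct sets ever occurs. Conditional on the latent points, \lowerphiregularity{} together with a Chernoff bound and a union bound over vertices gives, with high probability, that every ball \(B(X_v,\rho)\) with \(\rho\ge r\) contains at least \(\tfrac12\phi(\rho)n\) sampled points. This is the \loweroccupancyevent{}.

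For a candidate set \(S\) of vertices, define the \empiricallinkaverage{} of a vertex \(w\) against \(S\) as \(\frac{1}{\sp|S|}\sum_{u\in S}Z_{w,u}\). Its latent counterpart is the average of \(\rmp(\D{X_w}{X_u})\) over \(u\in S\). By subgaussianity with sparsity \(\sp\), the deviation between the two is of order \(\sqrt{\Lambda\log n/(\sp|S|)}\). Whenever \(|S|\gtrsim\phi(r)n\), the hypothesis \(\sp n\phi(r)r^2\ge C\Lambda\log n\) makes this deviation at most \(c\,r\). We need this uniformly over all \(w\) and over every candidate set \(S\) the algorithm might use, which gives the \linkaverageconcentrationevent{}. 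The union bound over sets is only affordable if the sets are chosen from a family of size \(\exp(C\log^2(3/(r\phi(r)))/(\sp r^2))\). This requirement is what dictates the complexity bound, and I would make the candidate sets depend on data disjoint from (or independent of) the edges being tested, for instance by sample splitting.

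The core step is a \linkaveragethresholdrefinement{}. Suppose \(S\) is a set of size \(\gtrsim\phi(r)n\) whose latent points lie in \(B(X_v,\rho)\) with \(\rho\le r_\rmp\), and it contains a substantial fraction of points near \(X_v\). Thresholding the empirical link average at a level determined by the bi-Lipschitz constants then yields a new set with radius about \(\rho/2+Cr\) and size still \(\gtrsim\phi(r)n\). The latent reason is that \(\rmp\) is bi-Lipschitz on \([0,r_\rmp]\), so link averages against a localized set separate points at distance \(\lesssim\rho/2\) from points at distance \(\gtrsim\rho\) by a margin of order \(\rho\). This margin exceeds the \(c r\) fluctuation. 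Iterating \(O(\log(1/r))\) times shrinks an initial coarse cluster at scale \(\sqrt r\) (this is where \(\sqrt r\le c\min\{1,r_\mu,r_\rmp\}\) enters) down to scale \(Cr\). Including \(v\) itself and checking membership preserves \(v\in U_v\).

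The main obstacle is initialization: producing, without knowing latent positions, a candidate family that contains a \linkaverageseparation{}-satisfying seed near every vertex. My first attempt would be to sample small random subsets of size \(m\approx\log(3/(r\phi(r)))/(\sp r^2)\). These are enough for the averages to resolve scale \(r\) after accounting for sparsity, and I would enumerate all subsets of a random pool of size \(\approx\log(3/(r\phi(r)))\cdot(\text{number of needed seeds})\). This enumeration gives the \(\exp(C\log^2(\cdot)/(\sp r^2))\) count both for the number of distinct \(U_v\) and for the running-time factor. The next step assigns each vertex \(v\) to the candidate whose refined output passes an internal-average consistency test (the \internalaveragecandidate{} check). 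That test is evaluated by a linear scan costing \(n\Lambda\log n\) per candidate, which gives the stated running time. The subtle point is proving that some enumerated seed is sufficiently localized around each \(X_v\) with positive probability, using the lower mass bound at scale \(r\), and I expect most of the work to go there.
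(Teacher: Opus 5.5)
Your core shrinking step fails. The latent link average $\frac1{|S|}\sum_{u\in S}\rmp(\D{X_u}{y})$ determines $\rmp(\D{X_v}{y})$ only up to an error of order $L_\rmp\bar d_S(X_v)$, where $\bar d_S$ is the average distance of Definition~\ref{def:average-distance}. So a threshold can only be guaranteed to keep $B(X_v,a)$ and reject the complement of $B(X_v,b)$ when $\ell_\rmp(b-a)\gtrsim L_\rmp\bar d_S(X_v)$. Thresholding transports the seed's resolution to a fresh block; it does not sharpen it. In the paper every refinement round in fact \emph{enlarges} the radius, $r\to K_\star r\to K_\star^2r/2$.

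To reach radius $\rho/2$ you would need $\bar d_S(X_v)\lesssim(\ell_\rmp/L_\rmp)\rho$, i.e.\ $S$ already concentrated well inside $B(X_v,\rho)$. The output does not inherit this property. It contains the sample points of $B(X_v,a)$ plus possibly far more mass in the annulus up to radius $b$, and lower $\phi$-regularity gives no upper or doubling bound to prevent that.

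A concrete obstruction: let $M$ consist of $k$ points at mutual distance $\delta$, each of mass $1/k$, with $Cr<\delta<\sqrt r\le c\,r_\rmp$. A coarse cluster at scale $\sqrt r$ can be the whole sample. Against a seed spread evenly over the atoms, every atom has the same latent link average $\frac1k\rmp(0)+(1-\frac1k)\rmp(\delta)$, up to negligible sampling fluctuations. Hence every threshold keeps all atoms or none, and your iteration sits at a fixed point. The theorem, however, requires clusters of radius $Cr<\delta$, i.e.\ single atoms.

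Everything therefore rests on the step you flag as the subtle point: producing, and \emph{recognizing}, seeds whose average distance to a center is already $O(r)$ and which carry no mass beyond the bi-Lipschitz window. Random small subsets are not localized. Enumerating all subsets of a pool helps only if you can certify which ones are localized. You also cannot test all $\exp(C\log^2(\cdot)/(\sp r^2))$ enumerated seeds against fresh vertices: a seed of size $m\approx\log(\cdot)/(\sp r^2)$ gives per-vertex tails of only $\exp(-c\,\sp mr^2)$, far too weak for that union bound.

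The paper's mechanism for this is the internal pair-average test.
\begin{itemize}
\item Among subsets of size about $\phi(\rho_\circ/3)n_1/2$ of a small pool, the maximal empirical internal average estimates $\rmp(0)$.
\item Subsets within $4L_\rmp\rho_\circ$ of that maximum are certified, uniformly over all subsets since pair averages involve about $m^2$ edges, to have a representative with small truncated average distance. This follows from Markov's inequality and the lower Lipschitz bound (Lemma~\ref{lem:internal-gap-average-distance}).
\item Because $\rmp$ may be flat beyond $r_\rmp$, truncation lets far outliers through. This costs a square-root loss ($\rho_\circ\asymp r^2$) unless a fuzzy window oracle is available. The paper bootstraps that oracle from a coarse run at scale $\sqrt r$ (Lemma~\ref{lem:coarse-output-generates-window-oracle}). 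That, not a shrinking iteration, is where $\sqrt r\le c\min\{1,r_\mu,r_\rmp\}$ enters.
\item A maximal independent set of a pair-average comparison graph then reduces the seeds to at most $n_1$.
\item Only after that come two refinement rounds, first approximate and then exact.
\end{itemize}
Your post hoc internal-average consistency check points in this direction. But it is unanalyzed and cannot substitute for the missing seed-certification argument.
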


Usually, a cluster result of this type can be translated into distance estimates by considering the average observed edge weight between two clusters. These ideas are standard in the literature on clustering and community detection, and they also appear in recent work on distance denoising under manifold assumptions \cite{HuangJiradilokMossel2024GeometryRGG,HuangJiradilokMossel2025RiemannianMetrics,HuangJiradilokMossel2026BeyondVolumetric,FeffermanMartyRen2025NoisyDistances}.
Here we state a reader-friendly version of this translation.
\begin{cor}[Informal distance recovery from extracted clusters]
    \label{cor:distance-recovery-from-clusters}
Assume the hypotheses of the cluster-extraction theorem.

\begin{enumerate}
\item If \(\rmp\) is known and bi-Lipschitz on
\([0,\operatorname{diam}(M)]\), then inverting cluster averages recovers all pairwise
distances \(\D{X_v}{X_w}\) with error \(O(r)\).

\item If \(\rmp\) is known but bi-Lipschitz only on \([0,r_\rmp]\), then the
same argument recovers distances below the \localbilipschitzwindow{} with error $O(r)$. Large distances recovery is in general not possible. (See Figure \ref{fig:flat-link-large-distance}) However, if $M$ is a geodesic space, then distance recovery still holds for all distances.
\end{enumerate}
\begin{figure}[t]
\centering
\begin{tikzpicture}[
    scale=1.0,
    ball/.style={draw=black, fill=gray!12, line width=0.8pt},
    point/.style={circle, fill=black, inner sep=1.2pt},
    faintedge/.style={gray, dashed, line width=0.5pt},
    lab/.style={font=\scriptsize},
    title/.style={font=\small\bfseries}
]

\begin{scope}[shift={(0,0)}]
\node[title] at (2.2,2.1) {(a) Separation \(R_1\)};

\draw[ball] (0,0) circle (0.75);
\draw[ball] (4.0,0) circle (0.75);

\node[point] at (-0.25,0.20) {};
\node[point] at (0.20,0.35) {};
\node[point] at (0.10,-0.25) {};
\node[point] at (-0.35,-0.30) {};

\node[point] at (3.75,0.20) {};
\node[point] at (4.20,0.35) {};
\node[point] at (4.10,-0.25) {};
\node[point] at (3.65,-0.30) {};

\draw[faintedge] (0.75,0) -- (3.25,0);

\draw[<->, thick] (0.75,-1.05) -- (3.25,-1.05);
\node[lab] at (2.0,-1.25) {gap \(R_1\)};

\node[lab] at (0,-0.95) {ball \(B_1\)};
\node[lab] at (4.0,-0.95) {ball \(B_2\)};
\node[lab, align=center] at (2.0,1.25) {\(R_1>r_\rmp\)};
\end{scope}

\begin{scope}[shift={(6.4,0)}]
\node[title] at (2.6,2.1) {(b) Separation \(R_2\)};

\draw[ball] (0,0) circle (0.75);
\draw[ball] (5.0,0) circle (0.75);

\node[point] at (-0.25,0.20) {};
\node[point] at (0.20,0.35) {};
\node[point] at (0.10,-0.25) {};
\node[point] at (-0.35,-0.30) {};

\node[point] at (4.75,0.20) {};
\node[point] at (5.20,0.35) {};
\node[point] at (5.10,-0.25) {};
\node[point] at (4.65,-0.30) {};

\draw[faintedge] (0.75,0) -- (4.25,0);

\draw[<->, thick] (0.75,-1.05) -- (4.25,-1.05);
\node[lab] at (2.5,-1.25) {gap \(R_2\)};

\node[lab] at (0,-0.95) {ball \(B_1\)};
\node[lab] at (5.0,-0.95) {ball \(B_2\)};
\node[lab, align=center] at (2.5,1.25) {\(R_2>r_\rmp\)};
\end{scope}

\node[lab, align=center] at (5.1,-2.05) {
If \(\rmp\) is flat beyond \(r_\rmp\), then cross-ball edge statistics are the same for \(R_1\) and \(R_2\).
};

\end{tikzpicture}
\caption{Large-distance non-identifiability for a locally informative link. Two latent spaces may have identical local geometry but different separation between components. If the link function is flat beyond the local scale \(r_\rmp\), cross-component observations cannot distinguish the gaps \(R_1\) and \(R_2\).}
\label{fig:flat-link-large-distance}
\end{figure}
\end{cor}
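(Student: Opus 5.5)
Given the clusters \(U_v\) from the cluster-extraction theorem, define for each pair \(v,w\) the empirical cluster average
\[
\widehat{A}_{v,w}=\frac{1}{\sp\,|U_v|\,|U_w|}\sum_{a\in U_v,\,b\in U_w,\,a\neq b} Z_{a,b}.
\]
We compare \(\widehat A_{v,w}\) with \(\rmp(\D{X_v}{X_w})\) in two steps.

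\emph{Bias.} Since \(X_{U_v}\subseteq B(X_v,Cr)\) and \(X_{U_w}\subseteq B(X_w,Cr)\), the triangle inequality gives \(|\D{X_a}{X_b}-\D{X_v}{X_w}|\le 2Cr\) for every contributing pair. Hence the conditional mean of each summand is \(\rmp(\D{X_v}{X_w})\) up to the modulus of continuity of \(\rmp\) at scale \(2Cr\).

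\emph{Fluctuation.} The clusters depend on the same graph, so we cannot condition on them and apply Hoeffding directly. Instead we use the bound, stated in the preceding theorem, on the number of distinct clusters: it is at most \(K=\exp(C\log^2(3/(r\phi(r)))/(\sp r^2))\). Two routes are then available.

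\begin{enumerate}
\item Union bound over a fixed family. A deterministic family indexed only by the latent configuration is not available. So we either sample-split: extract clusters from one half of the edges, which is legitimate after a Bernoulli thinning of the observations, and average over the independent other half. Or we union bound over all subsets of size \(\ge c\phi(r)n\) inside a ball, which is too many.
\item Sample splitting is the route I would take. Conditional on the first half and the latents, the sum is a sum of at least \(c^2\phi(r)^2n^2/2\) independent subgaussian variables, each times an independent Bernoulli\((\sp)\). Bernstein's inequality then gives deviation
\[
O\!\left(\sqrt{\frac{\Lambda\log n}{\sp\,\phi(r)^2 n^2}}\right)\le O(r),
\]
using \(\sp n\phi(r)r^2\ge C\Lambda\log n\) and \(\phi(r)n\gtrsim1\). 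A union bound over the \(n^2\) pairs completes this step.
\end{enumerate}

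\emph{Case (1).} Since \(\rmp\) is bi-Lipschitz on \([0,\operatorname{diam} M]\), it is Lipschitz, so the bias is \(O(r)\). Therefore \(|\widehat A_{v,w}-\rmp(\D{X_v}{X_w})|=O(r)\). Define \(\widehat d_{v,w}=\rmp^{-1}(\Pi(\widehat A_{v,w}))\), where \(\Pi\) is projection onto the range of \(\rmp\). The inverse-Lipschitz bound then gives \(|\widehat d_{v,w}-\D{X_v}{X_w}|=O(r)\).

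\emph{Case (2).} Use the same estimator on the window. If \(\D{X_v}{X_w}\le r_\rmp-2Cr\), every contributing distance lies in \([0,r_\rmp]\), so the case (1) argument applies. Conversely, suppose \(\widehat A_{v,w}\ge\rmp(r_\rmp)+O(r)\). Monotonicity (\(\rmp\) non-increasing) then forces \(\D{X_v}{X_w}\le r_\rmp+O(r)\), so we can decide which regime we are in, up to a boundary layer of width \(O(r)\). The non-identifiability claim is the example of the figure. Take two spaces, each the disjoint union of two balls with the same measure, at gaps \(R_1\neq R_2\), both \(>r_\rmp+\) the ball diameters. Choose \(\rmp\) constant on \([r_\rmp,\infty)\). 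Then the observation laws coincide, so no estimator can have error below \(|R_1-R_2|/2\) on both.

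\emph{Geodesic case.} Build the graph \(H\) on \(\{1,\dots,n\}\) whose edges are the pairs with \(\widehat d_{v,w}\le r_\rmp/2\), weighted by \(\widehat d_{v,w}\), and output shortest-path distances in \(H\).

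\begin{itemize}
\item Upper bound: take a geodesic from \(X_v\) to \(X_w\) and points at spacing \(r_\rmp/4\) along it. Lower \(\phi\)-regularity together with \(\phi(r)n\gg\log n\) places a sample within \(r\) of each point, w.h.p. uniformly over a finite net. This gives a path in \(H\) with \(O(\D{X_v}{X_w}/r_\rmp)\) hops, each with error \(O(r)\), so the total error is \(O(r\,\mathrm{diam}/r_\rmp)=O(r)\) with model-dependent constants.
\item Lower bound: by the triangle inequality, the length of any path in \(H\) overestimates the true length by at most \(O(r)\) per hop.
\end{itemize}

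The issue is that the number of hops in a shortest path could be large, which accumulates error. I would handle this by contracting consecutive hops that sum to less than \(r_\rmp/2\), so that a shortest path needs only \(O(\mathrm{diam}/r_\rmp)\) hops. The main obstacles are this hop-count control and the independence issue in the fluctuation step, which I resolve by sample splitting.
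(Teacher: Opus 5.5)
Your case (1) and the local part of case (2) follow the paper's argument: Lemma~\ref{lem:pair-averages-from-exact-clusters}, then projection onto the range of $\rmp$ and inversion. For certifying local pairs, the paper thresholds at $\rmp(r_\rmp/2)-\varepsilon_{\rm pair}$, which keeps every certified pair well inside the window and avoids your boundary layer.

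\textbf{The gap: lower bound in the geodesic case.} You weight each edge of $H$ by $\widehat d_{v,w}$ itself. Each hop can then undershoot its true length by up to $Cr$, and a shortest path may have arbitrarily many hops.

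Contraction does not fix this. The weight of a minimizing path is fixed, and replacing $v_{i-1}\to v_i\to v_{i+1}$ by the direct edge fails in two ways. The direct edge need not be in $H$, since its estimate can exceed $r_\rmp/2$ when the two-hop sum is just below it. It also need not lower the weight, since its estimate can exceed the two-hop sum by up to $3Cr$. So nothing forces a minimizer to have only $O(\operatorname{diam}(M)/r_\rmp)$ hops.

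Concretely, the estimates $\widehat d_{v,w}=(\D{X_v}{X_w}-Cr)_+$ satisfy every accuracy guarantee you established. The samples are $r$-dense, which your upper bound already uses, so any geodesic can be traversed by hops of true length at most $3r\le Cr$. Each such hop has weight $0$, so all shortest-path distances collapse to $0$.

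The missing idea is the shift used in Lemma~\ref{lem:metric-extension-local-estimates} and Corollary~\ref{cor:global-distance-by-chaining}: weight each certified pair by $\widehat d_{\loc}(v,w)+Cr$. Then every edge weight dominates the true distance, and $\widehat d_{\rm sp}(v,w)\ge\D{X_v}{X_w}$ follows from the triangle inequality with no hop counting. The upper bound pays at most $2Cr$ on each of the $O(\operatorname{diam}(M)/r_\rmp)$ hops of your discretized geodesic.

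\textbf{Secondary point: the fluctuation step.} The union bound you dismiss as too large is exactly what the paper uses: Lemma~\ref{lem:uniform-pair-average} with $m_{\rm pair}=\lceil\phi(r)n/13\rceil$ and $\lambda_{\rm pair}=r$. For $|U_1|=a$ and $|U_2|=b$ with $a,b\ge m$, the deviation probability is $\exp(-c\lambda^2\mathsf{s}ab)$. There are at most $\exp((a+b)\log(e/\varphi))$ such pairs, and $ab\ge m(a+b)/2$. So $\mathsf{s}n\phi(r)r^2\gtrsim\Lambda\log n$ suffices, uniformly over all large subsets and hence over the data-dependent clusters, with no sample splitting needed.

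Your thinning route can be made to work, but as written it assumes $U_v\times U_w$ contains order $\phi(r)^2n^2$ pairs from the second half. The clusters are built from the first half, so they depend on the split. Establishing that count needs a union bound over subsets anyway.
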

\begin{rem}
    For the second case, the geodesic assumption can be relaxed; see Corollary~\ref{cor:global-distance-by-chaining}. It is enough to have chains whose consecutive distances are below \(r_\rmp\) and whose total length approximates the original distance; the approximation error is reflected in the final recovery error.
\end{rem}
\begin{rem}[Unknown link]
When the link function \(\rmp\) is unknown, one can still adapt the bisection/calibration strategy of \cite{FeffermanMartyRen2025NoisyDistances} after the localized clusters have been extracted. This would recover the metric up to an unknown global dilation factor, with the corresponding logarithmic loss. We do not pursue this direction in the present manuscript.
\end{rem}

\subsection*{Efficiency at fixed accuracy.}
In general, our algorithm is inefficient because it includes an exhaustive search. However, we still have an efficient algorithm when we only aim for a fixed accuracy in the dense regime.
\begin{cor}[Efficiency Statement]
    When $r$ is fixed and in the dense regime $\sp$ is a constant, the cluster-extraction algorithm runs in time
    \[
    n \, \log(n),
    \]
    and the same running time applies to distance recovery when $\rmp$ is known and bi-Lipschitz on the full distance range.
    The additional running time for distance recovery is polynomial in the number of distinct extracted clusters, which is bounded by \(\exp(C\log^2(1/\phi(r)))\) and is therefore constant when \(r\) is fixed. The distance recovery step only computes averages between pairs of these distinct clusters, and vertices in the same cluster have latent distance at most \(O(r)\), which is already within the recovery error scale.
\end{cor}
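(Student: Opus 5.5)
This corollary is a specialization of the informal version of Theorem~\ref{theor:main} together with Corollary~\ref{cor:distance-recovery-from-clusters}(1). I plan to substitute fixed $r$ and constant $\sp$ into the stated complexity bounds and check that every $r$- and $\sp$-dependent factor becomes an $n$-independent constant. The first step is the cluster-extraction running time
\[
\exp\left( C \log^2\!\left(\frac{3}{r\phi(r)}\right)\frac{1}{\sp r^2}\right)n\Lambda\log n .
\]
With $r$ fixed, $\phi(r)>0$ is a fixed number, because lower $\phi$-regularity gives a positive lower bound on ball masses. With $\sp$ constant, the exponential prefactor is therefore a constant depending only on the model parameters and $r$. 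Here $\Lambda$ is a confidence parameter whose required lower bound depends only on model parameters, so we can take it to be a fixed constant. The running time is then $O(n\log n)$.

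I also need to check that the hypotheses of the theorem hold for all large $n$. The condition $\sqrt r\le c\min\{1,r_\mu,r_\rmp\}$ involves no $n$, so it is a fixed requirement on the chosen $r$. The density condition $\sp n\phi(r)r^2\ge C\Lambda\log n$ holds for all sufficiently large $n$, since its left side is linear in $n$ and its right side is logarithmic in $n$.

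The second step is to bound the number of distinct clusters. The theorem bounds $|\{U_v\}|$ by the same exponential quantity, which for fixed $r$ and constant $\sp$ is a constant $K$; for constant $\sp$ this is the stated $\exp(C\log^2(1/\phi(r)))$ up to constants. The distance recovery step then works as follows.
\begin{itemize}
\item Compute, for each pair of distinct clusters $U,U'$, the average of $Z_{u,u'}$ over $U\times U'$.
\item Invert $\rmp$ on these averages, as in Corollary~\ref{cor:distance-recovery-from-clusters}(1).
\item Assign to each pair $(v,w)$ the estimate computed for the pair $(U_v,U_w)$.
\end{itemize}
Vertices sharing a cluster are within latent distance $2Cr$ of each other, so reusing one estimate for all of them costs only $O(r)$. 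This is within the target error scale.

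The main obstacle is accounting honestly for the cost of these averages, because a naive average over $U\times U'$ costs $|U||U'|$, which can be of order $n^2$. I would first try subsampling. Take a random subset of each distinct cluster of size $m=C'\log n/(\sp r^2)$ and average only over pairs of subsamples. Subgaussian concentration, combined with a union bound over the $K^2$ cluster pairs, still gives average error $O(r)$. The total cost is then $K^2m^2=O(\log^2 n)$, plus $O(n)$ to record the labels $v\mapsto U_v$. Both are dominated by $n\log n$.

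If one insists on exact averages instead, the argument would use that the clusters are produced by the extraction procedure within the same budget. Each vertex's row then contributes to at most $K$ partial sums, giving $O(K\cdot \sp n^2)$ in general. That bound is not near-linear, which is why I would rely on subsampling. Outputting all $\binom n2$ pairwise estimates explicitly would cost $n^2$, so I would interpret ``distance recovery'' as producing the cluster map together with the $K\times K$ distance table, from which any queried distance can be read off in $O(1)$ time.
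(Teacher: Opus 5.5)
Your treatment of the extraction running time matches the paper's. The corollary follows by freezing $r$, $\sp$ and $\Lambda$ in the runtime and distinct-cluster bounds of Theorem~\ref{theor:main}. The scale hypothesis $\sp n\phi(r)r^2\ge C\Lambda\log n$ then holds for all large $n$, and it is this hypothesis, not lower regularity, that forces $\phi(r)>0$.

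Where you genuinely depart from the paper is the distance-recovery step. The paper's only justification is that this step ``only computes averages between pairs of these distinct clusters'', and it counts this as polynomial in the number $K$ of distinct clusters. But an exact average $\pav{U_v,U_w}$ ranges over $\gtrsim\phi(r)^2n^2$ pairs. So computed as in Corollary~\ref{cor:distance-recovery-known-global-link}, the step is quadratic in $n$ when $\sp$ is constant. Two features of your proposal are what actually deliver the near-linear bound:
\begin{itemize}
\item subsampling $m\asymp\log n/(\sp r^2)$ vertices per distinct cluster;
\item reading the output as the map $v\mapsto U_v$ plus a $K\times K$ table.
\end{itemize}
The price is one extra uniform concentration event, at subset size of order $\log n$ rather than $\phi(r)n$.

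That event is where your justification needs changing. A union bound over the $K^2$ realized cluster pairs, treating the edges in $S_v\times S_w$ as fresh, is not valid, because the clusters are themselves selected from edge weights. Each $U_v$ is a threshold set of empirical link averages from a seed in $V_5'$ into $V_6$. In the cyclic repacking of Theorem~\ref{theor:main} the blocks exchange roles, so edges between two output blocks may have been used to select $U_v$.

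Instead, apply Lemma~\ref{lem:uniform-pair-average} with $W=\bfV$, $\varphi=m/n$ and $\lambda=r$. Its hypothesis reads $\sp m r^2\ge C\log(en/m)$, which your choice $m=C'\log n/(\sp r^2)$ satisfies once $C'$ is large. The failure probability is $\exp(-c\,m\log(en/m))=o(1)$. On $\Enn{\bfV}{r}{m}$, all pairs of subsets of size at least $m$ are controlled simultaneously. Consequently:
\begin{itemize}
\item any $m$-subset $S_v\subseteq U_v$ works, so no random sampling is needed;
\item Lemma~\ref{lem:pair-averages-from-exact-clusters} applies verbatim to the $S_v$, which are still exact clusters of radius $O(r)$ about $X_v$;
\item projection and inversion of $\rmp$ finish the argument as before.
\end{itemize}
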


\subsection*{Information-theoretic bounds.}
In the case where $\rmp$ is Bi-Lipschitz on the full distance range, or say the underlying metric space is geodesic, the cluster-extraction algorithm from Theorem~\ref{theor:main} together with cluster-to-distance recovery Corollary~\ref{cor:distance-recovery-from-clusters} can be used to recover all pairwise distances with error $O(r)$ when
$$
\mathsf{s}n\,\phi(r)r^2 \gtrsim \log n\,.
$$
Below we provide a matching lower bound on the order.

\begin{lemma}[Lower bound]
\label{lem:lower-bound}
There are universal constants \(c_\phi,C_0,c_0,c_1>0\) such that the following
holds for all sufficiently large \(n\). Let \(\phi\) be nondecreasing, let
\(\mathsf{s}\in(0,1]\), let \(r_0\in(0,1]\), and assume
\(0<\phi(r_0)\le c_\phi\). If
\[
\mathsf{s}n\phi(r_0)> C_0
\qquad\text{and}\qquad
\mathsf{s}n\phi(r_0)r_0^2<c_0\log n,
\]
then there is a finite metric probability space that is lower-\(\phi\)-regular
up to scale \(r_0\), together with a locally bi-Lipschitz observation model
satisfying the standing assumptions with sparsity parameter \(\mathsf{s}\), for
which every estimator \(\widehat d\) satisfies
\[
\mathbb P\!\left(
\max_{u,v}|\widehat d(u,v)-\D{X_u}{X_v}|\ge r_0/2
\right)
\ge c_1.
\]
\end{lemma}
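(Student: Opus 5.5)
We construct a hard instance and apply a two-point (Le Cam) argument, localized to a single vertex, with a union over many vertices to gain the \(\log n\) factor. Let \(k\approx 1/\phi(r_0)\); the hypothesis \(\phi(r_0)\le c_\phi\) makes \(k\) large. Take a finite metric space consisting of \(k\) ``sites'' \(a_1,\dots,a_k\). Each site is replaced by a pair of twin points \(a_i^{+},a_i^{-}\) at mutual distance \(r_0\), and different sites are at distance \(1\) (or any fixed value larger than the window, say \(r_\rmp\ge 1\)). Put \(\mu\) uniform on the \(2k\) points, so every point has mass \(1/(2k)\asymp \phi(r_0)\). Rescaling \(k\) by a constant guarantees \(\mu(B(x,r))\ge\phi(r_0)\ge\phi(r)\) for \(r\le r_0\) by monotonicity, so the space is lower-\(\phi\)-regular up to scale \(r_0\). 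As the link take \(\rmp(t)=\tfrac12-\tfrac14 t\) on \([0,1]\), which is bi-Lipschitz. Take observations \(Z_{u,v}=B_{u,v}\,\mathrm{Bern}(\rmp(\D{X_u}{X_v}))\), with \(B_{u,v}\sim\mathrm{Bern}(\sp)\). These are bounded, hence subgaussian, and satisfy the standing assumptions.

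The key point is that a vertex \(u\) lying at site \(i\) can only reveal its sign \(\pm\) through its edges to the roughly \(n/k\asymp n\phi(r_0)\) other vertices at the same site. Edges to other sites have law independent of the sign, since the distance from either twin to a different site is \(1\). Fix all latent labels except \(u\)'s sign, and compare the hypotheses \(X_u=a_i^+\) versus \(X_u=a_i^-\). The likelihood ratio involves only the \(m\approx n/(2k)\) neighbours at \(a_i^+\) and the \(m\) at \(a_i^-\). Each such edge is \(\mathrm{Bern}(\sp p_0)\) versus \(\mathrm{Bern}(\sp p_1)\), where \(p_0=\tfrac12\) and \(p_1=\tfrac12-\tfrac14 r_0\). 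The per-edge KL divergence (or \(\chi^2\)) is therefore \(\asymp \sp r_0^2\), and the total divergence is \(D\asymp \sp n\phi(r_0) r_0^2\). Any estimator with \(\max|\widehat d-{\rm d}|<r_0/2\) determines the sign of \(u\) relative to a reference vertex \(w\) at \(a_i^+\), because \({\rm d}(X_u,X_w)\in\{0,r_0\}\). So accurate estimation implies correct sign recovery for every vertex simultaneously.

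To get the \(\log n\) rather than a constant, we use the standard argument from exact recovery in the SBM. We exhibit \(\gtrsim n^{1-o(1)}\) vertices whose sign posteriors are nearly independent, each with error probability \(\ge e^{-C D}\ge n^{-Cc_0}\). Concretely, we condition on a set \(S\) of \(n/\log^{2}n\) vertices, show by Chernoff that each has \(\asymp n\phi(r_0)\) same-site neighbours (this uses \(\sp n\phi(r_0)>C_0\) to keep counts concentrated and the divergence lower bound sharp), and show that the edges among \(S\) carry negligible information. A reverse-Chernoff/Bernoulli anti-concentration bound then shows that for each \(u\in S\), with probability \(\ge n^{-1/2}\), flipping \(u\)'s sign does not decrease the likelihood once \(c_0\) is small. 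With \(|S|n^{-1/2}\to\infty\) and approximate independence (a second-moment computation), some such \(u\) exists with probability \(\ge 1-o(1)\). The MAP estimator then errs with probability \(\ge 1/2\) on that event, giving the constant \(c_1\).

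The main obstacle is this last step. We need a sharp lower tail for the likelihood-ratio sum, \(\Pr(\mathrm{LLR}_u\le 0)\ge \exp(-C\sp n\phi(r_0)r_0^2)\), valid uniformly when \(\sp\) and \(r_0\) may be small. We must also handle the dependence among the \(\mathrm{LLR}_u\) through shared edges and the randomness of site counts. We would first try a tilted-measure (Cramér) lower bound on a sum of independent binomial differences, then decouple via the small set \(S\) as in Abbe–Bandeira–Hall.
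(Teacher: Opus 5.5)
Your outline is sound, but it takes a different route from the paper.

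\textbf{The paper's route.} The paper uses a flat instance: \(k\asymp 1/\phi(r_0)\) points at mutual distance \(r_0\), with link \(\rmp(t)=a+\gamma(r_0-t)_+\). The model is then literally a symmetric \(k\)-block SBM with \(p=\mathsf{s}(a+\gamma r_0)\) and \(q=\mathsf{s}a\). Thresholding \(\widehat d\) at \(r_0/2\) yields exact partition recovery. Because the likelihood depends only on the partition, the prior identity \eqref{eq:sbm-existing-block-prior} turns a likelihood comparison into a posterior comparison. The rest is a standalone SBM statement (Proposition~\ref{prop:sbm-partition-lower-bound}). A vertex in a good block of \(I_1\) has a ``bad row'' if two things happen. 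Its own-block count is at most \(p\) times its block size, which is a median-type event of constant probability. And one of \(k/4\) competitor blocks in \(I_2\) beats it, which is controlled by the local point-mass bound of Lemma~\ref{lem:sbm-binomial-point-mass}. A second-moment computation, in which only same-block pairs interact and only through their single shared edge, plus Paley--Zygmund, gives a bad row with high probability.

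\textbf{Your route and the trade-off.} Your twin construction reduces each vertex's ambiguity to a single binary alternative, since edges to other sites have sign-independent law. The per-vertex event becomes a two-point test. It needs only a Cram\'er-type lower tail \(\ge e^{-CD}\) for a difference of two independent binomials (push \(M^+\) down and \(M^-\) up by \(\asymp\mathsf{s}Nr_0\)), with no maximum over competitors. Your link also has window \(r_\rmp=1\) and constants independent of \(r_0\), whereas the paper's window shrinks to \(r_0\). In exchange, the paper's route lands directly on the classical SBM exact-recovery threshold, with a simpler space and a likelihood that is a function of the partition alone.

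\textbf{Points to tighten.}
\begin{enumerate}
\item The hypothesis \(\mathsf{s}n\phi(r_0)>C_0\) only makes site counts of constant order in mean, not \(\gg\log n\). So Chernoff plus a union bound cannot control them for every \(u\in S\). Restrict to good sites, as in Lemma~\ref{lem:sbm-good-blocks}.
\item With cross-site distance \(1\) and \(r_0>1/2\), an error below \(r_0/2\) only determines the same-point partition. Its posterior is a mixture over site and twin assignments, and your single likelihood ratio does not control it. Put the sites at distance \(2\), so entries lie in \(\{0,r_0,2\}\); accurate estimation then pins down the whole distance matrix, on which the likelihood depends. (Alternatively, reveal the site partition as a genie.) Then check that on good sites the flipped matrix has the same prior weight, which is the analogue of \eqref{eq:sbm-existing-block-prior}. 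Use a strict likelihood increase to avoid ties.
\item The ABH step (``edges inside \(S\) carry negligible information'') needs reworking once \(\mathsf{s}n\phi(r_0)\gg\log^2 n\), because each \(u\in S\) then has \(\gg1\) same-site neighbours in \(S\). It is simpler to copy the paper. Keep only vertices at the ``\(+\)'' twins of good sites, and use the event \(\{M_u^+\le \E M_u^+\}\cap\{M_u^-\ \text{large}\}\). Two such vertices share an edge only when they sit at the same point, and that edge enters only their own-point counts, so the argument of Lemma~\ref{lem:sbm-second-moment-reduction} applies essentially verbatim.
\end{enumerate}
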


The idea is to consider a discrete metric space with \(k\simeq1/\phi(r_0)\)
points, all at mutual distance \(r_0\). The corresponding latent-distance
observation model is a symmetric stochastic block model with \(k\) blocks, within
block probability \(p=\sp\rmp(0)\), and between-block probability
\(q=\sp\rmp(r_0)\). Recovering all pairwise distances with error strictly less
than \(r_0/2\) would recover the induced block partition exactly. Thus the lower
bound is the metric translation of the exact-recovery threshold for stochastic
block models: in the weak-signal regime relevant here, the decisive quantity is
of order
\[
    \frac{(p-q)^2}{p}\frac{n}{k}
    \simeq
    \sp n\phi(r_0)r_0^2.
\]
The impossibility side of this threshold is classical for the symmetric
two-block model and its exact-recovery variants
\cite{MosselNeemanSly2016Consistency,AbbeBandeiraHall2016Exact}, with minimax
multi-block formulations in \cite{zhang2016minimax}; see also the survey
\cite{Abbe2017Community}. We include the proof of Lemma~\ref{lem:lower-bound}
because our reduction requires a growing number of blocks, random block-size
fluctuations, and the precise normalization used in the present metric model. The proof is given in Appendix~\ref{sec:lower-bound}.

\begin{figure}[t]
\centering
\begin{tikzpicture}[
    scale=1.0,
    point/.style={circle, fill=black, inner sep=1.2pt},
    redpoint/.style={circle, fill=red!70!black, inner sep=1.5pt},
    bluepoint/.style={circle, fill=blue!70!black, inner sep=1.5pt},
    thickblue/.style={line width=1.1pt, blue!70!black},
    thickred/.style={line width=1.1pt, red!70!black},
    faint/.style={gray, line width=0.7pt},
    lab/.style={font=\scriptsize},
    title/.style={font=\small\bfseries}
]

\begin{scope}[shift={(0,0)}]
\node[title] at (0,2.25) {(a) \(y\in A_-\)};

\draw[faint] (140:1.35) arc[start angle=140,end angle=400,radius=1.35];

\coordinate (L) at (140:1.35);
\coordinate (R) at (40:1.35);

\draw[thickred] (L) .. controls (-0.55,0.45) and (0.55,0.45) .. (R);
\draw[thickblue]  (L) .. controls (-0.55,1.50) and (0.55,1.50) .. (R);

\node[lab, red!70!black] at (0,0.33) {\(A_-\)};
\node[lab, blue!70!black] at (0,1.53) {\(A_+\)};

\node[bluepoint,label=above left:{\scriptsize \(x\)}] at (-0.35,1.30) {};
\node[redpoint,label=below right:{\scriptsize \(y\)}] at (0.35,0.58) {};
\node[point] at (-1.35,0.0) {};
\node[point] at (1.2,-0.63) {};
\node[point] at (0.1,-1.34) {};

\node[lab, align=center] at (0,-1.95) {
$S^1$ with a duplicated cap of length $r$.\\
The point \(y\) lies in the red copy.
};
\end{scope}

\draw[-{Latex[length=2.5mm]}, thick] (2.0,0.15) -- (3.0,0.15);
\node[lab, align=center] at (2.5,0.55) {same distances\\to outside points};

\begin{scope}[shift={(5.0,0)}]
\node[title] at (0,2.25) {(b) \(y\in A_+\)};

\draw[faint] (140:1.35) arc[start angle=140,end angle=400,radius=1.35];

\coordinate (Lr) at (140:1.35);
\coordinate (Rr) at (40:1.35);

\draw[thickred] (Lr) .. controls (-0.55,0.45) and (0.55,0.45) .. (Rr);
\draw[thickblue]  (Lr) .. controls (-0.55,1.50) and (0.55,1.50) .. (Rr);

\node[lab, red!70!black] at (0,0.33) {\(A_-\)};
\node[lab, blue!70!black] at (0,1.53) {\(A_+\)};

\node[bluepoint,label=above left:{\scriptsize \(x\)}] at (-0.35,1.30) {};
\node[bluepoint,label=above right:{\scriptsize \(y\)}] at (0.35,1.30) {};

\node[point] at (-1.35,0.0) {};
\node[point] at (1.2,-0.63) {};
\node[point] at (0.1,-1.34) {};

\node[lab, align=center] at (0,-1.95) {
The same duplicated-cap space.\\
The point \(y\) lies in the blue copy.
};
\end{scope}

\end{tikzpicture}
\caption{A folding obstruction at the volumetric scale, illustrated for \(\phi(r)=r\). Both panels show the same space: a circle with a duplicated cap of length \(r\), equipped with the intrinsic metric. The only displayed change is the location of \(y\): in panel (a), \(y\) lies in the red copy \(A_-\); in panel (b), it lies in the matched blue copy \(A_+\). Every point outside \(A_-\cup A_+\) has the same distance to these two matched locations. Therefore the two configurations are distinguishable only through interactions between the two duplicated caps. There are about \(n\phi(r)\) sampled points in each cap, and cross-cap edge probabilities change by \(O(\sp r)\), giving the heuristic obstruction \(O(\sp n\phi(r)r^2)\). The same idea extends to duplicated caps in \(S^d\), where \(\phi(r)\asymp r^d\), and to analogous spaces adapted to a prescribed \(\phi\).
\label{fig:continuous-folding-lower-bound}
}
\end{figure}

\subsection*{Related work}
The closest works to ours are the recent distance-denoising results of Huang,
Jiradilok, and Mossel~\cite{HuangJiradilokMossel2024GeometryRGG,HuangJiradilokMossel2025RiemannianMetrics,HuangJiradilokMossel2026BeyondVolumetric} and Fefferman, Marty, and
Ren~\cite{FeffermanMartyRen2025NoisyDistances}. These works study closely related latent-distance observation
models under geometric assumptions, most prominently when the latent points are
sampled from a \(d\)-dimensional Riemannian manifold. In such settings, the
sampling measure satisfies \lowerphiregularity{} of the form
\[
\mu(B(x,r))\gtrsim r^d
\]
at small scales, and the intrinsic dimension \(d\) governs the achievable
distance-denoising rates.

The work of Fefferman, Marty, and Ren~\cite{FeffermanMartyRen2025NoisyDistances} also treats settings beyond
smooth manifolds. They do so by abstracting the conditions needed for their manifold estimation algorithm.
However, the results of~\cite{FeffermanMartyRen2025NoisyDistances} do not apply for fractal or discrete settings and many other spaces where our results hold.

\subsection*{Proof ideas}
Here in this paper we simply assume the link function $\rmp$ is monotone non-increasing, as in similarity graphs. The other case is similar after $Z_{u,v}$ by $-Z_{u,v}$ and $\rmp$ by $-\rmp$.
Let $\bfV$ denote the vertex set of the \observedweightedgraph{}, so $\bfV$ is the index set of the latent points. For a subset \(U\subseteq\bfV\), we define the \latentscorefunction{} \(\acn{U}{\cdot}:M\to\mathbb R\) by
\[
\acn{U}{y}
:=
\frac1{|U|}
\sum_{u\in U}
\rmp({\rm d}(X_u,y)),
\qquad y\in M.
\]
When the argument is a vertex \(v\), we use the convention
\(\acn{U}{v}:=\acn{U}{X_v}\). For a vertex $v \notin U$, the observable
counterpart is the \normalizedaverage{}
\[
\cn{U}{v}:=
\frac{1}{\mathsf{s}|U|}
\sum_{u\in U}Z_{u,v}.
\]
Conditionally on the latent positions \(X_U\) and \(X_v\),
\[
\mathbb E\bigl[\cn{U}{v}\mid X_U,X_v\bigr]
=
\acn{U}{v}.
\]
Thus \(\cn{U}{v}\) is a noisy empirical evaluation of the latent link average
at \(v\).
When \(U\) is a cluster with center \(x\in M\), we expect
\(\acn{U}{y}\) to be a good approximation of
\(\rmp(\D{x}{y})\): averaging over points \(X_u\) near \(x\) should nearly
produce the link value from \(x\) to \(y\). So if $U$ is a cluster around $X_w$, then \(\cn{U}{v}\) is a good proxy for \(\rmp(\D{X_w}{X_v})\). This is the key to distance recovery from clusters.

The algorithm is built from two ingredients. The first produces many
\internalaveragecandidate{}s by exhaustive search. The second refines an \internalaveragecandidate{}
into a cleaner cluster by thresholding its edge averages to fresh vertices
(see Definitions~\ref{def:cluster} and~\ref{def:normalized-averages}, and
Proposition~\ref{prop:abstract-score-threshold-refinement}).

\subsubsection*{Ingredient 1: internal-average search}
The starting point is the monotonicity of the link function. Since \(\rmp\) is
monotone decreasing, a set of vertices with unusually large internal average
edge weight should correspond to latent points that are mostly close to one
another. Conversely, by \lowerphiregularity{}, every small latent ball
contains many sampled vertices, and those vertices form a set with high
internal average; this uses Assumptions~\ref{assump:lower-regularity}
and~\ref{assump:link-function}, together with the lower-occupancy estimate in
Lemma~\ref{lem:uniform-lower-occupancy}.

Fix a scale \(\rho\). When $n$ is large enough, we expect for each latent point $x$ that the ball \(B(x,\rho)\) contains about \(n\phi(\rho)\) sampled points, let us denote it by $U_x$ for the moment. For any pair of points \(u,v\in U_x\), $\rmp(\D{X_u}{X_v})$ is at least $\rmp(0) - L_\rmp \rho$.
Thus, we can search for subsets $U$ of with
$$
    \pav{U} := \frac{1}{\sp}\frac{1}{|U|^2}\sum_{u,v\in U}Z_{u,v}
$$
whose expected value is at least $\rmp(0) - L_\rmp \rho$. (The $\rmp(0)$ is not known, but can be well estimated from the maximum internal average of subsets of size about $n\phi(\rho)$, see Lemma~\ref{lem:p0-from-max-internal-average}.)

In order for $\pav{U}$ to be realistically close to its expectation, we need its fluctuation to be smaller than $\rho$ for every $U$ of size about $n\phi(\rho)$. This requires a search-scale condition of the form
\[
\sp n \phi(\rho)\rho^2\gtrsim \log n.
\]
In the final two-round construction, the internal search scale is chosen so that this reduces to the main sample-size condition at target scale \(r\).

This exhaustive search produces a \candidatefamily{} which we denote by \({\cal C}_\rho\). (See Proposition~\ref{prop:inner-density-candidates}, with the estimate of \(\rmp(0)\) supplied by Lemma~\ref{lem:p0-from-max-internal-average}.)
Then, this \candidatefamily{} has two useful properties.
\begin{itemize}
    \item
First, it covers the latent space:
every ball of radius \(\rho\) in the latent space contributes at least one
\internalaveragecandidate{}, as in
Proposition~\ref{prop:inner-density-candidates}. That is, for every point
\(x\in M\), there is a candidate \(U\in{\cal C}_\rho\) such that all latent
points in \(U\) lie within distance \(\rho\) of \(x\).
\item
Second, every such candidate localizes after Markov. Namely, a candidate \(U\)
has a representative point \(\theta_U \in M\) such
that, for larger radii \(R\), all but an \(O(\rho/R)\)-fraction of the latent
points in \(U\) lie within distance \(R\) of \(\theta_U\), as long as $R$ is
within the \localbilipschitzwindow{} of the link function $\rmp$; see
Lemma~\ref{lem:internal-gap-average-distance} and
Remark~\ref{rem:internal-average-candidate-cluster}.
\end{itemize}

\step{Simplified when $\rmp$ is bi-Lipschitz on the full distance range}
If $\rmp$ were bi-Lipschitz on the full distance range, then the second property would be enough for distance recovery: Each candidate set $U$ has a good representative point $\theta_U$ such that
$$
|\acn{U}{y} - \rmp({\rm d}(\theta_U,y))| = O(\rho)\,.
$$

At a higher level, this is already enough to give a distance-recovery procedure, ignoring some non-essential technicalities. The final statement in our main theorem that each $v$ has a no-error cluster $U_v$ with $X_{U_v} \subseteq B(X_v,Cr)$ is more than what is needed for distance recovery.

So our main algorithm resolves two issues:  1. handle the fact that the link function $\rmp$ is only bi-Lipschitz on a \localbilipschitzwindow{}, and 2. avoid running an exhaustive search on the whole vertex set when $n$ is larger than the minimum assumption $\sp n \phi(\rho)\rho^2\gtrsim \log n$, improving the running time.

\step{Presence of a local bi-Lipschitz window}
In the case that $\rmp$ is only bi-Lipschitz on a \localbilipschitzwindow{}, one can treat each $U$ as a $(R,c\rho/R)$ cluster in the sense of
Definition~\ref{def:cluster}, meaning that it is mostly contained in a ball of
radius $R$ around $\theta_U$, except for a fraction of $c\rho/R$ of its points.
Such a cluster is in general insufficient for distance recovery, as a portion
of its points may be far from the representative point $\theta_U$ with distance
much larger than the \localbilipschitzwindow{} of the link function $\rmp$. Still,
for a fresh vertex \(v\), the \normalizedaverage{} \(\cn{U}{v}\) concentrates
around $\acn{U}{v}$ by Lemma~\ref{lem:fixed-v-navigation} (and uniformly by
Lemma~\ref{lem:uniform-navigation-event}). For each $U \in {\cal C}_\rho$,
\(\acn{U}{v}\) therefore serves as a proxy for
$\rmp({\rm d}(\theta_U,X_v))$ with error $O(\sqrt{\rho})$, instead of
\(O(\rho)\) in the ``global'' bi-Lipschitz case; see
Remark~\ref{rem:internal-average-to-tau-tau}. In short, there is a tradeoff
between the score error and the fraction of points that are far from
\(\theta_U\).

\begin{center}
\small
\begin{tabular}{@{}p{0.28\textwidth}p{0.36\textwidth}p{0.25\textwidth}@{}}
\toprule
Regime & Applicable \(y\)'s & Error bound \\
\midrule
Global bi-Lipschitz
&
all \(y\in M\)
&
\(O(\rho)\)
\\
Local bi-Lipschitz window
&
\(\D{\theta_U}{y}+O(\sqrt{\rho})\le r_\rmp\)
&
\(O(\sqrt{\rho})\)
\\
\bottomrule
\end{tabular}
\end{center}
\subsubsection*{Ingredient 2: link-average threshold refinement}
The second ingredient is \scorethresholdrefinement{}, which cleans up a seed set using fresh vertices. Thresholding the observed averages \(\cn{U}{v}\) selects vertices whose latent points are near the seed representative; the abstract refinement statement is
Proposition~\ref{prop:abstract-score-threshold-refinement}.

This refinement step turns a seed into a cleaner cluster. A first refinement may
still leave a small exceptional fraction of mistakes, but a second refinement can
be made strong enough to give exact inclusions: all vertices in an inner ball
are selected, and no vertices outside a larger ball are selected. The needed
link-average gap is packaged in the unified
Proposition~\ref{prop:link-average-separation}, whose two cases are the
\oracleroute{} and the \localroute{}.

\subsubsection*{Algorithm outline}
\step{Vertex Partition}
To keep the concentration arguments independent, we apply the two ingredients
on separate vertex blocks. For the basic three-block procedure, we split the
vertices into
\[
V_1,\quad V_2,\quad V_3.
\]
The exhaustive search and \candidatenet{} construction are performed in \(V_1\).
The selected cluster candidates (a subset of ${\cal C}_\rho$) are then refined
into \(V_2\), producing large intermediate clusters. These intermediate
clusters are refined once more into
\(V_3\), which is the final output block. This is formalized in
Theorem~\ref{thm:three-block-extraction} and
Remark~\ref{rem:three-block-procedure}; the \candidatenet{} construction is the
unified Proposition~\ref{prop:candidate-refinement-net}, with route-specific
proofs for the \oracleroute{} and the \localroute{}. At a high level,
\[
\text{internal-average search}
\longrightarrow
\text{candidate net}
\longrightarrow
\text{intermediate refinement}
\longrightarrow
\text{exact refinement}.
\]

The reason for using disjoint blocks is simple: the edges used to construct
a cluster candidate are independent of the edges used to test that cluster
candidate against fresh vertices. This lets us condition on the cluster
candidate and apply concentration to the
next block, which allows us to exploit conditional independence; this is the
conditioning setup used in Proposition~\ref{prop:abstract-score-threshold-refinement}.

\step{Accuracy/efficiency tradeoffs by choosing smaller subsets}
Although the theorem is stated with three ambient blocks of size \(n\), the
first two stages do not need to use all vertices in their blocks.  The expensive
exhaustive search is performed only on a smaller working subset
\[
V_1'\subseteq V_1,\qquad |V_1'|=n_1,
\]
where \(n_1\) is chosen just large enough for the internal-average search
and the first refinement-net construction to succeed. In the \localroute{} this
internal scale is \(\rho\asymp r^2\), while in the \oracleroute{} it is
\(\rho\asymp r\), up to constants depending only on the model parameters. The
working size \(n_1\) is chosen near the smallest value for which the first-block
occupancy and pair-average concentration events hold, with the additional mild
lower bound \(n_1\ge\log\log n\).

Similarly, the first refinement is performed only into a smaller working subset
\[
V_2'\subseteq V_2,\qquad |V_2'|=n_2.
\]
The role of \(V_2'\) is to produce intermediate clusters large enough to seed
the exact refinement into the full output block \(V_3\).  Thus it is enough to
choose \(n_2\) so that
\[
\mathsf{s}n_2\phi(cr)r^2
\gtrsim
\Lambda\log n.
\]
The final refinement is still performed into the full block \(V_3\), so the
output clusters have size of order \(n\phi(r)\).

This separation of working sizes is useful computationally.  In the dense
fixed-scale regime, where \(\mathsf{s}\asymp1\), \(r\asymp1\), and
\(\phi(cr)\asymp1\), one may take
\[
n_1\asymp \log\log n,
\qquad
n_2\asymp \Lambda\log n.
\]
Then the exhaustive search over \(V_1'\) is only polylogarithmic in \(n\), and
the refinement stages cost \(n\,\operatorname{polylog}(n)\).  Thus the same
statistical construction yields a near-linear-time algorithm for fixed-accuracy
recovery in dense regimes.

\step{The local-window difficulty}
If the link function were bi-Lipschitz on all of
\([0,\operatorname{diam}(M)]\), the \internalaveragecandidate{}s would
behave like ordinary small-radius clusters. However, we only assume that
\(\rmp\) is bi-Lipschitz on a \localbilipschitzwindow{} \([0,r_\rmp]\), as in
Assumption~\ref{assump:link-function}. Outside this window,
the link may flatten or lose metric information. As a result, an internal-average
candidate is not automatically safe for all comparisons.

This is where the \windowsafe{} idea enters. A set is \windowsafe{} if its latent
diameter lies safely inside the \localbilipschitzwindow{}
(Definition~\ref{def:window-safe}). Within such a set, the local
bi-Lipschitz assumption behaves like a global one. If a \fuzzywindoworacle{}
is available on the first block (Definition~\ref{def:fuzzy-window-oracle}),
we restrict the exhaustive search to candidates certified by the oracle. This removes the
square-root loss that appears in the \localroute{} and allows
the candidate search to operate at the target scale \(r\), as reflected in the
two routes of Theorem~\ref{thm:three-block-extraction}.

\step{Two-round bootstrap}
The final algorithm constructs the fuzzy window oracle itself; this is the content of
Theorem~\ref{thm:two-round-extraction}. It uses six blocks.
First, we run the \localroute{} of the \threeblockextraction{} at the coarser scale
\(r_0=\sqrt r\) on the first three blocks. This produces exact coarse clusters. These coarse
clusters are then used as local probes: by averaging their edges to vertices in
the fourth block, we certify which vertices lie in a common \localbilipschitzwindow{}.
This gives a \fuzzywindoworacle{} on the fourth block, via
Lemma \ref{lem:coarse-output-generates-window-oracle}.
Then we run the \oracleroute{} of the \threeblockextraction{} at the target scale
\(r\) on the last three blocks (see Theorem \ref{thm:three-block-extraction}). The local-route run at scale \(\sqrt r\) and the oracle-route run at scale \(r\) both use internal candidate scales comparable to \(r\), so their statistical requirements reduce to
\[
\mathsf{s}n\phi(c r)r^2\gtrsim \log n
\]
for a model-dependent constant \(c>0\). The final repacking argument removes this
constant inside \(\phi\) in the main theorem. Thus the square-root loss is paid
only in the preliminary oracle-construction round, not in the final resolution.

\subsection*{Open questions}
We close with several questions left open by the present work.
\begin{enumerate}
\item Can the efficient regime be improved beyond the fixed-accuracy dense
setting considered here?
\item Is there a genuine statistical-computational gap for high-accuracy
distance denoising in general metric measure spaces?
\item It seems the result could be extended to the case where \(\rmp\) is not necessarily bi-Lipschitz, but only satisfies some weaker regularity condition. For example, if \(\rmp\) satisfies a local H\"older condition of the form
$$
\ell_\rmp |t-t'|^\alpha
\le |\rmp(t)-\rmp(t')| \le L_\rmp |t-t'|^\alpha
$$
for some $\alpha \in (0,1]$, it seems that the same higher level strategy might still work both on the upper and lower bounds, with a different transition.
\end{enumerate}

\subsection*{Acknowledgments}
The authors were partially supported by Vannevar Bush Faculty Fellowship ONR-N00014-
20-1-2826 and by Simon Investigator award (622132). E.M. was also partially supported by
ARO MURI W911NF1910217, NSF DMS-2031883, and NSF award CCF 1918421.

Disclosure of AI-assisted work. H.H. used OpenAI Codex and ChatGPT as auxiliary tools while preparing this manuscript from an older draft, where H.H. provided an older statement or a proof sketch and asked Codex to carry out some of the details. All statements, arguments, and proofs were verified and revised by the authors through multiple rounds of editing.

\section{Graph model and standing assumptions}
\label{sec:graph-model-standing-assumptions}
\begin{defi}[Metric measure space / metric probability space]
A \emph{metric measure space} is a triple \((M,{\rm d},\mu)\), where
\((M,{\rm d})\) is a metric space and \(\mu\) is a Borel measure on \(M\).

If in addition
\[
\mu(M)=1,
\]
then \((M,{\rm d},\mu)\) is called a \emph{metric probability space}.
\end{defi}

Throughout, balls are open:
\[
B(x,r):=\{y\in M:{\rm d}(x,y)< r\}.
\]

\begin{defi}[Centered subgaussian norm]
The \emph{centered subgaussian norm} of a random variable \(X\) is
defined as
\[
\|X\|_{\psi_2}
:= \inf \left\{ \sigma > 0 \;:\;
\mathbb{E}\left[\exp\left(\lambda (X - \mathbb{E}X)\right)\right]
\le \exp\left(\frac{\sigma^2 \lambda^2}{2}\right)
\;\; \text{for all } \lambda \in \mathbb{R}
\right\}.
\]

For \(K\in[0,\infty)\), we say that \(X\) is \emph{\(K\)-subgaussian} if
\[
\|X\|_{\psi_2}\le K.
\]
\end{defi}

\begin{defi}[Random graph model]
\label{def:graph_model}
Let \((M,{\rm d},\mu)\) be a metric probability space, and consider a
non-increasing function
\[
\rmp:[0,\infty)\to\mathbb R.
\]
We formulate the model in the non-increasing, similarity-oriented convention;
non-decreasing real-valued observation models are reduced to this case by
multiplying all observations and the link by \(-1\).
Let \(\bfV\) be a finite vertex set of \(n\) vertices, and let
\(\sp = \mathsf{s}_n \in(0,1]\) be a sparsity parameter.

First, let \(\{X_v\}_{v\in\bfV}\) be i.i.d.\ samples of latent points in \(M\)
according to \(\mu\). For each unordered pair \(\{u,v\}\subseteq\bfV\) with
\(u\neq v\), let \(\mathcal U_{u,v}\sim \mathrm{Unif}[0,1]\), independently
over unordered pairs and independently of \(\{X_v\}_{v\in\bfV}\), and set
\(\mathcal U_{v,u}:=\mathcal U_{u,v}\).

Assume there is a measurable function
\[
\mathbf F:[0,\infty)\times[0,1]\to\mathbb R
\]
such that for every \(t\ge0\),
\[
\int_0^1 \mathbf F(t,s)\,ds=\rmp(t),
\]
and such that
\[
\mathbf F(t,\mathcal U)-\rmp(t)
\]
is \(\Ksg\)-subgaussian uniformly in \(t\), where
\(\mathcal U\sim\mathrm{Unif}[0,1]\). Equivalently,
\[
\|\mathbf F(t,\mathcal U)-\rmp(t)\|_{\psi_2}\le \Ksg
\qquad\text{for every }t\ge0.
\]
Thus, for \(u\neq v\),
\[
\widetilde Z_{u,v}
:=
\mathbf F\!\bigl({\rm d}(X_u,X_v),\mathcal U_{u,v}\bigr)
\]
has conditional mean \(\rmp({\rm d}(X_u,X_v))\) and conditional subgaussian
norm at most \(\Ksg\).

For each unordered pair \(\{u,v\}\subseteq\bfV\) with \(u\neq v\), let
\(B_{u,v}\sim\mathrm{Bernoulli}(\sp)\), independently over \(u<v\), and
independently of \(\{X_v\}_{v\in\bfV}\) and \(\{\mathcal U_{u,v}\}_{u<v}\).
Set
\[
B_{v,u}:=B_{u,v},
\qquad
Z_{u,v}:=B_{u,v}\widetilde Z_{u,v}.
\]
Finally, set \(Z_{u,u}=0\).
\end{defi}

Under this model, for \(U\subseteq\bfV\), write
\[
X_U:=(X_u)_{u\in U}
\]
for the corresponding latent point family. All counts of sampled points are
understood with multiplicity; for example,
\[
|\{u\in U:X_u\in A\}|
\]
counts vertices whose latent points lie in \(A\).

\begin{rem}
\label{rem:observed-weighted-graph}
We refer to \(Z_{u,v}\) as the \observededgeweight{} between \(u\) and \(v\).
The collection of these weights is the \observedweightedgraph{}.
In the Bernoulli soft random geometric graph special case, this observed edge
weight is the edge indicator.
\end{rem}

\begin{rem}
When \(0\le \rmp\le 1\) and \(\mathbf F(t,s) = {\bf 1}\{s \le \rmp(t)\}\), the model reduces to a sparse soft random
geometric graph with edge probability
\(\sp\rmp({\rm d}(X_u,X_v))\). The formulation above allows more general
weighted subgaussian observations.
\end{rem}

\begin{defi}[Lower \(\phi\)-regularity]
\label{def:lower-phi-regularity}
Let \(\mu\) be a Borel probability measure on a metric space
\((M,{\rm d})\), and define
\[
\mu_{\min}(r) := \inf_{p \in \operatorname{supp}(\mu)} \mu(B(p,r)).
\]
Let \(I\) be either \([0,r_*)\) for some \(r_*>0\), or \((0,\infty)\), and
let \(\phi:I\to[0,\infty)\) be nonnegative and nondecreasing. We say that
\((M,{\rm d},\mu)\) satisfies \lowerphiregularity{} on \(I\) if
\[
\mu_{\min}(r)\ge \phi(r)
\qquad\text{for all } r\in I.
\]
When \(I=(0,r_\mu]\), we also say that \((M,{\rm d},\mu)\) is
\lowerphiregularity{} up to scale \(r_\mu\). When
\(I=(0,\infty)\), we simply say that it is \emph{lower \(\phi\)-regular}.
Equivalently, throughout the extraction argument one may replace \(M\) by
\(\operatorname{supp}(\mu)\).
\end{defi}

\begin{assump}[Support convention]
\label{assump:support-convention}
We assume, after replacing the ambient metric space by
\(\operatorname{supp}(\mu)\) if necessary, that
\[
M=\operatorname{supp}(\mu).
\]
\end{assump}

\begin{assump}[Lower regularity of the measure]
\label{assump:lower-regularity}
Let \(\mu\) be a Borel probability measure on a metric space \((M,{\rm d})\).
We assume that there exist \(r_{\mu}>0\) and a nonnegative nondecreasing
function \(\phi:(0,r_\mu]\to[0,1]\) such that \((M,{\rm d},\mu)\) is
\lowerphiregularity{} up to scale \(r_\mu\).
\end{assump}

\begin{assump}[Local link function]
\label{assump:link-function}
Let \(\rmp:[0,\infty)\to\mathbb R\) be a non-increasing function. We assume
there exist positive constants \(L_{\rmp}, \ell_\rmp, r_\rmp>0\) such that
for all \(a,b\in[0,r_\rmp]\),
\[
\ell_\rmp |a-b|\le |\rmp(a)-\rmp(b)|\le L_\rmp |a-b|.
\]
This is the \locallinkassumption{}; the interval \([0,r_\rmp]\) is the
\localbilipschitzwindow{}. Given a metric space
\((M,{\rm d})\), we also assume that
\[
M_{\rmp}:=\sup_{t\in[0,{\rm diam}(M)]}|\rmp(t)|<+\infty.
\]
\end{assump}

\begin{assump}[Standing model assumptions]
\label{assump:graph_model}
Throughout this paper, we consider the random graph model defined in
Definition~\ref{def:graph_model} with link function \(\rmp\) satisfying
Assumption~\ref{assump:link-function} and measure \(\mu\) satisfying
Assumptions~\ref{assump:support-convention} and
\ref{assump:lower-regularity}. We may consider the same model with different
vertex-set sizes and sparsity parameters \(\sp\).
\end{assump}

\section{Basic definitions, events, and concentration}
\label{sec:basic-definitions-events-concentration}
Here we define several basic events and estimates that will be used repeatedly in the main construction. These estimates are standard consequences of occupancy bounds, Chernoff--Bernstein inequalities, subgaussian concentration, and union bounds. To keep the main construction readable, their proofs are postponed to Appendix~\ref{app:auxiliary-probability-proofs}.

Here we introduce a global large parameter
\begin{equation}
    \label{def:global-large-parameter}
\Lambda \ge 1,
\end{equation}
whose value will be determined by the needs of the main construction. The parameter \(\Lambda\) is used to control the probability of failure of various events.
Unless explicitly stated otherwise, throughout the rest of this paper we work
under Assumption~\ref{assump:graph_model}. In particular, the latent points
\((X_v)_{v\in\bfV}\) are sampled from a metric probability space
\((M,{\rm d},\mu)\) satisfying \lowerphiregularity{} with
\(M=\operatorname{supp}(\mu)\), the \observededgeweight{}s \(Z_{u,v}\) are generated by
Definition~\ref{def:graph_model}, and the link function \(\rmp\) satisfies the
\locallinkassumption{} of Assumption~\ref{assump:link-function}.

\subsection{Occupancy event}
The first event is a uniform lower occupancy event, which ensures that every latent ball contains a sufficiently large number of sampled points. This is a consequence of the \lowerphiregularity{} assumption and standard occupancy bounds.

\begin{defi}[Lower occupancy event]
\begin{equation}
\label{eq:def-lower-occupancy-event}
\Ept{W}{r}
:=
\left\{
|\{v\in W:X_v\in B(x,r)\}|
\ge
\frac{|W|\phi(r/3)}{2}
\text{ for every }x\in M
\right\}.
\end{equation}
We call \(\Ept{W}{r}\) the \loweroccupancyevent{}.
\end{defi}

\begin{lemma}[Uniform lower occupancy]
\label{lem:uniform-lower-occupancy}
Let \((M,{\rm d},\mu)\) satisfy \lowerphiregularity{} up to scale \(r_\mu\).
Thus \(\phi\) is nonnegative and nondecreasing, as in
Definition~\ref{def:lower-phi-regularity}.
Let \(W\) be a finite vertex set, assume \(n_\star:=|W| \ge 2\), and let
\(\{X_v\}_{v\in W}\) be i.i.d.\ samples from \(\mu\). If
\[
r\in(0,r_\mu],
\qquad
\phi(r/6)\ge \Lambda \frac{\log n_\star}{n_\star},
\]
then
\[
\mathbb P\bigl(\Ept{W}{r}\bigr)\ge 1-\exp\left(
     - \tfrac{1}{16}\Lambda \log(n_\star)
\right)\,,
\]
provided $\Lambda$ is greater than some universal constant.
\end{lemma}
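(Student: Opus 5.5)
The plan is a finite-net reduction followed by a Chernoff bound and a union bound over the net. The continuum of centers $x\in M$ is handled by a maximal $r/3$-separated set $\{y_1,\dots,y_m\}\subseteq M$, a standard $r/3$-net. Since $M=\operatorname{supp}(\mu)$, the balls $B(y_i,r/6)$ are pairwise disjoint and each has mass at least $\phi(r/6)>0$. Hence
\[
m\le 1/\phi(r/6)\le n_\star/(\Lambda\log n_\star)\le n_\star .
\]
A maximal separated set exists and is finite by this mass bound: any $r/3$-separated set has at most $1/\phi(r/6)$ points, so we may take one of maximal cardinality.

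For any $x\in M$, maximality gives some $y_i$ with $\D{x}{y_i}<r/3$. Then $B(y_i,r/3)\subseteq B(x,2r/3)\subseteq B(x,r)$. It therefore suffices to show that, with high probability, every $i$ satisfies
\[
|\{v\in W: X_v\in B(y_i,r/3)\}|\ge \tfrac12 n_\star\phi(r/3).
\]

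Fix $i$. The count $N_i$ is $\mathrm{Bin}(n_\star,p_i)$ with $p_i=\mu(B(y_i,r/3))\ge\phi(r/3)$. This uses $r/3\le r_\mu$ and that $y_i$ lies in the support. The multiplicative Chernoff bound gives
\[
\Prob\bigl(N_i<\tfrac12 n_\star p_i\bigr)\le \exp(-n_\star p_i/8).
\]
Since $\tfrac12 n_\star\phi(r/3)\le\tfrac12 n_\star p_i$, the same bound controls $\Prob\bigl(N_i<\tfrac12 n_\star\phi(r/3)\bigr)$. Because $\phi$ is nondecreasing,
\[
n_\star p_i\ge n_\star\phi(r/6)\ge\Lambda\log n_\star,
\]
so each failure probability is at most $n_\star^{-\Lambda/8}$.

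A union bound over $m\le n_\star$ net points gives total failure probability at most
\[
n_\star^{1-\Lambda/8}\le \exp\bigl(-\tfrac1{16}\Lambda\log n_\star\bigr)
\]
once $\Lambda\ge 16$. This uses $n_\star\ge2$, so $\log n_\star>0$.

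The only delicate point is the net step. The argument needs the net centers to lie in $\operatorname{supp}(\mu)$, which is guaranteed by Assumption~\ref{assump:support-convention}, so that lower regularity applies to the net balls. It also needs the packing bound, which is what keeps the net finite and of size at most $n_\star$ even though $M$ need not be compact. The cardinality argument above handles this without any compactness assumption.
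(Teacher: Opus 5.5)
Your proof is correct and follows the paper's argument essentially step for step. Both take a maximal $r/3$-separated set of size at most $1/\phi(r/6)$ via disjoint $r/6$-balls, apply the lower-tail Chernoff bound $\exp(-n_\star\phi(r/3)/8)$ at each net point with a union bound, and finish with the inclusion $B(y_i,r/3)\subseteq B(x,r)$. Your extra remarks (existence and finiteness of the maximal separated set via the packing bound, and passing from the true mean $p_i$ to $\phi(r/3)$ around the Chernoff step) are small clarifications that the paper leaves implicit.
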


\subsection{Clusters and empirical averages}
We first record the basic cluster notion and the empirical averages used to test clusters against vertices.

\begin{defi}[\((r,\eta)\)-cluster]
\label{def:cluster}
Assume the graph model of Definition~\ref{def:graph_model}. Let \(r>0\) and
\(\eta\in[0,1]\). A nonempty subset \(U\subseteq\bfV\) is called an
\emph{\((r,\eta)\)-cluster with center \(x\in M\)} if
\[
\left|\left\{u\in U:X_u\in B(x,r)\right\}\right|
\ge
(1-\eta)|U|.
\]
\end{defi}

\begin{defi}[Latent and empirical link averages]
\label{def:normalized-averages}
\label{def:latent-empirical-link-averages}
For a nonempty set \(U\subseteq\bfV\), define its \emph{latent link average}
\begin{equation}
\label{eq:def-latent-link-average}
\acn{U}{y}
:=
\frac1{|U|}
\sum_{u\in U}\rmp\bigl({\rm d}(X_u,y)\bigr),
\qquad y\in M.
\end{equation}
When the argument is a vertex \(v\), we write, by abuse of notation,
\[
\acn{U}{v}:=\acn{U}{X_v}.
\]
For \(v\in\bfV\setminus U\), define the \emph{empirical link average}
\begin{equation}
\label{eq:def-normalized-average}
\cn{U}{v}
:=
\frac{\sum_{u\in U}Z_{u,v}}{\sp |U|},
\end{equation}
so that
\[
\mathbb E\bigl[\cn{U}{v}\mid X_v,X_U\bigr]=\acn{U}{v}.
\]
Thus \(\cn{U}{v}\) is the empirical link average from \(U\) to \(v\), and
\(\acn{U}{v}\) is its conditional mean.
\end{defi}

\subsection{Link-average concentration events}
The key object is the \empiricallinkaverage{} \(\cn{U}{v}\) from a set \(U\)
to a vertex \(v\). The link-average concentration events ensure that
\(\cn{U}{v}\) concentrates around its conditional mean \(\acn{U}{v}\). Thus,
if \(U\) is a cluster with center \(x\), then \(\cn{U}{v}\) is a good proxy for
\(\rmp({\rm d}(x,X_v))\), which is the key to distance recovery from clusters.

\begin{lemma}[Fluctuation of empirical link averages for a fixed vertex]
\label{lem:fixed-v-navigation}
Consider the graph model of Assumption~\ref{assump:graph_model}. Let
\(U\subseteq\bfV\) be nonempty, let \(v\in\bfV\setminus U\), and set
\(m:=|U|\). There exist constants \(c_{\rm link},C_{\rm link}>0\), depending only on \(M_\rmp\) and
\(\Ksg\), such that for every fixed realization \(X_U=x_U\) and \(X_v=x_v\),
\[
\mathbb P\!\left(
\left|\cn{U}{v}-\acn{U}{v}\right|>t
\ \middle|\ X_U=x_U,\ X_v=x_v
\right)
\le
C_{\rm link}\exp\!\bigl(-c_{\rm link}\,\sp m\,\min\{t^2,1\}\bigr)
\qquad\text{for all }t>0.
\]
\end{lemma}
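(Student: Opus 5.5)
Conditionally on \(X_U=x_U\) and \(X_v=x_v\), all latent distances \(t_u:={\rm d}(x_u,x_v)\) are fixed, and the summands \(Z_{u,v}=B_{u,v}\widetilde Z_{u,v}\), \(u\in U\), are independent across \(u\). Indeed, they involve distinct unordered pairs \(\{u,v\}\), and the \(B\)'s and \(\mathcal U\)'s are independent of each other and of the latent points. The plan is a Bernstein-type bound for the centered sum
\[
\sp m\bigl(\cn{U}{v}-\acn{U}{v}\bigr)=\sum_{u\in U}\bigl(Z_{u,v}-\sp\rmp(t_u)\bigr).
\]
It is cleanest to split each summand as
\[
Z_{u,v}-\sp\rmp(t_u)=B_{u,v}\bigl(\widetilde Z_{u,v}-\rmp(t_u)\bigr)+(B_{u,v}-\sp)\rmp(t_u)=:Y_u+W_u,
\]
and bound the deviations of \(\sum Y_u\) and \(\sum W_u\) separately, each by \(\sp m t/2\).

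For the Bernoulli part, the \(W_u\) are independent, mean zero, bounded by \(|W_u|\le M_\rmp\), and have variance at most \(\sp M_\rmp^2\). Bernstein's inequality gives
\[
\mathbb P\Bigl(\bigl|\textstyle\sum W_u\bigr|>\sp m t/2\Bigr)\le 2\exp\Bigl(-c\,\frac{\sp^2m^2t^2}{\sp mM_\rmp^2+M_\rmp\sp m t}\Bigr)\le 2\exp\bigl(-c'\sp m\min\{t^2,t\}\bigr),
\]
with \(c'\) depending only on \(M_\rmp\). This is at most \(2\exp(-c'\sp m\min\{t^2,1\})\), since \(\min\{t^2,t\}\ge\min\{t^2,1\}\).

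For the subgaussian part, I would first condition additionally on the Bernoulli variables \(B_{u,v}\). Let \(S:=\#\{u:B_{u,v}=1\}\). Given the \(B\)'s, \(\sum Y_u\) is a sum of \(S\) independent centered \(\Ksg\)-subgaussian variables, so
\[
\mathbb P\Bigl(\bigl|\textstyle\sum Y_u\bigr|>\sp m t/2\;\Big|\;B\Bigr)\le 2\exp\Bigl(-\frac{\sp^2m^2t^2}{8\Ksg^2 S}\Bigr).
\]
On the event \(\{S\le 2\sp m\}\), the right-hand side is at most \(2\exp(-\sp m t^2/(16\Ksg^2))\). By the multiplicative Chernoff bound, \(\mathbb P(S>2\sp m)\le\exp(-\sp m/3)\), and this term is absorbed into \(\exp(-c\,\sp m\min\{t^2,1\})\).

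Summing the three failure probabilities gives the claim, with \(C_{\rm link}=5\) and \(c_{\rm link}\) the minimum of the constants above. This \(c_{\rm link}\) depends only on \(M_\rmp\) and \(\Ksg\).

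The one delicate point is that the product \(B\widetilde Z\) is not subgaussian with variance proxy of order \(\sp\). That is why I condition on the \(B\)'s and control \(S\) via Chernoff, rather than using a single subgaussian bound. Handling \(S\) this way is exactly what produces the \(\min\{t^2,1\}\) form instead of a pure \(t^2\) rate. Everything else is routine.
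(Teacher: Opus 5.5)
Your proposal is correct and follows essentially the same route as the paper's proof. The paper uses the same split into the Bernoulli-fluctuation term \(\sum_u (B_{u,v}-\mathsf{s})\rmp(t_u)\), bounded by Bernstein, and the noise term \(\sum_u B_{u,v}(\widetilde Z_{u,v}-\rmp(t_u))\), bounded by conditioning on the \(B_{u,v}\)'s together with a Chernoff bound on their count, which is absorbed into the \(\min\{t^2,1\}\) rate and gives \(C_{\rm link}=5\).
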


\begin{defi}[Link-average concentration event]
We associate to \(\cn{U}{v}\) the fluctuation scale
\begin{equation}
\label{eq:def-fluctuation-scale}
\epsnav{U}{n_\star}
:=
\sqrt{\Lambda \frac{\log n_\star}{\sp |U|}},
\qquad n_\star\ge2,
\end{equation}
where \(n_\star\) is a reference size parameter for logarithmic union bounds.

\begin{equation}
\label{eq:def-navigation-event}
\Enavi{U}{V}{n_\star}
:=
\left\{
\forall v\in V:
\left|\cn{U}{v}-\acn{U}{v}\right|
\le
\epsnav{U}{n_\star}
\right\}.
\end{equation}
We call \(\Enavi{U}{V}{n_\star}\) the \navigationevent{}.
\end{defi}

\begin{lemma}[Uniform link-average concentration]
\label{lem:uniform-navigation-event}
Consider the graph model of Assumption~\ref{assump:graph_model}. Let
\(U,V\subseteq\bfV\) be disjoint, with \(U\) nonempty. Let \(n_\star\ge2\) be a
reference size parameter. Assume
\[
\sp |U|\ge \Lambda \log n_\star,
\qquad
|V|\le n_\star\,.
\]
 Then for all fixed realizations
\(X_U=x_U\) and \(X_V=x_V\),
\[
\mathbb P\!\left(
\Enavi{U}{V}{n_\star}^{\,c}
\ \middle|\ X_U=x_U,\ X_V=x_V
\right)
\le
\exp( - \tfrac{1}{2}c_{\rm link}\,\Lambda \log n_\star)\,,
\]
provided that $\Lambda$ is greater than some universal constant depending on $C_{\rm link}, c_{\rm link}$ (and thus depends on $M_\rmp$ and $\Ksg$).
\end{lemma}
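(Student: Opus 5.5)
The plan is a union bound over \(v\in V\), applying Lemma~\ref{lem:fixed-v-navigation} to each fixed vertex with \(t=\epsnav{U}{n_\star}\). Conditioning on \(X_U=x_U\) and \(X_V=x_V\) fixes every latent position appearing in the event. For each \(v\in V\), the conditional law of \(\cn{U}{v}\) depends only on \(x_U\) and \(x_v\). This is because the edge variables \((B_{u,v},\mathcal U_{u,v})_{u\in U}\) are independent of all latent points, and \(U\cap V=\emptyset\) guarantees \(v\notin U\). Hence Lemma~\ref{lem:fixed-v-navigation} applies verbatim under the larger conditioning on \(X_V\), and we do not need independence across different \(v\) at all.

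Set \(m=|U|\) and \(t=\epsnav{U}{n_\star}=\sqrt{\Lambda\log n_\star/(\sp m)}\). The hypothesis \(\sp m\ge \Lambda\log n_\star\) gives \(t\le1\), so \(\min\{t^2,1\}=t^2\). Therefore \(\sp m\,\min\{t^2,1\}=\Lambda\log n_\star\), and for each \(v\in V\),
\[
\mathbb P\bigl(|\cn{U}{v}-\acn{U}{v}|>\epsnav{U}{n_\star}\ \big|\ X_U=x_U,X_V=x_V\bigr)\le C_{\rm link}\exp(-c_{\rm link}\Lambda\log n_\star).
\]

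Taking the union bound over the at most \(n_\star\) vertices of \(V\) gives the failure probability bound
\[
C_{\rm link}\,n_\star\exp(-c_{\rm link}\Lambda\log n_\star)=\exp\bigl(\log C_{\rm link}+\log n_\star-c_{\rm link}\Lambda\log n_\star\bigr).
\]
This is at most \(\exp(-\tfrac12 c_{\rm link}\Lambda\log n_\star)\) as soon as \(\tfrac12c_{\rm link}\Lambda\log n_\star\ge \log n_\star+\log C_{\rm link}\). Since \(n_\star\ge2\), we have \(\log n_\star\ge\log 2\), so it suffices to take \(\Lambda\ge (2/c_{\rm link})(1+\log^+ C_{\rm link}/\log 2)\). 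This threshold depends only on \(C_{\rm link}\) and \(c_{\rm link}\), and hence only on \(M_\rmp\) and \(\Ksg\).

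The only point requiring care is the first step: justifying that the extra conditioning on \(X_V\) is harmless. This reduces to noting that \(Z_{u,v}\) depends on the latent points only through \((X_u,X_v)\), and that the randomness of the edges is independent of all latent points. Everything after that is routine arithmetic.
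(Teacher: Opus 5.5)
Your proof is correct and is essentially the paper's argument: apply Lemma~\ref{lem:fixed-v-navigation} with \(t=\epsnav{U}{n_\star}\le 1\) to each \(v\in V\), then union bound over the at most \(n_\star\) vertices, with \(\log n_\star+\log C_{\rm link}\) absorbed into half the exponent once \(\Lambda\) is large. Beyond the paper, you also justify explicitly why the additional conditioning on all of \(X_V\) is harmless and give an explicit admissible threshold for \(\Lambda\); the paper leaves both implicit.
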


\subsection{Pair averages}
The next definition generalizes the empirical link average: it averages observed edge weights over pairs of vertices in two sets \(U\) and \(V\). The corresponding expected value is an average of the link function over pairs of latent points in \(X_U\) and \(X_V\).

\begin{defi}[Pair averages]
\label{def:pair-averages}
For subsets \(U,V\subseteq\bfV\), define the ordered off-diagonal pair set
\[
\mathcal D(U,V)
:=
\{(u,v)\in U\times V:\ u\neq v\}.
\]
Then
\[
|\mathcal D(U,V)|=|U||V|-|U\cap V|.
\]
When \(|\mathcal D(U,V)|>0\), define
\[
\pav{U,V}
:=
\frac{1}{\sp |\mathcal D(U,V)|}
\sum_{(u,v)\in\mathcal D(U,V)}Z_{u,v},
\quad\mbox{and}\quad
\apav{U,V}
:=
\frac{1}{|\mathcal D(U,V)|}
\sum_{(u,v)\in\mathcal D(U,V)}
\rmp\!\bigl(\D{X_u}{X_v}\bigr).
\]
With this convention,
\[
\mathbb E\!\left[\pav{U,V}\mid X_{\bfV}\right]=\apav{U,V}.
\]
For \(U=V\) with \(|U|\ge2\), write
\[
\pav{U}:=\pav{U,U},
\qquad
\apav{U}:=\apav{U,U}.
\]
We call \(\pav{U,V}\) the \pairaverage{} between \(U\) and \(V\).
Since \(\mathcal D(U,U)\) consists of ordered off-diagonal pairs, each
unordered pair is counted twice. This agrees with the usual unordered average
because \(Z_{u,v}=Z_{v,u}\) and \(\D{X_u}{X_v}=\D{X_v}{X_u}\).
\end{defi}

\begin{defi}[Uniform pair-average event]
\begin{equation}
\label{eq:def-pair-average-event}
\Enn{W}{\lambda}{m}
:=
\left\{
\begin{array}{l}
\text{for all }U_1,U_2\subseteq W\text{ with }|U_1|\ge m,\ |U_2|\ge m,\\[0.3ex]
\left|\pav{U_1,U_2}-\apav{U_1,U_2}\right|\le\lambda
\end{array}
\right\}.
\end{equation}
We call \(\Enn{W}{\lambda}{m}\) the \uniformpairaverageevent{}.
\end{defi}

\begin{rem}
The event \(\Enn{W}{\lambda}{m}\) allows \(U_1\) and \(U_2\) to overlap. The
diagonal is omitted by \(\mathcal D(U_1,U_2)\), and the appendix proof handles
the resulting duplicate appearances of unordered edges by assigning
multiplicities in \(\{0,1,2\}\). The union bound is taken over all cardinalities
\(|U_1|=a\), \(|U_2|=b\) with \(a,b\ge m\).
\end{rem}

\begin{lemma}[Uniform concentration of pair averages]
\label{lem:uniform-pair-average}
Let \(W\) be a vertex set, and set \(n_\star:=|W|\). Fix
\(\varphi\in(0,1)\), set
\[
m:=\lceil\varphi n_\star\rceil,
\]
and assume \(m\ge2\). There exist constants \(c_{\tref{lem:uniform-pair-average}},C_{\tref{lem:uniform-pair-average}}>0\), depending only on
\(\Ksg\) and \(M_\rmp\), such that for every \(\lambda\in(0,1)\), if
\[
\sp n_\star\varphi\lambda^2
\ge
C_{\tref{lem:uniform-pair-average}}\log(e/\varphi),
\]
then, conditionally on \(X_W\),
\[
\mathbb P\!\left(
\Enn{W}{\lambda}{m}^{\,c}
\ \middle|\ X_W
\right)
\le
\exp\{-c_{\tref{lem:uniform-pair-average}}\varphi n_\star\log(e/\varphi)\}
\]
\end{lemma}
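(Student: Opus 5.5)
We will prove the bound by a union bound over pairs of cardinalities \((a,b)\) with \(a,b\ge m\), and within each cardinality pair over all subsets \(U_1,U_2\subseteq W\) of those sizes. Throughout we condition on \(X_W\), so the weights \(Z_{u,v}\) for unordered pairs \(\{u,v\}\subseteq W\) are independent.

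\textbf{Step 1: tail bound for a fixed pair of subsets.} Fix \(U_1,U_2\) with \(|U_1|=a\), \(|U_2|=b\), and set \(D:=|\mathcal D(U_1,U_2)|\ge ab-\min\{a,b\}\ge ab/2\). Write
\[
\pav{U_1,U_2}-\apav{U_1,U_2}=\frac{1}{\sp D}\sum_{\{u,v\}} m_{uv}\bigl(Z_{u,v}-\sp\,\rmp(\D{X_u}{X_v})\bigr),
\]
where the sum runs over unordered pairs and the multiplicities satisfy \(m_{uv}\in\{0,1,2\}\). Each summand splits as
\[
(B_{u,v}-\sp)\widetilde Z_{u,v}+\sp\bigl(\widetilde Z_{u,v}-\rmp(\D{X_u}{X_v})\bigr).
\]
The second piece is \(\sp\Ksg\)-subgaussian. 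The first piece is a centered product of a Bernoulli variable with a subgaussian variable of mean bounded by \(M_\rmp\), so its variance is \(O(\sp)\) and it is sub-exponential with \(\psi_1\)-norm \(O(1)\).

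Apply Bernstein's inequality for sums of independent sub-exponential variables to the \(O(D)\) summands with weights \(m_{uv}\le 2\). Because \(\sum m_{uv}=D\) and \(\sum m_{uv}^2\le 2D\), this gives
\[
\Prob\bigl(|\pav{U_1,U_2}-\apav{U_1,U_2}|>\lambda \mid X_W\bigr)\le 2\exp\bigl(-c\min\{\sp D\lambda^2,\ \sp D\lambda\}\bigr)=2\exp(-c\,\sp D\lambda^2)
\]
for \(\lambda<1\). The variance term \(\sp D\) is what produces the factor \(\sp\) rather than \(\sp^2\) after normalization by \((\sp D)^2\).

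\textbf{Step 2: union bound.} The number of pairs \((U_1,U_2)\) with the given cardinalities is
\[
\binom{n_\star}{a}\binom{n_\star}{b}\le\exp\bigl(a\log(en_\star/a)+b\log(en_\star/b)\bigr).
\]
Assume \(a\le b\). Since \(a,b\ge m\ge\varphi n_\star\), we have \(\log(en_\star/a)\le\log(e/\varphi)\), so the entropy is at most \(2b\log(e/\varphi)\). The exponent satisfies
\[
\sp D\lambda^2\ge \tfrac12\sp ab\lambda^2\ge \tfrac12\sp\,\varphi n_\star\lambda^2\, b\ge \tfrac12 C\,b\log(e/\varphi),
\]
using \(a\ge\varphi n_\star\) and the hypothesis \(\sp n_\star\varphi\lambda^2\ge C\log(e/\varphi)\).

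Taking \(C\) large relative to \(c\), each \((a,b)\) term is at most \(\exp(-c' b\log(e/\varphi))\). Summing over \(a\le b\) and \(b\ge m\) costs a factor \(b\), which is absorbed into the exponent, and the resulting geometric series in \(b\) is dominated by its first term. This yields \(\exp(-c''\varphi n_\star\log(e/\varphi))\), where we also use \(m\ge2\) and \(m\log(e/\varphi)\ge1\) to absorb the remaining constants.

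\textbf{Main obstacle.} The delicate point is Step 1's Bernstein bound in the sparse regime. A naive subgaussian bound on \(Z_{u,v}\) would lose the factor \(\sp\), giving \(\sp^2\) in place of \(\sp\). The bound must instead use the variance of order \(\sp\), which comes from the Bernoulli mask. One must check that the sub-exponential (linear) term in Bernstein's inequality does not dominate; this holds precisely because \(\lambda<1\), so \(\lambda^2\le\lambda\). Handling the overlap of \(U_1\) and \(U_2\) requires only the multiplicities \(m_{uv}\le2\), which change constants but not the structure of the argument.
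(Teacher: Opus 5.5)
Your overall architecture matches the paper's: a fixed-pair tail bound $\exp(-c\,\sp D\lambda^2)$ with $D\ge ab/2$, then a union bound with entropy at most $(a+b)\log(e/\varphi)$. You differ from the paper in how the fixed-pair bound is obtained and how the cardinality classes are summed. The first difference hides a misattributed tool.

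\textbf{Fixed-pair bound.} The paper splits $Z_e-\sp a_e=(B_e-\sp)a_e+B_e\xi_e$.
\begin{itemize}
\item The first piece is bounded, so the classical Bernstein inequality with $\sum\mathrm{Var}\le 2\sp DM_\rmp^2$ applies.
\item For the second piece, it shows $\sum_e w_eB_e\le 2\sp D$ off an event of probability $e^{-c\sp D}$. Then, conditionally on the masks, it uses a subgaussian bound with variance proxy $\lesssim\Ksg^2\sp D$.
\end{itemize}
So only bounded Bernstein and subgaussian Hoeffding are needed.

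Your split $(B-\sp)\widetilde Z+\sp\xi$ leaves an unbounded main term. Here ``Bernstein for sub-exponential variables'' does not give what you claim. In its standard form, the quadratic regime is governed by $\max\|Y\|_{\psi_1}^2\sum m_{uv}^2\asymp D$, not by $\sum\mathrm{Var}$. With $\psi_1$-norm $O(1)$ you therefore get $\exp(-c\min\{\sp^2D\lambda^2,\sp D\lambda\})$, which is precisely the $\sp^2$ loss you warn about. Knowing in addition that the variance is $O(\sp)$ does not repair this in general; one loses a factor logarithmic in $1/\sp$.

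What you need is Bernstein's inequality under the moment condition $\E|Y|^k\le\frac{k!}{2}vK^{k-2}$ for $k\ge2$, with $v\asymp\sp$ and $K\asymp1$. This condition does hold here, because $B$ and $\widetilde Z$ are independent:
\begin{itemize}
\item $\E|(B-\sp)\widetilde Z|^k=\E|B-\sp|^k\,\E|\widetilde Z|^k\le\sp\,k!\,C^k$, with $C$ depending only on $M_\rmp$ and $\Ksg$;
\item the $\sp\xi$ term is harmless, since $\sp^k\le\sp$.
\end{itemize}
With that substitution, your Step 1 is correct.

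\textbf{Union bound.} Your summation is a small genuine improvement in bookkeeping. You keep the exponent linear in $b$ via $a\ge\varphi n_\star$ and sum a geometric series from $b=m$, which lands directly on $\exp(-c\,m\log(e/\varphi))$. The paper instead:
\begin{itemize}
\item dominates the entropy class by class via $\alpha+\beta\le2\alpha\beta/\varphi$;
\item bounds crudely by $n_\star^2$ classes times $\exp(-c\lambda^2\sp\varphi^2n_\star^2)$;
\item absorbs the factor $n_\star^2$, which implicitly uses $\varphi n_\star\log(e/\varphi)\ge\log(en_\star)$ (valid since $\varphi n_\star>1$ when $m\ge2$).
\end{itemize}
So your route is cleaner at the union-bound stage. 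The paper's decomposition is what lets it avoid a moment-type Bernstein inequality for unbounded summands.
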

\begin{rem}
\label{rem:uniform-pair-average-condition}
Later we will often verify the stronger sufficient condition
\[
\sp \frac{n_\star}{\Lambda\log n_\star}\varphi\lambda^2
\ge
C_{\tref{lem:uniform-pair-average}}.
\]
Indeed, since \(\Lambda\ge1\), this implies the same condition with
\(\Lambda\) removed from the denominator. For \(a,b\ge e\) and \(k>0\),
\[
a\ge b\log^k a
\quad\Longrightarrow\quad
a\ge b\log^k b,
\]
because \(\log(a)\ge1\) first gives \(a\ge b\), and hence
\(\log(a)\ge\log(b)\). Applying this with
\[
a=n_\star,
\qquad
b=\frac{C_{\tref{lem:uniform-pair-average}}}{\sp\varphi\lambda^2},
\qquad
k=1,
\]
we obtain
\[
\sp n_\star\varphi\lambda^2
\ge
C_{\tref{lem:uniform-pair-average}}
\log\!\left(
\frac{C_{\tref{lem:uniform-pair-average}}}{\sp\varphi\lambda^2}
\right)
\ge
C_{\tref{lem:uniform-pair-average}}\log(e/\varphi).
\]
The last inequality follows after increasing
\(C_{\tref{lem:uniform-pair-average}}\) if necessary, using
\(\sp\le1\) and \(\lambda\in(0,1)\).
\end{rem}

\subsection{Window-safe sets and fuzzy window oracles}
The next definition is the notion of \windowsafe{} sets, which are sets whose latent diameter lies safely inside the \localbilipschitzwindow{}. This ensures that the local bi-Lipschitz assumption behaves like a global one on such sets. The second part of the definition is the notion of a \fuzzywindoworacle{}, which is an oracle that certifies pairs of vertices whose latent points lie within a smaller window. This allows us to restrict the candidate search to certified sets, which removes the square-root loss in the main construction.

\begin{defi}[Window-safe sets and fuzzy window oracles]
\label{def:window-safe}
\label{def:fuzzy-window-oracle}
For a possibly multiset \(A\subseteq M\), define the latent diameter
\[
\operatorname{diam}(A)
:=
\sup_{p,p'\in A}\D{p}{p'}.
\]
For \(\lambda_{\win}\in(0,1]\), we say a multiset \(A\) of \(M\) is
\(\lambda_{\win}\)-\windowsafe{} if
\[
\operatorname{diam}(A)\le \lambda_{\win}r_\rmp.
\]
By abuse of notation, a vertex set \(U\subseteq\bfV\) is
\(\lambda_{\win}\)-\windowsafe{} when the corresponding latent multiset
\(X_U\) has that property. Similarly, \((U,y)\) is
\(\lambda_{\win}\)-\windowsafe{} when \(X_U\cup\{y\}\) has that property.
The case \(\lambda_{\win}=1\) means that the relevant latent diameter lies
inside the local bi-Lipschitz window.

For \(0<\alpha_{\win}<\lambda_{\win}<1\),
a map
\[
\mathcal O_{\win}:\bfV\times\bfV\to\{0,1\}
\]
is called an \((\alpha_{\win},\lambda_{\win})\)-\fuzzywindoworacle{}
if, for all \(u,v\in\bfV\),
\[
\D{X_u}{X_v}\le \alpha_{\win}r_\rmp
\quad\Longrightarrow\quad
\mathcal O_{\win}(u,v)=1,
\]
and
\[
\D{X_u}{X_v}>\lambda_{\win}r_\rmp
\quad\Longrightarrow\quad
\mathcal O_{\win}(u,v)=0.
\]
We say that the oracle certifies a set \(U\subseteq\bfV\) if
\[
\mathcal O_{\win}(u,v)=1
\qquad\text{for every }u,v\in U.
\]
\end{defi}

\begin{rem}
If a fuzzy window oracle certifies \(U\), then \(U\) is
\(\lambda_{\win}\)-\windowsafe{}. Conversely, if
\[
\operatorname{diam}(X_U)\le \alpha_{\win}r_\rmp,
\]
then the oracle certifies \(U\). Similarly, if
\[
\operatorname{diam}(X_U\cup X_V)\le \alpha_{\win}r_\rmp,
\]
then all pairs between \(U\) and \(V\) are certified by the oracle.
\end{rem}

\section{Internal-average candidates and link-average separation}
\label{sec:internal-average-link-separation}
\label{sec:internal-average-cluster-candidates}

We first introduce the notion of an \internalaveragecandidate{}, which is a
set \(U\) whose internal pair average \(\apav{U}\) is close to the maximum
possible value \(\rmp(0)\). These sets can be extracted from the observed
graph, and the key point is that such a set behaves like a cluster with a
center \(\theta_U\).

We then introduce
\linkaverageseparation{}, the deterministic condition used by the refinement
step to separate points close to the center \(\theta_U\) from points that are
far away.

\begin{defi}[Internal-average candidate]
\label{def:internal-average-candidate}
Let \(U\subseteq\bfV\) and \(\Delta\ge0\). We say that \(U\) is a
\(\Delta\)-\keyterm{internal-average candidate} if
\[
\apav{U}\ge \rmp(0)-\Delta.
\]
\end{defi}

\begin{defi}[Average distance]
\label{def:average-distance}
\label{def:cluster-score}
For a nonempty set \(U\subseteq\bfV\) and a point \(x\in M\), define
\[
\bar d_U(x)
:=
\frac1{|U|}\sum_{u\in U}\D{X_u}{x}.
\]
\end{defi}

\begin{lemma}[Internal average gap gives a center representative]
\label{lem:internal-gap-average-distance}
Let \(U\subseteq\bfV\) with \(|U|\ge2\), and suppose
\[
\apav{U}\ge \rmp(0)-\Delta.
\]
Then there exists $u_U \in U$ such that $\theta_U := X_{u_U} \in M$ satisfies
\[
    \tfrac{1}{|U|}\sum_{u \in U}\rmp(\D{X_u}{\theta_U}) \ge \rmp(0)-\Delta,
\]
and
\[
\left|\{u\in U:X_u\notin B(\theta_U,R)\}\right|
\le
\frac{\Delta}{\ell_\rmp R}|U| \qquad\text{for every } 0 < R \le r_\rmp.
\]
In addition, if \(U\) is \(1\)-\windowsafe{}, then the same tail bound holds for
every \(R>0\), and moreover
\[
\bar d_U(\theta_U)\le \frac{\Delta}{\ell_\rmp}.
\]
\end{lemma}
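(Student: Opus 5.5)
We will pick the center by averaging. Write \(m:=|U|\). By definition,
\[
\apav{U}=\frac{1}{m(m-1)}\sum_{u\in U}\sum_{v\in U,\,v\neq u}\rmp\bigl(\D{X_u}{X_v}\bigr).
\]
For each \(w\in U\) set \(f(w):=\frac1m\sum_{u\in U}\rmp(\D{X_u}{X_w})\); this sum includes the diagonal term \(u=w\), which contributes \(\rmp(0)\). Since \(\rmp\) is non-increasing, every off-diagonal term is at most \(\rmp(0)\). Hence
\[
f(w)=\frac1m\rmp(0)+\frac{m-1}{m}\cdot\frac{1}{m-1}\sum_{u\neq w}\rmp(\D{X_u}{X_w}),
\]
and averaging over \(w\) gives
\[
\frac1m\sum_w f(w)=\frac1m\rmp(0)+\frac{m-1}{m}\apav{U}\ge \rmp(0)-\Delta,
\]
because \(\apav{U}\le\rmp(0)\) and \(\apav{U}\ge\rmp(0)-\Delta\). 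So some \(u_U\) satisfies \(f(u_U)\ge\rmp(0)-\Delta\), and we set \(\theta_U:=X_{u_U}\). This proves the first claim.

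For the tail bound, write \(\delta_u:=\rmp(0)-\rmp(\D{X_u}{\theta_U})\ge0\) (monotonicity). Then \(\frac1m\sum_u\delta_u\le\Delta\), and Markov's inequality gives
\[
\bigl|\{u:\delta_u\ge \ell_\rmp R\}\bigr|\le \frac{\Delta}{\ell_\rmp R}\,m.
\]
It remains to show that \(X_u\notin B(\theta_U,R)\), i.e.\ \(\D{X_u}{\theta_U}\ge R\) with \(R\le r_\rmp\), forces \(\delta_u\ge\ell_\rmp R\). By monotonicity, \(\rmp(\D{X_u}{\theta_U})\le\rmp(R)\). By the lower bi-Lipschitz bound on \([0,r_\rmp]\) applied to \(a=0\), \(b=R\), together with the non-increasing sign convention, \(\rmp(0)-\rmp(R)\ge\ell_\rmp R\). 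Thus \(\delta_u\ge\ell_\rmp R\) as needed.

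In the \(1\)-\windowsafe{} case all distances \(\D{X_u}{\theta_U}\) lie in \([0,r_\rmp]\), since \(\theta_U\) is itself one of the points of \(X_U\). The lower bi-Lipschitz bound then applies directly to each term: \(\delta_u\ge\ell_\rmp\,\D{X_u}{\theta_U}\). Summing,
\[
\bar d_U(\theta_U)\le\frac1m\sum_u\frac{\delta_u}{\ell_\rmp}\le\frac{\Delta}{\ell_\rmp}.
\]
Markov's inequality on the distances now gives the tail bound for every \(R>0\); for \(R>r_\rmp\) the count is in fact zero, since no distance exceeds \(r_\rmp\).

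The only mildly delicate step is the diagonal bookkeeping in the first paragraph. The pair average omits the diagonal, while \(f\) includes it. The argument works because the diagonal terms equal \(\rmp(0)\), which is the maximum value of \(\rmp\), so including them can only increase the average.
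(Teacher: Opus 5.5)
Your proof is correct and follows essentially the same route as the paper. You choose the center as a point of $U$ whose link sum is at least the average, apply Markov's inequality through the lower bi-Lipschitz bound, and in the $1$-window-safe case apply that bound term by term. The differences are cosmetic: you average the diagonal-inclusive sums, and you apply Markov to the deficits $\rmp(0)-\rmp({\rm d}(X_u,\theta_U))$ rather than to the truncated distances $\min\{{\rm d}(X_u,\theta_U),r_\rmp\}$. You also explicitly handle $R>r_\rmp$ in the window-safe case, which the paper leaves implicit.
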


\begin{proof}
\step{Good Center}
Set \(m:=|U|\). Since \(\apav{U}\) is the ordered off-diagonal average,
\[
\rmp(0)-\apav{U}
=
\frac{1}{m(m-1)}
\sum_{\substack{u,v\in U\\u\neq v}}
\left[
\rmp(0)-\rmp\!\bigl(\D{X_u}{X_v}\bigr)
\right].
\]
Averaging over the first coordinate, there exists $u_U\in U$ such that
\begin{align*}
\frac{1}{m-1}\sum_{\substack{v\in U\\v\neq u_U}}
\left[\rmp(0)-\rmp\!\bigl(\D{X_{u_U}}{X_v}\bigr)\right]
\le
\rmp(0)-\apav{U}
\le
\Delta.
\end{align*}
Let us simply set $\theta_U:=X_{u_U}$, so that the above average is taken over the distances from $\theta_U$ to the other points in $U$.
Since the term \(u=u_U\) contributes \(\rmp(0)\), this also gives
\[
\frac1m\sum_{u\in U}\rmp(\D{X_u}{\theta_U})
\ge
\rmp(0)-\Delta.
\]
Now relying on the lower Lipschitz bound, we have
$$
    \rmp(0)-\rmp\!\bigl(\D{\theta_U}{X_v}\bigr)
\ge
    \rmp(0) - \rmp\bigl( \min\{\D{\theta_U}{X_v}, r_\rmp\}\bigr)
\ge
    \ell_\rmp \min\{\D{\theta_U}{X_v}, r_\rmp\}\,.
$$
And hence,
\begin{align}
    \label{eq:truncated-average}
    \frac{1}{m-1}\sum_{\substack{v\in U\\v\neq u_U}}
    \min\{\D{\theta_U}{X_v}, r_\rmp\} \le \tfrac{\Delta}{\ell_\rmp}\,.
\end{align}

\step{Markov's inequality}
From the above truncated average, we can deduce a tail bound as long as \(R\le r_\rmp\) via Markov's inequality:
\[
\left|\{u\in U:X_u\notin B(\theta_U,R)\}\right|R
\le
\sum_{u\in U}\min\{\D{X_u}{\theta_U},r_\rmp\}
\le
\sum_{\substack{u\in U\\u\neq u_U}}\min\{\D{X_u}{\theta_U},r_\rmp\}
\le
	|U| \cdot \frac{\Delta}{\ell_\rmp}.
\]

\step{Average radius under \(1\)-window-safety}
Assume that \(U\) is \(1\)-\windowsafe{}. Then every distance
\(\D{\theta_U}{X_v}\) lies in \([0,r_\rmp]\), so the truncated average bound \eqref{eq:truncated-average} is actually an average radius bound.
Therefore
\[
\bar d_U(\theta_U)
=
\frac1m\sum_{v\in U}\D{X_v}{\theta_U}
\le
\frac{m-1}{m}\frac{\Delta}{\ell_\rmp}
\le
\frac{\Delta}{\ell_\rmp}.
\]
\end{proof}

\begin{rem}
\label{rem:internal-average-candidate-cluster}
Lemma~\ref{lem:internal-gap-average-distance} implies that every
\(\Delta\)-\internalaveragecandidate{} \(U\) is an
\[
\left(R,\frac{\Delta}{\ell_\rmp R}\right)\text{-cluster}
\]
with center \(\theta_U\), for every \(0<R\le r_\rmp\).
\end{rem}

The refinement step only needs one deterministic property: the latent link
average \(\acn{U}{\cdot}\) must separate an inner ball from the complement of a
larger ball. We package that property directly.

\begin{defi}[Link-average separation]
\label{def:link-average-separated-seed}
Let \(U\subseteq\bfV\), \(x\in M\), \(0<R_{\rm in}<R_{\rm out}\),
\(\Theta\in\mathbb R\), and \(\gamma>0\). We say that \(U\) is
\[
(x,R_{\rm in},R_{\rm out},\Theta,\gamma)\text{-link-average separated}
\]
if
\[
\D{y}{x}\le R_{\rm in}
\quad\Longrightarrow\quad
\acn{U}{y}\ge \Theta+\gamma,
\]
and
\[
\D{y}{x}\ge R_{\rm out}
\quad\Longrightarrow\quad
\acn{U}{y}\le \Theta-\gamma.
\]
\end{defi}

For an \internalaveragecandidate{} \(U\), the drop of \(\acn{U}{y}\)
from its near-maximal value provides a statistic for how far \(y\) lies from
the representative point \(\theta_U\).

\begin{rem}[Oracle Route and Local Route]
\label{rem:oracle-local-routes}
The \keyterm{Oracle Route} assumes that a \fuzzywindoworacle{} is available and
uses it to restrict the candidate search to oracle-certified, window-safe
candidates. The \keyterm{Local Route} uses no oracle; it treats
internal-average candidates as approximate localized clusters before applying
the same refinement mechanism. We use \({\rm(O)}\) and \(\ora\) for the former,
and \({\rm(L)}\) and \(\loc\) for the latter.
\end{rem}

Below we give a unified statement of \linkaverageseparation{} for the two
routes.

\begin{prop}[Link-average separation]
\label{prop:link-average-separation}
There exists a constant
\[
A_{\tref{prop:link-average-separation}}\ge1,
\]
depending only on \(L_\rmp,\ell_\rmp,M_\rmp\), such that the following holds.
Let \(K\ge1\) be a parameter satisfying
\[
K
\ge
1+\frac{2(A_{\tref{prop:link-average-separation}}+2)}{\ell_\rmp},
\]
and let $r>0$ satisfy
$$
Kr
\le
\tfrac{1}{2}\min\{1,r_\rmp\}.
$$
Assume that \(\widehat{\rmp}(0)\in\mathbb R\) satisfies
\[
|\widehat{\rmp}(0)-\rmp(0)|\le r.
\]
Consider either of the following two cases:
\begin{itemize}
    \item \ORA{}: Suppose that \(U\subseteq\bfV\) is
    \(\lambda_{\win}\)-\windowsafe{} with \(\lambda_{\win}<1/4\), and an
    \(r\)-\internalaveragecandidate{} with representative \(\theta_U\).
    \item \LOC{}: Suppose that \(U\subseteq\bfV\) is an \((r,r)\)-cluster with
    center \(\theta_U\).
\end{itemize}
Then, in either case, \(U\) is
$$
\left(
\theta_U,r,Kr,
\widehat{\rmp}(0)-A_{\tref{prop:link-average-separation}}r,r
\right)\text{-link-average separated}.
$$
\end{prop}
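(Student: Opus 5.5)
The plan is a direct computation: bound \(\acn{U}{y}\) from below when \(\D{y}{\theta_U}\le r\) and from above when \(\D{y}{\theta_U}\ge Kr\). Both bounds are first written relative to the true \(\rmp(0)\) and then converted to \(\widehat{\rmp}(0)\) using \(|\widehat{\rmp}(0)-\rmp(0)|\le r\). We must show two things. The first is the inner bound \(\acn{U}{y}\ge \rmp(0)-(A-2)r\); this suffices because then \(\acn{U}{y}\ge\widehat{\rmp}(0)-Ar+r\). The second is the outer bound \(\acn{U}{y}\le \rmp(0)-(A+2)r\). The only tools are monotonicity (\(\rmp\le\rmp(0)\) everywhere), the two-sided bound \(|\rmp|\le M_\rmp\), and the bi-Lipschitz inequalities on \([0,r_\rmp]\). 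Throughout, \(Kr\le r_\rmp/2\) and \(r\le 1/2\) keep every distance we need inside the window.

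\emph{Oracle route.} Apply Lemma~\ref{lem:internal-gap-average-distance} with \(\Delta=r\). Since \(U\) is window-safe, it gives \(\bar d_U(\theta_U)\le r/\ell_\rmp\); write \(d_u:=\D{X_u}{\theta_U}\).

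For the inner bound, let \(\D{y}{\theta_U}\le r\). Then \(\D{X_u}{y}\le d_u+r\le r_\rmp/4+r_\rmp/2\le r_\rmp\). The upper Lipschitz bound therefore gives \(\rmp(\D{X_u}{y})\ge \rmp(0)-L_\rmp(d_u+r)\). Averaging yields \(\acn{U}{y}\ge\rmp(0)-L_\rmp(1+1/\ell_\rmp)r\).

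For the outer bound, let \(\D{y}{\theta_U}\ge Kr\). We claim that \(\rmp(\D{X_u}{y})\le \rmp(0)-\ell_\rmp(Kr-d_u)\) for every \(u\), and we check it in two cases.
\begin{itemize}
\item If \(Kr-d_u\le 0\), the claim is trivial because \(\rmp\le\rmp(0)\).
\item Otherwise \(\D{X_u}{y}\ge Kr-d_u\) with \(0<Kr-d_u\le r_\rmp\). Monotonicity together with the lower Lipschitz bound applied at \(\min\{\D{X_u}{y},r_\rmp\}\) then gives the claim.
\end{itemize}
Averaging, \(\acn{U}{y}\le \rmp(0)-\ell_\rmp Kr+r\). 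From \(K\ge 1+2(A+2)/\ell_\rmp\) we get \(\ell_\rmp K\ge 2A+4\ge A+3\), which yields the outer bound.

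\emph{Local route.} At least a \((1-r)\)-fraction of \(U\) lies in \(B(\theta_U,r)\); call these points good.

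For the inner bound, let \(\D{y}{\theta_U}\le r\). Each good point is within \(2r\le r_\rmp\) of \(y\), so it contributes at least \(\rmp(0)-2L_\rmp r\). Each bad point contributes at least \(-M_\rmp\), which loses at most \((\rmp(0)+M_\rmp)\le 2M_\rmp\) per bad point. With at most an \(r\)-fraction of bad points, \(\acn{U}{y}\ge\rmp(0)-(2L_\rmp+2M_\rmp)r\).

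For the outer bound, let \(\D{y}{\theta_U}\ge Kr\). Each good point has \(\D{X_u}{y}\ge (K-1)r\), with \((K-1)r\le r_\rmp\), so its contribution is at most \(\rmp(0)-\ell_\rmp(K-1)r\). Each bad point contributes at most \(\rmp(0)\). Using \(1-r\ge 1/2\), \(\acn{U}{y}\le\rmp(0)-\tfrac12\ell_\rmp(K-1)r\le\rmp(0)-(A+2)r\), and the last inequality is exactly the hypothesis on \(K\).

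Choosing \(A:=\max\{4+L_\rmp(1+1/\ell_\rmp),\,2+2L_\rmp+2M_\rmp\}\) makes both inner bounds at least \(\rmp(0)-(A-2)r\), and \(A\) depends only on \(L_\rmp,\ell_\rmp,M_\rmp\). There is no real obstacle; the only delicate point is the outer bound in the oracle route. There the Markov-type bound on \(d_u\) must be combined with a truncated lower-Lipschitz inequality that holds even when \(\D{X_u}{y}\) exceeds \(r_\rmp\). This is why the bound is written as \(\rmp(0)-\ell_\rmp(Kr-d_u)\) rather than being applied pointwise inside the window.
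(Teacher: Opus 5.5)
Your proof is correct. Your local route is essentially the paper's argument in Proposition~\ref{prop:localized-cluster-separated-seed}. Your overall reduction matches the paper's as well: you prove \(\acn{U}{y}\ge\rmp(0)-(A-2)r\) on the inner ball and \(\acn{U}{y}\le\rmp(0)-(A+2)r\) outside the outer ball, then shift by \(|\widehat{\rmp}(0)-\rmp(0)|\le r\).

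The genuine difference is the outer bound in the oracle route. The paper (Proposition~\ref{prop:window-safe-internal-average-separated-seed}) splits according to whether \(\D{y}{\theta_U}\le(1-\lambda_{\win})r_\rmp\).
\begin{itemize}
\item In the near case, window-safety puts every \(\D{X_u}{y}\) inside the window. The paper then compares \(\acn{U}{y}\) with \(\rmp(\D{\theta_U}{y})\) at a cost of \(L_\rmp\bar d_U(\theta_U)\le L_\rmp r/\ell_\rmp\).
\item In the far case, window-safety gives the uniform bound \(\D{X_u}{y}>(1-2\lambda_{\win})r_\rmp\ge Kr\) for every \(u\).
\end{itemize}
Your pointwise inequality \(\rmp(\D{X_u}{y})\le\rmp(0)-\ell_\rmp(Kr-d_u)\) holds for every \(u\), whether or not \(\D{X_u}{y}\) lies in the window. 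This buys you three things.
\begin{itemize}
\item It removes the case split.
\item The error term is \(+r\) rather than \(+L_\rmp r/\ell_\rmp\).
\item The outer bound uses window-safety only through the average-radius estimate \(\bar d_U(\theta_U)\le r/\ell_\rmp\). In fact, if you replace \(d_u\) by \(\min\{d_u,r_\rmp\}\) and use the truncated estimate \eqref{eq:truncated-average}, the outer bound needs no window-safety at all. Window-safety is then only needed for the inner bound.
\end{itemize}
The paper's version, in exchange, reuses the same two-sided approximation \(\acn{U}{y}\approx\rmp(\D{\theta_U}{y})\) that drives Lemma~\ref{lem:pair-average-link-window}. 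It also keeps the oracle route as a standalone proposition with its own \(K\)-requirement \(K\ge 2A/\ell_\rmp\), which the unified hypothesis on \(K\) implies.
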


\begin{proof}
This is just the combination of the two route-specific propositions below (Propositions~\ref{prop:window-safe-internal-average-separated-seed} and \ref{prop:localized-cluster-separated-seed}).
Choose \(A_{\tref{prop:link-average-separation}}\) so that
\[
A_{\tref{prop:link-average-separation}}
\ge
\max\left\{
L_\rmp+\frac{L_\rmp}{\ell_\rmp}+2,
2(L_\rmp+M_\rmp)+2
\right\}.
\]
The lower bound above on \(K\) implies the \(K\)-requirements in
Propositions~\ref{prop:window-safe-internal-average-separated-seed} and
\ref{prop:localized-cluster-separated-seed} when their \(A\)-parameters are set
equal to \(A_{\tref{prop:link-average-separation}}\). The scale assumption
\(Kr\le\tfrac12\min\{1,r_\rmp\}\) implies the scale assumption in either
route-specific proposition.

In the \ORA{} case, apply
Proposition~\ref{prop:window-safe-internal-average-separated-seed}. In the
\LOC{} case, apply Proposition~\ref{prop:localized-cluster-separated-seed} with
\(\tau=r\). In both cases, use the same \(A\)-parameter
\(A_{\tref{prop:link-average-separation}}\) and the same outer-radius parameter
\(K\). This gives the claimed \linkaverageseparation{}.
\end{proof}
\subsection{\texorpdfstring{\(1\)-window-safe}{1-window-safe} upgrade and window-safe separation}

\begin{prop}[Window-safe internal-average candidates give link-average separation]
\label{prop:window-safe-internal-average-separated-seed}
Assume \(\lambda_{\win}<1/4\).
Let
$$
    A_{\tref{prop:window-safe-internal-average-separated-seed}}
    \ge L_\rmp+\frac{L_\rmp}{\ell_\rmp}+2
\quad\mbox{and}\quad
K_{\tref{prop:window-safe-internal-average-separated-seed}}
\ge
\frac{2A_{\tref{prop:window-safe-internal-average-separated-seed}}}{\ell_\rmp}
\ge 1,
$$
and let \(r>0\) be small enough so that
$$
    K_{\tref{prop:window-safe-internal-average-separated-seed}}r
    \le  \tfrac{1}{2}r_\rmp.
$$
Let \(U\subseteq\bfV\) be a \(\lambda_{\win}\)-\windowsafe{},
\(r\)-\internalaveragecandidate{} with representative \(\theta_U\). Suppose that
\(\widehat{\rmp}(0)\in\mathbb R\) satisfies
\[
|\widehat{\rmp}(0)-\rmp(0)|\le r.
\]
Then \(U\) is
\[
\bigl(\theta_U,r,K_{\tref{prop:window-safe-internal-average-separated-seed}}r,
\widehat{\rmp}(0)-A_{\tref{prop:window-safe-internal-average-separated-seed}}r,r\bigr)
\text{-link-average separated}.
\]
\end{prop}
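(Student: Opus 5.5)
The plan is to derive both halves of the separation directly from the average-radius bound in Lemma~\ref{lem:internal-gap-average-distance}. Window-safety is what keeps every relevant distance inside the bi-Lipschitz window. Since \(\lambda_{\win}<1/4<1\), the set \(U\) is in particular \(1\)-\windowsafe{}. Apply Lemma~\ref{lem:internal-gap-average-distance} with \(\Delta=r\) to the representative \(\theta_U=X_{u_U}\). This gives
\[
\bar d_U(\theta_U)\le r/\ell_\rmp,
\]
and moreover \(\D{X_u}{\theta_U}\le\operatorname{diam}(X_U)\le r_\rmp/4\) for every \(u\in U\). Write \(A=A_{\tref{prop:window-safe-internal-average-separated-seed}}\) and \(K=K_{\tref{prop:window-safe-internal-average-separated-seed}}\).

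\emph{Inner ball.} Let \(\D{y}{\theta_U}\le r\). By the triangle inequality, \(\D{X_u}{y}\le \D{X_u}{\theta_U}+r\le r_\rmp/4+r_\rmp/2<r_\rmp\), because \(r\le Kr\le r_\rmp/2\). Both \(0\) and \(\D{X_u}{\theta_U}+r\) therefore lie in \([0,r_\rmp]\). Monotonicity together with the upper Lipschitz bound gives
\[
\rmp(\D{X_u}{y})\ge \rmp(0)-L_\rmp\bigl(\D{X_u}{\theta_U}+r\bigr).
\]
Averaging over \(u\in U\) and using \(\bar d_U(\theta_U)\le r/\ell_\rmp\),
\[
\acn{U}{y}\ge \rmp(0)-L_\rmp r/\ell_\rmp-L_\rmp r\ge \widehat\rmp(0)-r-L_\rmp r/\ell_\rmp-L_\rmp r.
\]
The last expression is at least \(\widehat\rmp(0)-Ar+r\) because \(A\ge L_\rmp+L_\rmp/\ell_\rmp+2\).

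\emph{Outside the outer ball.} Let \(\D{y}{\theta_U}\ge Kr\). I will first show \(\rmp(t)\le\rmp(0)-\ell_\rmp\min\{t,r_\rmp\}\) for all \(t\ge0\). This holds because \(\rmp\) is non-increasing, so \(\rmp(t)\le\rmp(\min\{t,r_\rmp\})\), and then the lower bi-Lipschitz bound applies on \([0,r_\rmp]\). Next I claim \(\min\{\D{X_u}{y},r_\rmp\}\ge Kr-\D{X_u}{\theta_U}\). If \(\D{X_u}{y}\ge r_\rmp\), this holds since \(r_\rmp\ge Kr\). Otherwise it is the triangle inequality. Averaging over \(u\in U\) then yields
\[
\acn{U}{y}\le \rmp(0)-\ell_\rmp Kr+\ell_\rmp\bar d_U(\theta_U)\le \widehat\rmp(0)+2r-\ell_\rmp Kr.
\]
It remains to check \(\ell_\rmp K\ge A+3\). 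We have \(\ell_\rmp K\ge 2A\), so it suffices that \(A\ge3\). This follows from \(A\ge L_\rmp+L_\rmp/\ell_\rmp+2\) together with \(L_\rmp/\ell_\rmp\ge1\), using \(\ell_\rmp\le L_\rmp\). Hence \(\acn{U}{y}\le\widehat\rmp(0)-Ar-r\).

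The only delicate point is the outer bound, where \(\rmp\) may flatten beyond \(r_\rmp\). The truncation \(\min\{t,r_\rmp\}\) handles this, combined with \(Kr\le r_\rmp/2\). Window-safety is used to obtain the untruncated average radius \(\bar d_U(\theta_U)\le r/\ell_\rmp\) from Lemma~\ref{lem:internal-gap-average-distance}, and to keep the inner-ball distances inside the window.
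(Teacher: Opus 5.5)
Your proof is correct and follows the paper's overall plan. You get \(\bar d_U(\theta_U)\le r/\ell_\rmp\) from Lemma~\ref{lem:internal-gap-average-distance} via \(1\)-window-safety, keep inner-ball distances inside the window via \(\lambda_{\win}\)-window-safety, and average Lipschitz bounds. Your inner-ball step is essentially identical to the paper's.

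The one genuine difference is the outer bound. The paper splits into two cases. If \(\D{y}{\theta_U}\le(1-\lambda_{\win})r_\rmp\), all distances \(\D{X_u}{y}\) stay in the window, and it compares with \(\rmp(\D{\theta_U}{y})\) at a cost of \(L_\rmp\bar d_U(\theta_U)\le L_\rmp r/\ell_\rmp\). Otherwise, window-safety places all of \(X_U\) at distance more than \((1-2\lambda_{\win})r_\rmp\) from \(y\), and \(\lambda_{\win}<1/4\) is exactly what makes \((1-2\lambda_{\win})r_\rmp\ge Kr\).

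You instead use the truncated bound \(\rmp(t)\le\rmp(0)-\ell_\rmp\min\{t,r_\rmp\}\) together with \(Kr\le r_\rmp\). This is the same device used in the proof of Lemma~\ref{lem:internal-gap-average-distance}. It treats both regimes at once, costs only \(\ell_\rmp\bar d_U(\theta_U)\le r\) rather than \(L_\rmp r/\ell_\rmp\), and needs only \(1\)-window-safety. So in your argument the restriction on \(\lambda_{\win}\) is used only for the inner ball, where \(\lambda_{\win}\le 1/2\) would already suffice.

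In fact your outer step survives without any window-safety at all. The inequality \(\min\{\D{X_u}{y},r_\rmp\}\ge Kr-\min\{\D{X_u}{\theta_U},r_\rmp\}\) also holds, and the truncated average \eqref{eq:truncated-average} can then replace \(\bar d_U(\theta_U)\).

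The paper's case split is less economical, but it makes the role of the window-safety margin explicit. It also parallels Lemma~\ref{lem:far-representatives-pair-average-drop}, which is used later in the oracle net construction.
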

\begin{rem}
The condition
\(2A_{\tref{prop:window-safe-internal-average-separated-seed}}/\ell_\rmp\ge1\)
is automatic since
\(L_\rmp\ge\ell_\rmp\). We include it to make clear that
\(K_{\tref{prop:window-safe-internal-average-separated-seed}}r\ge r\).
\end{rem}

\begin{proof}
Since \(U\) is \(\lambda_{\win}\)-\windowsafe{} and \(\theta_U\in X_U\),
every \(u\in U\) satisfies
\[
\D{X_u}{\theta_U}\le \lambda_{\win}r_\rmp.
\]
Also, \(U\) is \(1\)-\windowsafe{}, so
Lemma~\ref{lem:internal-gap-average-distance} gives
\[
\bar d_U(\theta_U)\le \frac{r}{\ell_\rmp}.
\]

If \(\D{y}{\theta_U}\le r\), then for every \(u\in U\),
\[
\D{X_u}{y}
\le
\D{X_u}{\theta_U}+\D{\theta_U}{y}
\le
\lambda_{\win}r_\rmp+r
\le r_\rmp.
\]
The local Lipschitz bound and the triangle inequality give
\[
\acn{U}{y}
\ge
\rmp(\D{\theta_U}{y})-L_\rmp\bar d_U(\theta_U)
\ge
\rmp(0)-L_\rmp r-\frac{L_\rmp}{\ell_\rmp}r
\ge
\widehat{\rmp}(0)-L_\rmp r-\frac{L_\rmp}{\ell_\rmp}r-r
\ge
\widehat{\rmp}(0)-A_{\tref{prop:window-safe-internal-average-separated-seed}}r+r.
\]

Now suppose
\(\D{y}{\theta_U}\ge K_{\tref{prop:window-safe-internal-average-separated-seed}}r\).
If
\(\D{y}{\theta_U}\le (1-\lambda_{\win})r_\rmp\), then the same
link-average approximation gives
\[
\acn{U}{y}
\le
\rmp(\D{\theta_U}{y})+L_\rmp\bar d_U(\theta_U)
\le
\rmp(0)-\ell_\rmp K_{\tref{prop:window-safe-internal-average-separated-seed}}r
+\frac{L_\rmp}{\ell_\rmp}r
\le
\widehat{\rmp}(0)
-\ell_\rmp K_{\tref{prop:window-safe-internal-average-separated-seed}}r
+\frac{L_\rmp}{\ell_\rmp}r
+r.
\]
If instead \(\D{y}{\theta_U}>(1-\lambda_{\win})r_\rmp\), then every
\(u\in U\) satisfies
\[
\D{X_u}{y}
\ge
\D{y}{\theta_U}-\D{X_u}{\theta_U}
>
(1-2\lambda_{\win})r_\rmp.
\]
By monotonicity and the local lower Lipschitz bound,
\[
\acn{U}{y}
\le
\rmp(0)-\ell_\rmp(1-2\lambda_{\win})r_\rmp
\le
\rmp(0)-\ell_\rmp K_{\tref{prop:window-safe-internal-average-separated-seed}}r.
\]
Since \(\rmp(0)\le\widehat{\rmp}(0)+r\), both cases give
\[
\acn{U}{y}
\le
\widehat{\rmp}(0)-\ell_\rmp K_{\tref{prop:window-safe-internal-average-separated-seed}}r
+\frac{L_\rmp}{\ell_\rmp}r+r.
\]
The choices of
\(A_{\tref{prop:window-safe-internal-average-separated-seed}}\) and
\(K_{\tref{prop:window-safe-internal-average-separated-seed}}\) imply
\[
\ell_\rmp K_{\tref{prop:window-safe-internal-average-separated-seed}}
\ge
2A_{\tref{prop:window-safe-internal-average-separated-seed}}
\ge
A_{\tref{prop:window-safe-internal-average-separated-seed}}
+\frac{L_\rmp}{\ell_\rmp}+2,
\]
and hence
\[
\acn{U}{y}
\le
\widehat{\rmp}(0)-A_{\tref{prop:window-safe-internal-average-separated-seed}}r-r.
\]
This is exactly the claimed \linkaverageseparation{}.
\end{proof}

\subsection{Local bi-Lipschitzness case}

\begin{rem}[From internal average to \((\tau,\tau)\)-localization]
\label{rem:internal-average-to-tau-tau}
Let \(U\) be a \(\Delta\)-\internalaveragecandidate{}, and let
\(\theta_U\) be the representative point from
Lemma~\ref{lem:internal-gap-average-distance}. If
\[
\sqrt{\frac{\Delta}{\ell_\rmp}}\le \tau\le \min\{1,r_\rmp\},
\]
then \(U\) is a \((\tau,\tau)\)-cluster with center \(\theta_U\). Indeed,
Remark~\ref{rem:internal-average-candidate-cluster} gives
\[
\frac{|\{u\in U:X_u\notin B(\theta_U,\tau)\}|}{|U|}
\le
\frac{\Delta}{\ell_\rmp \tau}
\le
\tau.
\]
Equivalently, for every \(C\ge1\), a \(\Delta\)-\internalaveragecandidate{} is a
\[
\left(
C\sqrt{\frac{\Delta}{\ell_\rmp}},
C\sqrt{\frac{\Delta}{\ell_\rmp}}
\right)\text{-cluster}
\]
provided \(C\sqrt{\Delta/\ell_\rmp}\le \min\{1,r_\rmp\}\). Thus the
estimates below, which use the \localbilipschitzwindow{}, may be applied to
\internalaveragecandidate{}s at the cost of passing to the square-root scale. If an estimate also uses an outer
radius \(K\tau\), one must additionally require \((K-1)\tau\le r_\rmp\).
\end{rem}

We now record the \linkaverageseparation{} estimate for genuine
\((\tau,\tau)\)-clusters in the same threshold form as the window-safe case.
The corresponding pair-average estimate is used in
Section~\ref{sec:refinement-nets}.

\begin{prop}[Approximate localized clusters give link-average separation]
\label{lem:score-gap-tau-cluster-local}
\label{prop:localized-cluster-separated-seed}
Let
\[
A_{\tref{prop:localized-cluster-separated-seed}}
\ge 2(L_\rmp+M_\rmp)+2,
\qquad
K_{\tref{prop:localized-cluster-separated-seed}}
\ge
1+\frac{2(A_{\tref{prop:localized-cluster-separated-seed}}+2)}{\ell_\rmp}.
\]
Let \(U\subseteq\bfV\) be a \((\tau,\tau)\)-cluster with center \(x\), where
\[
K_{\tref{prop:localized-cluster-separated-seed}}\tau
\le \frac12\min\{1,r_\rmp\}.
\]
Let \(\widehat{\rmp}(0)\in\mathbb R\) satisfy
\[
|\widehat{\rmp}(0)-\rmp(0)|\le \tau.
\]
Then \(U\) is
\[
\bigl(x,\tau,K_{\tref{prop:localized-cluster-separated-seed}}\tau,
\widehat{\rmp}(0)-A_{\tref{prop:localized-cluster-separated-seed}}\tau,\tau\bigr)
\text{-link-average separated}.
\]
\end{prop}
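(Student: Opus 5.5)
We split $U$ into the good part $G:=\{u\in U: X_u\in B(x,\tau)\}$, with $|G|\ge(1-\tau)|U|$, and the bad part $U\setminus G$, of fraction at most $\tau$. The bad part contributes at most $2M_\rmp\tau$ to any difference of link averages, since $|\rmp|\le M_\rmp$ on $[0,\operatorname{diam}(M)]$. Concretely, for every $y$,
\[
\Bigl|\acn{U}{y}-\frac{1}{|U|}\sum_{u\in G}\rmp(\D{X_u}{y})-\frac{|U\setminus G|}{|U|}\rmp(\D{x}{y})\Bigr|\le 2M_\rmp\tau .
\]
It is therefore enough to compare $\rmp(\D{X_u}{y})$ with $\rmp(\D{x}{y})$ for $u\in G$, where $\D{X_u}{x}<\tau$.

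\textbf{Inner case}, $\D{y}{x}\le\tau$. For $u\in G$ we have $\D{X_u}{y}<2\tau\le r_\rmp$, so by monotonicity and the upper Lipschitz bound, $\rmp(\D{X_u}{y})\ge\rmp(2\tau)\ge\rmp(0)-2L_\rmp\tau$. Each bad point contributes at least $-M_\rmp$. Since $\rmp(0)\le M_\rmp$, this gives
\[
\acn{U}{y}\ge\rmp(0)-2L_\rmp\tau-2M_\rmp\tau\ge\widehat{\rmp}(0)-(2L_\rmp+2M_\rmp+1)\tau .
\]
This is at least $\widehat{\rmp}(0)-A\tau+\tau$ because $A\ge 2(L_\rmp+M_\rmp)+2$.

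\textbf{Outer case}, $\D{y}{x}\ge K\tau$. For $u\in G$, $\D{X_u}{y}>(K-1)\tau$. Since $(K-1)\tau\le r_\rmp$, monotonicity and the local lower Lipschitz bound give $\rmp(\D{X_u}{y})\le\rmp((K-1)\tau)\le\rmp(0)-\ell_\rmp(K-1)\tau$. This is the key point: we never need $\D{X_u}{y}$ itself to stay inside the window, because monotonicity lets us truncate at $(K-1)\tau$. Bad points contribute at most $\rmp(0)+2M_\rmp\tau$ in total after averaging; more precisely, each contributes at most $M_\rmp$, and the deficit is at most $2M_\rmp\tau$ relative to $\rmp(0)$. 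Hence
\[
\acn{U}{y}\le\rmp(0)-(1-\tau)\ell_\rmp(K-1)\tau+2M_\rmp\tau .
\]
Using $\tau\le 1/2$, we get $(1-\tau)\ge 1/2$, and with $\rmp(0)\le\widehat{\rmp}(0)+\tau$,
\[
\acn{U}{y}\le\widehat{\rmp}(0)-\tfrac12\ell_\rmp(K-1)\tau+(2M_\rmp+1)\tau .
\]
The hypothesis $K\ge 1+2(A+2)/\ell_\rmp$ gives $\tfrac12\ell_\rmp(K-1)\ge A+2$. Therefore
\[
\acn{U}{y}\le\widehat{\rmp}(0)-A\tau-2\tau+(2M_\rmp+1)\tau .
\]
This leaves a residual of $2M_\rmp\tau$ to absorb. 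To avoid it, do not separate out the bad points relative to $\rmp(0)$. Instead bound the whole average directly: $(1-\tau)$ of the mass is at most $\rmp(0)-\ell_\rmp(K-1)\tau$ and $\tau$ of the mass is at most $\rmp(0)$. This gives
\[
\acn{U}{y}\le\rmp(0)-(1-\tau)\ell_\rmp(K-1)\tau\le\widehat{\rmp}(0)-(A+2)\tau+\tau\le\widehat{\rmp}(0)-A\tau-\tau ,
\]
as required.

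The only delicate step is this bookkeeping of constants, in particular using the sharper bound $\rmp(\cdot)\le\rmp(0)$ for the bad points in the outer case rather than $|\rmp|\le M_\rmp$. The $M_\rmp$ term is needed only in the inner case, where bad points can pull the average down.
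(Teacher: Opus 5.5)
Your proof is correct and follows the paper's argument essentially line for line. It uses the same split of $U$ into the $(1-\tau)$-fraction inside $B(x,\tau)$ and the rest. In the inner case you bound good points by $\rmp(0)-2L_\rmp\tau$ and bad points by $-M_\rmp$, using $\rmp(0)\le M_\rmp$. In the outer case you truncate at $(K-1)\tau$ and bound bad points by $\rmp(0)$.

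The opening two-sided display and your first outer-case attempt, which leaves a $2M_\rmp\tau$ residual, are never used and can simply be deleted.
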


\begin{proof}
Choose \(G\subseteq U\) such that
\[
|G|\ge(1-\tau)|U|,
\qquad
X_u\in B(x,\tau)\quad\text{for all }u\in G.
\]
Let \(C_0:=L_\rmp+M_\rmp\). If \(\D{y}{x}\le\tau\), then for every
\(u\in G\),
\[
\D{X_u}{y}\le2\tau\le r_\rmp,
\]
and hence
\[
\rmp(\D{X_u}{y})\ge \rmp(0)-2L_\rmp\tau.
\]
The remaining vertices contribute at least \(-M_\rmp\), while
\(\rmp(0)\le M_\rmp\). Therefore
\[
\acn{U}{y}
\ge
(1-\tau)(\rmp(0)-2L_\rmp\tau)-\tau M_\rmp
\ge
\rmp(0)-2C_0\tau
\ge
\widehat{\rmp}(0)-(2C_0+1)\tau
\ge
\widehat{\rmp}(0)-A_{\tref{prop:localized-cluster-separated-seed}}\tau+\tau.
\]

Now suppose
\(\D{y}{x}\ge K_{\tref{prop:localized-cluster-separated-seed}}\tau\). For every
\(u\in G\),
\[
\D{X_u}{y}\ge (K_{\tref{prop:localized-cluster-separated-seed}}-1)\tau.
\]
Since \(\rmp\) is non-increasing and bi-Lipschitz on \([0,r_\rmp]\),
\[
\rmp(\D{X_u}{y})
\le
\rmp((K_{\tref{prop:localized-cluster-separated-seed}}-1)\tau)
\le
\rmp(0)-\ell_\rmp(K_{\tref{prop:localized-cluster-separated-seed}}-1)\tau.
\]
The remaining vertices contribute at most \(\rmp(0)\). Hence
\[
\acn{U}{y}
\le
(1-\tau)\bigl(\rmp(0)-\ell_\rmp(K_{\tref{prop:localized-cluster-separated-seed}}-1)\tau\bigr)
+
\tau\rmp(0)
=
\rmp(0)-(1-\tau)\ell_\rmp(K_{\tref{prop:localized-cluster-separated-seed}}-1)\tau.
\]
Since \(\tau\le1/2\) and
\[
K_{\tref{prop:localized-cluster-separated-seed}}
\ge
1+\frac{2(A_{\tref{prop:localized-cluster-separated-seed}}+2)}{\ell_\rmp},
\]
we have
\[
(1-\tau)\ell_\rmp(K_{\tref{prop:localized-cluster-separated-seed}}-1)\tau
\ge
 (A_{\tref{prop:localized-cluster-separated-seed}}+2)\tau.
\]
\[
\acn{U}{y}
\le
\rmp(0)-(A_{\tref{prop:localized-cluster-separated-seed}}+2)\tau
\le
\widehat{\rmp}(0)-A_{\tref{prop:localized-cluster-separated-seed}}\tau-\tau.
\]
This is exactly the claimed \linkaverageseparation{}.
\end{proof}

\section{Internal-average search and candidate families}
\label{sec:internal-average-search}

We now turn the internal-average primitives into a concrete \candidatefamily{}.
The output of this section is only
\[
\bigl(\widehat{\rmp}(0),\mathcal C_\rho\bigr):
\]
an observable estimate of the top link value and a large family of raw
\internalaveragecandidate{}s covering the latent space.  The subsequent
sparsification of \(\mathcal C_\rho\) is handled in
Section~\ref{sec:refinement-nets}.

\begin{lemma}[Estimating \(\rmp(0)\) from the maximal internal average]
\label{lem:p0-from-max-internal-average}
Let \(W\subseteq\bfV\) be a vertex set, and set \(n_\star:=|W|\). Fix
\[
\rho\in(0,\min\{r_\mu,r_\rmp/2\}],
\]
and set
\[
m:=\left\lfloor\frac{\phi(\rho/3)}{2}n_\star\right\rfloor,
\qquad
\widehat{\rmp}(0):=\max\{\pav{U}:U\subseteq W,\ |U|=m\}.
\]
Assume \(m\ge2\). On the event
\[
\Ept{W}{\rho}\cap\Enn{W}{L_\rmp\rho}{m},
\]
we have
\[
\rmp(0)-3L_\rmp\rho
\le
\widehat{\rmp}(0)
\le
\rmp(0)+L_\rmp\rho.
\]
In particular, \(|\widehat{\rmp}(0)-\rmp(0)|\le3L_\rmp\rho\).
\end{lemma}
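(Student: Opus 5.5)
The plan is to prove the two inequalities separately: the lower bound by exhibiting one good set of size $m$, and the upper bound by bounding every set of size $m$ uniformly. Both steps use the pair-average event $\Enn{W}{L_\rmp\rho}{m}$ to pass between $\pav{\cdot}$ and $\apav{\cdot}$.

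\emph{Upper bound.} Let $U\subseteq W$ with $|U|=m\ge 2$ be arbitrary. Since $\rmp$ is non-increasing, $\rmp(\D{X_u}{X_v})\le\rmp(0)$ for every pair, so $\apav{U}\le\rmp(0)$. Because $|U|\ge m$, the event $\Enn{W}{L_\rmp\rho}{m}$ applies with $U_1=U_2=U$ and gives
\[
\pav{U}\le\apav{U}+L_\rmp\rho\le\rmp(0)+L_\rmp\rho .
\]
Taking the maximum over all such $U$ yields $\widehat{\rmp}(0)\le\rmp(0)+L_\rmp\rho$.

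\emph{Lower bound.} Fix any sampled vertex $w\in W$, so $X_w\in M=\operatorname{supp}(\mu)$. On $\Ept{W}{\rho}$, the ball $B(X_w,\rho)$ contains at least $\tfrac{n_\star\phi(\rho/3)}{2}\ge m$ vertices of $W$. Choose $U$ to be any $m$ of them. For $u,v\in U$,
\[
\D{X_u}{X_v}<2\rho\le r_\rmp ,
\]
using the hypothesis $\rho\le r_\rmp/2$. The local Lipschitz bound of Assumption~\ref{assump:link-function} then gives $\rmp(\D{X_u}{X_v})\ge\rmp(0)-2L_\rmp\rho$. Averaging over pairs, $\apav{U}\ge\rmp(0)-2L_\rmp\rho$. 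The pair-average event then gives
\[
\widehat{\rmp}(0)\ge\pav{U}\ge\apav{U}-L_\rmp\rho\ge\rmp(0)-3L_\rmp\rho .
\]

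Combining the two bounds gives the final claim $|\widehat{\rmp}(0)-\rmp(0)|\le 3L_\rmp\rho$. There is no real obstacle here. The only points to watch are bookkeeping: the occupancy count must be at least $m$, which holds because $m$ is the floor of the occupancy bound; the distances must stay inside the bi-Lipschitz window, which is where $\rho\le r_\rmp/2$ is used; and $\rho\le r_\mu$ is needed for $\Ept{W}{\rho}$ to be meaningful at this scale.
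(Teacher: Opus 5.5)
Your proof is correct and follows the paper's argument essentially verbatim. The upper bound uses monotonicity plus the uniform pair-average event. The lower bound exhibits an \(m\)-subset inside a single \(\rho\)-ball, using \(2\rho\le r_\rmp\). The only differences are cosmetic: the paper centers the ball at an arbitrary \(x\in M\) rather than a sampled point, and it applies the Lipschitz bound once to \(\rmp(2\rho)\) rather than pairwise.
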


\begin{proof}
Fix \(x\in M\). On \(\Ept{W}{\rho}\), there exists \(U_x\subseteq W\) with
\(|U_x|=m\) such that
\[
X_u\in B(x,\rho)
\qquad\text{for every }u\in U_x.
\]
Thus \(\D{X_u}{X_v}<2\rho\) for all distinct \(u,v\in U_x\), and hence
\[
\apav{U_x}\ge \rmp(2\rho).
\]
Using \(\Enn{W}{L_\rmp\rho}{m}\),
\[
\widehat{\rmp}(0)\ge\pav{U_x}\ge\apav{U_x}-L_\rmp\rho
\ge \rmp(2\rho)-L_\rmp\rho.
\]
Since \(2\rho\le r_\rmp\) and \(\rmp\) is \(L_\rmp\)-Lipschitz on
\([0,r_\rmp]\),
\[
\rmp(2\rho)\ge \rmp(0)-2L_\rmp\rho,
\]
which gives the lower bound.

For the upper bound, let \(U\subseteq W\) have \(|U|=m\). Since \(\rmp\) is
non-increasing, \(\apav{U}\le\rmp(0)\). Again using
\(\Enn{W}{L_\rmp\rho}{m}\),
\[
\pav{U}\le\apav{U}+L_\rmp\rho\le\rmp(0)+L_\rmp\rho.
\]
Taking the maximum over all such \(U\) proves the claim.
\end{proof}

\begin{prop}[Internal-average search produces candidate families]
\label{prop:inner-density-candidates}
Let \(W\subseteq\bfV\) be a vertex set, and set \(n_\star:=|W|\). Let
\[
C_*\ge \frac{48L_\rmp}{\ell_\rmp},
\qquad
0<\rho\le \min\{r_\mu,r_\rmp/C_*\}.
\]
Set
\[
m_\rho:=\left\lfloor\frac{\phi(\rho/3)}{2}n_\star\right\rfloor,
\qquad
\Delta_\rho:=4L_\rmp\rho,
\]
assume \(m_\rho\ge2\), and define
\[
\widehat{\rmp}(0):=\max\{\pav{U}:U\subseteq W,\ |U|=m_\rho\},
\]
\[
\mathcal C_\rho
:=
\{U\subseteq W:\ |U|=m_\rho,\ \pav{U}\ge \widehat{\rmp}(0)-\Delta_\rho\}.
\]
On the good event
\[
\Ept{W}{\rho}\cap\Enn{W}{L_\rmp\rho}{m_\rho},
\]
the following hold.
\begin{enumerate}
\item The maximal internal average satisfies
\[
|\widehat{\rmp}(0)-\rmp(0)|\le 3L_\rmp\rho.
\]

\item For every \(x\in M\), there exists \(U_x\in\mathcal C_\rho\) such that
\[
X_u\in B(x,\rho)\qquad\text{for every }u\in U_x.
\]

\item Every \(U\in\mathcal C_\rho\) is a \(8L_\rmp\rho\)-\internalaveragecandidate{}; namely, for every such \(U\),
\[
\apav{U}\ge \rmp(0)-8L_\rmp\rho.
\]
\end{enumerate}
\end{prop}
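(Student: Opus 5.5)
Work throughout on the good event \(\Ept{W}{\rho}\cap\Enn{W}{L_\rmp\rho}{m_\rho}\). The three items are essentially bookkeeping built on Lemma~\ref{lem:p0-from-max-internal-average}, and I would prove them in order.

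\emph{Item (1).} The hypothesis \(\rho\le r_\rmp/C_*\) and \(C_*\ge 48L_\rmp/\ell_\rmp\ge48\) (since \(L_\rmp\ge\ell_\rmp\)) give \(\rho\le r_\rmp/2\). Together with \(\rho\le r_\mu\) and \(m_\rho\ge2\), these are exactly the hypotheses of Lemma~\ref{lem:p0-from-max-internal-average}, with the same \(m=m_\rho\). That lemma then yields \(\rmp(0)-3L_\rmp\rho\le\widehat{\rmp}(0)\le\rmp(0)+L_\rmp\rho\), which is item (1).

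\emph{Item (2).} Fix \(x\in M\). On \(\Ept{W}{\rho}\), the ball \(B(x,\rho)\) contains at least \(\phi(\rho/3)n_\star/2\ge m_\rho\) points of \(X_W\). Choose \(U_x\) to be any \(m_\rho\) of them. All pairwise distances in \(U_x\) are then below \(2\rho\le r_\rmp\), so \(\apav{U_x}\ge\rmp(2\rho)\ge\rmp(0)-2L_\rmp\rho\). The pair-average event gives
\[
\pav{U_x}\ge\rmp(0)-3L_\rmp\rho,
\]
and the upper bound from (1) gives \(\rmp(0)\ge\widehat{\rmp}(0)-L_\rmp\rho\). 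Combining, \(\pav{U_x}\ge\widehat{\rmp}(0)-4L_\rmp\rho=\widehat{\rmp}(0)-\Delta_\rho\), so \(U_x\in\mathcal C_\rho\).

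\emph{Item (3).} Take any \(U\in\mathcal C_\rho\). Since \(|U|=m_\rho\), the pair-average event applies to \(U\), and
\[
\apav{U}\ge\pav{U}-L_\rmp\rho\ge\widehat{\rmp}(0)-4L_\rmp\rho-L_\rmp\rho.
\]
The lower bound from (1) replaces \(\widehat{\rmp}(0)\) by \(\rmp(0)-3L_\rmp\rho\), which gives \(\apav{U}\ge\rmp(0)-8L_\rmp\rho\). This is the definition of an \(8L_\rmp\rho\)-\internalaveragecandidate{}.

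There is no real obstacle here. The only points needing care are the constant chaining \(1+3+4=8\) in item (3), and checking that the large constant \(C_*\) forces \(2\rho\le r_\rmp\), so that the Lipschitz bound and Lemma~\ref{lem:p0-from-max-internal-average} apply. The strength of \(C_*\) itself is presumably needed only in later uses, for example to keep \(\sqrt{\Delta/\ell_\rmp}\)-type radii inside the window.
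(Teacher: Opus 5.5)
Your proof is correct and matches the paper's argument essentially line by line. Item (1) is a direct appeal to Lemma~\ref{lem:p0-from-max-internal-average}; your explicit check that $C_*\ge 48$ forces $\rho\le r_\rmp/2$ is a detail the paper leaves implicit. Items (2) and (3) chain the occupancy event, the pair-average event, and the two-sided bound on $\widehat{\rmp}(0)$ exactly as the paper does.
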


\begin{proof}
By Lemma~\ref{lem:p0-from-max-internal-average},
\[
\rmp(0)-3L_\rmp\rho
\le
\widehat{\rmp}(0)
\le
\rmp(0)+L_\rmp\rho.
\]
This proves the first assertion.

For every \(x\in M\), the event \(\Ept{W}{\rho}\) gives a set
\(U_x\subseteq W\) with \(|U_x|=m_\rho\) and \(X_u\in B(x,\rho)\) for all
\(u\in U_x\). As in the proof of Lemma~\ref{lem:p0-from-max-internal-average},
\[
\pav{U_x}
\ge
\rmp(2\rho)-L_\rmp\rho
\ge
\rmp(0)-3L_\rmp\rho.
\]
Since
\[
\widehat{\rmp}(0)\le \rmp(0)+L_\rmp\rho,
\]
we have
\[
\pav{U_x}
\ge
\widehat{\rmp}(0)-4L_\rmp\rho
=
\widehat{\rmp}(0)-\Delta_\rho.
\]
Thus \(U_x\in\mathcal C_\rho\), proving the second assertion.

Now let \(U\in\mathcal C_\rho\). Then
\[
\pav{U}\ge \widehat{\rmp}(0)-\Delta_\rho.
\]
Using the lower bound
\[
\widehat{\rmp}(0)\ge \rmp(0)-3L_\rmp\rho
\]
and the event \(\Enn{W}{L_\rmp\rho}{m_\rho}\), we get
\[
\apav{U}
\ge
\pav{U}-L_\rmp\rho
	\ge
	\rmp(0)-8L_\rmp\rho.
\]
This proves the third assertion.
\end{proof}

\section{Refinement nets from candidate families}
\label{sec:refinement-nets}

The raw family \(\mathcal C_\rho\) from
Section~\ref{sec:internal-average-search} is too large to use directly. This
section extracts a small subfamily whose representatives cover \(M\), using the
\linkaverageseparation{} properties from
Section~\ref{sec:internal-average-link-separation}. The resulting object is
called a refinement net. As in the previous section, the main result is a
unified statement, Proposition~\ref{prop:candidate-refinement-net}, covering
both the \oracleroute{} and the \localroute{}.

\begin{defi}[Refinement net]
\label{def:refinement-net}
Let \(R_{\rm net},R_{\rm in},R_{\rm out},\gamma>0\), and let \(M_{\rm net}\ge1\).
Let \(\Theta\in\mathbb R\).
A family
\[
\mathfrak N=\{(U,\theta_U):U\in\mathcal N\}
\]
is a refinement net with parameters
\[
\Theta,\quad R_{\rm net},\quad R_{\rm in},\quad R_{\rm out},\quad
\gamma,\quad M_{\rm net},
\]
if:
\begin{enumerate}
\item the centers \(\{\theta_U:U\in\mathcal N\}\) form an \(R_{\rm net}\)-net
of \(M\);
\item for every \(U\in\mathcal N\), \(U\) is
\[
(\theta_U,R_{\rm in},R_{\rm out},\Theta,\gamma)\text{-link-average separated};
\]
\item \(|\mathcal N|\le M_{\rm net}\).
\end{enumerate}
\end{defi}

\begin{prop}[Candidate families produce refinement nets]
\label{prop:candidate-refinement-net}
There exists a constant
\[
K_{\tref{prop:candidate-refinement-net}}>0
\]
depending only on \(L_\rmp,\ell_\rmp,M_\rmp\) such that the following holds.
Work in the setup of Proposition~\ref{prop:inner-density-candidates}, and
assume the good event there.
Let \(K\ge K_{\tref{prop:candidate-refinement-net}}\).

Consider either of the following two cases:
\begin{itemize}
\item \ORA{}: Let \(\mathcal O_{\win}\) be an
\((\alpha_{\win},\lambda_{\win})\)-\fuzzywindoworacle{} with
\[
\lambda_{\win}\le \frac18,
\qquad
\rho\le \frac{\alpha_{\win}}2 r_\rmp.
\]
Let \(r_{\rm ref}>0\) satisfy
\[
K\rho
\le
r_{\rm ref}.
\]
\item \LOC{}: Let \(r_{\rm ref}>0\) satisfy
\[
K\sqrt{\rho}
\le
r_{\rm ref}.
\]
\end{itemize}
Suppose in addition that
\[
K^3 r_{\rm ref}
\le \min\{1,r_\rmp\}.
\]
Then there is a procedure, using \pairaverage{} comparisons between candidates
and the oracle in the \ORA{} case, that outputs a subfamily \(\mathcal N\). For
suitable analysis representatives \(\theta_U\), the family
\[
\mathfrak N=\{(U,\theta_U):U\in\mathcal N\}
\]
is a refinement net with parameters
\[
\begin{gathered}
\Theta=\widehat{\rmp}(0)
-A_{\tref{prop:link-average-separation}}r_{\rm ref},\qquad
R_{\rm net}=
\tfrac13 Kr_{\rm ref},\\
R_{\rm in}=r_{\rm ref},\qquad
R_{\rm out}=Kr_{\rm ref},\qquad
\gamma=r_{\rm ref},\qquad
M_{\rm net}=|W|,
\end{gathered}
\]
where \(W\) is the vertex set used in
Proposition~\ref{prop:inner-density-candidates}.
\end{prop}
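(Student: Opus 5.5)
The plan is to run a greedy selection on an admissible subfamily of \(\mathcal C_\rho\). Candidates are compared only through the observable pair averages \(\pav{U,U'}\), and we then check the three refinement-net properties of Definition~\ref{def:refinement-net}.

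\step{Step 1: admissible candidates are separated}
Define the admissible family \(\mathcal C'\). In the \LOC{} case it is all of \(\mathcal C_\rho\). In the \ORA{} case it is the set of \(U\in\mathcal C_\rho\) certified by \(\mathcal O_{\win}\). On the good event, Proposition~\ref{prop:inner-density-candidates}(2) gives sets \(U_x\) with \(X_{U_x}\subseteq B(x,\rho)\). Hence \(\operatorname{diam}(X_{U_x})<2\rho\le\alpha_{\win}r_\rmp\), so \(U_x\in\mathcal C'\) in both cases. Every \(U\in\mathcal C'\) is an \(8L_\rmp\rho\)-\internalaveragecandidate{}; let \(\theta_U\) be its representative from Lemma~\ref{lem:internal-gap-average-distance}.
\begin{itemize}
\item In the \ORA{} case, \(U\) is \(\lambda_{\win}\)-\windowsafe{} and \(8L_\rmp\rho\le r_{\rm ref}\) once \(K\ge 8L_\rmp\). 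So \(U\) is an \(r_{\rm ref}\)-\internalaveragecandidate{}.
\item In the \LOC{} case, \(\sqrt{8L_\rmp\rho/\ell_\rmp}\le r_{\rm ref}\) once \(K^2\ge 8L_\rmp/\ell_\rmp\). Remark~\ref{rem:internal-average-to-tau-tau} then makes \(U\) an \((r_{\rm ref},r_{\rm ref})\)-cluster with center \(\theta_U\).
\end{itemize}
In both cases \(|\widehat{\rmp}(0)-\rmp(0)|\le3L_\rmp\rho\le r_{\rm ref}\). Proposition~\ref{prop:link-average-separation} with \(r=r_{\rm ref}\) then yields property (2) of the net, with \(\Theta=\widehat{\rmp}(0)-A_{\tref{prop:link-average-separation}}r_{\rm ref}\). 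Its scale condition follows from \(K^3r_{\rm ref}\le\min\{1,r_\rmp\}\). We choose \(K_{\tref{prop:candidate-refinement-net}}\) large enough for all these requirements.

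\step{Step 2: pair averages detect whether centers are close}
This is the main estimate. For \(U,U'\in\mathcal C'\), write
\[
\apav{U,U'}=\frac{1}{|U|}\sum_{u\in U}\acn{U'}{X_u}
\]
up to the negligible diagonal correction from overlaps. The aim is a dichotomy with constants \(a,b\) depending only on \(L_\rmp,\ell_\rmp,M_\rmp\):
\begin{itemize}
\item if \(\D{\theta_U}{\theta_{U'}}\le 2r_{\rm ref}\), then \(\apav{U,U'}\ge\rmp(0)-a\,r_{\rm ref}\);
\item if \(\D{\theta_U}{\theta_{U'}}\ge \tfrac13Kr_{\rm ref}-r_{\rm ref}\), then \(\apav{U,U'}\le\rmp(0)-bK r_{\rm ref}\).
\end{itemize}
For the proof, split \(U\) into the good part within \(r_{\rm ref}\) of \(\theta_U\) and the rest. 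In the \ORA{} case the whole set is good up to the average-distance bound \(\bar d_U(\theta_U)\le 8L_\rmp\rho/\ell_\rmp\). In the \LOC{} case the rest has mass at most \(r_{\rm ref}\). On the good part, apply the two estimates from the proofs of Propositions~\ref{prop:window-safe-internal-average-separated-seed} and \ref{prop:localized-cluster-separated-seed} to \(U'\), evaluated at \(y=X_u\). Those proofs show that \(\acn{U'}{y}\) is within \(O(r_{\rm ref})\) of \(\rmp(\min\{\D{y}{\theta_{U'}},\cdot\})\) while the distance stays inside the window. The slack \(K^3r_{\rm ref}\le r_\rmp\) leaves room to use intermediate radii of order \(Kr_{\rm ref}\) rather than only \(r_{\rm ref}\). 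The bad part contributes at most \(2M_\rmp r_{\rm ref}\).

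Since \(|U|=|U'|=m_\rho\), the event \(\Enn{W}{L_\rmp\rho}{m_\rho}\) transfers both bounds to \(\pav{U,U'}\) with an extra error \(L_\rmp\rho\le r_{\rm ref}\). Combined with \(|\widehat{\rmp}(0)-\rmp(0)|\le r_{\rm ref}\), we can fix a threshold \(T:=\widehat{\rmp}(0)-(a+2)r_{\rm ref}\), which lies strictly between the two regimes once \(bK>a+4\).

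\step{Step 3: greedy selection and the net properties}
Scan \(\mathcal C'\) in any fixed order. Add \(U\) to \(\mathcal N\) unless \(\pav{U,U'}\ge T\) for some already selected \(U'\).
\begin{itemize}
\item \emph{Net property.} Let \(x\in M\). Either \(U_x\) was selected, and then \(\D{x}{\theta_{U_x}}<\rho\). Or it was rejected by some \(U'\in\mathcal N\), and then the contrapositive of the second regime of Step 2 gives \(\D{\theta_{U_x}}{\theta_{U'}}<\tfrac13Kr_{\rm ref}-r_{\rm ref}\). In both cases \(\D{x}{\theta_{U'}}<\tfrac13Kr_{\rm ref}\), which is property (1).
\item \emph{Cardinality.} Two selected sets have \(\pav{}<T\), so by the first regime their centers satisfy \(\D{\theta_U}{\theta_{U'}}>2r_{\rm ref}>0\). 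Each \(\theta_U=X_{u_U}\) with \(u_U\in U\subseteq W\), so distinct selected sets have distinct vertices \(u_U\). This gives \(|\mathcal N|\le|W|\), which is property (3).
\end{itemize}

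I expect Step 2 to be the main obstacle. The separation propositions only control \(\acn{U'}{\cdot}\) at the two radii \(r_{\rm ref}\) and \(Kr_{\rm ref}\). The net radius \(\tfrac13Kr_{\rm ref}\) sits between them, so their proofs must be redone at general radii. In the \LOC{} case the outlier fraction \(r_{\rm ref}\) of both \(U\) and \(U'\) must also be absorbed consistently; this is where the extra factors of \(K\) in \(K^3r_{\rm ref}\le\min\{1,r_\rmp\}\) should be spent.
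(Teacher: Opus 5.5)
Your proposal is correct and is essentially the paper's own proof. Like the paper, you restrict to oracle-certified candidates in the Oracle Route, join two candidates when their pair average exceeds a threshold below $\widehat{\rmp}(0)$, and take a maximal independent set (your greedy scan produces one); coverage then comes from $U_x$, $|\mathcal N|\le|W|$ from injectivity of $U\mapsto u_U$, and the separation certificate from Proposition~\ref{prop:link-average-separation}. The differences are cosmetic: the paper sets the threshold at $\widehat{\rmp}(0)-2L_\rmp\sqrt{K}r_{\rm ref}$ rather than at your $O(r_{\rm ref})$ level, and the general-radius pair-average dichotomy you flag as the main obstacle is already supplied by Lemmas~\ref{lem:pair-average-link-window}, \ref{lem:far-representatives-pair-average-drop}, and~\ref{lem:pair-average-tau-clusters-local}.
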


\begin{proof}
This is the combination of the two route-specific propositions below. Choose
\(K_{\tref{prop:candidate-refinement-net}}\) larger than \(4\), larger than
\(48L_\rmp/\ell_\rmp\), larger than \(8L_\rmp\), large enough compared with
the constants in Propositions~\ref{prop:oracle-candidate-net} and
\ref{prop:local-link-candidate-net}, and large enough that
\[
K_{\tref{prop:candidate-refinement-net}}
\ge
1+\frac{2(A_{\tref{prop:link-average-separation}}+2)}{\ell_\rmp}.
\]
Let \(K\ge K_{\tref{prop:candidate-refinement-net}}\).
The scale condition above then implies the scale assumptions in the
corresponding route proposition, including
\[
Kr_{\rm ref}
\le \tfrac12\min\{1,r_\rmp\}.
\]
In either case, apply the corresponding route-specific proposition. By the
choice of \(K_{\tref{prop:candidate-refinement-net}}\), the route-specific
outer radius is \(Kr_{\rm ref}\), and the route-specific net radius in either
case is
\[
\tfrac13Kr_{\rm ref}.
\]
\end{proof}

Both route constructions use the same selection mechanism. One first builds a
comparison graph \(H\) on a candidate family: in the \ORA{} case the vertices
are the oracle-certified candidates, while in the \LOC{} case the vertices are
all candidates in \(\mathcal C_\rho\). Edges are determined by observable
information, namely the oracle filter when available and empirical
\pairaverage{} comparisons. A maximal independent set of this graph keeps only
well-separated representatives, while maximality preserves coverage. The
representatives \(\theta_U\) are analysis witnesses; the procedure itself only
uses the candidate family, the oracle in the \ORA{} case, the empirical
\pairaverage{}s, and \(\widehat{\rmp}(0)\). The next lemma isolates this
deterministic selection step.

\begin{lemma}[Comparison graph selection]
\label{lem:comparison-graph-selection}
Let \(\mathcal C\) be a candidate family with representatives \(\theta_U\),
and let \(H\) be any graph on \(\mathcal C\). Write
\(U\stackrel{H}{\sim} W\) when \(U,W\) are adjacent in \(H\). Suppose there are
\(\delta,\rho_{\rm cov}>0\) and \(K_{\rm cmp}\ge1\) such that
\[
\D{\theta_U}{\theta_W}\le \delta
\quad\Longrightarrow\quad
U\stackrel{H}{\sim} W,
\]
\[
U\stackrel{H}{\sim} W
\quad\Longrightarrow\quad
\D{\theta_U}{\theta_W}\le K_{\rm cmp}\delta,
\]
and for every \(x\in M\) there exists \(U_x\in\mathcal C\) with
\[
\D{x}{\theta_{U_x}}\le \rho_{\rm cov}.
\]
Then every maximal independent set \(\mathcal N\subseteq\mathcal C\) satisfies
\[
U\neq W\in\mathcal N
\quad\Longrightarrow\quad
\D{\theta_U}{\theta_W}>\delta,
\]
and its representatives form a \((\rho_{\rm cov}+K_{\rm cmp}\delta)\)-net of
\(M\).
\end{lemma}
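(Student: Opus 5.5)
The lemma is a purely deterministic statement about a graph on the candidate family, so I would prove its two conclusions separately, using only the definition of a maximal independent set together with the triangle inequality.

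\emph{Separation.} Let \(U\neq W\) both lie in \(\mathcal N\). Since \(\mathcal N\) is independent in \(H\), \(U\) and \(W\) are not adjacent. Suppose for contradiction that \(\D{\theta_U}{\theta_W}\le\delta\). The first hypothesis would then force \(U\stackrel{H}{\sim}W\), which is impossible. Hence \(\D{\theta_U}{\theta_W}>\delta\). This is the contrapositive of the first hypothesis. The only subtlety is to read \(H\) as a simple graph without loops, so that the hypothesis is applied only to distinct candidates.

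\emph{Covering.} Fix \(x\in M\). By the covering hypothesis there is some \(U_x\in\mathcal C\) with \(\D{x}{\theta_{U_x}}\le\rho_{\rm cov}\). I would split into two cases.

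If \(U_x\in\mathcal N\), then \(x\) is already within \(\rho_{\rm cov}\le\rho_{\rm cov}+K_{\rm cmp}\delta\) of a representative in \(\mathcal N\).

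If \(U_x\notin\mathcal N\), then maximality of \(\mathcal N\) says that \(\mathcal N\cup\{U_x\}\) is not independent. So some \(W\in\mathcal N\) satisfies \(W\stackrel{H}{\sim}U_x\). The second hypothesis then gives \(\D{\theta_W}{\theta_{U_x}}\le K_{\rm cmp}\delta\). By the triangle inequality,
\[
\D{x}{\theta_W}\le \D{x}{\theta_{U_x}}+\D{\theta_{U_x}}{\theta_W}\le \rho_{\rm cov}+K_{\rm cmp}\delta .
\]

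In both cases \(x\) lies within distance \(\rho_{\rm cov}+K_{\rm cmp}\delta\) of some \(\theta_W\) with \(W\in\mathcal N\), so the representatives of \(\mathcal N\) form a \((\rho_{\rm cov}+K_{\rm cmp}\delta)\)-net of \(M\).

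There is no real obstacle here. The one point to state explicitly is the convention for "net": I take it to mean that every point of \(M\) lies within closed distance \(R\) of some center. With that reading the non-strict inequalities above suffice; a strict-inequality convention would only require an arbitrarily small enlargement of the radius. I would also remark that maximal independent sets exist, for instance by greedy selection, since \(\mathcal C\) is finite.
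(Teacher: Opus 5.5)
Your proposal is correct and follows essentially the same argument as the paper: separation is the contrapositive of the first hypothesis applied to distinct members of \(\mathcal N\), and covering uses maximality (either \(U_x\in\mathcal N\) or \(U_x\) is adjacent to some member of \(\mathcal N\)) together with the triangle inequality. Your extra remarks on excluding loops, the closed-distance convention for nets, and the existence of maximal independent sets are harmless clarifications that the paper leaves implicit.
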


\begin{proof}
The first implication gives separation: if two distinct vertices of
\(\mathcal N\) had representative distance at most \(\delta\), they would be
adjacent. For covering, fix \(x\in M\) and choose \(U_x\). By maximality,
either \(U_x\in\mathcal N\), or \(U_x\stackrel{H}{\sim} U\) for some
\(U\in\mathcal N\). In the latter case,
\[
\D{x}{\theta_U}
\le
\D{x}{\theta_{U_x}}+\D{\theta_{U_x}}{\theta_U}
\le
\rho_{\rm cov}+K_{\rm cmp}\delta.
\]
\end{proof}

\subsection{Oracle route}

\begin{prop}[Oracle route produces a refinement net]
\label{prop:oracle-candidate-net}
There exists a constant \(K_{\tref{prop:oracle-candidate-net}}>0\), depending
only on \(L_\rmp,\ell_\rmp,M_\rmp\), such that the following holds. Work in the
setup of Proposition~\ref{prop:inner-density-candidates}, and assume the good
event there.
Let \(\mathcal O_{\win}\) be an
\((\alpha_{\win},\lambda_{\win})\)-\fuzzywindoworacle{} with
\[
\lambda_{\win}\le \frac18,
\qquad
\rho\le \frac{\alpha_{\win}}2 r_\rmp.
\]
For any \(K\ge K_{\tref{prop:oracle-candidate-net}}\) and any
\(r_{\rm ref}>0\) satisfying
\[
K\rho
\le
r_{\rm ref},
\qquad
K^3r_{\rm ref}
\le
\min\{1,r_\rmp\},
\]
define the oracle-filtered candidate family
\[
\mathcal C_\rho^{\ora}
:=
\left\{
U\in\mathcal C_\rho:
\mathcal O_{\win}(u,v)=1
\text{ for every }u,v\in U
\right\}.
\]
For each \(U\in\mathcal C_\rho^{\ora}\), let
\(\theta_U=X_{u_U}\) be the representative obtained from
Lemma~\ref{lem:internal-gap-average-distance}, applied with
\(\Delta=8L_\rmp\rho\).
Define a graph \(H_{\ora}\) on vertex set \(\mathcal C_\rho^{\ora}\) by joining
\(U_1,U_2\) whenever
\[
\pav{U_1,U_2}
\ge
\widehat{\rmp}(0)
-2L_\rmp \sqrt{K}r_{\rm ref}.
\]
Let \(\mathcal N_{\ora}\subseteq\mathcal C_\rho^{\ora}\) be any maximal
independent set of \(H_{\ora}\). Then the family
\[
\mathfrak N_{\ora}
:=
\{(U,\theta_U):
U\in\mathcal N_{\ora}\}
\]
is a refinement net with parameters
\[
\begin{gathered}
\Theta=\widehat{\rmp}(0)
-A_{\tref{prop:link-average-separation}}r_{\rm ref},\qquad
R_{\rm net}= \tfrac{1}{3}Kr_{\rm ref},\\
R_{\rm in}=r_{\rm ref},\qquad
R_{\rm out}=Kr_{\rm ref},\qquad
\gamma=r_{\rm ref},\qquad
M_{\rm net}=|W|.
\end{gathered}
\]
\end{prop}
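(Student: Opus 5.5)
The plan is to verify the three refinement-net properties directly. Separation will come from Proposition~\ref{prop:link-average-separation} in the \ORA{} case, and covering will come from Lemma~\ref{lem:comparison-graph-selection} applied to \(H_{\ora}\). The bound \(|\mathcal N_{\ora}|\le|W|\) should be essentially free. Since every candidate contains the vertex \(u_U\) with \(\theta_U=X_{u_U}\), and distinct elements of \(\mathcal N_{\ora}\) will have representatives at distance \(>\delta>0\), the map \(U\mapsto u_U\) is injective. This also requires the separation half of Lemma~\ref{lem:comparison-graph-selection}.

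\textbf{Step 1 (separation).} Fix \(U\in\mathcal C_\rho^{\ora}\). Oracle certification makes \(U\) \(\lambda_{\win}\)-\windowsafe{} with \(\lambda_{\win}\le1/8<1/4\). By Proposition~\ref{prop:inner-density-candidates}(3), \(U\) is an \(8L_\rmp\rho\)-\internalaveragecandidate{}, and \(8L_\rmp\rho\le r_{\rm ref}\) once \(K\ge 8L_\rmp\). Proposition~\ref{prop:inner-density-candidates}(1) gives \(|\widehat\rmp(0)-\rmp(0)|\le3L_\rmp\rho\le r_{\rm ref}\). The scale condition \(Kr_{\rm ref}\le\tfrac12\min\{1,r_\rmp\}\) follows from \(K^3r_{\rm ref}\le\min\{1,r_\rmp\}\) and \(K\ge2\). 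The \ORA{} case of Proposition~\ref{prop:link-average-separation} at scale \(r=r_{\rm ref}\) then gives exactly the claimed \((\theta_U,r_{\rm ref},Kr_{\rm ref},\Theta,r_{\rm ref})\) separation, with the same \(\theta_U\) from Lemma~\ref{lem:internal-gap-average-distance}.

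\textbf{Step 2 (hypotheses of the selection lemma).} Covering is immediate. By Proposition~\ref{prop:inner-density-candidates}(2), for each \(x\) there is \(U_x\) with \(X_{U_x}\subseteq B(x,\rho)\). Its diameter is at most \(2\rho\le\alpha_{\win}r_\rmp\), so the oracle certifies it and \(U_x\in\mathcal C^{\ora}_\rho\). Since \(\theta_{U_x}\in X_{U_x}\), we get \(\rho_{\rm cov}=\rho\).

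For the edge conditions, the key tool is a comparison of pair averages. Let \(U_1,U_2\) be certified candidates with centers \(\theta_1,\theta_2\) and \(D=\D{\theta_1}{\theta_2}\). When \(D\le r_\rmp/2\), all cross distances lie in the window, because each set has diameter \(\le r_\rmp/8\). Lemma~\ref{lem:internal-gap-average-distance} gives \(\bar d_{U_i}(\theta_i)\le 8L_\rmp\rho/\ell_\rmp\), and the triangle inequality then gives
\[
|\apav{U_1,U_2}-\rmp(D)|\le 16L_\rmp^2\rho/\ell_\rmp,
\]
which is \(O(r_{\rm ref}/K)\). On \(\Enn{W}{L_\rmp\rho}{m_\rho}\) the empirical and latent pair averages differ by at most \(L_\rmp\rho\). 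Combined with \(|\widehat\rmp(0)-\rmp(0)|\le3L_\rmp\rho\), the test
\[
\pav{U_1,U_2}\ge\widehat\rmp(0)-2L_\rmp\sqrt K r_{\rm ref}
\]
is within \(O(\rho)\) of the criterion \(\rmp(D)\ge\rmp(0)-2L_\rmp\sqrt Kr_{\rm ref}\).

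Take \(\delta=\sqrt Kr_{\rm ref}\). If \(D\le\delta\), then \(\rmp(D)\ge\rmp(0)-L_\rmp\delta\), which leaves slack \(L_\rmp\delta\gg O(\rho)\), so \(U_1\) and \(U_2\) are adjacent. Conversely, suppose they are adjacent. If \(D\le r_\rmp/2\), the lower Lipschitz bound gives \(\ell_\rmp D\le 2L_\rmp\delta+O(\rho)\), so \(D\le K_{\rm cmp}\delta\) with \(K_{\rm cmp}=3L_\rmp/\ell_\rmp\). If instead \(D>r_\rmp/2\), every cross distance exceeds \(r_\rmp/4\). Monotonicity then forces \(\apav{U_1,U_2}\le\rmp(0)-\ell_\rmp r_\rmp/4\), which is incompatible with adjacency because \(\sqrt K r_{\rm ref}\le K^{-5/2}r_\rmp\) is tiny.

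Lemma~\ref{lem:comparison-graph-selection} therefore yields a net radius
\[
\rho+\tfrac{3L_\rmp}{\ell_\rmp}\sqrt K r_{\rm ref}\le\tfrac13Kr_{\rm ref}
\]
for \(K\) large. This works because \(\sqrt K\ll K\), which is exactly why the threshold uses \(\sqrt K\).

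The main obstacle is this constant bookkeeping. One must choose \(K_{\tref{prop:oracle-candidate-net}}\) so that the \(O(\rho)=O(r_{\rm ref}/K)\) errors are dominated by the \(\sqrt K r_{\rm ref}\) slack in both directions. One must also check that the far case \(D>r_\rmp/2\) is ruled out by \(K^3r_{\rm ref}\le\min\{1,r_\rmp\}\). A further point is that pair averages must be taken over \(|U_i|=m_\rho\) sets possibly overlapping; \(\Enn{W}{\cdot}{m_\rho}\) explicitly allows this.
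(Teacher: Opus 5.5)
Your proposal is correct and follows the paper's proof essentially step for step. The shared steps are: oracle-certified covering with $\rho_{\rm cov}=\rho$; average-radius control from the $1$-window-safe case of Lemma~\ref{lem:internal-gap-average-distance}; the two adjacency checks at separation scale $\delta=\sqrt K r_{\rm ref}$; Lemma~\ref{lem:comparison-graph-selection} for the net radius and the injection $U\mapsto u_U$; and the \ORA{} case of Proposition~\ref{prop:link-average-separation} for the separation certificates.

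The differences are cosmetic. You split the far case at $r_\rmp/2$ and argue directly, where the paper splits at $(1-2\lambda_{\win})r_\rmp$ and uses Lemma~\ref{lem:far-representatives-pair-average-drop}. Your constant $16L_\rmp^2\rho/\ell_\rmp$ omits the factor $|U_1||U_2|/|\mathcal D(U_1,U_2)|\le 2$ coming from overlapping candidates; Lemma~\ref{lem:pair-average-link-window} gives $32L_\rmp^2\rho/\ell_\rmp$ instead. This does not matter, since only the order $O(\rho)$ is used.
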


We prove Proposition~\ref{prop:oracle-candidate-net}. The comparison
graph \(H_{\ora}\) is built from empirical \pairaverage{}s. To apply
Lemma~\ref{lem:comparison-graph-selection}, we need two deterministic
\pairaverage{} facts: one approximates \pairaverage{}s while all relevant
points stay inside the local link window, and the other forces a
\pairaverage{} drop when two window-safe representatives are beyond that
window.

\begin{lemma}[Pair averages inside the link window]
\label{lem:pair-average-link-window}
Let \(U_1,U_2\subseteq\bfV\) be two subsets of size \(m\ge2\). Suppose there
exist \(x_1,x_2\in M\) such that
\[
\bar d_{U_i}(x_i)\le R,
\qquad i=1,2,
\]
and
\[
\operatorname{diam}(X_{U_1}\cup X_{U_2}\cup\{x_1,x_2\})
\le r_\rmp.
\]
Then
\[
\left|
\apav{U_1,U_2}
-
\rmp\!\bigl(\D{x_1}{x_2}\bigr)
\right|
\le 4L_\rmp R.
\]
\end{lemma}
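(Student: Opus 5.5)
We compare each term of the pair average with \(\rmp(\D{x_1}{x_2})\) and then average. Fix an off-diagonal pair \((u,v)\in\mathcal D(U_1,U_2)\). The diameter hypothesis says the four points \(X_u,X_v,x_1,x_2\) all lie in a set of diameter at most \(r_\rmp\). Hence both distances \(\D{X_u}{X_v}\) and \(\D{x_1}{x_2}\) lie in \([0,r_\rmp]\). The upper Lipschitz bound from Assumption~\ref{assump:link-function} then gives
\[
\bigl|\rmp(\D{X_u}{X_v})-\rmp(\D{x_1}{x_2})\bigr|
\le L_\rmp\bigl|\D{X_u}{X_v}-\D{x_1}{x_2}\bigr|
\le L_\rmp\bigl(\D{X_u}{x_1}+\D{X_v}{x_2}\bigr),
\]
where the last step is the triangle inequality applied twice.

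Next we average this bound over \(\mathcal D(U_1,U_2)\). This yields
\[
\bigl|\apav{U_1,U_2}-\rmp(\D{x_1}{x_2})\bigr|
\le \frac{L_\rmp}{|\mathcal D(U_1,U_2)|}\sum_{(u,v)\in\mathcal D(U_1,U_2)}\bigl(\D{X_u}{x_1}+\D{X_v}{x_2}\bigr).
\]
The only subtlety is that the sum runs over the off-diagonal pairs, not the full product \(U_1\times U_2\), so the average is not literally \(\bar d_{U_1}(x_1)+\bar d_{U_2}(x_2)\). To handle this, we bound the off-diagonal sum by the full-product sum, since all terms are nonnegative. The full-product sum equals \(m^2\bigl(\bar d_{U_1}(x_1)+\bar d_{U_2}(x_2)\bigr)\le 2m^2R\).

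It remains to control the normalization. We have \(|\mathcal D(U_1,U_2)|=m^2-|U_1\cap U_2|\ge m^2-m=m(m-1)\). Since \(m\ge2\), this gives \(m^2/|\mathcal D(U_1,U_2)|\le m/(m-1)\le 2\). Combining, the right-hand side is at most \(L_\rmp\cdot 2\cdot 2R=4L_\rmp R\), which is the claimed bound; the factor \(4\) is exactly the overlap correction \(2\) times the two average-distance terms.

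No step is a real obstacle. The one point needing care is this diagonal-removal normalization. The other is checking that the diameter hypothesis covers both \(\D{X_u}{X_v}\) and \(\D{x_1}{x_2}\), so that the Lipschitz bound applies to each term with no truncation argument.
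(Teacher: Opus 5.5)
Your proposal is correct and follows the paper's proof essentially step for step. The paper applies the local Lipschitz bound termwise and uses the triangle inequality. It then enlarges the off-diagonal sum to the full product $U_1\times U_2$ and absorbs the normalization via $|\mathcal D(U_1,U_2)|\ge m(m-1)$, giving the factor $2\cdot 2L_\rmp R$.
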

\begin{proof}
For every \((u_1,u_2)\in\mathcal D(U_1,U_2)\), the assumed diameter bound puts
both \(\D{X_{u_1}}{X_{u_2}}\) and \(\D{x_1}{x_2}\) inside \([0,r_\rmp]\).
Therefore the local \(L_\rmp\)-Lipschitz bound applies:
\begin{align*}
\left|\apav{U_1,U_2} - \rmp\!\bigl(\D{x_1}{x_2}\bigr)\right|
\le&
\frac{1}{|\mathcal D(U_1,U_2)|}
\sum_{(u_1,u_2) \in \mathcal D(U_1,U_2)}
L_\rmp \left|\D{X_{u_1}}{X_{u_2}} - \D{x_1}{x_2}\right|\\
\le&
\frac{1}{|\mathcal D(U_1,U_2)|}
\sum_{(u_1,u_2) \in \mathcal D(U_1,U_2)}
L_\rmp(\D{X_{u_1}}{x_1}+\D{X_{u_2}}{x_2})\\
\le&
\frac{1}{|\mathcal D(U_1,U_2)|}
\sum_{u_1 \in U_1,u_2 \in U_2} L_\rmp(\D{X_{u_1}}{x_1}+\D{X_{u_2}}{x_2})\\
\le&
\frac{|U_1||U_2|}{|\mathcal D(U_1,U_2)|}\,2L_\rmp R
\le
4L_\rmp R,
\end{align*}
where in the last step we used the fact that
$$
    |\mathcal D(U_1,U_2)| \ge |U_1||U_2| - \min\{|U_1|,|U_2|\} \ge m(m-1)\,.
$$
\end{proof}

\begin{lemma}[Far representatives force a pair-average drop]
\label{lem:far-representatives-pair-average-drop}
Assume \(\lambda_{\win}<1/4\). Let \(U_1,U_2\subseteq\bfV\), with
\(|\mathcal D(U_1,U_2)|>0\), and let \(x_i\in M\), \(i=1,2\). Suppose
\[
\D{X_u}{x_i}\le \lambda_{\win}r_\rmp
\qquad
(u\in U_i,\ i=1,2).
\]
If
\[
\D{x_1}{x_2}\ge (1-2\lambda_{\win})r_\rmp,
\]
then
\[
\apav{U_1,U_2}
\le
\rmp((1-4\lambda_{\win})r_\rmp)
\le
\rmp(0)-\ell_\rmp(1-4\lambda_{\win})r_\rmp.
\]
\end{lemma}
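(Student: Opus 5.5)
The plan is a pointwise bound on each summand followed by averaging. Fix an ordered pair \((u_1,u_2)\in\mathcal D(U_1,U_2)\), with \(u_i\in U_i\). By the triangle inequality and the hypotheses,
\[
\D{X_{u_1}}{X_{u_2}}
\ge
\D{x_1}{x_2}-\D{X_{u_1}}{x_1}-\D{X_{u_2}}{x_2}
\ge
(1-2\lambda_{\win})r_\rmp-2\lambda_{\win}r_\rmp
=
(1-4\lambda_{\win})r_\rmp .
\]
Since \(\lambda_{\win}<1/4\), the lower bound \((1-4\lambda_{\win})r_\rmp\) is a positive number lying in \([0,r_\rmp]\).

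Because \(\rmp\) is non-increasing on \([0,\infty)\), the previous display gives, for every pair,
\[
\rmp\bigl(\D{X_{u_1}}{X_{u_2}}\bigr)\le \rmp\bigl((1-4\lambda_{\win})r_\rmp\bigr).
\]
This holds even when \(\D{X_{u_1}}{X_{u_2}}\) exceeds \(r_\rmp\), since only monotonicity is used and not the bi-Lipschitz bound. The pair average \(\apav{U_1,U_2}\) is an average over the nonempty set \(\mathcal D(U_1,U_2)\) of terms each bounded by this quantity, so averaging yields the first inequality.

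For the second inequality, apply Assumption~\ref{assump:link-function} with \(a=0\) and \(b=(1-4\lambda_{\win})r_\rmp\); both lie in \([0,r_\rmp]\). This gives \(|\rmp(0)-\rmp(b)|\ge \ell_\rmp b\). Monotonicity fixes the sign: \(\rmp(0)\ge\rmp(b)\), so \(\rmp(0)-\rmp(b)\ge\ell_\rmp b\), which is the claim.

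There is no real obstacle. The only points needing care are that the first step uses monotonicity alone, so distances outside the window are harmless, and that \(\lambda_{\win}<1/4\) is exactly what keeps \(b\) inside the window so that the lower Lipschitz bound applies.
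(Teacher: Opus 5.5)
Your proof is correct and follows the paper's argument essentially verbatim. The triangle inequality gives the pointwise bound \(\D{X_{u_1}}{X_{u_2}}\ge(1-4\lambda_{\win})r_\rmp\); monotonicity plus averaging then gives the first inequality, and the local lower bi-Lipschitz bound at \(a=0\), \(b=(1-4\lambda_{\win})r_\rmp\in[0,r_\rmp]\), with the sign fixed by monotonicity, gives the second.
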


\begin{proof}
For \(u_i\in U_i\),
\[
\D{X_{u_1}}{X_{u_2}}
\ge
\D{x_1}{x_2}-\D{X_{u_1}}{x_1}-\D{X_{u_2}}{x_2}
\ge
(1-4\lambda_{\win})r_\rmp.
\]
Since \(\rmp\) is non-increasing,
\[
\rmp(\D{X_{u_1}}{X_{u_2}})
\le
\rmp((1-4\lambda_{\win})r_\rmp).
\]
Averaging gives the first inequality. Since
\((1-4\lambda_{\win})r_\rmp\in[0,r_\rmp]\), the local lower
bi-Lipschitz bound gives the second inequality.
\end{proof}

\begin{proof}[Proof of Proposition~\ref{prop:oracle-candidate-net}]
Set
\[
\rho_{\rm sep}:=
\sqrt{K}r_{\rm ref},
\]
which is the separation scale for the comparison graph \(H_{\ora}\).
Taking \(K_{\tref{prop:oracle-candidate-net}}\) large enough, the assumptions on
\(r_{\rm ref}\) imply
\[
8L_\rmp\rho\le r_{\rm ref}.
\]

\step{The oracle-filtered family still covers}
For every \(x\in M\), Proposition~\ref{prop:inner-density-candidates} gives
\(U_x\in\mathcal C_\rho\) such that
\[
X_u\in B(x,\rho)
\qquad\text{for every }u\in U_x.
\]
Thus, for \(u,v\in U_x\),
\[
\D{X_u}{X_v}<2\rho\le \alpha_{\win}r_\rmp,
\]
by the assumption on \(\alpha_{\win}\).
The oracle therefore certifies every pair in \(U_x\), so
\[
U_x\in\mathcal C_\rho^{\ora}.
\]

\step{Average-radius control}
If \(U\in\mathcal C_\rho^{\ora}\), then the oracle implication
\(\mathcal O_{\win}(u,v)=1\Rightarrow \D{X_u}{X_v}\le
\lambda_{\win}r_\rmp\) shows that \(U\) is
\(\lambda_{\win}\)-\windowsafe{} and hence \(1\)-\windowsafe{}. Since
Proposition~\ref{prop:inner-density-candidates} gives
\[
\apav{U}\ge \rmp(0)-8L_\rmp\rho,
\]
Lemma~\ref{lem:internal-gap-average-distance} gives
\begin{align}
    \label{eq:oracle-candidate-average-radius-control}
\bar d_U(\theta_U)
\le
\frac{8L_\rmp}{\ell_\rmp}\rho
\qquad\text{for every }U\in\mathcal C_\rho^{\ora}.
\end{align}
Moreover, since \(\theta_U\in X_U\), every \(u\in U\) satisfies
\[
\D{X_u}{\theta_U}\le \lambda_{\win}r_\rmp.
\]

\step{Close representatives are adjacent}
Let \(U_1,U_2\in\mathcal C_\rho^{\ora}\), and write
\[
d_{12}:=\D{\theta_{U_1}}{\theta_{U_2}}.
\]
If \(d_{12}\le\rho_{\rm sep}\), then
for every \(u_i\in U_i\),
\[
\D{X_{u_1}}{X_{u_2}}
\le
\D{X_{u_1}}{\theta_{U_1}}+d_{12}+\D{\theta_{U_2}}{X_{u_2}}
\le
    \lambda_{\win}r_\rmp + \rho_{\rm sep} + \lambda_{\win}r_\rmp
\le \tfrac{1}{4}r_\rmp + \rho_{\rm sep}
\le r_\rmp,
\]
where the last inequality follows from
\(\rho_{\rm sep}=\sqrt{K}r_{\rm ref}
\le K^{-2.5}\min\{1,r_\rmp\}\).
Since it holds for every $u_i\in U_i$, we have
\[
\operatorname{diam}(X_{U_1}\cup X_{U_2}\cup
\{\theta_{U_1},\theta_{U_2}\})
\le r_\rmp.
\]
Lemma~\ref{lem:pair-average-link-window}
with $\bar d_U(\theta_U)
\le
\frac{8L_\rmp}{\ell_\rmp}\rho$ from \eqref{eq:oracle-candidate-average-radius-control}
and the event
\(\Enn{W}{L_\rmp\rho}{m_\rho}\) give
\[
\left|
\pav{U_1,U_2}
-
\rmp(d_{12})
\right|
\le
\left|
    \pav{U_1,U_2} - \apav{U_1,U_2}
\right|
+
\left|
    \apav{U_1,U_2} - \rmp(d_{12})
\right|
\le 4L_\rmp \cdot \frac{8L_\rmp}{\ell_\rmp}\rho + L_\rmp\rho
\le 0.1L_\rmp \rho_{\rm sep},
\]
provided \(K_{\tref{prop:oracle-candidate-net}}\) is large enough.
Therefore
\[
\pav{U_1,U_2}
\ge
\rmp(d_{12})-0.1 L_\rmp \rho_{\rm sep}
\ge
\rmp(0)-L_\rmp\rho_{\rm sep}-0.1 L_\rmp \rho_{\rm sep}
\ge
\widehat{\rmp}(0)- 3L_\rmp \rho  -L_\rmp\rho_{\rm sep}-0.1 L_\rmp \rho_{\rm sep}
\ge
\widehat{\rmp}(0)- 2L_\rmp\rho_{\rm sep}
\]
where the last inequality holds when \(K_{\tref{prop:oracle-candidate-net}}\)
is large enough.
Thus \(U_1\stackrel{H_{\ora}}{\sim}U_2\).

\step{Adjacent representatives are not too far}
Assume now that
\[
d_{12}>K_{\rm cover}\rho_{\rm sep},
\]
for some \(K_{\rm cover}\) to be chosen later.
We show that \(U_1\) and \(U_2\) are not adjacent.
First suppose
\[
d_{12}\le (1-2\lambda_{\win})r_\rmp.
\]
Then the same triangle-inequality check as above gives
\[
\operatorname{diam}(X_{U_1}\cup X_{U_2}\cup
\{\theta_{U_1},\theta_{U_2}\})
\le r_\rmp,
\]
and the same approximation bound gives
\[
\pav{U_1,U_2}
\le
\rmp(d_{12})+0.1 L_\rmp \rho_{\rm sep}
\le
\rmp(0)-\ell_\rmp d_{12}+0.1 L_\rmp \rho_{\rm sep}
\le
\rmp(0)-\ell_\rmp K_{\rm cover}\rho_{\rm sep}+0.1 L_\rmp \rho_{\rm sep}
\le
\widehat{\rmp}(0)-2L_\rmp\rho_{\rm sep},
\]
where \(K_{\rm cover}\) is chosen large enough depending only on \(L_\rmp\) and
\(\ell_\rmp\), for example,
$$
K_{\rm cover} =  \tfrac{1}{\ell_\rmp} \left(
2L_\rmp + 0.1 L_\rmp + 3L_\rmp
\right)
$$
suffices, due to $\rho \le \rho_{\rm sep}$.

Thus \(U_1\) and \(U_2\) are not adjacent.

It remains to consider
\[
d_{12}>(1-2\lambda_{\win})r_\rmp.
\]
Lemma~\ref{lem:far-representatives-pair-average-drop}, applied with
\(x_i=\theta_{U_i}\), gives
\[
\apav{U_1,U_2}
\le
\rmp(0)-\ell_\rmp(1-4\lambda_{\win})r_\rmp.
\]
Together with \(\Enn{W}{L_\rmp\rho}{m_\rho}\), this yields
\[
\pav{U_1,U_2}
\le
\rmp(0)-\ell_\rmp(1-4\lambda_{\win})r_\rmp+L_\rmp\rho
<
\widehat{\rmp}(0)-2L_\rmp\rho_{\rm sep},
\]
provided \(K_{\tref{prop:oracle-candidate-net}}\) is large enough.
Thus, in all cases, adjacency implies
\[
\D{\theta_{U_1}}{\theta_{U_2}}\le K_{\rm cover}\rho_{\rm sep}.
\]

\step{Apply the comparison graph selection lemma}
The oracle-filtered covering step gives, for every \(x\in M\), a candidate
\(U_x\in\mathcal C_\rho^{\ora}\) with
\[
\D{x}{\theta_{U_x}}\le\rho.
\]
The preceding two steps verify the hypotheses of
Lemma~\ref{lem:comparison-graph-selection} with
\[
\delta=\rho_{\rm sep},
\qquad
\rho_{\rm cov}=\rho,
\qquad
K_{\rm cmp}=K_{\rm cover}.
\]
Thus the selected representatives are \(\rho_{\rm sep}\)-separated and form a
\((\rho+K_{\rm cover}\rho_{\rm sep})\)-net. Since
\(\rho\le\rho_{\rm sep}\), this net radius is at most
\[
(1+K_{\rm cover})\rho_{\rm sep}.
\]
By increasing \(K_{\tref{prop:oracle-candidate-net}}\) if necessary, this is at
most \( \tfrac{K}{3} r_{\rm ref}\).
The separation also implies that the representatives
\(\theta_U=X_{u_U}\), \(U\in\mathcal N_{\ora}\), are all distinct. Hence
the map \(U\mapsto u_U\) injects \(\mathcal N_{\ora}\) into \(W\), and
\(|\mathcal N_{\ora}|\le |W|\).

\step{Separated-seed certificates}
For every selected \(U\), Proposition~\ref{prop:inner-density-candidates} gives
\[
\apav{U}\ge \rmp(0)-8L_\rmp\rho,
\]
and the oracle filter gives \(\lambda_{\win}\)-\windowsafe{}. Also
\[
|\widehat{\rmp}(0)-\rmp(0)|\le 3L_\rmp\rho\le r_{\rm ref}.
\]
Since \(8L_\rmp\rho\le r_{\rm ref}\), \(U\) is an
\(r_{\rm ref}\)-\internalaveragecandidate{}. The choice of
\(K_{\tref{prop:oracle-candidate-net}}\) and the assumption
\(K^3r_{\rm ref}\le \min\{1,r_\rmp\}\) give the \(K\)-requirements in
Proposition~\ref{prop:link-average-separation}. Therefore
Proposition~\ref{prop:link-average-separation}, in the \ORA{} case, applied
with \(r=r_{\rm ref}\), the estimate for \(\widehat{\rmp}(0)\), and this value
of \(K\), gives the required
refinement-net certificate.
\end{proof}

\subsection{Local route}

In the \localroute{}, each \(U\in\mathcal C_\rho\) is viewed as an
\((r,r)\)-cluster with center \(\theta_U\), at the cost of passing from
\(r\asymp \rho\) to \(r\asymp \sqrt{\rho}\), as explained in
Remark~\ref{rem:internal-average-to-tau-tau}.

\begin{prop}[Local route produces a refinement net]
\label{prop:local-link-candidate-net}
There exists a constant \(K_{\tref{prop:local-link-candidate-net}}>0\)
depending only on \(L_\rmp,\ell_\rmp,M_\rmp\) such that the following holds.
Work in the setup of Proposition~\ref{prop:inner-density-candidates}, and
assume the good event there.
For any \(K\ge K_{\tref{prop:local-link-candidate-net}}\) and any
\(r_{\rm ref}>0\) satisfying
\[
K\sqrt{\rho}
\le
r_{\rm ref},
\qquad
K^3r_{\rm ref}
\le
\min\{1,r_\rmp\},
\]
do the following. For each \(U\in\mathcal C_\rho\), let
\(\theta_U=X_{u_U}\) be a representative
obtained from Lemma~\ref{lem:internal-gap-average-distance}, applied with
\(\Delta=8L_\rmp\rho\).
Define a graph \(H_{\loc}\) on vertex set \(\mathcal C_\rho\) by joining
\(U_1,U_2\) whenever
\[
\pav{U_1,U_2}
\ge
\widehat{\rmp}(0)
-2L_\rmp \sqrt{K}r_{\rm ref}.
\]
Let \(\mathcal N_{\loc}\subseteq\mathcal C_\rho\) be any maximal independent set
of \(H_{\loc}\). Then the family
\[
\mathfrak N_{\loc}
:=
\{(U,\theta_U):
U\in\mathcal N_{\loc}\}
\]
is a refinement net with parameters
\[
\begin{gathered}
\Theta=\widehat{\rmp}(0)
-A_{\tref{prop:link-average-separation}}r_{\rm ref},\qquad
R_{\rm net}=
\tfrac13 Kr_{\rm ref},\\
R_{\rm in}=r_{\rm ref},\qquad
R_{\rm out}=Kr_{\rm ref},\qquad
\gamma=r_{\rm ref},\qquad
M_{\rm net}=|W|.
\end{gathered}
\]
\end{prop}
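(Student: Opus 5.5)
We mirror the proof of Proposition~\ref{prop:oracle-candidate-net}, replacing window-safety by the $(\tau,\tau)$-localization of Remark~\ref{rem:internal-average-to-tau-tau}. Set $\tau:=r_{\rm ref}$ and $\rho_{\rm sep}:=\sqrt K r_{\rm ref}$. Each $U\in\mathcal C_\rho$ is an $8L_\rmp\rho$-\internalaveragecandidate{} by Proposition~\ref{prop:inner-density-candidates}. Since $K\sqrt\rho\le r_{\rm ref}$, we have $8L_\rmp\rho/(\ell_\rmp \tau)\le 8L_\rmp r_{\rm ref}/(\ell_\rmp K^2)\le \tau$ for $K$ large. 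By Remark~\ref{rem:internal-average-candidate-cluster}, $U$ is then a $(\tau,\tau)$-cluster with center $\theta_U$, and more generally it is an $(R,8L_\rmp\rho/(\ell_\rmp R))$-cluster for every $R\le r_\rmp$. Covering is inherited directly: the set $U_x$ from Proposition~\ref{prop:inner-density-candidates} has all points in $B(x,\rho)$, so $\D{x}{\theta_{U_x}}<\rho$ and we may take $\rho_{\rm cov}=\rho$.

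Next we verify the two comparison-graph conditions of Lemma~\ref{lem:comparison-graph-selection} with $\delta=\rho_{\rm sep}$. We cannot use Lemma~\ref{lem:pair-average-link-window}, since outliers may lie far away. Instead we split $\mathcal D(U_1,U_2)$ into good pairs, where both points lie in $B(\theta_{U_i},\tau)$, and bad pairs. The bad pairs form at most a $2\tau$ fraction and each is bounded by $M_\rmp$ in absolute value. This gives
\[
\bigl|\apav{U_1,U_2}-\rmp(\D{\theta_{U_1}}{\theta_{U_2}})\bigr|\le 2L_\rmp\tau+4M_\rmp\tau
\]
whenever $d_{12}+2\tau\le r_\rmp$.

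If $d_{12}\le\rho_{\rm sep}$, this bound together with $\Enn{W}{L_\rmp\rho}{m_\rho}$ and $|\widehat\rmp(0)-\rmp(0)|\le3L_\rmp\rho$ gives $\pav{U_1,U_2}\ge\widehat\rmp(0)-L_\rmp\rho_{\rm sep}-O(\tau)$. Since $\tau=\rho_{\rm sep}/\sqrt K$, this is at least $\widehat\rmp(0)-2L_\rmp\rho_{\rm sep}$ for $K$ large, so $U_1,U_2$ are adjacent.

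For the converse, suppose $d_{12}>K_{\rm cover}\rho_{\rm sep}$. We give an upper bound that uses only monotonicity, so it holds even when $d_{12}$ exceeds the window. Good pairs have distance at least $\min\{d_{12},r_\rmp/2\}-2\tau$, and $r_\rmp/2\ge K_{\rm cover}\rho_{\rm sep}$ because $K^3r_{\rm ref}\le r_\rmp$. Hence each good pair has link value at most $\rmp(0)-\ell_\rmp(K_{\rm cover}\rho_{\rm sep}-2\tau)$, and bad pairs contribute at most $\rmp(0)$. Therefore
\[
\apav{U_1,U_2}\le\rmp(0)-(1-2\tau)\ell_\rmp(K_{\rm cover}\rho_{\rm sep}-2\tau),
\]
which lies below $\widehat\rmp(0)-2L_\rmp\rho_{\rm sep}-L_\rmp\rho$ once $K_{\rm cover}\approx 5L_\rmp/\ell_\rmp$. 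So $U_1,U_2$ are not adjacent. Lemma~\ref{lem:comparison-graph-selection} then gives a $(\rho+K_{\rm cover}\sqrt K r_{\rm ref})$-net, whose radius is at most $\tfrac13Kr_{\rm ref}$ for $K$ large.

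The cardinality bound follows exactly as in the oracle route. Separation makes the $\theta_U=X_{u_U}$ distinct, so $U\mapsto u_U$ injects $\mathcal N_{\loc}$ into $W$. For the certificates, each selected $U$ is a $(r_{\rm ref},r_{\rm ref})$-cluster with center $\theta_U$, and $|\widehat\rmp(0)-\rmp(0)|\le 3L_\rmp\rho\le r_{\rm ref}$. Proposition~\ref{prop:link-average-separation} in the \LOC{} case, with $r=r_{\rm ref}$, then yields the required separation; its condition $Kr_{\rm ref}\le\frac12\min\{1,r_\rmp\}$ follows from $K^3 r_{\rm ref}\le\min\{1,r_\rmp\}$.

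The main obstacle is the non-adjacency step. There the outlier fraction is $\tau=\rho_{\rm sep}/\sqrt K$, not $O(\rho)$, and $d_{12}$ may lie outside the bi-Lipschitz window. The constants must be tuned so that the $O(\tau)$ outlier error stays below $L_\rmp\rho_{\rm sep}$. This works precisely because $\rho_{\rm sep}/\tau=\sqrt K$ can be made large, which is the reason for choosing the separation scale $\sqrt K r_{\rm ref}$.
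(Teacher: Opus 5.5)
Your proposal is correct and follows essentially the same route as the paper: the separation scale $\rho_{\rm sep}=\sqrt K r_{\rm ref}$, viewing every candidate as an $(r_{\rm ref},r_{\rm ref})$-cluster with center $\theta_U$, the good/bad pair split (which is the paper's Lemma~\ref{lem:pair-average-tau-clusters-local}), Lemma~\ref{lem:comparison-graph-selection} with $\rho_{\rm cov}=\rho$, the injection $U\mapsto u_U$, and the local case of Proposition~\ref{prop:link-average-separation}. One small slip: $|\mathcal D(U_1,U_2)|$ can be as small as $m(m-1)$, so the bad-pair fraction is only bounded by $2\tau m/(m-1)\le 4\tau$, not $2\tau$; this changes constants and nothing else, and your non-adjacency bound (bad pairs contribute at most $\rmp(0)$) is a harmless variant of the paper's additive $8M_\rmp\tau$ error.
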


We prove Proposition~\ref{prop:local-link-candidate-net}. Here all candidates
are first viewed as approximate localized clusters. The only additional
ingredient needed for Lemma~\ref{lem:comparison-graph-selection} is the
following pair-average estimate, which relates the empirical comparison graph
\(H_{\loc}\) to distances between the cluster centers.

\begin{lemma}[Pair averages for approximate local clusters]
\label{lem:pair-average-tau-clusters-local}
Let \(U_1,U_2\subseteq\bfV\) with
\[
|U_1|=|U_2|=:m\ge2.
\]
For \(i=1,2\), suppose that \(U_i\) is a \((\tau,\tau)\)-cluster with center
\(x_i\in M\), where \(0<\tau\le1/2\). Then
\[
\rmp\!\bigl(\D{x_1}{x_2}+2\tau\bigr)-8M_\rmp\tau
\le
\apav{U_1,U_2}
\le
\rmp\!\bigl((\D{x_1}{x_2}-2\tau)_+\bigr)+8M_\rmp\tau.
\]
\end{lemma}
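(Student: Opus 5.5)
Both inequalities follow from splitting \(\mathcal D(U_1,U_2)\) into good and bad pairs. For \(i=1,2\), choose \(G_i\subseteq U_i\) with \(|G_i|\ge(1-\tau)m\) and \(X_u\in B(x_i,\tau)\) for all \(u\in G_i\), as in Definition~\ref{def:cluster}. Call a pair \((u_1,u_2)\in\mathcal D(U_1,U_2)\) good if \(u_1\in G_1\) and \(u_2\in G_2\), and bad otherwise. Only monotonicity of \(\rmp\) and the bound \(|\rmp|\le M_\rmp\) on \([0,\operatorname{diam}(M)]\) from Assumption~\ref{assump:link-function} are needed. The local bi-Lipschitz window is not used.

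\textbf{Counting bad pairs.} There are at most \(|U_1\setminus G_1|\,m+m\,|U_2\setminus G_2|\le 2\tau m^2\) bad ordered pairs. Since \(|\mathcal D(U_1,U_2)|\ge m(m-1)\ge m^2/2\) for \(m\ge2\) (as in the proof of Lemma~\ref{lem:pair-average-link-window}), the bad pairs form a fraction at most \(4\tau\) of \(\mathcal D(U_1,U_2)\).

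\textbf{Good pairs.} The triangle inequality gives
\[
(\D{x_1}{x_2}-2\tau)_+\le\D{X_{u_1}}{X_{u_2}}\le\D{x_1}{x_2}+2\tau .
\]
Since \(\rmp\) is non-increasing, each good term lies between \(\rmp(\D{x_1}{x_2}+2\tau)\) and \(\rmp((\D{x_1}{x_2}-2\tau)_+)\).

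\textbf{Combining.} Let \(f\) be the good fraction, so \(f\ge 1-4\tau\). Write \(\apav{U_1,U_2}=f\cdot(\text{good average})+(1-f)\cdot(\text{bad average})\), where the bad average lies in \([-M_\rmp,M_\rmp]\). For the upper bound,
\[
\apav{U_1,U_2}-\rmp\bigl((\D{x_1}{x_2}-2\tau)_+\bigr)\le(1-f)\bigl(M_\rmp-\rmp(\cdot)\bigr)\le 4\tau\cdot2M_\rmp=8M_\rmp\tau .
\]
The lower bound is symmetric: the gap is at least \(-(1-f)\cdot2M_\rmp\ge-8M_\rmp\tau\).

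A side point: the arguments \(\D{x_1}{x_2}\pm2\tau\) may exceed \(\operatorname{diam}(M)\). I will use \(M_\rmp\) as a bound on \(|\rmp|\) at those arguments. If that is not available, the good-term comparison only needs monotonicity, and the \(2M_\rmp\) spread can be taken between values of \(\rmp\) at actual distances, which lie in \([0,\operatorname{diam}(M)]\). To handle the case \(\D{x_1}{x_2}+2\tau>\operatorname{diam}(M)\) cleanly, replace \(\rmp(\D{x_1}{x_2}+2\tau)\) by the minimum over good pairs, which is at least that value.

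The only delicate step is the bookkeeping that gives the constant 8. It comes from the bad fraction \(4\tau\) times the spread \(2M_\rmp\). The hypothesis \(\tau\le 1/2\) is needed only so that \(f\ge0\) makes sense.
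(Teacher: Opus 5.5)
Your proof is correct and is essentially the paper's argument. Both split $\mathcal D(U_1,U_2)$ into pairs with both endpoints in the good sets versus the rest, compare the count $2\tau m^2$ against $|\mathcal D(U_1,U_2)|\ge m(m-1)$ to bound the bad fraction by $4\tau$, use monotonicity of $\rmp$ on good pairs, and use the spread $2M_\rmp$ on bad pairs. Your side worry about $\D{x_1}{x_2}+2\tau>\operatorname{diam}(M)$ is harmless: the lower bound only needs $\rmp(\D{x_1}{x_2}+2\tau)\le\rmp(0)\le M_\rmp$, and the upper bound only needs $(\D{x_1}{x_2}-2\tau)_+\le\D{x_1}{x_2}\le\operatorname{diam}(M)$.
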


\begin{proof}
For \(i=1,2\), choose \(G_i\subseteq U_i\) such that
\[
|G_i|\ge(1-\tau)m,
\qquad
X_u\in B(x_i,\tau)\quad\text{for all }u\in G_i,
\]
and set \(B_i:=U_i\setminus G_i\). Let
\[
d_0:=\D{x_1}{x_2},
\qquad
\underline A:=\rmp(d_0+2\tau),
\qquad
\overline A:=\rmp((d_0-2\tau)_+),
\qquad
D_{12}:=|\mathcal D(U_1,U_2)|.
\]
For every \((u_1,u_2)\in G_1\times G_2\), \(u_1\neq u_2\), the triangle
inequality gives
\[
(d_0-2\tau)_+\le \D{X_{u_1}}{X_{u_2}}\le d_0+2\tau.
\]
Hence the corresponding \(\rmp\)-values lie between \(\underline A\) and
\(\overline A\). Let \(\mathcal B\subseteq\mathcal D(U_1,U_2)\) be the set of
ordered pairs for which at least one endpoint lies in \(B_1\cup B_2\). Then
\[
|\mathcal B|\le |B_1|m+|B_2|m\le 2\tau m^2,
\qquad
D_{12}\ge m(m-1),
\]
and therefore, since \(m\ge2\) and \(\tau\le1/2\),
\[
\frac{|\mathcal B|}{D_{12}}\le 4\tau.
\]
Because all \(\rmp\)-values have absolute value at most \(M_\rmp\), replacing
the good-pair bounds by arbitrary bad-pair values can change the average by at
most \(2M_\rmp|\mathcal B|/D_{12}\). Consequently,
\[
\apav{U_1,U_2}\ge
\underline A-2M_\rmp\frac{|\mathcal B|}{D_{12}}
\ge
\underline A-8M_\rmp\tau
\]
and
\[
\apav{U_1,U_2}\le
\overline A+2M_\rmp\frac{|\mathcal B|}{D_{12}}
\le
\overline A+8M_\rmp\tau.
\]
\end{proof}

\begin{proof}[Proof of Proposition~\ref{prop:local-link-candidate-net}]
Set
\[
\rho_{\rm sep}:=
\sqrt{K}r_{\rm ref}.
\]
By Proposition~\ref{prop:inner-density-candidates} and
Remark~\ref{rem:internal-average-to-tau-tau}, every \(U\in\mathcal C_\rho\) is a
\((r_{\rm ref},r_{\rm ref})\)-cluster with center \(\theta_U\). Here we use
the scale assumptions above and take
\(K_{\tref{prop:local-link-candidate-net}}\) large enough so that
\[
\sqrt{\frac{8L_\rmp\rho}{\ell_\rmp}}\le r_{\rm ref},
\qquad
3L_\rmp\rho\le r_{\rm ref}.
\]

Let \(U_1,U_2\in\mathcal C_\rho\), and write
\[
d_{12}:=\D{\theta_{U_1}}{\theta_{U_2}}.
\]
By Lemma~\ref{lem:pair-average-tau-clusters-local} and
\(\Enn{W}{L_\rmp\rho}{m_\rho}\),
\[
\pav{U_1,U_2}
\ge
\rmp(d_{12}+2r_{\rm ref})-8M_\rmp r_{\rm ref}-L_\rmp\rho
\]
and
\[
\pav{U_1,U_2}
\le
\rmp((d_{12}-2r_{\rm ref})_+)+8M_\rmp r_{\rm ref}+L_\rmp\rho.
\]
Also, Proposition~\ref{prop:inner-density-candidates} gives
\[
\rmp(0)-3L_\rmp\rho
\le
\widehat{\rmp}(0)
\le
\rmp(0)+L_\rmp\rho.
\]
By choosing \(K_{\tref{prop:local-link-candidate-net}}\) large enough, the
scale assumptions ensure
\[
L_\rmp\rho_{\rm sep}
\ge
2L_\rmp r_{\rm ref}+8M_\rmp r_{\rm ref}+2L_\rmp\rho,
\qquad
\ell_\rmp\rho_{\rm sep}
\ge
8M_\rmp r_{\rm ref}+5L_\rmp\rho,
\qquad
\rho_{\rm sep}\ge \max\{r_{\rm ref},\rho\}.
\]
If
\[
d_{12}\le \rho_{\rm sep},
\]
then \(d_{12}+2r_{\rm ref}\le 3\rho_{\rm sep}\le r_\rmp\), and the Lipschitz
bound gives
\[
\pav{U_1,U_2}
\ge
\rmp(0)-L_\rmp(d_{12}+2r_{\rm ref})-8M_\rmp r_{\rm ref}-L_\rmp\rho
\ge
\widehat{\rmp}(0)-2L_\rmp\rho_{\rm sep}.
\]
Thus \(U_1\stackrel{H_{\loc}}{\sim}U_2\).

If
\[
d_{12}>
2r_{\rm ref}+
\left(1+2\frac{L_\rmp}{\ell_\rmp}\right)\rho_{\rm sep},
\]
then \(d_{12}-2r_{\rm ref}>(1+2L_\rmp/\ell_\rmp)\rho_{\rm sep}\). Since
\((1+2L_\rmp/\ell_\rmp)\rho_{\rm sep}\le r_\rmp\), monotonicity and the lower
Lipschitz bound give
\begin{align*}
\pav{U_1,U_2}
&\le
\rmp\!\left(\left(1+2\frac{L_\rmp}{\ell_\rmp}\right)\rho_{\rm sep}\right)
+8M_\rmp r_{\rm ref}+L_\rmp\rho\\
&\le
\rmp(0)-\ell_\rmp\rho_{\rm sep}-2L_\rmp\rho_{\rm sep}
+8M_\rmp r_{\rm ref}+L_\rmp\rho\\
&<
\widehat{\rmp}(0)-2L_\rmp\rho_{\rm sep}.
\end{align*}
Thus \(U_1\) and \(U_2\) are not adjacent. Therefore adjacency implies
\begin{align}
\label{eq:local-candidate-adjacent-representatives-close}
d_{12}\le
2r_{\rm ref}+\left(1+2\frac{L_\rmp}{\ell_\rmp}\right)\rho_{\rm sep}.
\end{align}
\step{Apply the comparison graph selection lemma}
Proposition~\ref{prop:inner-density-candidates} supplies, for every \(x\in M\),
a candidate \(U_x\in\mathcal C_\rho\) with
\[
\D{x}{\theta_{U_x}}\le\rho.
\]
Since \(r_{\rm ref}\le\rho_{\rm sep}\),
\eqref{eq:local-candidate-adjacent-representatives-close} gives
\[
U_1\stackrel{H_{\loc}}{\sim}U_2
\quad\Longrightarrow\quad
d_{12}\le
\left(3+2\frac{L_\rmp}{\ell_\rmp}\right)\rho_{\rm sep}.
\]
Lemma~\ref{lem:comparison-graph-selection}, applied with
\[
\delta=\rho_{\rm sep},
\qquad
\rho_{\rm cov}=\rho,
\qquad
K_{\rm cmp}=3+2\frac{L_\rmp}{\ell_\rmp},
\]
shows that the selected representatives are \(\rho_{\rm sep}\)-separated and
form a net of radius at most
\[
\rho+\left(3+2\frac{L_\rmp}{\ell_\rmp}\right)\rho_{\rm sep}
\le
4\left(1+\frac{L_\rmp}{\ell_\rmp}\right)\rho_{\rm sep}.
\]
By increasing \(K_{\tref{prop:local-link-candidate-net}}\) if necessary, this is
at most \(\tfrac13 Kr_{\rm ref}\).
The separation also implies that the representatives
\(\theta_U=X_{u_U}\), \(U\in\mathcal N_{\loc}\), are all distinct. Hence
the map \(U\mapsto u_U\) injects \(\mathcal N_{\loc}\) into \(W\), and
\(|\mathcal N_{\loc}|\le |W|\).

\step{Separated-seed certificates}
Each selected \(U\) is a \((r_{\rm ref},r_{\rm ref})\)-cluster with center
\(\theta_U\).
Furthermore,
\[
|\widehat{\rmp}(0)-\rmp(0)|\le 3L_\rmp\rho\le r_{\rm ref}.
\]
By the choice of \(K_{\tref{prop:local-link-candidate-net}}\), and since
\(K^3r_{\rm ref}\le\min\{1,r_\rmp\}\), the \(K\)-requirements in
Proposition~\ref{prop:link-average-separation} hold.
Proposition~\ref{prop:link-average-separation}, in the \LOC{} case, applied
with \(r=r_{\rm ref}\), the estimate for \(\widehat{\rmp}(0)\), and
this value of \(K\), gives the required
refinement-net certificate.
\end{proof}

\section{Link-average threshold refinement}
\label{sec:link-average-threshold-refinement}

The refinement step consumes a link-average separated set and tests it against
a fresh vertex block. Thresholding at the associated level \(t\) keeps the inner
ball and rejects vertices outside the outer ball, up to the empirical
link-average fluctuation. The main statement below is the form used later:
it gives readable sufficient conditions for applying the refinement step
simultaneously over a whole refinement net.

\begin{prop}[Simultaneous threshold refinement]
\label{prop:readable-simultaneous-refinement}
\label{prop:simultaneous-refinement-net}
\label{cor:readable-simultaneous-refinement}
\label{cor:simultaneous-refinement-net}
Let
\[
\mathfrak N=\{(U,\theta_U):U\in\mathcal N\}
\]
be a refinement net with parameters
\[
(\Theta,\quad R_{\rm net},\quad R_{\rm in},\quad R_{\rm out},\quad
\gamma,\quad M_{\rm net}).
\]
Suppose \(|U|\ge m\) for every
\(U\in\mathcal N\). Let \(V\subseteq\bfV\) be
disjoint from every seed in \(\mathcal N\), and set \(n_\star:=|V|\ge2\).
Assume \(\Ept{V}{R_{\rm in}}\). Let \(L_{\rm fail}>0\), and define
\[
\varphi_{\rm in}:=\phi(R_{\rm in}/3).
\]
Assume \(R_{\rm out}\le1\), \(\varphi_{\rm in}>0\), and
\begin{align}
\label{eq:sim-refinement-failure-budget}
L_{\rm fail}
&\le
c_{\tref{prop:readable-simultaneous-refinement}}
R_{\rm out}\varphi_{\rm in}n_\star,
\\
\label{eq:sim-refinement-approx-concentration}
\sp m\min\{\gamma^2,1\}
&\ge
C_{\tref{prop:readable-simultaneous-refinement}}
\log\!\left(\frac{e}{R_{\rm out}\varphi_{\rm in}}\right),
\end{align}
where the constants
\[
c_{\tref{prop:readable-simultaneous-refinement}},
C_{\tref{prop:readable-simultaneous-refinement}}>0
\]
depend only on \(c_{\rm link},C_{\rm link}\). For each \(U\in\mathcal N\),
define
\[
\widehat U:=\{v\in V:\cn{U}{v}\ge \Theta\}.
\]
Then, conditionally on all latent positions of the seeds and of \(V\), with
probability at least \(1-M_{\rm net}\exp(-L_{\rm fail})\), every
\(\widehat U\), \(U\in\mathcal N\), satisfies
\[
|\widehat U|
\ge
\frac{\phi(R_{\rm in}/3)}{4}n_\star
\]
and is an \((R_{\rm out},R_{\rm out})\)-cluster with center \(\theta_U\). If, in
addition,
\begin{align}
\label{eq:sim-refinement-exact-concentration}
\sp m\min\{\gamma^2,1\}
&\ge
C_{\tref{prop:readable-simultaneous-refinement}}
\bigl(L_{\rm fail}+\log n_\star\bigr),
\end{align}
then, with the same probability bound, every \(U\in\mathcal N\) also satisfies
\[
V\cap B(\theta_U,R_{\rm in})
\subseteq
\widehat U
\subseteq
V\cap B(\theta_U,R_{\rm out}).
\]
\end{prop}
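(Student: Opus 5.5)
Fix a seed \(U\in\mathcal N\) and work conditionally on all latent positions of the seeds and of \(V\). Since \(V\) is disjoint from the seeds, the weights \(\{Z_{u,v}:u\in U,v\in V\}\) are conditionally independent across \(v\in V\), and for each \(v\) the fluctuation \(\cn{U}{v}-\acn{U}{v}\) obeys the tail bound of Lemma~\ref{lem:fixed-v-navigation} with \(|U|\ge m\). Call \(v\in V\) a \emph{mistake} for \(U\) if \(|\cn{U}{v}-\acn{U}{v}|>\gamma\). By link-average separation of \(U\) at \((\theta_U,R_{\rm in},R_{\rm out},\Theta,\gamma)\), a non-mistake \(v\) with \(X_v\in B(\theta_U,R_{\rm in})\) has \(\cn{U}{v}\ge\Theta\), and a non-mistake with \(\D{X_v}{\theta_U}\ge R_{\rm out}\) has \(\cn{U}{v}<\Theta\). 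So everything reduces to controlling the number of mistakes, per seed, and union bounding over the at most \(M_{\rm net}\) seeds.

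\emph{Approximate statement.} Each \(v\) is a mistake independently with probability at most \(q:=C_{\rm link}\exp(-c_{\rm link}\sp m\min\{\gamma^2,1\})\). By \eqref{eq:sim-refinement-approx-concentration} with \(C_{\tref{prop:readable-simultaneous-refinement}}\) large, \(q\le (c' R_{\rm out}\varphi_{\rm in})^{2}\) for a small constant \(c'\). Let \(S\) be the number of mistakes; it is dominated by \(\mathrm{Bin}(n_\star,q)\). Use the Chernoff bound \(\Prob(S\ge k)\le (e n_\star q/k)^k\) with \(k:=\lceil c'' R_{\rm out}\varphi_{\rm in}n_\star\rceil\): the ratio \(en_\star q/k\le e(c')^2 R_{\rm out}\varphi_{\rm in}/c''\le e^{-1}\), so \(\Prob(S\ge k)\le e^{-k}\le e^{-L_{\rm fail}}\) by \eqref{eq:sim-refinement-failure-budget} if \(c_{\tref{prop:readable-simultaneous-refinement}}\le c''\). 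On \(\{S<k\}\): by \(\Ept{V}{R_{\rm in}}\), \(|V\cap B(\theta_U,R_{\rm in})|\ge \varphi_{\rm in}n_\star/2\), and all but \(k\le \varphi_{\rm in}n_\star/4\) of these (using \(R_{\rm out}\le 1\), \(c''\le 1/8\), and treating the ceiling via \(\varphi_{\rm in}n_\star\) large, which \eqref{eq:sim-refinement-failure-budget} forces when \(L_{\rm fail}\) is not tiny) lie in \(\widehat U\), giving \(|\widehat U|\ge\varphi_{\rm in}n_\star/4\). Also every vertex of \(\widehat U\) outside \(B(\theta_U,R_{\rm out})\) is a mistake, so the outside fraction is at most \(k/|\widehat U|\le 4c''R_{\rm out}+o\le R_{\rm out}\), i.e.\ \(\widehat U\) is an \((R_{\rm out},R_{\rm out})\)-cluster centered at \(\theta_U\). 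A union bound over \(U\in\mathcal N\) gives probability \(1-M_{\rm net}e^{-L_{\rm fail}}\).

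\emph{Exact statement.} Under \eqref{eq:sim-refinement-exact-concentration}, \(n_\star q\le C_{\rm link}n_\star e^{-c_{\rm link}C(L_{\rm fail}+\log n_\star)}\le e^{-L_{\rm fail}}\) for \(C\) large, so by a union bound over \(v\in V\) there are no mistakes for \(U\) with probability \(1-e^{-L_{\rm fail}}\); then separation directly yields \(V\cap B(\theta_U,R_{\rm in})\subseteq\widehat U\subseteq V\cap B(\theta_U,R_{\rm out})\). Intersecting with the previous event (adjusting constants so each fails with probability \(\le\tfrac12 e^{-L_{\rm fail}}\), or just noting the no-mistake event implies the approximate conclusions) and union bounding over seeds finishes. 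The main obstacle is the bookkeeping in the binomial tail: making the \(\log(e/(R_{\rm out}\varphi_{\rm in}))\) in \eqref{eq:sim-refinement-approx-concentration} produce a per-vertex mistake rate small enough relative to \(R_{\rm out}\varphi_{\rm in}\) that the Chernoff exponent is linear in \(R_{\rm out}\varphi_{\rm in}n_\star\), which is exactly what \eqref{eq:sim-refinement-failure-budget} compares with \(L_{\rm fail}\).
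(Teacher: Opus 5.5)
Your proposal is correct and follows essentially the paper's argument. The paper packages the per-seed step as Lemma~\ref{lem:abstract-score-threshold-refinement}: it counts vertices whose empirical link average deviates by more than $\gamma/2$, and it controls their number with the Chernoff form $\bigl(\sqrt{qn_\star}+\sqrt{L_{\rm fail}}\bigr)^2\le 2(qn_\star+L_{\rm fail})$ rather than your $(en_\star q/k)^k$ tail, before applying link-average separation and a union bound over the net.

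Two small repairs are needed.

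\begin{itemize}
\item With mistakes defined by $|\cn{U}{v}-\acn{U}{v}|>\gamma$, a far non-mistake only satisfies $\cn{U}{v}\le\Theta$. That does not exclude it from $\widehat U$, and ties need not be null events for discrete observations. Define mistakes by $\ge\gamma$ instead (the tail bound still holds by letting $t\uparrow\gamma$), or use level $\gamma/2$ as the paper does.
\item Your ceiling caveat is unnecessary. Since $S$ is an integer, $S<k$ already gives $S\le k-1<c''R_{\rm out}\varphi_{\rm in}n_\star$. So you never need $\varphi_{\rm in}n_\star$ to be large, which \eqref{eq:sim-refinement-failure-budget} does not force when $L_{\rm fail}$ is small.
\end{itemize}
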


For \(m\ge1\) and \(\gamma>0\), define
\[
q_{\rm ref}(m,\gamma)
:=
C_{\rm link}
\exp\!\left(
-c_{\rm link}
\sp m\min\{\gamma^2/4,1\}
\right),
\]
the fixed-vertex tail bound from Lemma~\ref{lem:fixed-v-navigation} with
deviation level \(\gamma/2\). The following lemma is the
technical input for \cref{prop:readable-simultaneous-refinement}.

\begin{lemma}[link-average threshold refinement]
\label{prop:abstract-score-threshold-refinement}
\label{lem:abstract-score-threshold-refinement}
Let \(U,V\subseteq\bfV\) be disjoint, with \(m:=|U|\) and
\(n_\star:=|V|\ge2\). Fix realizations \(X_U=x_U\) and \(X_V=x_V\). Suppose
that \(U\) is
\[
(x_0,R_{\rm in},R_{\rm out},t,\gamma)\text{-link-average separated}
\]
and that \(\Ept{V}{R_{\rm in}}\) holds. Define
\[
\widehat U:=\{v\in V:\cn{U}{v}\ge t\}.
\]
Let \(L_{\rm fail}>0\), and assume
\begin{align}
\label{eq:single-refinement-bad-budget}
b_{\rm ref}:=2\bigl(q_{\rm ref}(m,\gamma)n_\star+L_{\rm fail}\bigr)
&\le
\frac{\phi(R_{\rm in}/3)}{4}n_\star.
\end{align}
Then, conditionally on \(X_U=x_U\) and \(X_V=x_V\), with probability at least
\(1-\exp(-L_{\rm fail})\), the following holds:
\begin{itemize}
\item \[
|\widehat U|
\ge
\frac{\phi(R_{\rm in}/3)}{4}n_\star.
\]
\item \(\widehat U\) is an
\[
\left(
R_{\rm out},
\frac{4b_{\rm ref}}{\phi(R_{\rm in}/3)n_\star}
\right)\text{-cluster}
\]
with center \(x_0\).
\item If, in addition,
\begin{align}
\label{eq:single-refinement-exact-condition}
n_\star q_{\rm ref}(m,\gamma)\le \exp(-L_{\rm fail}),
\end{align}
then
\[
V\cap B(x_0,R_{\rm in})
\subseteq
\widehat U
\subseteq
V\cap B(x_0,R_{\rm out}).
\]
In particular, \(\widehat U\) is an \((R_{\rm out},0)\)-cluster with center
\(x_0\).
\end{itemize}
\end{lemma}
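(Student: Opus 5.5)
Conditionally on \(X_U=x_U\) and \(X_V=x_V\), the only randomness left is in the edges between \(U\) and \(V\). Since \(U\cap V=\emptyset\), the empirical link averages \(\cn{U}{v}\), \(v\in V\), are functions of disjoint edge sets, so they are conditionally independent. For each \(v\in V\), let \(\xi_v\) be the indicator of the bad event \(|\cn{U}{v}-\acn{U}{v}|>\gamma/2\). By Lemma~\ref{lem:fixed-v-navigation} with \(t=\gamma/2\), \(\mathbb P(\xi_v=1\mid x_U,x_V)\le q_{\rm ref}(m,\gamma)\). The \(\xi_v\) are therefore independent Bernoulli variables with means at most \(q:=q_{\rm ref}(m,\gamma)\). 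The strategy is to bound the number of bad vertices and then use separation deterministically on the good ones.

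The deterministic step works as follows. If \(\xi_v=0\) and \(\D{X_v}{x_0}\le R_{\rm in}\), separation gives \(\cn{U}{v}\ge \acn{U}{v}-\gamma/2\ge t+\gamma/2\), so \(v\in\widehat U\). If \(\xi_v=0\) and \(\D{X_v}{x_0}\ge R_{\rm out}\), then \(\cn{U}{v}\le t-\gamma/2<t\), so \(v\notin\widehat U\). Set \(N_{\rm bad}:=\sum_v\xi_v\). Every error, whether an inner vertex is missed or an outer vertex is included, is a bad vertex. On \(\Ept{V}{R_{\rm in}}\) the inner ball contains at least \(\tfrac12\phi(R_{\rm in}/3)n_\star\) vertices of \(V\), so
\[
|\widehat U|\ge \tfrac12\phi(R_{\rm in}/3)n_\star-N_{\rm bad}.
\]
The vertices of \(\widehat U\) outside \(B(x_0,R_{\rm out})\) number at most \(N_{\rm bad}\). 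It therefore suffices to show \(N_{\rm bad}\le b_{\rm ref}\) with probability at least \(1-e^{-L_{\rm fail}}\). Given that, \eqref{eq:single-refinement-bad-budget} yields \(|\widehat U|\ge\tfrac14\phi(R_{\rm in}/3)n_\star\). The outside fraction is then at most
\[
\frac{b_{\rm ref}}{|\widehat U|}\le \frac{4b_{\rm ref}}{\phi(R_{\rm in}/3)n_\star},
\]
which is the cluster statement. One caveat: points at distance strictly between \(R_{\rm in}\) and \(R_{\rm out}\), or exactly at \(R_{\rm out}\), are not in the open ball \(B(x_0,R_{\rm out})\). This is harmless for the separation property, which uses \(\ge R_{\rm out}\), so such a point lies inside the open ball only if its distance is \(<R_{\rm out}\). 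I would double-check this boundary convention.

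The probabilistic bound on \(N_{\rm bad}\) is the main technical step. \(N_{\rm bad}\) is a sum of independent Bernoullis with mean \(\mu_0\le qn_\star\). A Chernoff bound gives \(\mathbb P(N_{\rm bad}\ge 2(\mu_0+L))\le e^{-L}\). One way to see this is the bound \(\mathbb E e^{N}\le e^{(e-1)\mu_0}\), combined with Markov's inequality at threshold \(2\mu_0+2L\): the exponent is \((e-1)\mu_0-2\mu_0-2L\le -L\). Applying this with \(L=L_{\rm fail}\) gives \(N_{\rm bad}\le b_{\rm ref}\) with the stated probability, using stochastic domination by Bernoulli\((q)\) variables to pass from \(\mu_0\) to \(qn_\star\).

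For the exact part, a union bound replaces the counting argument. Under \eqref{eq:single-refinement-exact-condition},
\[
\mathbb P(N_{\rm bad}\ge1)\le n_\star q\le e^{-L_{\rm fail}},
\]
so with that probability no vertex is bad. The deterministic step then gives
\[
V\cap B(x_0,R_{\rm in})\subseteq\widehat U\subseteq V\cap B(x_0,R_{\rm out}).
\]
Here the inner ball is open, so distances are \(<R_{\rm in}\), which is covered by the \(\le R_{\rm in}\) case. On this same event \(N_{\rm bad}=0\le b_{\rm ref}\), so the first two bullets also hold. The statement asks for all conclusions simultaneously when the extra condition holds; to get this cleanly, I would intersect the two events and accept a factor 2 in the failure probability, or, more cleanly, simply note that the exact event implies the counting event.
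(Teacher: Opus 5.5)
Your proof is correct and follows the paper's argument. You bound the number of vertices whose empirical link average deviates from its latent mean by more than $\gamma/2$ by $b_{\rm ref}$ with a Chernoff-type bound (you use the moment generating function at $\lambda=1$, whereas the paper cites the form $(\sqrt{qn_\star}+\sqrt{L_{\rm fail}})^2\le b_{\rm ref}$), apply link-average separation deterministically to the remaining vertices, and switch to a union bound for the exact-inclusion clause. One small slip: your boundary caveat says points at distance strictly between $R_{\rm in}$ and $R_{\rm out}$ are not in $B(x_0,R_{\rm out})$, but they are in it; this is harmless, since the complement of that open ball is exactly $\{y:{\rm d}(y,x_0)\ge R_{\rm out}\}$, which matches the separation hypothesis.
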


\begin{proof}
Define
\[
V_{\rm in}:=\{v\in V:X_v\in B(x_0,R_{\rm in})\},
\qquad
V_{\rm out}:=\{v\in V:X_v\notin B(x_0,R_{\rm out})\},
\]
and
\[
\mathcal I_{\rm ref}
:=
\left\{
v\in V:
|\cn{U}{v}-\acn{U}{v}|>\gamma/2
\right\}.
\]
Set \(q:=q_{\rm ref}(m,\gamma)\) and \(b:=b_{\rm ref}\). Conditionally on
\(X_U=x_U\) and \(X_V=x_V\), the indicators
\(\mathbf 1_{\{v\in\mathcal I_{\rm ref}\}}\), \(v\in V\), are independent, and
Lemma~\ref{lem:fixed-v-navigation} bounds each success probability by \(q\).
Since
\[
\left(\sqrt{qn_\star}+\sqrt{L_{\rm fail}}\right)^2\le b,
\]
Remark~\ref{rem:chernoff-form} gives
\[
\mathbb P\left(
|\mathcal I_{\rm ref}|>b
\ \middle|\ X_U=x_U,\ X_V=x_V
\right)
\le
\exp(-L_{\rm fail}).
\]
Work on the event \(|\mathcal I_{\rm ref}|\le b\).

If \(v\in V_{\rm in}\setminus\mathcal I_{\rm ref}\), then
\(\acn{U}{v}\ge t+\gamma\), and hence
\[
\cn{U}{v}\ge t+\gamma-\gamma/2>t.
\]
Thus \(V_{\rm in}\setminus\mathcal I_{\rm ref}\subseteq\widehat U\). Using
\(\Ept{V}{R_{\rm in}}\),
\[
|\widehat U\cap V_{\rm in}|
\ge
\frac{\phi(R_{\rm in}/3)}{2}n_\star-b.
\]
Similarly, if \(v\in V_{\rm out}\setminus\mathcal I_{\rm ref}\), then
\(\acn{U}{v}\le t-\gamma\), and
\[
\cn{U}{v}\le t-\gamma+\gamma/2<t,
\]
so \(v\notin\widehat U\). Therefore
\[
\widehat U\cap V_{\rm out}\subseteq\mathcal I_{\rm ref},
\qquad
|\widehat U\cap V_{\rm out}|\le b.
\]
The bound \eqref{eq:single-refinement-bad-budget} gives
\[
|\widehat U|
\ge
\frac{\phi(R_{\rm in}/3)}{4}n_\star,
\]
and then
\[
\frac{|\widehat U\cap V_{\rm out}|}{|\widehat U|}
\le
\frac{4b_{\rm ref}}{\phi(R_{\rm in}/3)n_\star}.
\]
This is exactly the asserted cluster bound.

For the exact-inclusion clause, \eqref{eq:single-refinement-exact-condition}
and the union bound give
\[
\mathbb P\left(
\mathcal I_{\rm ref}\neq\varnothing
\ \middle|\ X_U=x_U,\ X_V=x_V
\right)
\le
\exp(-L_{\rm fail}).
\]
On \(\mathcal I_{\rm ref}=\varnothing\), the two pointwise inclusions become
\[
V_{\rm in}\subseteq\widehat U
\qquad\text{and}\qquad
\widehat U\cap V_{\rm out}=\varnothing,
\]
which is the desired exact inclusion.
\end{proof}

\begin{proof}[Proof of Proposition~\ref{prop:readable-simultaneous-refinement}]
Since
\[
\min\{\gamma^2/4,1\}\ge \frac14\min\{\gamma^2,1\},
\]
assumption \eqref{eq:sim-refinement-approx-concentration} implies, after increasing
\(C_{\tref{prop:readable-simultaneous-refinement}}\), that
\[
q_{\rm ref}(m,\gamma)
\le
\frac{1}{16}R_{\rm out}\varphi_{\rm in}.
\]
Assumption \eqref{eq:sim-refinement-failure-budget} gives
\[
L_{\rm fail}\le \frac{1}{16}R_{\rm out}\varphi_{\rm in}n_\star
\]
after decreasing \(c_{\tref{prop:readable-simultaneous-refinement}}\). Hence
\begin{align}
\label{eq:sim-refinement-bad-budget}
b_{\rm ref}:=
2\bigl(q_{\rm ref}(m,\gamma)n_\star+L_{\rm fail}\bigr)
&\le
\frac{R_{\rm out}\varphi_{\rm in}}4 n_\star
\le
\frac{\varphi_{\rm in}}4 n_\star.
\end{align}
For each \(U\in\mathcal N\), apply
Proposition~\ref{prop:abstract-score-threshold-refinement} with
\[
x_0=\theta_U,\qquad t=\Theta.
\]
Since \(|U|\ge m\), monotonicity of \(q_{\rm ref}\) in the first argument gives
\(q_{\rm ref}(|U|,\gamma)\le q_{\rm ref}(m,\gamma)\). Hence the corresponding
single-seed value
\[
b_U:=2\bigl(q_{\rm ref}(|U|,\gamma)n_\star+L_{\rm fail}\bigr)
\]
satisfies \(b_U\le b_{\rm ref}\). The bound
\eqref{eq:sim-refinement-bad-budget}
implies the size conclusion and the cluster error bound
\[
\frac{4b_U}{\varphi_{\rm in}n_\star}\le R_{\rm out}
\]
for every seed. Union bounding over \(|\mathcal N|\le M_{\rm net}\) gives the
claimed probability.

For the exact-inclusion conclusion, assumption
\eqref{eq:sim-refinement-exact-concentration} gives
\[
q_{\rm ref}(m,\gamma)\le n_\star^{-1}e^{-L_{\rm fail}}
\]
after increasing \(C_{\tref{prop:readable-simultaneous-refinement}}\) again.
Thus, again using \(|U|\ge m\),
\[
n_\star q_{\rm ref}(m,\gamma)\le e^{-L_{\rm fail}},
\]
so the exact-inclusion clause of
Proposition~\ref{prop:abstract-score-threshold-refinement} applies to every
seed. The same union bound over the refinement net proves the simultaneous
exact-inclusion claim.
\end{proof}

\section{Three-block extraction}
\label{sec:three-block-extraction}

The extraction statement has two routes. With a \fuzzywindoworacle{}, the
first block uses the oracle-filtered \candidatenet{} at internal scale
\(\rho_\circ\asymp r\). Without such an
oracle, the first block uses the smaller internal-average scale
\(\rho_\circ\asymp r^2\) and converts \internalaveragecandidate{}s
into localized clusters. In both routes, \(\rho_\circ\) is chosen so that the
unified \candidatenet{} statement produces a first-block refinement net at the
theorem scale \(r\). After that, the two routes pass through the same two
refinement blocks.

\begin{theor}[Three-block extraction]
\label{thm:three-block-extraction}
There exist constants
\[
C_{\rm ext},\Lambda_0>0
\]
depending only on \(L_\rmp,\ell_\rmp,M_\rmp\), such that the following holds.
Let \(K_\star\) be a parameter satisfying
$$
    K_\star \ge \max\{3,2K_{\tref{prop:candidate-refinement-net}}\}
$$
and let \(\Lambda\ge \Lambda_0\). Assume the vertex set $\bfV$ has three
disjoint blocks \(V_1,V_2,V_3\) of equal size \(n\). Let $r>0$. Assume one of
the following two cases.
\begin{itemize}
    \item \ORA: Suppose an \((\alpha_{\win},\lambda_{\win})\)-\fuzzywindoworacle{} \(\mathcal O_{\win}\) is available on the first block \(V_1\), with
\[
0<\alpha_{\win}<\lambda_{\win}\le 1/8,
\qquad
\rho_\circ:=K_\star^{-1}r,
\qquad
\rho_\circ\le \frac{\alpha_{\win}}2 r_\rmp.
\]
\item \LOC: Suppose no \fuzzywindoworacle{} is used, and set
\[
\rho_\circ:=K_\star^{-2}r^2.
\]
\end{itemize}
Assume
\[
0<r\le K_\star^{-3}\min\{1,r_\mu,r_\rmp\},
\tag{E0}
\label{eq:uniform-extraction-range-refactored}
\]
and
\[
\sp n\,\phi(\rho_\circ/3)\rho_\circ^2
\ge
C_{\rm ext}\Lambda\log n.
\tag{E1}
\label{eq:uniform-extraction-scale-condition-refactored}
\]
Then there is a three-block procedure which, with probability \(1-o(1)\), outputs

\begin{itemize}
\item for every \(v\in V_3\), a set \(v \in U_v\subseteq V_3\) such that
\[
|U_v|\ge \frac12\phi(r)n
\quad \mbox{and}\quad
U_v\text{ is a }(K_\star^2r,0)\text{-cluster with center }X_v,
\]
	    \item an estimator \(\widehat{\rmp}(0)\) satisfying \[
|\widehat{\rmp}(0)-\rmp(0)|\le r\,;
\]
\end{itemize}
Moreover, for a constant \(C>0\) depending only on the model parameters, the
number of distinct sets in $\{U_v\}_{v\in V_3}$ is at most
\[
\exp\left\{
C\,
\log^2\!\left(\frac{3}{r\phi(\rho_\circ/3)}\right)
\frac{1}
{\sp\rho_\circ^2}
\right\},
\]
meaning that many $U_v=U_{v'}$ may coincide.
Finally, the running time is bounded by
\[
\exp\left\{
C\,
\log^2\!\left(\frac{3}{r\phi(\rho_\circ/3)}\right)
\frac{1}
{\sp\rho_\circ^2}
\right\}
n\Lambda\log n.
\]
\end{theor}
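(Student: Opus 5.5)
The procedure runs in three stages, one per block, each a direct invocation of an earlier result with parameters tuned to (E0)–(E1).

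\emph{Stage 1.} First restrict to a working subset \(V_1'\subseteq V_1\) of size \(n_1\), chosen as small as possible while keeping (E1) valid with \(n\) replaced by \(n_1\). On \(V_1'\), run the internal-average search of Proposition~\ref{prop:inner-density-candidates} at scale \(\rho=\rho_\circ\). The good event \(\Ept{V_1'}{\rho_\circ}\cap\Enn{V_1'}{L_\rmp\rho_\circ}{m_{\rho_\circ}}\) has probability \(1-o(1)\):
\begin{itemize}
\item the occupancy part follows from Lemma~\ref{lem:uniform-lower-occupancy}, since (E1) forces \(\phi(\rho_\circ/6)\)-type lower bounds of order \(\Lambda\log n_1/n_1\);
\item the pair-average part follows from Lemma~\ref{lem:uniform-pair-average} with \(\varphi=\phi(\rho_\circ/3)/2\) and \(\lambda=L_\rmp\rho_\circ\), using Remark~\ref{rem:uniform-pair-average-condition}, which is exactly (E1) up to constants absorbed in \(C_{\rm ext}\).
\end{itemize}
This yields \(\widehat{\rmp}(0)\) with \(|\widehat{\rmp}(0)-\rmp(0)|\le3L_\rmp\rho_\circ\le r\).

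Next, apply Proposition~\ref{prop:candidate-refinement-net} with \(K=K_\star/2\ge K_{\tref{prop:candidate-refinement-net}}\) and \(r_{\rm ref}=r\). In the Oracle Route, \(K\rho_\circ=r/2\le r\). In the Local Route, \(K\sqrt{\rho_\circ}=r/2\le r\). In both routes (E0) gives \(K^3r\le\min\{1,r_\rmp\}\). The output is a refinement net \(\mathcal N\) with \(R_{\rm in}=r\), \(R_{\rm out}=Kr\), \(\gamma=r\), net radius \(Kr/3\), and seeds of size \(m_{\rho_\circ}\asymp\phi(\rho_\circ/3)n_1\).

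\emph{Stage 2.} Refine \(\mathcal N\) into a working subset \(V_2'\subseteq V_2\) using the approximate clause of Proposition~\ref{prop:readable-simultaneous-refinement}. The size condition \eqref{eq:sim-refinement-approx-concentration} reads \(\sp m_{\rho_\circ} r^2\gtrsim\log(e/(r\phi(r/3)))\), which follows from (E1) since \(\rho_\circ\le r\). Each \(\widehat U\) is then an \((Kr,Kr)\)-cluster of size at least \(\phi(r/3)n_2/4\) with center \(\theta_U\).

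Re-feed these through Proposition~\ref{prop:link-average-separation} (Local Route, with \(\tau=Kr\) and outer parameter \(K\)). This gives separation with \(R_{\rm in}=Kr\), \(R_{\rm out}=K^2r\), \(\gamma=Kr\), and the family is still a net of radius \(Kr/3\).

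\emph{Stage 3.} Refine into the full block \(V_3\) using the exact clause \eqref{eq:sim-refinement-exact-concentration}. This needs \(\sp|\widehat U|K^2r^2\gtrsim\Lambda\log n\), which fixes the choice \(\sp n_2\phi(r/3)r^2\asymp\Lambda\log n\) permitted by (E1). The outcome is \(V_3\cap B(\theta_U,Kr)\subseteq U^{(3)}\subseteq V_3\cap B(\theta_U,K^2r)\).

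For \(v\in V_3\), pick the net element with \(\D{X_v}{\theta_U}\le Kr/3\) (found observably as the one maximizing \(\cn{\widehat U}{v}\), or via the threshold) and set \(U_v:=U^{(3)}\). Then \(v\in U_v\) and \(U_v\subseteq B(X_v,K^2r+Kr/3)\subseteq B(X_v,K_\star^2r)\). The size bound comes from \(B(X_v,2Kr/3)\subseteq B(\theta_U,Kr)\) together with \(\Ept{V_3}{\cdot}\) at radius about \(3r\), giving at least \(\phi(r)n/2\).

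Failure probabilities are union-bounded over at most \(|V_1'|\) seeds, with \(L_{\rm fail}\asymp\Lambda\log n\).

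\emph{Counting and running time.} The number of distinct outputs is at most \(|\mathcal N|\le n_1\). Stage 1 exhaustive search enumerates \(\binom{n_1}{m}\le\exp(m\log(en_1/m))\) subsets. With \(n_1\asymp\log^{\cdot}/(\sp\phi\rho_\circ^2)\), this is \(\exp\{C\log^2(3/(r\phi(\rho_\circ/3)))/(\sp\rho_\circ^2)\}\), and the same quantity bounds \(n_1\). Stages 2–3 cost \(|\mathcal N|\cdot n\) edge reads, plus the stage-2 cost \(n_1 n_2\). Together these give the claimed running time.

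\emph{Main obstacle.} The hardest part is the bookkeeping of \(n_1\). It must be large enough that Lemmas~\ref{lem:uniform-lower-occupancy} and \ref{lem:uniform-pair-average} hold on \(V_1'\) with probability \(1-o(1)\) even though \(n_1\) may be only about \(\log\log n\); this is where the requirement \(n_1\ge\log\log n\) enters. It must also be small enough to produce the \(\log^2\)-type exponent. A secondary subtlety is that the Stage 2 outputs must be re-certified as link-average separated at the larger scale, since Stage 2 only delivers approximate clusters.
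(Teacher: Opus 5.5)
Your proposal follows the paper's proof essentially step for step: working subsets $V_1',V_2'$ sized as in the paper, a first-block refinement net from Proposition~\ref{prop:candidate-refinement-net}, approximate refinement into $V_2'$, re-certification of the resulting clusters via the Local Route case of Proposition~\ref{prop:link-average-separation}, exact refinement into $V_3$, and union bounds with $L_{\rm fail}=\Lambda\log n$ over at most $n_1$ seeds; the only real difference is that you take $K=K_\star/2$ already in the first block where the paper takes $K=K_\star$, which is harmless. Two small corrections. First, maximizing $\cn{\widehat U}{v}$ only places $\theta_U$ within $O(Kr)$ of $X_v$, not within $Kr/3$, so it need not give $v\in U^{(3)}$; use the threshold rule instead, taking any $U$ with $v\in U^{(3)}$, which yields $U^{(3)}\subseteq B(\theta_U,K^2r)\subseteq B(X_v,2K^2r)$ and $|U^{(3)}|\ge|V_3\cap B(\theta_U,Kr)|\ge\frac12\phi(r)n$, and is also the observable version of the paper's latent choice of $U$. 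Second, since $n_1$ can equal $\lceil\log\log n\rceil$, bound the number of distinct outputs by packing the $\sqrt{K}r$-separated centers as in Lemma~\ref{lem:appendix-net-cardinality}, giving $|\mathcal N|\le 1/\phi(\sqrt{K}r/2)\le 1/\phi(\rho_\circ/3)$, rather than by $n_1$ (the paper's proof makes the same slip).
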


\begin{rem}[Procedure behind Theorem~\ref{thm:three-block-extraction}]
\label{rem:three-block-procedure}
The statistical statement uses three ambient blocks of size \(n\), but the
implementation may work with smaller internal subsets. In the proof we choose
\(V_1'\subseteq V_1\) and \(V_2'\subseteq V_2\). The first-stage candidate
search and \candidatenet{} construction are performed only inside \(V_1'\). The
first refinement is performed into \(V_2'\). The final refinement and the final
output use the full block \(V_3\).

The first block produces a small net of seed sets. In the \oracleroute{} this net
is built from candidates certified by the oracle at internal scale
\(\rho_\circ=K_\star^{-1}r\).
In the \localroute{} the net is built from \internalaveragecandidate{}s at
\(\rho_\circ\asymp r^2\), which are then viewed as localized clusters. The
important point is that \(r\) is not the raw search scale in both routes:
\(\rho_\circ\) is the internal candidate-search scale, while \(r\) is the first
usable refinement scale. We choose \(\rho_\circ\) so that the unified
\candidatenet{} statement can be applied with \(r_{\rm ref}=r\). The selected
seeds are refined once into \(V_2'\), producing intermediate clusters, and then
refined again into \(V_3\), producing exact clusters.
\end{rem}

\subsection*{Working scales and constant choices}
Set
\[
r_1:=r,
\qquad
r_2:=K_\star r,
\qquad
r_{\rm net}:=\tfrac13K_\star r,
\qquad
r_3:=\tfrac12K_\star r_2.
\]
Choose the auxiliary block sizes by
\begin{align}
\label{eq:three-block-n1-choice}
n_1
&:=
\min\left\{
N\in\mathbb N:
\begin{array}{l}
\lceil\log\log n\rceil\le N\le n,\\
\sp N\phi(\rho_\circ/3)\rho_\circ^2
\ge C_{\rm ext}\Lambda\log N
\end{array}
\right\},
\\
\label{eq:three-block-n2-choice}
n_2
&:=
\min\left\{
N\in\mathbb N:
\begin{array}{l}
2\le N\le n,\\
\sp N\phi(r_1/3)r_1^2
\ge C_{\rm ext}\Lambda\log n
\end{array}
\right\}.
\end{align}
The sets in \eqref{eq:three-block-n1-choice} and
\eqref{eq:three-block-n2-choice} are nonempty under the scale assumption
\eqref{eq:uniform-extraction-scale-condition-refactored} from
\cref{thm:three-block-extraction}: \(N=n\) is admissible in both definitions.

\begin{lemma}[First-block construction of a refinement net]
\label{lem:first-block-refinement-net}
\label{lem:seed-net-score-gap-refactored}
Under the assumptions of Theorem~\ref{thm:three-block-extraction},
there is a first-block procedure using a subset \(V_1'\subseteq V_1\) of size
\(n_1\). Within the event
\[
\Ept{V_1'}{\rho_\circ}
\cap
\Enn{V_1'}{L_\rmp\rho_\circ}{m_1}\mbox{ with } m_1:=\left\lfloor\frac{\phi(\rho_\circ/3)}2n_1\right\rfloor,
\]
which holds with probability at least \(1-n_1^{-c\Lambda}\), the procedure
outputs an estimator \(\widehat{\rmp}(0)\), a threshold \(\Theta_2\), and a
refinement net
\[
\mathfrak N=\{(U,\theta_U):U\in\mathcal N\}
\]
such that
\[
|\widehat{\rmp}(0)-\rmp(0)|\le r,
\]
and \(\mathfrak N\) is a refinement net with parameters
\[
\begin{gathered}
\Theta=\Theta_2,\qquad
R_{\rm net}=r_{\rm net},\qquad
R_{\rm in}=r_1,\\
R_{\rm out}=r_2,\qquad
\gamma=r_1,\qquad
M_{\rm net}=n_1.
\end{gathered}
\]
\end{lemma}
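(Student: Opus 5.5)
The plan is to pick any subset \(V_1'\subseteq V_1\) of size \(n_1\), given by \eqref{eq:three-block-n1-choice}, and run on \(W:=V_1'\) the internal-average search of Proposition~\ref{prop:inner-density-candidates} at scale \(\rho:=\rho_\circ\) and with \(m_\rho=m_1\). This produces \(\widehat{\rmp}(0)\) and \(\mathcal C_{\rho_\circ}\). I would then feed the result into the unified candidate-net Proposition~\ref{prop:candidate-refinement-net} with \(K:=K_\star\) and \(r_{\rm ref}:=r=r_1\), taking the \ORA{} or \LOC{} case according to the route in Theorem~\ref{thm:three-block-extraction}. I would set \(\Theta_2:=\widehat{\rmp}(0)-A_{\tref{prop:link-average-separation}}r\). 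The output parameters of Proposition~\ref{prop:candidate-refinement-net} are then exactly \(R_{\rm net}=\tfrac13K_\star r=r_{\rm net}\), \(R_{\rm in}=r_1\), \(R_{\rm out}=K_\star r=r_2\), \(\gamma=r_1\) and \(M_{\rm net}=|W|=n_1\). The bound \(|\widehat{\rmp}(0)-\rmp(0)|\le 3L_\rmp\rho_\circ\le r\) comes from part (1) of Proposition~\ref{prop:inner-density-candidates}, once we check \(3L_\rmp\rho_\circ\le r\). In the oracle route this holds because \(K_\star\ge 2K_{\tref{prop:candidate-refinement-net}}>8L_\rmp\). In the local route we have \(\rho_\circ=K_\star^{-2}r^2\le r\).

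The deterministic part is then a list of hypothesis checks. First, \(C_*\rho_\circ\le r_\rmp\) and \(\rho_\circ\le r_\mu\) with \(C_*=48L_\rmp/\ell_\rmp\). These follow from (E0), since \(\rho_\circ\le r\le K_\star^{-3}\min\{r_\mu,r_\rmp\}\) and \(K_\star\ge K_{\tref{prop:candidate-refinement-net}}\ge 48L_\rmp/\ell_\rmp\). Second, \(K\rho_\circ\le r\) in the \ORA{} case, which is an equality, and \(K\sqrt{\rho_\circ}\le r\) in the \LOC{} case, also an equality. Third, the oracle conditions \(\lambda_{\win}\le 1/8\) and \(\rho_\circ\le\tfrac{\alpha_{\win}}2r_\rmp\), which are assumed. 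Fourth, \(K_\star^3r\le\min\{1,r_\rmp\}\), which is exactly (E0). Finally, \(m_1\ge2\); this should follow from \(\sp n_1\phi(\rho_\circ/3)\rho_\circ^2\ge C_{\rm ext}\Lambda\log n_1\) together with \(\sp,\rho_\circ\le1\) and \(\log n_1\ge1\) for the relevant \(n\). Given all this, the good event of Proposition~\ref{prop:inner-density-candidates} is \(\Ept{V_1'}{\rho_\circ}\cap\Enn{V_1'}{L_\rmp\rho_\circ}{m_1}\), and the conclusion follows on it.

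The remaining and main step is the probability bound \(1-n_1^{-c\Lambda}\). For occupancy I would apply Lemma~\ref{lem:uniform-lower-occupancy} with \(W=V_1'\) and radius \(\rho_\circ\). This needs \(\phi(\rho_\circ/6)\ge\Lambda\log n_1/n_1\), but the definition of \(n_1\) only controls \(\phi(\rho_\circ/3)\). I would therefore either apply the lemma at radius \(2\rho_\circ\), noting the event \(\Ept{W}{\cdot}\) as defined uses \(\phi(r/3)\) and adjusting \(\rho_\circ\) by a constant factor, or simply absorb the factor by redoing the Chernoff bound with \(\phi(\rho_\circ/3)\) directly. That is the one point requiring care. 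Either way the needed input is \(n_1\phi(\rho_\circ/3)\ge C_{\rm ext}\Lambda\log n_1\), which follows from the definition of \(n_1\) because \(\sp\rho_\circ^2\le1\).

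For the pair-average event I would apply Lemma~\ref{lem:uniform-pair-average} with \(\varphi\approx\phi(\rho_\circ/3)/2\) and \(\lambda=L_\rmp\rho_\circ\). To handle the floor versus ceiling mismatch in \(m\), I would use a slightly smaller \(\varphi\) so that \(\lceil\varphi n_1\rceil\le m_1\); this is harmless because the event is monotone in \(m\). The lemma's condition then reads \(\sp n_1\varphi L_\rmp^2\rho_\circ^2\ge C\log(e/\varphi)\). This is verified through Remark~\ref{rem:uniform-pair-average-condition} from the defining inequality of \(n_1\), taking \(C_{\rm ext}\) large in terms of \(L_\rmp\) and the constant of Lemma~\ref{lem:uniform-pair-average}. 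The failure probability \(\exp(-c\varphi n_1\log(e/\varphi))\) is at most \(n_1^{-c\Lambda}\), again because \(\varphi n_1\gtrsim\Lambda\log n_1\). A union of the two failure bounds gives the claim.
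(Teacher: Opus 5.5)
Your proposal is correct and follows the paper's proof step for step: Proposition~\ref{prop:inner-density-candidates} on $V_1'$ at scale $\rho_\circ$, then Proposition~\ref{prop:candidate-refinement-net} with $K=K_\star$ and $r_{\rm ref}=r$, and the probability bound from Lemmas~\ref{lem:uniform-lower-occupancy} and~\ref{lem:uniform-pair-average}. You are more careful than the paper about $\lambda=L_\rmp\rho_\circ$, the floor/ceiling in $m_1$, and $m_1\ge2$. The only small slip there is that in the local route $3L_\rmp\rho_\circ\le r$ needs $K_\star^2\ge 3L_\rmp$, which holds since $K_\star>8L_\rmp$, and not merely $\rho_\circ\le r$.

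The $\phi(\rho_\circ/6)$ versus $\phi(\rho_\circ/3)$ mismatch you flag is real, and the paper passes over it silently. Neither of your two fixes, as stated, closes it. Applying the lemma at radius $2\rho_\circ$ yields $\Ept{V_1'}{2\rho_\circ}$, which is not the event in the statement. Redoing the Chernoff step does not help either, because $\phi(r/6)$ enters only through the cardinality bound for the $r/3$-separated net.

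The clean repair is to use a maximal $\tfrac23\rho_\circ$-separated set instead. Its disjoint balls $B(x_i,\rho_\circ/3)$ give at most $1/\phi(\rho_\circ/3)$ centers. Each such ball satisfies $B(x_i,\rho_\circ/3)\subseteq B(x,\rho_\circ)$ whenever $\D{x}{x_i}\le\tfrac23\rho_\circ$. So the union bound needs only $n_1\phi(\rho_\circ/3)\gtrsim\Lambda\log n_1$, which the definition of $n_1$ supplies.
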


\begin{proof}

Let \(V_1'\subseteq V_1\) have size \(n_1\), where \(n_1\) is chosen in
\eqref{eq:three-block-n1-choice}.

\step{Probability of the first-block event}
The Uniform lower-occupancy Lemma \ref{lem:uniform-lower-occupancy} gives
$$
\mathbb P\left(\Ept{V_1'}{\rho_\circ}\right) \ge 1-n_1^{- \tfrac{1}{16}\Lambda}.
$$
Also, we apply the Uniform concentration of pair averages Lemma \ref{lem:uniform-pair-average} with Remark \ref{rem:uniform-pair-average-condition} with
$$
    \lambda = \rho_\circ, \qquad m = m_1, \qquad n = n_1, \qquad \phi = \tfrac{\phi(\rho_\circ/3)}{2},
$$
together with the assumption that \(C_{\rm ext}\) is large enough. The defining
inequality \eqref{eq:three-block-n1-choice} gives the condition required by
the lemma:
$$
    \sp \tfrac{n_1}{\Lambda \log n_1} \cdot \tfrac{\phi(\rho_\circ/3)}{2} \cdot \rho_\circ^2 \ge C_{\tref{lem:uniform-pair-average}}.
$$
Then, we have
$$
\mathbb P\left(\Enn{V_1'}{L_\rmp\rho_\circ}{m_1}\, \vert\, X_{V_1'}\right) \ge 1 -
\exp \left\{
 - c_{\tref{lem:uniform-pair-average}} \phi(\rho_\circ/3) n_1 \log( e/ \phi(\rho_\circ/3) )
\right\}
\ge
1 - n_1^{-c_{\tref{lem:uniform-pair-average}} \Lambda},
$$
where the last inequality holds from our assumption on $n_1$, $\sp \le 1$, and $\log(e/\phi(\rho_\circ/3)) \ge \log e =1$. Therefore, we conclude that for $C_{\rm ext}$ and $\Lambda$ greater than some constants depending on the model parameters, the event holds with
$$
    1-n_1^{-c\Lambda} = 1-o(1).
$$

\step{Construction of the refinement net}
Within the event, Proposition~\ref{prop:inner-density-candidates} gives
\[
\widehat{\rmp}(0):=\max\{\pav{U}:U\subseteq V_1',\ |U|=m_1\},
\qquad
|\widehat{\rmp}(0)-\rmp(0)|\le 3L_{\rmp}\rho_\circ\le r,
\]
where the last inequality follows from the definition of \(\rho_\circ\), the
range assumption, and the lower bound on \(K_\star\).
The same proposition also gives the candidate family
\[
\mathcal C_{\rho_\circ}:=
\{U\subseteq V_1':\ |U|=m_1,\ \pav{U}\ge
\widehat{\rmp}(0)-4L_{\rmp}\rho_\circ\}.
\]

We now select a refinement net from \(\mathcal C_{\rho_\circ}\). Apply
Proposition~\ref{prop:candidate-refinement-net} in the appropriate route with
\(K=K_\star\).
The range assumption and the lower bound on \(K_\star\) imply that
\[
\rho_\circ\le \min\{r_\mu,r_\rmp/C_*\},
\qquad
K_\star^3r
\le
\min\{1,r_\rmp\}.
\]
In the \ORA{} case, the theorem assumes
\[
\rho_\circ\le \frac{\alpha_{\win}}2r_\rmp,
\]
and
\[
K_\star\rho_\circ\le r.
\]
In the \LOC{} case,
\[
K_\star\sqrt{\rho_\circ}\le r.
\]
Thus the hypotheses of Proposition~\ref{prop:candidate-refinement-net} hold
with \(r_{\rm ref}=r\). This gives a refinement net
\[
\mathfrak N=\{(U,\theta_U):U\in\mathcal N\}
\]
with threshold
\[
\Theta_2=
\widehat{\rmp}(0)-A_{\tref{prop:link-average-separation}}r
\]
and parameters
\[
\begin{gathered}
\Theta= \widehat{\rmp}(0)-A_{\tref{prop:link-average-separation}}r,\qquad
R_{\rm net}=r_{\rm net},\qquad
R_{\rm in}=r_1,\\
R_{\rm out}=r_2,\qquad
\gamma=r_1,\qquad
M_{\rm net}=n_1.
\end{gathered}
\]
\end{proof}

\begin{lemma}[Two-step refinement from a refinement net]
\label{lem:two-step-refinement-refactored}
Let
\[
\mathfrak N=\{(U,\theta_U):U\in\mathcal N\}
\]
be the refinement net from Lemma~\ref{lem:first-block-refinement-net}, with
threshold \(\Theta_2\), and let \(\widehat{\rmp}(0)\) be the estimator produced
in the same first-block step.
Recall that every first-block seed has size
\[
m_1=\left\lfloor\frac{\phi(\rho_\circ/3)}2n_1\right\rfloor.
\]
Let \(V_2'\subseteq V_2\) have size \(n_2\), where \(n_2\) is chosen in
\eqref{eq:three-block-n2-choice}. Then, with probability at least
\(1-n_2^{-c\Lambda}-n^{-\Omega(\Lambda)}\), for every \(U\in\mathcal N\) the
two refinement rounds produce sets
\[
U^{(2)}\subseteq V_2',
\qquad
U^{(3)}\subseteq V_3,
\]
such that \(U^{(2)}\) is an \((r_2,r_2)\)-cluster with center
\(\theta_U\), and
\[
V_3\cap B(\theta_U,r_2)
\subseteq
U^{(3)}
\subseteq
V_3\cap B(\theta_U,r_3).
\]
\end{lemma}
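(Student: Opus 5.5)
The plan is to apply Proposition~\ref{prop:readable-simultaneous-refinement} twice. The first application uses the refinement net $\mathfrak N$ from Lemma~\ref{lem:first-block-refinement-net} with fresh block $V=V_2'$. The second application uses a new refinement net built from the sets $U^{(2)}$, with fresh block $V=V_3$. Independence comes from disjointness of blocks: conditionally on all latent positions and on the edges inside $V_1'$ (which determine $\mathcal N$ and $\Theta_2$), the edges between $V_1'$ and $V_2'$ are fresh. Likewise, conditionally on everything used to form the $U^{(2)}$, the edges between $V_2'$ and $V_3$ are fresh.

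\emph{First round.} Take $m=m_1$, $R_{\rm in}=r_1=r$, $R_{\rm out}=r_2=K_\star r\le1$, $\gamma=r$, $M_{\rm net}=n_1$, and $L_{\rm fail}=\Lambda\log n_2$ (say). I need $\Ept{V_2'}{r_1}$, which holds with probability $1-n_2^{-\Lambda/16}$ by Lemma~\ref{lem:uniform-lower-occupancy}; its hypothesis $\phi(r/6)\ge\Lambda\log n_2/n_2$ should follow from the defining inequality of $n_2$ after adjusting constants. Note that $\phi(r/6)$ versus $\phi(r/3)$ is a small mismatch that I would absorb by enlarging $C_{\rm ext}$ or by noting that (E1) involves the smaller scale $\rho_\circ/3$. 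Condition \eqref{eq:sim-refinement-failure-budget} reads $\Lambda\log n_2\le c K_\star r\phi(r/3)n_2$, which follows from $\sp n_2\phi(r/3)r^2\ge C_{\rm ext}\Lambda\log n$ since $r,\sp\le1$. Condition \eqref{eq:sim-refinement-approx-concentration} reads $\sp m_1 r^2\gtrsim\log(e/(K_\star r\phi(r/3)))$. Here $m_1\approx\phi(\rho_\circ/3)n_1/2$, so $\sp m_1r^2\ge \sp m_1\rho_\circ^2\gtrsim C_{\rm ext}\Lambda\log n_1$, using $\rho_\circ\le r$. \textbf{This is the step I expect to be the main obstacle.} The right side $\log(1/(r\phi(r/3)))$ need not be dominated by $\log n_1$ when $n_1$ is as small as $\log\log n$. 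I would first try to use that $n_1$ also satisfies $\sp n_1\phi(\rho_\circ/3)\rho_\circ^2\ge C_{\rm ext}\Lambda\log n_1$, which forces $n_1\gtrsim 1/(\phi(\rho_\circ/3)\rho_\circ^2)$. Hence $\log n_1\gtrsim\log(1/(r\phi(r/3)))$, because $\phi(\rho_\circ/3)\le\phi(r/3)$ and $\rho_\circ\le r$. Failing that, I would enlarge $n_1$ by a factor logarithmic in $1/(r\phi)$, which is consistent with the stated running time. The union bound over $n_1$ seeds gives failure probability $n_1e^{-\Lambda\log n_2}\le n_2^{-c\Lambda}$, since $n_1\le n_2$ up to harmless adjustments. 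The outcome is that each $U^{(2)}$ is an $(r_2,r_2)$-cluster with center $\theta_U$ and has size at least $\phi(r/3)n_2/4$.

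\emph{Second round.} Apply Proposition~\ref{prop:link-average-separation} in the \LOC{} case with $r$ replaced by $r_2$ and $K$ replaced by $K_\star/2$. This needs $|\widehat\rmp(0)-\rmp(0)|\le r\le r_2$, the scale bound $(K_\star/2)r_2=K_\star^2r/2\le\frac12\min\{1,r_\rmp\}$ from (E0), and $K_\star/2\ge K_{\tref{prop:candidate-refinement-net}}$. The separation lower bound on $K$ holds because $K_\star\ge 2K_{\tref{prop:candidate-refinement-net}}$ and $K_{\tref{prop:candidate-refinement-net}}$ was chosen above $1+2(A+2)/\ell_\rmp$. It follows that each $U^{(2)}$ is $(\theta_U,r_2,r_3,\widehat\rmp(0)-Ar_2,r_2)$-link-average separated. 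So $\{(U^{(2)},\theta_U)\}$ is a refinement net with $M_{\rm net}=n_1$ and $m=\phi(r/3)n_2/4$. Now apply the exact clause of Proposition~\ref{prop:readable-simultaneous-refinement} with $V=V_3$, $n_\star=n$, and $L_{\rm fail}=\Lambda\log n$. Condition \eqref{eq:sim-refinement-exact-concentration} becomes $\sp\phi(r/3)n_2r_2^2/4\gtrsim\Lambda\log n$, which holds by the choice of $n_2$ since $r_2\ge r$. Condition \eqref{eq:sim-refinement-failure-budget} follows from (E1). The event $\Ept{V_3}{r_2}$ holds with probability $1-n^{-\Lambda/16}$. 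Combining the failure probabilities gives $1-n_2^{-c\Lambda}-n^{-\Omega(\Lambda)}$ and the exact inclusions $V_3\cap B(\theta_U,r_2)\subseteq U^{(3)}\subseteq V_3\cap B(\theta_U,r_3)$.
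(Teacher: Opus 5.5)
Your plan is the paper's proof: two applications of Proposition~\ref{prop:readable-simultaneous-refinement}, the second to the net $\{(U^{(2)},\theta_U)\}$ certified by Proposition~\ref{prop:link-average-separation} in the \LOC{} case with $r\to r_2$ and $K\to K_\star/2$. Your check of the first-round condition \eqref{eq:sim-refinement-approx-concentration} is correct. It uses $n_1\ge 1/(\phi(\rho_\circ/3)\rho_\circ^2)$, hence $\log n_1\ge\log(1/(r\phi(r/3)))$, and this is exactly what the paper's ``after increasing $C_{\rm ext}$ again'' leaves implicit.

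One step fails as written, though it is easily repaired. With $L_{\rm fail}=\Lambda\log n_2$, the union bound over the net gives $n_1n_2^{-\Lambda}$, and $n_1\lesssim n_2$ is false in general. The size $n_1$ is governed by $\phi(\rho_\circ/3)$ and $n_2$ by $\phi(r/3)$. If $\phi(\rho_\circ/3)\ll\phi(r/3)\asymp1$ and \eqref{eq:uniform-extraction-scale-condition-refactored} is nearly tight, then $n_1\asymp n$ while $n_2\asymp\Lambda\log n/(\sp r^2)$, and $n_1n_2^{-\Lambda}$ need not be small. Take $L_{\rm fail}=\Lambda\log n$, as the paper does. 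Condition \eqref{eq:sim-refinement-failure-budget} still holds because the definition \eqref{eq:three-block-n2-choice} of $n_2$ has $\log n$ on its right-hand side, and the union bound becomes $n_1n^{-\Lambda}\le n^{1-\Lambda}$.

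In the second round you also skipped \eqref{eq:sim-refinement-approx-concentration}, which the exact clause formally requires. It follows because the failure budget forces $r_3\phi(r_2/3)\ge\Lambda\log n/(c\,n)$. Hence $\log(e/(r_3\phi(r_2/3)))\lesssim\log n$, which your exact-concentration bound dominates.

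Finally, neither of your proposed fixes for the $\phi(r/6)$ versus $\phi(r/3)$ mismatch in Lemma~\ref{lem:uniform-lower-occupancy} works as stated. No constant compares $\phi$ at two different scales, and \eqref{eq:uniform-extraction-scale-condition-refactored} concerns $n$, not $n_2$. The paper's proof passes over the same point silently. The mismatch disappears if that lemma is proved with a maximal $\tfrac{2r}{3}$-separated set: its pairwise disjoint $\tfrac r3$-balls bound its cardinality by $1/\phi(r/3)$, and each such ball still lies in $B(x,r)$ for every $x$ within $\tfrac{2r}{3}$ of its center.
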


\begin{proof}
For the first refinement, \eqref{eq:three-block-n2-choice} gives
\[
\frac{\Lambda\log n}{n_2}
\le
\frac{\sp}{C_{\rm ext}}\phi(r_1/3)r_1^2.
\]
Since \(\sp\le1\), \(r_1\le1\), and \(r_1\le r_2\), increasing
\(C_{\rm ext}\) gives
\begin{align}
\label{eq:first-refinement-failure-budget-check}
\Lambda\log n
&\le
c_{\tref{prop:readable-simultaneous-refinement}}
r_2\phi(r_1/3)n_2.
\end{align}
Also, \eqref{eq:three-block-n1-choice}, the identity
\(m_1=\lfloor \phi(\rho_\circ/3)n_1/2\rfloor\), and the relation
\(\rho_\circ\le r_1\) imply, after increasing \(C_{\rm ext}\) again, that
\begin{align}
\label{eq:first-refinement-approx-concentration-check}
\sp m_1\min\{r_1^2,1\}
&\ge
C_{\tref{prop:readable-simultaneous-refinement}}
\log\!\left(\frac{e}{r_2\phi(r_1/3)}\right).
\end{align}
The range assumption gives \(r_2\le1\).
Thus \eqref{eq:first-refinement-failure-budget-check} and
\eqref{eq:first-refinement-approx-concentration-check} verify the conditions
\eqref{eq:sim-refinement-failure-budget} and
\eqref{eq:sim-refinement-approx-concentration} from
\cref{prop:simultaneous-refinement-net}, respectively, for
\[
R_{\rm in}=r_1,\qquad R_{\rm out}=r_2,\qquad \gamma=r_1,
\qquad n_\star=n_2.
\]
On \(\Ept{V_2'}{r_1}\), Proposition~\ref{prop:simultaneous-refinement-net}, applied
to \(\mathfrak N\) with \(L_{\rm fail}=\Lambda\log n\), gives for every
\(U\in\mathcal N\) a set \(U^{(2)}\subseteq V_2'\) with
\[
|U^{(2)}|\ge \frac{\phi(r_1/3)}4 n_2
\]
which is an \((r_2,r_2)\)-cluster with center \(\theta_U\). The event
\(\Ept{V_2'}{r_1}\) holds with probability \(1-n_2^{-c\Lambda}\), and the
simultaneous refinement failure is at most \(n_1\exp(-\Lambda\log n)\), which
is harmless once \(\Lambda\ge\Lambda_0\).

For the second refinement, set
\[
m_2:=\frac{\phi(r_1/3)}4 n_2.
\]
Each \(U^{(2)}\) has size at least \(m_2\), and is an \((r_2,r_2)\)-cluster
with center \(\theta_U\).
Since \(K_\star/2\ge K_{\tref{prop:candidate-refinement-net}}\), the parameter
\(K_\star/2\) is admissible in
Proposition~\ref{prop:link-average-separation}.
Proposition~\ref{prop:link-average-separation}, in the \LOC{} case, applied
with \(r=r_2\) and \(K=K_\star/2\), gives the common threshold
\[
\Theta_3:=
\widehat{\rmp}(0)-
A_{\tref{prop:link-average-separation}}r_2
\]
such that
\[
\mathfrak N^{(2)}
:=
\{(U^{(2)},\theta_U):U\in\mathcal N\}
\]
is a refinement net with parameters
\[
\begin{gathered}
\Theta=\Theta_3,\qquad
R_{\rm net}=r_{\rm net},\qquad
R_{\rm in}=r_2,\\
R_{\rm out}=r_3,\qquad
\gamma=r_2,\qquad
M_{\rm net}=n_1.
\end{gathered}
\]
The range assumption \eqref{eq:uniform-extraction-range-refactored} from
\cref{thm:three-block-extraction} ensures the local-window condition
\[
\tfrac12K_\star r_2=r_3
\le \frac12\min\{1,r_\rmp\},
\]
which in particular gives \(r_3\le1\), and the condition
\[
|\widehat{\rmp}(0)-\rmp(0)|\le 3L_\rmp\rho_\circ\le r_2
\]
needed to apply that proposition.

\eqref{eq:three-block-n2-choice} and the definition of \(m_2\) imply
\[
\sp m_2\min\{r_2^2,1\}
\ge
C_{\tref{prop:readable-simultaneous-refinement}}
\bigl(\Lambda\log n+\log n\bigr),
\]
after increasing \(C_{\rm ext}\) and \(\Lambda_0\), since \(r_2=K_\star r_1\)
and \(\Lambda\ge\Lambda_0\). This verifies the exact-concentration condition
\eqref{eq:sim-refinement-exact-concentration} from
\cref{prop:simultaneous-refinement-net}. It also verifies the
approximate-concentration condition
\eqref{eq:sim-refinement-approx-concentration} from
\cref{prop:simultaneous-refinement-net}: indeed, \eqref{eq:three-block-n2-choice}, \(n_2\le n\), and
\(\phi(r_1/3)\le\phi(r_2/3)\) imply, after adjusting constants, that
\[
r_3\phi(r_2/3)\ge n^{-1} \quad \Rightarrow \quad
\log\!\left(\frac{e}{r_3\phi(r_2/3)}\right)
\le 2\log n.
\]
Finally, because \(n_2\le n\), \eqref{eq:three-block-n2-choice} gives
\[
\frac{\Lambda\log n}{n}
\le
\frac{\Lambda\log n}{n_2}
\le
\frac{\sp}{C_{\rm ext}}\phi(r_1/3)r_1^2.
\]
Using \(\phi(r_1/3)\le\phi(r_2/3)\) and \(r_1^2\le r_3\), and increasing
\(C_{\rm ext}\) again, we get
\[
\Lambda\log n
\le
c_{\tref{prop:readable-simultaneous-refinement}}
r_3\phi(r_2/3)n.
\]
This verifies the failure-budget condition
\eqref{eq:sim-refinement-failure-budget} from
\cref{prop:simultaneous-refinement-net}. Hence all three conditions
\eqref{eq:sim-refinement-failure-budget},
\eqref{eq:sim-refinement-approx-concentration}, and
\eqref{eq:sim-refinement-exact-concentration} from
\cref{prop:simultaneous-refinement-net} hold for
\[
R_{\rm in}=r_2,\qquad R_{\rm out}=r_3,\qquad
\gamma=r_2,\qquad n_\star=n.
\]
Therefore the exact-inclusion clause of
Proposition~\ref{prop:simultaneous-refinement-net}, applied to
\(\mathfrak N^{(2)}\) on \(\Ept{V_3}{r_2}\) with
\(L_{\rm fail}=\Lambda\log n\), gives
\[
V_3\cap B(\theta_U,r_2)
\subseteq
U^{(3)}
\subseteq
V_3\cap B(\theta_U,r_3).
\]
The occupancy event \(\Ept{V_3}{r_2}\) holds with probability
\(1-n^{-c\Lambda}\), and the remaining failures are absorbed by the same union
bound over the seed family.
\end{proof}

\begin{proof}[Proof of Theorem~\ref{thm:three-block-extraction}]
Choose \(r_1,r_2,r_{\rm net},r_3\) as above and \(n_1,n_2\) by
\eqref{eq:three-block-n1-choice}--\eqref{eq:three-block-n2-choice}. Apply
Lemma~\ref{lem:first-block-refinement-net} on \(V_1'\) to obtain
\(\widehat{\rmp}(0)\), the threshold \(\Theta_2\), and a refinement net
\(\mathfrak N\). Then apply Lemma~\ref{lem:two-step-refinement-refactored} to
refine every selected seed first into \(V_2'\) and then into \(V_3\).

For each \(v\in V_3\), choose \(U\in\mathcal N\) with
\[
\D{X_v}{\theta_U}\le r_{\rm net}<r_2,
\]
and set \(U_v:=U^{(3)}\). The exact inclusion from
Lemma~\ref{lem:two-step-refinement-refactored} gives \(v\in U_v\), while
\[
U_v\subseteq B(\theta_U,r_3)
\subseteq B(X_v,r_{\rm net}+r_3)
\subseteq B(X_v,K_\star^2r).
\]
On \(\Ept{V_3}{r_2}\),
\[
|U_v|
\ge
|V_3\cap B(\theta_U,r_2)|
\ge
\frac{\phi(r_2/3)}2n
\ge
\frac{\phi(r)}2n.
\]
The estimate for \(\rmp(0)\) was produced in the first-block step. The number
of distinct output sets is bounded by \(|\mathcal N|\le n_1\le\binom{n_1}{m_1}\),
and the entropy bound in the running-time remark gives the stated estimate.
The probability bound follows
from the union bound over the first-block event, the two occupancy events, and
the two refinement rounds.
\end{proof}

\begin{rem}[Running time]
The constant \(C\) below depends only on the model parameters and may increase
from line to line.
The first-stage exhaustive search enumerates
\[
\binom{n_1}{m_1}
\le
\exp\{n_1H(\phi(\rho_\circ/3)/2)\}
\le
\exp\left\{
C\,
\frac{\log^2\!\left(\frac{3}{r\phi(\rho_\circ/3)}\right)}
{\sp\rho_\circ^2}
\right\}.
\]
The refinement part uses
\[
n_2\asymp
\frac{\Lambda\log n}{\sp\,\phi(r/3)r^2},
\]
so its cost is absorbed into
\[
\exp\left\{
C\,
\frac{\log^2\!\left(\frac{3}{r\phi(\rho_\circ/3)}\right)}
{\sp\rho_\circ^2}
\right\}
n\Lambda\log n.
\]
\end{rem}

\section{Two-round extraction via a coarse fuzzy window oracle}
\label{sec:two-round-extraction}

To prove the main theorem, we run Theorem~\ref{thm:three-block-extraction}
twice. The first run is a coarse \localroute{} extraction at scale
\(\sqrt r\). Its only purpose is to build a \fuzzywindoworacle{} on a fresh
block. The second run is the final \oracleroute{} extraction at scale \(r\).
The next lemma packages the only new ingredient: coarse exact clusters and a
coarse estimate of \(\rmp(0)\) generate the oracle needed by the second run.

\begin{lemma}[Coarse three-block output generates a fuzzy window oracle]
\label{lem:coarse-output-generates-window-oracle}
Fix constants \(C_0,c_0>0\). There exist constants
\(c_{\ora},C_{\ora},\Lambda_{\ora}>0\), depending only on
\(C_0,c_0,L_\rmp,\ell_\rmp,M_\rmp,\Ksg\), such that the following holds
whenever \(\Lambda\ge\Lambda_{\ora}\).

Let \(V_0,W\subseteq\bfV\) be disjoint vertex blocks of size at most $n$.
Suppose that we are given an estimator \(\widehat{\rmp}(0)\) and, for every
\(z\in V_0\), a set
\(
A_z\subseteq V_0
\)
such that
\[
z\in A_z,
\qquad
A_z\text{ is a }(C_0r,0)\text{-cluster with center }X_z,
\quad \mbox{and}
\quad
|A_z|\ge \frac12\phi(c_0r)n.
\]
Assume also that the centers \(\{X_z:z\in V_0\}\) form a \(C_0r\)-net of \(M\):
\[
\forall x\in M,\quad \exists z\in V_0
\quad\text{such that}\quad
\D{x}{X_z}\le C_0r,
\]
and that
\[
|\widehat{\rmp}(0)-\rmp(0)|\le C_0r.
\]

Assume
\[
r\le c_{\ora}r_\rmp
\qquad\text{and}\qquad
\sp n\phi(c_0r)r_\rmp^2\ge C_{\ora}\Lambda\log n.
\]
For each \(z\in V_0\), define
\[
W_z
:=
\left\{
v\in W:
\cn{A_z}{v}
\ge
\widehat{\rmp}(0)-\frac{\ell_\rmp}{32}r_\rmp
\right\}.
\]
Define an oracle on \(W\) by
\[
\mathcal O_{\win}(v,w)=1
\]
if there exists \(z\in V_0\) such that \(v,w\in W_z\), and otherwise set
\[
\mathcal O_{\win}(v,w)=0.
\]
Then, conditionally on the latent points of \(V_0\cup W\) and on the given sets
\(\{A_z:z\in V_0\}\), with probability at least
\(1-n^{-\Omega(\Lambda)}\), \(\mathcal O_{\win}\) is an
\((\alpha_{\win},\lambda_{\win})\)-\fuzzywindoworacle{} on \(W\),
with
\[
\alpha_{\win}:=\frac{\ell_\rmp}{128L_\rmp},
\qquad
\lambda_{\win}:=\frac18.
\]
\end{lemma}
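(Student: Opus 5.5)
Since \(A_z\) is a \((C_0r,0)\)-cluster centered at \(X_z\), every latent point of \(A_z\) lies within \(C_0r\) of \(X_z\). For any \(y\in M\), the \latentlinkaverage{} therefore satisfies \(\rmp(\D{X_z}{y}+C_0r)\le \acn{A_z}{y}\le \rmp((\D{X_z}{y}-C_0r)_+)\), by monotonicity alone.

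The plan has two parts. First, establish a deterministic link-average separation for each \(A_z\) at the window scale. Second, apply a uniform link-average concentration to the fresh block \(W\). The oracle properties then follow from the triangle inequality.

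\emph{Concentration.} Fix the latent points and the sets \(A_z\). Since \(W\) is disjoint from \(V_0\), Lemma~\ref{lem:uniform-navigation-event} applies to each pair \((A_z,W)\) with \(n_\star=n\). Its hypothesis \(\sp|A_z|\ge\Lambda\log n\) follows from \(|A_z|\ge\frac12\phi(c_0r)n\) and the assumption \(\sp n\phi(c_0r)r_\rmp^2\ge C_{\ora}\Lambda\log n\), using \(r_\rmp^2\lesssim1\); if needed, replace \(r_\rmp\) by \(\min\{1,r_\rmp\}\) and absorb the difference into constants. The same assumption gives \(\epsnav{A_z}{n}\le \sqrt{2/C_{\ora}}\,r_\rmp\le \frac{\ell_\rmp}{256}r_\rmp\) once \(C_{\ora}\) is large. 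A union bound over the at most \(n\) choices of \(z\in V_0\) gives, with probability \(1-n^{-\Omega(\Lambda)}\), that \(|\cn{A_z}{v}-\acn{A_z}{v}|\le \frac{\ell_\rmp}{256}r_\rmp\) for all \(z\in V_0\) and \(v\in W\).

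\emph{Separation.} Write \(\tau:=\frac{\ell_\rmp}{32}r_\rmp\) for the threshold drop, and set \(\delta:=\frac{\ell_\rmp}{256}r_\rmp\). Recall \(|\widehat{\rmp}(0)-\rmp(0)|\le C_0r\) and \(r\le c_{\ora}r_\rmp\) with \(c_{\ora}\) tiny, so every \(O(r)\) error is far below \(\delta\).
\begin{itemize}
\item Inner bound: if \(\D{X_z}{y}\le \frac{1}{64}\frac{\ell_\rmp}{L_\rmp}r_\rmp\), then by the local Lipschitz bound \(\acn{A_z}{y}\ge \rmp(0)-L_\rmp(\D{X_z}{y}+C_0r)\ge \rmp(0)-\frac{\ell_\rmp}{64}r_\rmp-O(r)\). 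Hence \(\cn{A_z}{v}\ge\widehat{\rmp}(0)-\tau\), so \(v\in W_z\).
\item Outer bound: if \(\D{X_z}{y}\ge \frac1{16}r_\rmp\), then by monotonicity and the lower Lipschitz bound (the argument \(\frac1{16}r_\rmp-C_0r\) lies in the window) \(\acn{A_z}{y}\le \rmp(0)-\ell_\rmp(\frac1{16}r_\rmp-C_0r)\le \rmp(0)-\frac{\ell_\rmp}{16}r_\rmp+O(r)\). Hence \(\cn{A_z}{v}<\widehat{\rmp}(0)-\tau\), so \(v\notin W_z\).
\end{itemize}
Consequently \(W_z\subseteq B(X_z,\frac1{16}r_\rmp)\) and \(W_z\supseteq W\cap \bar B(X_z,\frac{\ell_\rmp}{64L_\rmp}r_\rmp)\).

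\emph{Oracle properties.}
\begin{itemize}
\item Zero direction: if \(\mathcal O_{\win}(v,w)=1\), then some \(z\) has \(v,w\in W_z\), so \(\D{X_v}{X_w}<\frac18 r_\rmp=\lambda_{\win}r_\rmp\). Contrapositively, \(\D{X_v}{X_w}>\lambda_{\win}r_\rmp\) forces \(\mathcal O_{\win}(v,w)=0\).
\item One direction: suppose \(\D{X_v}{X_w}\le\alpha_{\win}r_\rmp=\frac{\ell_\rmp}{128L_\rmp}r_\rmp\). Use the net property to pick \(z\in V_0\) with \(\D{X_v}{X_z}\le C_0r\). Then \(\D{X_w}{X_z}\le \frac{\ell_\rmp}{128L_\rmp}r_\rmp+C_0r\le\frac{\ell_\rmp}{64L_\rmp}r_\rmp\), given \(c_{\ora}\) small. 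The inner bound puts both \(v\) and \(w\) in \(W_z\), so \(\mathcal O_{\win}(v,w)=1\).
\end{itemize}

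The main obstacle is bookkeeping the constants. The inner radius \(\frac{\ell_\rmp}{64L_\rmp}r_\rmp\) must exceed \(\alpha_{\win}r_\rmp+C_0r\). The outer radius must give a drop larger than \(\tau\) by a margin of \(\delta\) plus \(O(r)\). All radii used must stay inside \([0,r_\rmp]\) for the bi-Lipschitz bounds to apply. The concentration scale must be \(\ll \ell_\rmp r_\rmp\), which is where \(C_{\ora}\) is fixed. Each of these conditions is met by taking \(c_{\ora}\) small relative to \(\ell_\rmp/(C_0L_\rmp)\) and \(C_{\ora}\) large.
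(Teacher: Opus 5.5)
Your proof is correct and follows essentially the same route as the paper's appendix proof. Both arguments use uniform concentration of \(\cn{A_z}{v}\) around \(\acn{A_z}{v}\) over \(z\in V_0\), \(v\in W\), the same deterministic inclusions \(\{v\in W:\D{X_v}{X_z}\le\frac{\ell_\rmp}{64L_\rmp}r_\rmp\}\subseteq W_z\subseteq\{v\in W:\D{X_v}{X_z}<\frac1{16}r_\rmp\}\), and then the triangle inequality plus the \(C_0r\)-net property for the two oracle directions; the only differences are cosmetic (you apply Lemma~\ref{lem:uniform-navigation-event} per \(z\) and union bound over \(V_0\) where the paper union-bounds Lemma~\ref{lem:fixed-v-navigation} over all \(|V_0||W|\le n^2\) pairs, and you state explicitly the mild need for \(r_\rmp\lesssim 1\) in the concentration step, which the paper uses implicitly).
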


\begin{proof}
See Appendix~\ref{app:coarse-window-oracle-proof}.
\end{proof}

\begin{theor}[Two-round extraction via a coarse fuzzy window oracle]
\label{thm:two-round-extraction}
There exist constants
\[
c_{\tref{thm:two-round-extraction}},
\qquad
C_{\tref{thm:two-round-extraction}},
\qquad
C'_{\tref{thm:two-round-extraction}},
\qquad
\Lambda_{\tref{thm:two-round-extraction}}>0,
\]
depending only on the model parameters, such that the following holds whenever
\(\Lambda\ge\Lambda_{\tref{thm:two-round-extraction}}\). Split the vertex set
into six disjoint blocks of equal size,
\[
\bfV
=
V_1\sqcup V_2\sqcup V_3
\sqcup V_4\sqcup V_5\sqcup V_6,
\qquad
|V_i|=n.
\]
Let \(r>0\) be the target scale. Assume
\[
0<\sqrt r
\le
c_{\tref{thm:two-round-extraction}}
\min\{1,r_\mu,r_\rmp\},
\tag{T0}
\label{eq:two-round-range}
\]
and
\[
\sp n\,
\phi\!\left(c_{\tref{thm:two-round-extraction}}r\right)
r^2
\ge
C_{\tref{thm:two-round-extraction}}\Lambda\log n.
\tag{T1}
\label{eq:two-round-scale}
\]
Then there is a six-block procedure which, with probability \(1-o(1)\), runs in
time
$$
\exp\left(
        C_{\tref{thm:two-round-extraction}}
        \log^2( \tfrac{3}{r\phi(c_{\tref{thm:two-round-extraction}}r)}) \tfrac{1}{\sp r^2}
    \right) n \Lambda\log n ,
$$
and outputs an estimator \(\widehat{\rmp}(0)\) and, for every \(v\in V_6\), a set
\(U_v\subseteq V_6\) such that
\[
v\in U_v,
\qquad
U_v
\text{ is a }
\left(C'_{\tref{thm:two-round-extraction}}r,0\right)
\text{-cluster with center }X_v,
\]
and
\[
|U_v|
\ge
\frac12
\phi(r)n.
\]
Further, the number of sets $\{U_v:v\in V_6\}$ without counting multiplicity is at most
\[
\exp\left(
        C_{\tref{thm:two-round-extraction}}
        \log^2( \tfrac{3}{r\phi(c_{\tref{thm:two-round-extraction}}r)}) \tfrac{1}{\sp r^2}
    \right).
\]
Moreover,
\[
|\widehat{\rmp}(0)-\rmp(0)|
\le
r.
\]
\end{theor}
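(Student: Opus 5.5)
We run Theorem~\ref{thm:three-block-extraction} twice, with Lemma~\ref{lem:coarse-output-generates-window-oracle} in between to manufacture the oracle. \emph{Round one} is the \LOC{} case of Theorem~\ref{thm:three-block-extraction} on blocks \(V_1,V_2,V_3\) at the coarse scale \(r_0:=\sqrt r\). Then \(\rho_\circ=K_\star^{-2}r_0^2=K_\star^{-2}r\), so condition (E1) reads \(\sp n\phi(K_\star^{-2}r/3)K_\star^{-4}r^2\ge C_{\rm ext}\Lambda\log n\). This follows from (T1) if \(c_{\tref{thm:two-round-extraction}}\le K_\star^{-2}/3\) and \(C_{\tref{thm:two-round-extraction}}\ge K_\star^4C_{\rm ext}\), using monotonicity of \(\phi\). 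Condition (E0) for \(r_0\) is exactly (T0) with \(c_{\tref{thm:two-round-extraction}}\le K_\star^{-3}\). Here \(K_\star\) is fixed once, depending only on the model parameters. The output consists of an estimate \(\widehat{\rmp}(0)\) with error \(\le\sqrt r\) and exact clusters \(A_z\subseteq V_3\), \(z\in V_3\), which are \((K_\star^2\sqrt r,0)\)-clusters of size \(\ge\frac12\phi(\sqrt r)n\).

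\emph{Oracle construction.} Apply Lemma~\ref{lem:coarse-output-generates-window-oracle} with \(V_0=V_3\), \(W=V_4\), scale \(\sqrt r\), and \(C_0=K_\star^2\), \(c_0=1\). We need three hypotheses.
\begin{itemize}
\item The net property of \(\{X_z:z\in V_3\}\) at scale \(C_0\sqrt r\). This follows from \(\Ept{V_3}{\sqrt r}\), which already holds with high probability from round one: every ball of radius \(\sqrt r\) contains a sample.
\item The condition \(\sqrt r\le c_{\ora}r_\rmp\). This follows from (T0) after shrinking \(c_{\tref{thm:two-round-extraction}}\).
\item The condition \(\sp n\phi(\sqrt r)r_\rmp^2\ge C_{\ora}\Lambda\log n\). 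This follows from (T1), since \(\phi(\sqrt r)\ge\phi(c r)\) and \(r_\rmp^2\ge r^2\) once \(r\le r_\rmp\).
\end{itemize}
The lemma then gives, conditionally on round one, an \((\alpha_{\win},1/8)\)-fuzzy window oracle on \(V_4\), with \(\alpha_{\win}=\ell_\rmp/(128L_\rmp)\), with probability \(1-n^{-\Omega(\Lambda)}\). The round-one clusters are built only from edges inside \(V_1\cup V_2\cup V_3\), and the statistics \(\cn{A_z}{v}\) use edges between \(V_3\) and \(V_4\). The conditioning is therefore legitimate.

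\emph{Round two} is the \ORA{} case of Theorem~\ref{thm:three-block-extraction} on \(V_4,V_5,V_6\) at the target scale \(r\), with the oracle restricted to \(V_1':=V_4'\subseteq V_4\). We must check \(\lambda_{\win}\le 1/8\), which holds, and \(\rho_\circ=K_\star^{-1}r\le\frac{\alpha_{\win}}2r_\rmp\), which follows from (T0) since \(r\le\sqrt r\). Condition (E1) with \(\phi(K_\star^{-1}r/3)\) follows from (T1) by the same constant adjustment. The oracle-round edges within \(V_4\) are independent of the edges used to build the oracle; the oracle only depends on \(V_3\)–\(V_4\) edges and latent points. So we may condition on the oracle and apply the theorem.

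The output is a family of sets \(U_v\ni v\) which are \((K_\star^2r,0)\)-clusters, so we may take \(C'_{\tref{thm:two-round-extraction}}=K_\star^2\). The size bound is \(|U_v|\ge\frac12\phi(r)n\). We also get \(|\widehat{\rmp}(0)-\rmp(0)|\le r\), and we output this second estimator. In the three-block theorem the size already appears as \(\frac12\phi(r)n\), with \(\phi(r_2/3)\ge\phi(r)\) because \(r_2=K_\star r\ge3r\). Hence no extra repacking is needed for the size; the constant inside \(\phi\) survives only in the hypothesis (T1) and in the complexity bounds.

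\emph{Complexity.} Round two contributes \(\exp\{C\log^2(3/(r\phi(\rho_\circ/3)))/(\sp\rho_\circ^2)\}\) with \(\rho_\circ\asymp r\). Round one contributes the same form with \(\rho_\circ\asymp r\) and \(\phi(K_\star^{-2}r/3)\). Both are bounded by the stated expression after taking \(c_{\tref{thm:two-round-extraction}}\le K_\star^{-2}/3\) and enlarging \(C_{\tref{thm:two-round-extraction}}\); the \(\log^2\) factor absorbs the polynomial change in \(r\). Building the oracle costs at most \(|V_0|\cdot n\cdot\max|A_z|\). More economically, we only need the distinct \(A_z\), whose number is bounded by the same exponential, so this cost is \(\exp(\cdot)\,n\). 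Oracle queries on \(V_4'\) are only needed for pairs within \(V_4'\). The count of distinct \(U_v\) comes directly from round two.

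\emph{Main obstacle.} The delicate step is the independence and conditioning bookkeeping across rounds. Round two's candidate search restricts to oracle-certified sets, where the oracle is itself random. Its analysis in Proposition~\ref{prop:oracle-candidate-net} is deterministic given the good events \(\Ept{V_4'}{\rho_\circ}\cap\Enn{V_4'}{L_\rmp\rho_\circ}{m_1}\), which depend only on \(V_4\)-internal edges. Those edges are independent of the oracle given the latent points. So I would condition on all latent points and on the round-one output, and then union-bound the failure probabilities \(o(1)+n^{-\Omega(\Lambda)}+o(1)\).
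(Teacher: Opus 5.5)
Your proposal is correct and follows the paper's proof essentially step for step: a local-route run of Theorem~\ref{thm:three-block-extraction} at scale $r_0=\sqrt r$ on $V_1,V_2,V_3$, the oracle built from $V_3$--$V_4$ link averages via Lemma~\ref{lem:coarse-output-generates-window-oracle}, an oracle-route run at scale $r$ on $V_4,V_5,V_6$, and independence from the disjointness of the edge sets used. One small imprecision: round one only establishes the occupancy event $\Ept{V_3}{K_\star\sqrt r}$, not $\Ept{V_3}{\sqrt r}$. Still, either that event or a direct application of Lemma~\ref{lem:uniform-lower-occupancy} gives the required $C_0\sqrt r$-net (the paper uses $\Ept{V_3}{C_0r_0}$), so nothing changes.
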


\begin{proof}
Set
\[
r_0:=\sqrt r.
\]
Fix once and for all an admissible value
\[
K_\star\ge\max\{3,2K_{\tref{prop:candidate-refinement-net}}\}
\]
depending only on the model parameters, and use it for both applications of
Theorem~\ref{thm:three-block-extraction}.
All constants below are allowed to depend on this fixed \(K_\star\), hence only
on the model parameters.

\step{Round 1: coarse local-route extraction}
Apply Theorem~\ref{thm:three-block-extraction} in the \localroute{}
to \(V_1,V_2,V_3\) at target scale \(r_0\). The internal-average scale in that
run is
\[
\rho_\circ
=
K_\star^{-2}r_0^2
=
K_\star^{-2}r.
\]
After decreasing \(c_{\tref{thm:two-round-extraction}}\) and increasing
\(C_{\tref{thm:two-round-extraction}}\), the assumptions
\eqref{eq:two-round-range} and \eqref{eq:two-round-scale} imply the range and
scale conditions of Theorem~\ref{thm:three-block-extraction} for this local
run. Thus, with probability \(1-o(1)\), it outputs
\(\widehat{\rmp}_{\rm coarse}(0)\) and sets \(A_z\subseteq V_3\) such that, for
every \(z\in V_3\),
\[
z\in A_z,
\qquad
A_z\text{ is a }(K_\star^2r_0,0)\text{-cluster with center }X_z,
\qquad
|A_z|\ge \frac12\phi(r_0)n,
\qquad
|\widehat{\rmp}_{\rm coarse}(0)-\rmp(0)|
\le r_0.
\]

\step{Construct the fuzzy window oracle}
Set \(C_0:=K_\star^2\) and choose any fixed \(c_0\le1\). The event
\(\Ept{V_3}{C_0r_0}\) also holds with probability \(1-o(1)\), by
Lemma~\ref{lem:uniform-lower-occupancy}. Indeed, after decreasing
\(c_{\tref{thm:two-round-extraction}}\), the radius \(C_0r_0\) is in range and
\(\phi(C_0r_0)\ge \phi(c_{\tref{thm:two-round-extraction}}r)\); then
\eqref{eq:two-round-scale}, together with \(\sp r^2\le1\), gives the needed
occupancy lower bound. Hence the centers \(\{X_z:z\in V_3\}\) form a
\(C_0r_0\)-net of \(M\).

The scale requirements in
Lemma~\ref{lem:coarse-output-generates-window-oracle} are checked in the same
way. The bound \(r_0\le c_{\ora}r_\rmp\) follows from
\eqref{eq:two-round-range}. For the concentration requirement, decrease
\(c_{\tref{thm:two-round-extraction}}\) so that
\(\phi(c_0r_0)\ge \phi(c_{\tref{thm:two-round-extraction}}r)\). Since
\eqref{eq:two-round-range} gives \(r_\rmp^2\gtrsim r_0^2=r\ge r^2\),
\eqref{eq:two-round-scale} implies
\[
\sp n\phi(c_0r_0)r_\rmp^2\ge C_{\ora}\Lambda\log n
\]
after increasing \(C_{\tref{thm:two-round-extraction}}\). Applying the lemma
with \(V_0=V_3\), \(W=V_4\), and coarse scale \(r_0\), we obtain a
\fuzzywindoworacle{} \(\mathcal O_{\win}\) on \(V_4\), with fixed margins
\[
\alpha_{\win}:=\frac{\ell_\rmp}{128L_\rmp},
\qquad
\lambda_{\win}:=\frac18.
\]

\step{Round 2: fine oracle extraction}
Apply Theorem~\ref{thm:three-block-extraction} in the \oracleroute{} to
\(V_4,V_5,V_6\) at target scale \(r\), using this oracle on \(V_4\). The
internal-average scale is
\[
\rho_\circ=K_\star^{-1}r.
\]
The range condition of Theorem~\ref{thm:three-block-extraction} follows from
\eqref{eq:two-round-range}, since \(r\le r_0\). Its scale condition follows
from \eqref{eq:two-round-scale} after decreasing
\(c_{\tref{thm:two-round-extraction}}\), because then
\(\phi(\rho_\circ/3)\ge \phi(c_{\tref{thm:two-round-extraction}}r)\) and
\(\rho_\circ^2=K_\star^{-2}r^2\). Therefore, with probability \(1-o(1)\), the
fine run outputs an estimator \(\widehat{\rmp}(0)\) and, for every
\(v\in V_6\), a set \(U_v\subseteq V_6\) satisfying
\[
v\in U_v,
\qquad
U_v
\text{ is a }(K_\star^2 r,0)\text{-cluster with center }X_v,
\]
\[
|U_v|\ge \frac12\phi(r)n,
\qquad
|\widehat{\rmp}(0)-\rmp(0)|\le r.
\]
	Taking \(C'_{\tref{thm:two-round-extraction}}\ge K_\star^2\) gives the stated
	cluster-radius form. The running time and the number of distinct sets are the
	sum of the two three-block costs and the oracle-construction cost, computed
	over the distinct coarse clusters. Since the coarse run has internal scale
	comparable to \(r\), and the fine run has internal scale comparable to \(r\),
	while \(r\le r_0\le1\), these costs are absorbed into the complexity stated in
	the theorem.

\step{Probability and independence}
The first coarse extraction, the occupancy event on \(V_3\), the oracle
construction using edges from \(V_3\) to \(V_4\), and the final oracle
extraction each fail with probability \(o(1)\). The edge sets used in these
steps are disjoint: the oracle uses \(V_3\)-to-\(V_4\) edges, while the fine
oracle extraction uses internal \(V_4\) edges, then \(V_4\)-to-\(V_5\) edges,
and finally \(V_5\)-to-\(V_6\) edges. Thus the conditional concentration
arguments from the three-block theorem apply without interference. A union
bound gives total success probability \(1-o(1)\).
\end{proof}

\begin{rem}
The first round is used only to construct the \fuzzywindoworacle{} on \(V_4\).
The final clusters are produced entirely by the second, oracle-assisted run on
\(V_4,V_5,V_6\). The choice \(r_0=\sqrt r\) is what makes the local-route coarse
condition at scale \(r_0\) match the oracle-route condition at scale \(r\).
\end{rem}

A repacking of the two-round procedure gives simultaneous cluster extraction for every vertex in the graph, as follows.

\begin{theor}\label{theor:main}
There exist constants
\[
c_{\tref{theor:main}},
\qquad
C_{\tref{theor:main}},
\qquad
C'_{\tref{theor:main}},
\qquad
\Lambda_{\tref{theor:main}}>0,
\]
depending only on the model parameters, such that the following holds whenever
\(\Lambda\ge\Lambda_{\tref{theor:main}}\). Let \(\bfV\) be a vertex set of
size \(n\), and let \(r>0\). Assume
\[
0<\sqrt r
\le
c_{\tref{theor:main}}
\min\{1,r_\mu,r_\rmp\},
\]
and
\[
\sp n\,\phi(r)r^2
\ge
C_{\tref{theor:main}}\Lambda\log n.
\]
Then there is a procedure which, with probability \(1-o(1)\), runs in time
$$
\exp\left(
        C_{\tref{theor:main}}
        \log^2( \tfrac{3}{r\phi(r)}) \tfrac{1}{\sp r^2}
    \right)n \Lambda\log n ,
$$
outputs an
estimator \(\widehat{\rmp}(0)\) and, for every \(v\in\bfV\), a set
\(U_v\subseteq\bfV\) such that
\[
v\in U_v,
\qquad
U_v
\text{ is a }
\left(C'_{\tref{theor:main}}r,0\right)
\text{-cluster with center }X_v,
\]
and
\[
|U_v|
\ge
\frac{1}{13}\phi(r)n.
\]
Further, the number of sets $U_v$ without counting multiplicity is at most
$$
\exp\left(
        C_{\tref{theor:main}}
        \log^2( \tfrac{3}{r\phi(r)}) \tfrac{1}{\sp r^2}
    \right)\,.
$$
Moreover,
\[
|\widehat{\rmp}(0)-\rmp(0)|
\le
C'_{\tref{theor:main}}r.
\]
\end{theor}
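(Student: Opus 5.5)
We derive Theorem~\ref{theor:main} from Theorem~\ref{thm:two-round-extraction} by a repacking argument. There are two issues to handle. First, Theorem~\ref{thm:two-round-extraction} needs six disjoint blocks and only outputs clusters for vertices of the last block. Second, its scale condition is stated with \(\phi(c r)\) rather than \(\phi(r)\).

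\emph{The constant inside \(\phi\).} Apply Theorem~\ref{thm:two-round-extraction} at a rescaled target \(r' := r/c'\), with \(c'\) small enough that \(c_{\tref{thm:two-round-extraction}}r'\ge r\). Then
\[
\phi(c_{\tref{thm:two-round-extraction}} r')\ge\phi(r),
\qquad
(r')^2\ge r^2,
\]
so (T1) at \(r'\) follows from \(\sp n\phi(r)r^2\ge C\Lambda\log n\), up to the factor \(13\) used below. The range (T0) at \(r'\) follows from the hypothesis \(\sqrt r\le c_{\tref{theor:main}}\min\{\dots\}\) after shrinking \(c_{\tref{theor:main}}\). All conclusions degrade only by the constant \(1/c'\): the cluster radius becomes \(C'r'\) and the \(\rmp(0)\) error becomes \(r'\), both absorbed into \(C'_{\tref{theor:main}}r\). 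The complexity bound is likewise absorbed, since \(\phi(c_{\tref{thm:two-round-extraction}}r')\ge\phi(r)\) and \(1/(\sp r'^2)\le 1/(\sp r^2)\).

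\emph{Covering every vertex.} Split \(\bfV\) into \(12\) (or \(13\)) disjoint blocks of size \(\lfloor n/13\rfloor\), each a uniformly random subset. We run the six-block procedure twice.
\begin{itemize}
\item[(1)] Run with blocks \(V_1,\dots,V_6\), so that \(V_6\) plays the output role.
\item[(2)] Run with a rearranged block order in which the union of the former \(V_1,\dots,V_5\) is covered by output blocks.
\end{itemize}
Concretely, we use a cyclic assignment of roles so that every block serves as the final output block in some run. This gives a constant number of runs, and it works because Theorem~\ref{thm:two-round-extraction} only requires the six blocks to be disjoint and of equal size. A union bound over the \(O(1)\) runs keeps the success probability \(1-o(1)\).

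Each run's output \(U_v\) lives in the output block of size about \(n/13\). That block is exactly where the \(1/13\) comes from: \(|U_v|\ge\frac12\phi(r')\cdot n/13\ge\frac1{13}\phi(r)n\) once \(\phi(r')\ge\phi(r)\), possibly after adjusting constants. Leftover vertices, when \(13\nmid n\), are put into blocks of slightly unequal size, or handled by enlarging one block and discarding extras from the statistics only.

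The number of distinct sets is at most a constant multiple of the per-run bound, hence absorbed. The running time is also a constant multiple. The estimator \(\widehat{\rmp}(0)\) is taken from any one run.

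\emph{Main obstacle.} The step needing the most care is making sure that the independence and conditioning arguments inside Theorem~\ref{thm:two-round-extraction} survive when blocks are reused across different runs. Each run is a separate application conditional on the latent points, and its failure probability is \(o(1)\) regardless of the other runs. Since we only union bound, and never need independence between runs, this is fine. It is also where the replacement of \(n\) by \(n/13\) inside \(\log n\) must be checked; this costs only a constant.
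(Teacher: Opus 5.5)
There is a genuine gap, and it is in the cluster-size bound. Your overall route matches the paper's: rescale to \(r'=r/c\), with \(c\) the constant of Theorem~\ref{thm:two-round-extraction}; cycle the block roles so that every block is an output block once; and union bound over a constant number of runs. The problem is the number of blocks you chose.

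Theorem~\ref{thm:two-round-extraction} only guarantees \(|U_v|\ge\frac12\phi(r')|V_6|\), where \(|V_6|\) is the output block size. With blocks of size about \(n/13\) you therefore get \(|U_v|\ge\frac1{26}\phi(r')n\), not \(\frac1{13}\phi(r)n\). With 12 blocks you get \(\frac1{24}\).

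This loss cannot be removed by ``adjusting constants''. The factor \(\frac1{13}\) is a fixed absolute constant in the statement, and monotonicity only gives \(\phi(r')\ge\phi(r)\). Nothing forces \(\phi(r')\ge 2\phi(r)\): \(\phi\) may be constant on \([r,r']\), as in the finite-metric (block-model) example, where \(\phi\equiv1/k\) at small scales. More generally, you need an output block of size at least \(\frac{2}{13}n\), so any split into more than six blocks fails. The \(\frac1{13}\) is not the reciprocal of a block count, as you suggest. It is \(\frac12\cdot\frac16\) with a little slack to absorb rounding.

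The repair is what the paper does. Split \(\bfV\) into exactly six blocks whose sizes differ by at most one, and apply Theorem~\ref{thm:two-round-extraction} six times with the roles permuted cyclically, each block serving once as \(V_6\). Each output block then has size at least \(n/6-1\), so
\(|U_v|\ge\frac12\phi(r')(n/6-1)\ge\frac1{12}\phi(r)n-\frac12\phi(r)\ge\frac1{13}\phi(r)n\) once \(n\ge78\).

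Two smaller points. Your ``run twice'' description would not cover all blocks; the cyclic version you settle on is the right one. Also, ``discarding extras'' would leave some vertices without a \(U_v\), whereas the statement needs one for every \(v\in\bfV\); use block sizes differing by at most one instead.

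The rest of your argument is fine and matches the paper: the range and scale checks at \(r'\), absorbing \(\log\) of the block size into \(\log n\), the complexity and distinct-set counts, and the union bound that needs no independence between runs.
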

\begin{proof}
We may assume \(c_{\tref{thm:two-round-extraction}}\le1\), and set
\[
\bar r:=r/c_{\tref{thm:two-round-extraction}}.
\]
Split \(\bfV\) into six blocks of sizes differing by at most one. Apply
Theorem~\ref{thm:two-round-extraction} six times, cyclically permuting the
blocks so that each block is used once as the output block \(V_6\), and use
target scale \(\bar r\) in each run. The assumptions of
Theorem~\ref{thm:two-round-extraction} follow from the assumptions of this theorem
after decreasing \(c_{\tref{theor:main}}\) and increasing
\(C_{\tref{theor:main}}\): indeed
\[
c_{\tref{thm:two-round-extraction}}\bar r=r,
\]
so the scale condition there is exactly the main scale condition, up to the
constant change.

The six success events have total failure probability \(o(1)\). The runtime of
each call is bounded by the displayed main-theorem runtime because
\(c_{\tref{thm:two-round-extraction}}\bar r=r\), \(\bar r\ge r\), and the factor
\(\bar r^{-2}\) differs from \(r^{-2}\) only by a model-dependent constant.
The same comparison applies to the number of distinct output sets.

For a vertex in an output block of size at least \(n/6-1\),
Theorem~\ref{thm:two-round-extraction} gives a cluster of size at least
\[
\frac12\phi(\bar r)(n/6-1)
\ge
\frac1{13}\phi(r)n
\]
for all large \(n\), after increasing constants if necessary, since
\(\bar r\ge r\) and \(\phi\) is nondecreasing. The cluster radius and estimator
error are \(O(\bar r)=O(r)\), so they are absorbed into
\(C'_{\tref{theor:main}}r\). Use the estimator from any one of the six runs.
\end{proof}

\newpage

\bibliographystyle{alpha}
\bibliography{ref,references_metric_denoising}

\newpage

\appendix

\section{Lower bound}
\label{sec:lower-bound}

\begin{prop}[SBM partition lower bound]
\label{prop:sbm-partition-lower-bound}
There are universal constants \(c_{\rm sbm},C_{\rm sbm},c_{\rm fail}>0\) such
that the following holds for all sufficiently large \(n\). Consider the
symmetric stochastic block model with \(k\) blocks, where \(k\) is divisible by
\(4\): the latent labels \(X_1,\ldots,X_n\) are i.i.d. uniform on \([k]\), and
conditional on the labels the edges are independent Bernoulli random variables
with probability \(p\) within a block and \(q\) across distinct blocks. If
\[
    0<q<p\le c_{\rm sbm},
    \qquad
    p-q\le c_{\rm sbm}p,
    \qquad
    p\frac{n}{k}\ge C_{\rm sbm},
\]
and
\[
    \frac{(p-q)^2}{p}\frac{n}{k}\le c_{\rm sbm}\log n,
\]
then every estimator \(\widehat\Pi({\bf A})\) of the induced partition
\(\Pi(X_{[n]})\), where
\[
    \Pi(x_{[n]})
    :=
    \bigl\{\{u\in[n]:x_u=a\}:a\in[k]\bigr\}\setminus\{\emptyset\},
\]
based on the observed adjacency matrix \({\bf A}\), satisfies
\[
    \Prob\bigl(\widehat\Pi({\bf A})\neq\Pi(X_{[n]})\bigr)\ge c_{\rm fail}.
\]
\end{prop}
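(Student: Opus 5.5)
We plan a genie-aided reduction to a single-vertex hypothesis test, following the standard exact-recovery lower-bound strategy of \cite{AbbeBandeiraHall2016Exact,zhang2016minimax}. Since \(k\) is divisible by \(4\), split the labels into \(k/2\) disjoint pairs \(\{a,a'\}\). The genie reveals, for every vertex \(u\), the pair containing \(X_u\), but not which element of the pair it is. It also reveals the full partition restricted to a large set of "anchored" vertices. Extra information can only decrease the optimal error probability, so it suffices to show that the genie-aided Bayes estimator still errs with constant probability.

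\textbf{Step 1: a product of local tests.} Within a pair, consider a vertex \(v\) whose label lies in \(\{a,a'\}\). Suppose all other vertices in the pair's two blocks are revealed, with counts \(N_a,N_{a'}\approx n/k\); these counts concentrate once \(pn/k\ge C_{\rm sbm}\) and we condition on them. Deciding \(X_v=a\) versus \(X_v=a'\) is then a likelihood-ratio test between two product Bernoulli laws that differ only on the edges to \(N_a+N_{a'}\) vertices. To recover \(\Pi\) exactly, the estimator must in particular resolve which block \(v\) joins; swapping \(v\) between the blocks changes \(\Pi\) because the blocks have different revealed members. The other pairs and labels only contribute edge data independent of this test, and by the symmetry \(p\leftrightarrow q\) within the pair they carry no information about it.

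\textbf{Step 2: a per-vertex lower bound.} The log-likelihood ratio is \(\log(p(1-q)/(q(1-p)))\,(E_a-E_{a'})\), where \(E_a\) and \(E_{a'}\) are the numbers of edges from \(v\) to the two blocks. Since \(p-q\le c_{\rm sbm}p\), the ratio \(p/q\) is near \(1\), and \(E_a-E_{a'}\) is approximately Gaussian with mean \(\pm(p-q)n/k\) and variance \(\approx 2pn/k\). A binomial reverse-Chernoff (or Poisson/Gaussian) lower bound then gives error probability at least
\[
\exp\!\Bigl(-C\tfrac{(p-q)^2}{p}\tfrac nk\Bigr)\ge n^{-Cc_{\rm sbm}}\ge n^{-1/2}
\]
once \(c_{\rm sbm}\) is small. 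The requirement \(pn/k\ge C_{\rm sbm}\) is what keeps the bound valid when \((p-q)^2n/(pk)\) is itself small; in that case the error is simply a constant.

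\textbf{Step 3: many nearly independent failures.} We choose \(\Theta(n)\) "test" vertices (say one in four vertices per block) and reveal the labels of all the rest. Each test vertex is then tested against the revealed vertices only. The difficulty is that test vertices connect to one another. To remove this dependence, we also reveal the edges among test vertices, or observe that these edges are a weak signal whose contribution is absorbed by a constant change, since each test vertex has only \(O(pn/k)\) such edges. After this, the failure events are conditionally independent given the revealed labels and counts. Hence
\[
\Prob(\text{all succeed})\le (1-n^{-1/2})^{cn}\le e^{-c\sqrt n},
\]
which yields failure probability at least \(c_{\rm fail}\); in fact it tends to \(1\). The main obstacle is this decoupling, together with handling the random block sizes. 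We plan to condition on the event that every block has size in \([n/(2k),2n/k]\); this holds with constant probability by Chernoff bounds when \(n/k\) is large. Large \(n/k\) follows from \(pn/k\ge C_{\rm sbm}\) and \(p\le c_{\rm sbm}\). We would then carry out the test analysis on that event with the explicit counts.
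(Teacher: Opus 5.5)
\textbf{The gap is in Step 3, where you assert independence of the per-vertex failures.} Once a quarter of every block consists of test vertices with hidden labels, a test vertex $v$ in the pair $\{a,a'\}$ has about $n/(4k)$ test vertices and $3n/(4k)$ anchored vertices in each of the two blocks. So the edges among test vertices are a constant fraction of $v$'s informative row, not a weak signal. More importantly, these edges are already part of $\mathbf A$, so revealing them does nothing. Their likelihood also depends on whether two test vertices share a block. Hence the genie-aided posterior on the test vertices' binary choices does not factorize. The Bayes estimator is a joint MAP, its success event is not an intersection of events each determined by one vertex's data, and the bound $\Prob(\text{all succeed})\le(1-n^{-1/2})^{cn}$ is unjustified. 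Falling back to at most one test vertex per pair would make the coupling edges uninformative (they have probability $q$ either way). But that gives only $k/2$ trials, which is useless when $k$ is bounded.

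\textbf{The missing idea is the bridge the paper uses in Lemma~\ref{lem:sbm-bad-row-better-partition}.} It suffices to show that the true partition is not a posterior maximizer. This is witnessed by a single vertex $v$ whose relocation to another existing block, with all other vertices kept in their true blocks, increases the posterior. Attaching $v$ to any existing block of $\Pi^{(v)}$ has conditional prior $1/k$, so this is a pure likelihood comparison determined by $v$'s row against the true blocks. Concretely, it is the score $M^{(v)}_x-\alpha N^{(v)}_x$; note that your log-likelihood ratio drops the $\alpha(N_a-N_{a'})$ term. Your Step~1 object, with all other labels revealed, is therefore the right one; what you need is this bridge, not a product over a partially revealed genie. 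The resulting events for $v,w$ in different blocks use disjoint edges. For $v,w$ in the same block they share only $A_{vw}$, whose contribution to the covariance is $O(\beta_v^2/m)$ by a local binomial bound. The paper then concludes with a second-moment (Paley--Zygmund) argument.

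\textbf{Block sizes.} Conditioning on every block having size in $[n/(2k),2n/k]$ fails with high probability when $k\to\infty$ with $n/k$ only a large constant. The hypotheses allow this, since they force only $n/k\ge C_{\rm sbm}/c_{\rm sbm}$. You need the weaker good-blocks event of Lemma~\ref{lem:sbm-good-blocks}: at least $k/2$ typical blocks, which holds with probability at least $1/2$ by Markov. You must then restrict test vertices and competitor blocks to typical blocks.

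\textbf{Test set.} Choose the test set independently of the labels. Otherwise, given the revealed anchored partition, ``one in four vertices per block'' tells the estimator roughly how many hidden vertices go to each block, which destroys the i.i.d.\ prior you rely on.
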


\begin{proof}[Proof of Lemma~\ref{lem:lower-bound}]
Set \(\eta=\phi(r_0)\). Choose \(k\) divisible by \(4\) such that
\[
    \frac{1}{8\eta}\le k\le \frac{1}{4\eta};
\]
this is possible after taking \(c_\phi\) small enough. Let
\[
    M=[k],
    \qquad
    {\rm d}(x,y)=r_0{\bf 1}\{x\neq y\},
    \qquad
    \mu(x)=1/k.
\]
For \(0<t\le r_0\), every open ball \(B(x,t)\) is the singleton \(\{x\}\), and
\[
    \mu(B(x,t))=1/k\ge \eta\ge \phi(t).
\]
Thus \((M,{\rm d},\mu)\) is lower-\(\phi\)-regular up to scale \(r_0\).

Fix small universal constants \(a,\gamma>0\), to be chosen below, and define
\[
    \rmp(t):=a+\gamma(r_0-t)_+.
\]
This link is non-increasing and locally bi-Lipschitz on \([0,r_0]\), with
bi-Lipschitz constants equal to \(\gamma\). Taking \(a,\gamma\) small also
ensures that \(0\le\rmp\le1\). Use the Bernoulli observation model
\[
    \mathbf F(t,u)={\bf 1}\{u\le \rmp(t)\}.
\]
Then the observed edge indicator has probability
\[
    p:=\mathsf{s}\rmp(0)=\mathsf{s}(a+\gamma r_0)
    \qquad\text{within a block}
\]
and
\[
    q:=\mathsf{s}\rmp(r_0)=\mathsf{s}a
    \qquad\text{between distinct blocks}.
\]
Choosing \(a,\gamma\) sufficiently small, with \(\gamma\ll a\), gives
\[
    0<q<p\le c_{\rm sbm},
    \qquad
    p-q\le c_{\rm sbm}p.
\]
Moreover, since \(k\asymp1/\eta\),
\[
    p\frac{n}{k}\asymp \mathsf{s}n\eta,
\]
and
\[
    \frac{(p-q)^2}{p}\frac{n}{k}
    \asymp
    \mathsf{s}n\eta r_0^2,
\]
with universal implicit constants depending only on \(a,\gamma\). Therefore,
after increasing \(C_0\) and decreasing \(c_0\), the assumptions of
Proposition~\ref{prop:sbm-partition-lower-bound} hold.

Let \(\widehat d\) be any distance estimator. It induces a partition estimator
\(\widehat\Pi\) by taking the connected components of the graph on \([n]\) with
an edge between \(u\) and \(v\) whenever \(\widehat d(u,v)<r_0/2\). If
\[
    \max_{u,v}|\widehat d(u,v)-\D{X_u}{X_v}|<r_0/2,
\]
then this thresholded partition is exactly \(\Pi(X_{[n]})\). Hence
\[
    \Prob\!\left(
    \max_{u,v}|\widehat d(u,v)-\D{X_u}{X_v}|\ge r_0/2
    \right)
    \ge
    \Prob\bigl(\widehat\Pi\neq\Pi(X_{[n]})\bigr)
    \ge c_{\rm fail}.
\]
Taking \(c_1=c_{\rm fail}\) completes the proof.
\end{proof}

The rest of this section is devoted to the proof of
 Proposition~\ref{prop:sbm-partition-lower-bound}. 
Throughout this section, \(c\) and \(C\) denote positive universal constants
whose values may change from line to line. For a partition \(\pi\) of \([n]\),
write \(\pi^{(v)}\) for the partition of \([n]\setminus\{v\}\) obtained by
deleting \(v\) from its block and then removing the empty block if one is
created. We also write
\[
    {\cal P}_{n,\le k}
    :=
    \{\pi:\pi\text{ is a partition of }[n]\text{ into at most }k
    \text{ nonempty blocks}\}.
\]
A partition estimator is a measurable map from the observed adjacency matrix to
\({\cal P}_{n,\le k}\). The global maximum a posteriori (MAP) partition
estimator is any measurable choice
\begin{align}
    \label{eq:def-sbm-partition-map}
    \widehat\Pi^{\rm MAP}({\bf A})
    \in
    \arg\max_{\pi\in{\cal P}_{n,\le k}}
    \Prob\left(\Pi(X_{[n]})=\pi\mid{\bf A}\right).
\end{align}
Since \(\widehat\Pi^{\rm MAP}\) maximizes the conditional success probability
given \({\bf A}\), it suffices to prove that this estimator fails with
probability bounded away from zero.

For \(x\in[k]\), put
\[
    B_x:=\{u\in[n]:X_u=x\},
    \qquad
    N_x:=|B_x|.
\]

\subsection{Good blocks}

We first isolate a constant-probability event on which many block sizes are close
to \(n/k\).

\begin{lemma}[Good blocks]
    \label{lem:sbm-good-blocks}
    There is a universal constant \(C_{\rm G}\ge1\) such that, whenever
    \(n/k\ge C_{\rm G}\), the event
    \[
        {\cal E}_{\cal G}:=\{|{\cal G}|\ge k/2\},
        \qquad
        {\cal G}:=
        \left\{x\in[k]:
        |N_x-n/k|\le\sqrt{C_{\rm G}n/k}\right\},
    \]
    has probability at least \(1/2\).
\end{lemma}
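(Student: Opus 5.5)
The block sizes $(N_1,\ldots,N_k)$ are multinomial$(n;1/k,\ldots,1/k)$, so each $N_x$ is Binomial$(n,1/k)$ with mean $n/k$ and variance $n/k\,(1-1/k)\le n/k$. The plan is to bound the expected number of \emph{bad} blocks, meaning those $x\notin{\cal G}$, by Chebyshev, and then use Markov on that count. The dependence between the $N_x$ does not matter, because only linearity of expectation is used.

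\textbf{Step 1 (per-block Chebyshev).} For each fixed $x\in[k]$,
\[
\Prob\bigl(|N_x-n/k|>\sqrt{C_{\rm G}n/k}\bigr)\le \frac{\operatorname{Var}(N_x)}{C_{\rm G}n/k}\le \frac1{C_{\rm G}}.
\]

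\textbf{Step 2 (expected number of bad blocks).} Let $Y:=|[k]\setminus{\cal G}|=\sum_x \mathbf 1\{x\notin{\cal G}\}$. Then $\E Y\le k/C_{\rm G}$.

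\textbf{Step 3 (Markov).} The event ${\cal E}_{\cal G}^c=\{|{\cal G}|<k/2\}$ is the event $\{Y>k/2\}$. Markov's inequality gives
\[
\Prob(Y>k/2)\le \frac{2\,\E Y}{k}\le \frac{2}{C_{\rm G}}.
\]
Choosing $C_{\rm G}=4$ makes this at most $1/2$, so $\Prob({\cal E}_{\cal G})\ge 1/2$.

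The hypothesis $n/k\ge C_{\rm G}$ is not actually needed for this bound. It only guarantees that the window $\sqrt{C_{\rm G}n/k}$ is at most $n/k$, so that good blocks have size comparable to $n/k$ (at least $n/k-\sqrt{C_{\rm G}n/k}\ge 0$). That comparability is used later, and we keep the hypothesis so that $C_{\rm G}$ serves both purposes.

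No step is a real obstacle. The one point to check is that Chebyshev requires no independence across blocks, since the multinomial coordinates are negatively correlated but only the marginals enter.
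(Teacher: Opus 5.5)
Your proof is correct and has the same structure as the paper's: a per-block tail bound, linearity of expectation for the number of bad blocks, then Markov's inequality. The only difference is the tail bound. The paper uses Chernoff, which gives per-block failure probability $2\exp(-C_{\rm G}/3)$ and is where the hypothesis $n/k\ge C_{\rm G}$ enters, since it ensures $t\le\mu$. Your Chebyshev bound $1/C_{\rm G}$ works for every $C_{\rm G}\ge 4$, so it still allows $C_{\rm G}$ to be increased later, and it confirms that this hypothesis is not needed for the lemma itself.
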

\begin{proof}
Each \(N_x\) is binomial with parameters \(n\) and \(1/k\). By Chernoff's
inequality, if \(n/k\ge C_{\rm G}\), then
\[
    \Prob\left(|N_x-n/k|>\sqrt{C_{\rm G}n/k}\right)
    \le 2\exp(-C_{\rm G}/3).
\]
Hence \(\E|{\cal G}^c|\le 2k\exp(-C_{\rm G}/3)\). Markov's inequality gives
\[
    \Prob\left(|{\cal G}^c|>4k\exp(-C_{\rm G}/3)\right)\le \frac12.
\]
Taking \(C_{\rm G}\) large enough that \(4\exp(-C_{\rm G}/3)\le1/2\) proves the
claim.
\end{proof}

In the rest of the proof we work on \({\cal E}_{\cal G}\), which has probability
at least \(1/2\). We choose \(C_{\rm sbm}\) large enough relative to \(C_{\rm G}\)
so that on \({\cal E}_{\cal G}\) every block in \({\cal G}\) contains at least
two vertices. On \({\cal E}_{\cal G}\), fix disjoint sets
\[
    I_1,I_2\subseteq{\cal G},
    \qquad
    |I_1|=|I_2|=k/4,
\]
and define
\[
    V_*:=\{v\in[n]:X_v\in I_1\}.
\]
For \(C_{\rm sbm}\) sufficiently large, \(|V_*|\ge cn\) on \({\cal E}_{\cal G}\).

\subsection{One-genie comparison}

Fix \(v\in[n]\), and write
\[
    [n]^{(v)}:=[n]\setminus\{v\},
    \qquad
    X^{(v)}:=(X_i)_{i\in[n]^{(v)}}.
\]
We condition on a realization \(x^{(v)}\) of \(X^{(v)}\). If this information is
revealed, then the MAP estimator of \(X_v\) is
\[
    \widehat x_v({\bf A},x^{(v)})
    \in
    \arg\max_{x_v\in[k]}
    \Prob(X_v=x_v\mid{\bf A},X^{(v)}=x^{(v)}).
\]
Since \(X_v\) is independent of \(X^{(v)}\) and uniformly distributed on \([k]\),
Bayes' rule gives
\begin{align*}
    \Prob(X_v=x_v\mid{\bf A},X^{(v)}=x^{(v)})
    &=
    \frac{\Prob(X_v=x_v\mid X^{(v)}=x^{(v)})}
         {\Prob({\bf A}\mid X^{(v)}=x^{(v)})}
    \Prob({\bf A}\mid X_v=x_v,X^{(v)}=x^{(v)})\\
    &=
    \frac{1}{k\Prob({\bf A}\mid X^{(v)}=x^{(v)})}
    \Prob({\bf A}\mid X_v=x_v,X^{(v)}=x^{(v)}).
\end{align*}
Thus the genie MAP maximizes the conditional likelihood.

For \(x\in[k]\), define
\[
    N_x^{(v)}
    :=
    |\{i\in[n]^{(v)}:X_i=x\}|,
    \qquad
    M_x^{(v)}
    :=
    \sum_{\substack{i\in[n]^{(v)}\\ X_i=x}} A_{v,i},
\]
and write \(\vec M^{(v)}=(M_x^{(v)})_{x\in[k]}\). By conditional independence of
the edges, and since edges not incident to \(v\) contribute a multiplicative
factor independent of the candidate label \(x_v\),
\[
\Prob({\bf A}\mid X_v=x_v,X^{(v)}=x^{(v)})
\propto
\prod_{i\in[n]^{(v)}}
\Prob(A_{v,i}\mid X_v=x_v,X_i=x_i).
\]
The product on the right is
\begin{align*}
\prod_{x\in[k]}
q^{M_x^{(v)}}(1-q)^{N_x^{(v)}-M_x^{(v)}}
\cdot
\left(\frac pq\right)^{M_{x_v}^{(v)}}
\cdot
\left(\frac{1-p}{1-q}\right)^{N_{x_v}^{(v)}-M_{x_v}^{(v)}}
\propto
\exp\bigg(
    \log\left(\frac{p(1-q)}{q(1-p)}\right)
    \left(M_{x_v}^{(v)}-\alpha N_{x_v}^{(v)}\right)
\bigg),
\end{align*}
where the proportionality hides only factors independent of \(x_v\), and
\[
    \alpha
    :=
    \alpha(p,q)
    :=
    \frac{\log\left(\frac{1-q}{1-p}\right)}
         {\log\left(\frac{p(1-q)}{q(1-p)}\right)}.
\]
The logarithmic prefactor is positive since \(p>q\). Hence, for
\(\vec m=(m_x)_{x\in[k]}\), the normalized score
\[
    S_x^{(v)}(\vec m):=m_x-\alpha N_x^{(v)}
\]
satisfies
\[
    \widehat x_v({\bf A},x^{(v)})
    \in
    \arg\max_{x\in[k]} S_x^{(v)}(\vec M^{(v)}).
\]

We also need one partition-specific prior identity. If \(\rho\) is a partition
of \([n]^{(v)}\), \(B\in\rho\), and
\[
    \rho\oplus_v B:=(\rho\setminus\{B\})\cup\{B\cup\{v\}\},
\]
then
\begin{align}
    \label{eq:sbm-existing-block-prior}
    \Prob\left(
        \Pi(X_{[n]})=\rho\oplus_v B
        \mid
        \Pi(X_{[n]})^{(v)}=\rho
    \right)
    =
    \frac1k.
\end{align}
Indeed, after \(\Pi(X_{[n]})^{(v)}=\rho\) is revealed, joining any fixed existing
block means that \(X_v\) equals the latent label of that block, which has
conditional probability \(1/k\). Thus all existing-block alternatives have the
same conditional prior weight.

\begin{lemma}[Score normalization]
    \label{lem:sbm-alpha-bound}
    After decreasing \(c_{\rm sbm}\), if
    \(0<q<p\le c_{\rm sbm}\) and \(p-q\le c_{\rm sbm}p\), then
    \[
        \frac p2\le \alpha\le p.
    \]
\end{lemma}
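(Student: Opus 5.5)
We need to prove \(p/2\le\alpha\le p\), where
\[
\alpha=\frac{\log\frac{1-q}{1-p}}{\log\frac{p}{q}+\log\frac{1-q}{1-p}}.
\]
The plan is to write both logarithms through the small parameters \(p\) and \(\delta:=(p-q)/p\in(0,c_{\rm sbm}]\), so that \(q=p(1-\delta)\), and then bound numerator and denominator separately by Taylor expansion.

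\textbf{Numerator.} Write
\[
\log\frac{1-q}{1-p}=\log\Bigl(1+\frac{p-q}{1-p}\Bigr)=\log\Bigl(1+\frac{p\delta}{1-p}\Bigr).
\]
Since \(p\delta/(1-p)\) is at most of order \(c_{\rm sbm}^2\), the bounds \(x-x^2/2\le\log(1+x)\le x\) for \(x\ge0\) apply. They give
\[
N:=\log\frac{1-q}{1-p}\in\Bigl[p\delta(1-Cp),\ \frac{p\delta}{1-p}\Bigr]\subseteq\bigl[p\delta(1-Cp),\ p\delta(1+2p)\bigr].
\]

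\textbf{Denominator.} The first term is
\[
\log\frac pq=-\log(1-\delta)=\delta+\frac{\delta^2}{2}+\cdots.
\]
This places it in \([\delta,\ \delta+\delta^2]\) for \(\delta\le 1/2\). Adding \(N\), the denominator \(D=\log(p/q)+N\) lies in \([\delta,\ \delta(1+\delta+2p)]\).

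\textbf{Ratio.} Dividing the bounds,
\[
\alpha=\frac ND\le\frac{p\delta(1+2p)}{\delta}=p(1+2p).
\]
This is slightly above \(p\), so the crude upper bound on \(N\) is not quite sufficient and I will sharpen the comparison. The cleanest route is to prove \(N\le p\,D\) directly, i.e.
\[
(1-p)\log\frac{1-q}{1-p}\le p\log\frac pq.
\]
I expect this to be the main obstacle: a careless first-order expansion loses exactly the \(O(p^2\delta)\) needed.

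For the upper bound I would argue as follows.
\begin{itemize}
\item The left side is at most \((1-p)\cdot p\delta/(1-p)=p\delta\), using \(\log(1+x)\le x\) with \(x=p\delta/(1-p)\).
\item The right side is \(p(-\log(1-\delta))\ge p\delta\).
\end{itemize}
Together these give \(\alpha\le p\) exactly, and this step is clean with no error terms.

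For the lower bound, \(\alpha\ge p/2\) follows from the crude bounds:
\[
\alpha\ge\frac{p\delta(1-Cp)}{\delta(1+\delta+2p)}=p\cdot\frac{1-Cp}{1+\delta+2p}\ge\frac p2,
\]
once \(c_{\rm sbm}\) is small enough that \(Cp,\ \delta,\ 2p\) are all below a small absolute constant.

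The only care required is to keep track of which side of each Taylor bound is used. The decrease of \(c_{\rm sbm}\) is used only to make \(p\) and \(\delta\) small, which keeps the constants in the lower bound universal.
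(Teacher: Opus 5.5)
Your proof is correct and is essentially the paper's argument. The paper likewise uses $\log\frac{1-q}{1-p}\in[\Delta,\Delta/(1-p)]$ and $\log\frac pq\in[\Delta/p,\Delta/((1-c_{\rm sbm})p)]$ with $\Delta=p-q=p\delta$, and your sharp step $(1-p)\log\frac{1-q}{1-p}\le p\log\frac pq$ is exactly its computation $\alpha\le\frac{\Delta/(1-p)}{\Delta/p+\Delta/(1-p)}=p$. The differences are cosmetic: the $\delta$-parametrization, and second-order Taylor bounds in the lower bound, where the paper's $\log(1+r)\ge r/(1+r)$ gives $N\ge\Delta$ directly without the $(1-Cp)$ factor. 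Your intermediate estimate $\alpha\le p(1+2p)$ can simply be dropped.
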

\begin{proof}
Write \(\Delta=p-q\). We use
\[
    \frac r{1+r}\le \log(1+r)\le r,
    \qquad r\ge0.
\]
Applying this with \(r=\Delta/(1-p)\) gives
\[
    \Delta
    \le
    \log\left(\frac{1-q}{1-p}\right)
    \le
    \frac{\Delta}{1-p}.
\]
Similarly,
\[
    \frac{\Delta}{p}
    \le
    \log\left(\frac pq\right)
    \le
    \frac{\Delta}{q}
    \le
    \frac{\Delta}{(1-c_{\rm sbm})p},
\]
where the last inequality uses \(q\ge(1-c_{\rm sbm})p\). Since
\[
    \log\left(\frac{p(1-q)}{q(1-p)}\right)
    =
    \log\left(\frac pq\right)
    +
    \log\left(\frac{1-q}{1-p}\right),
\]
the upper bound follows from
\[
    \alpha
    \le
    \frac{\Delta/(1-p)}
         {\Delta/p+\Delta/(1-p)}
    =
    p.
\]
For the lower bound,
\[
    \alpha
    \ge
    \frac{\Delta}
         {\Delta/(1-p)+\Delta/((1-c_{\rm sbm})p)}
    =
    p\frac{(1-c_{\rm sbm})(1-p)}{1-c_{\rm sbm}p}
    \ge \frac p2
\]
after decreasing \(c_{\rm sbm}\).
\end{proof}

\subsection{Bad rows}

For \(v\in V_*\), define the bad row event \({\cal E}_v\) by
\[
    M_{X_v}^{(v)}\le pN_{X_v}^{(v)}
\]
and
\[
    \max_{j\in I_2}\bigl(M_j^{(v)}-\alpha N_j\bigr)
    >
    pN_{X_v}^{(v)}-\alpha N_{X_v}^{(v)}.
\]

\begin{lemma}[Bad row gives a better partition]
    \label{lem:sbm-bad-row-better-partition}
    Fix a realization \(x_{[n]}^{\rm true}\) of \(X_{[n]}\) satisfying
    \({\cal E}_{\cal G}\), and let \(v\in V_*\). On \({\cal E}_v\), the true
    partition \(\Pi(x_{[n]}^{\rm true})\) is not a global partition MAP solution.
\end{lemma}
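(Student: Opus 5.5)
The plan is to exhibit an explicit competitor partition with strictly larger posterior probability than the true one. Write $\pi:=\Pi(x_{[n]}^{\rm true})$, $a:=x^{\rm true}_v\in I_1$, and let $j\in I_2$ attain the maximum in the second condition defining ${\cal E}_v$. Let $B_a,B_j\in\pi$ be the blocks with labels $a$ and $j$. Since $I_1,I_2\subseteq{\cal G}$, every block in ${\cal G}$ has at least two vertices on ${\cal E}_{\cal G}$. Hence deleting $v$ from $B_a$ leaves a nonempty block, and $B_j$ is a nonempty block not containing $v$. So both $\pi$ and
\[
\pi':=\bigl(\pi^{(v)}\setminus\{B_j\}\bigr)\cup\{B_j\cup\{v\}\}=\pi^{(v)}\oplus_v B_j
\]
are partitions in ${\cal P}_{n,\le k}$ with the same restriction $\rho:=\pi^{(v)}$, and $\pi=\rho\oplus_v(B_a\setminus\{v\})$.

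Next I would compare posteriors through Bayes' rule, $\Prob(\Pi=\sigma\mid{\bf A})\propto\Prob(\Pi=\sigma)\,\Prob({\bf A}\mid\Pi=\sigma)$. For the prior, I factor $\Prob(\Pi=\sigma)=\Prob(\Pi^{(v)}=\rho)\,\Prob(\Pi=\sigma\mid\Pi^{(v)}=\rho)$ for $\sigma\in\{\pi,\pi'\}$. By \eqref{eq:sbm-existing-block-prior} the second factor equals $1/k$ in both cases, so the priors coincide. For the likelihood, I use that the SBM edge probabilities depend only on whether two labels agree. Therefore $\Prob({\bf A}\mid\Pi=\sigma)=\Prob({\bf A}\mid X=x)$ for any labeling $x$ inducing $\sigma$. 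I would take $x^{\rm true}$ for $\pi$, and for $\pi'$ the same labeling with $x_v$ changed to $j$.

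The two labelings differ only at $v$, so edges not incident to $v$ cancel in the likelihood ratio. The computation in the one-genie comparison then gives
\[
\frac{\Prob({\bf A}\mid\Pi=\pi')}{\Prob({\bf A}\mid\Pi=\pi)}
=\exp\Bigl(\lambda\bigl[(M_j^{(v)}-\alpha N_j^{(v)})-(M_a^{(v)}-\alpha N_a^{(v)})\bigr]\Bigr),
\qquad \lambda=\log\tfrac{p(1-q)}{q(1-p)}>0.
\]
Here $N_j^{(v)}=N_j$ because $v\notin B_j$. On ${\cal E}_v$ the two defining inequalities chain together:
\[
M_j^{(v)}-\alpha N_j> pN_a^{(v)}-\alpha N_a^{(v)}\ge M_a^{(v)}-\alpha N_a^{(v)}.
\]
The exponent is therefore strictly positive, and $\Prob(\Pi=\pi'\mid{\bf A})>\Prob(\Pi=\pi\mid{\bf A})$. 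Hence $\pi$ is not a maximizer in \eqref{eq:def-sbm-partition-map}.

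The main point needing care is the bookkeeping that makes the prior factors equal. Both $\pi$ and $\pi'$ must arise as "join an existing block of $\rho$" moves, so that \eqref{eq:sbm-existing-block-prior} applies to each. This is exactly why the good-blocks event is needed: it guarantees that $v$'s true block survives the deletion of $v$. One should also note that the posterior is well defined, since $\Prob({\bf A})>0$ because $0<q<p<1$.
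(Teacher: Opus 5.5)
Your proposal is correct and follows essentially the same route as the paper. You move $v$ from its true block to the block $B_j$, where $j\in I_2$ attains the maximum in $\mathcal{E}_v$. Equation \eqref{eq:sbm-existing-block-prior} gives equal prior weight to the two partitions sharing the restriction $\rho$, and chaining the two conditions of $\mathcal{E}_v$ gives $S_j^{(v)}(\vec M^{(v)})>S_{X_v}^{(v)}(\vec M^{(v)})$, hence a strictly larger likelihood. Your explicit remark that $\Prob({\bf A}\mid\Pi=\sigma)$ equals the labeling likelihood for any labeling inducing $\sigma$ fills in a step the paper leaves implicit when it passes from the genie likelihood (with $X^{(v)}$ fixed) to the partition posterior.
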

\begin{proof}
On \({\cal E}_v\), choose \(x\in I_2\) such that
\[
    M_x^{(v)}-\alpha N_x
    >
    pN_{X_v}^{(v)}-\alpha N_{X_v}^{(v)}.
\]
Since \(x\in I_2\) and \(X_v\in I_1\), we have \(N_x^{(v)}=N_x\). Together with
\(M_{X_v}^{(v)}\le pN_{X_v}^{(v)}\), this gives
\[
    M_x^{(v)}
    >
    M_{X_v}^{(v)}
    +
    \alpha\bigl(N_x^{(v)}-N_{X_v}^{(v)}\bigr),
\]
or equivalently
\[
    S_x^{(v)}(\vec M^{(v)})
    >
    S_{X_v}^{(v)}(\vec M^{(v)}).
\]
Let
\[
    \rho:=\Pi(x_{[n]}^{\rm true})^{(v)},
    \qquad
    B_{\rm true}:=\{i\in[n]^{(v)}:X_i=X_v\},
    \qquad
    B_x:=\{i\in[n]^{(v)}:X_i=x\}.
\]
Both \(B_{\rm true}\) and \(B_x\) are nonempty blocks of \(\rho\), because
\(X_v,x\in{\cal G}\) and blocks in \({\cal G}\) contain at least two vertices.
The true partition is \(\rho\oplus_v B_{\rm true}\). Let
\[
    \pi':=\rho\oplus_v B_x.
\]
The strict score inequality says that, with \(X^{(v)}\) fixed, the likelihood is
larger when \(v\) is assigned to the block \(B_x\) than when it is assigned to
its realized block \(B_{\rm true}\). The two alternatives have the same
restriction \(\rho\) on \([n]^{(v)}\), and \eqref{eq:sbm-existing-block-prior}
gives them the same conditional prior weight. Hence
\[
    \Prob(\Pi(X_{[n]})=\pi'\mid{\bf A})
    >
    \Prob(\Pi(X_{[n]})=\Pi(x_{[n]}^{\rm true})\mid{\bf A}),
\]
so the true partition is not a global MAP solution.
\end{proof}

\begin{lemma}[A local binomial point mass]
    \label{lem:sbm-binomial-point-mass}
    There are universal constants \(c_{\rm pm},C_{\rm pm},c'_{\rm pm}>0\) with
    the following property. Fix a realization of \(X_{[n]}\). Let \(v\in[n]\),
    set \(y:=X_v\), and fix \(x\neq y\). Suppose
    \[
        |N_x^{(v)}-n/k|\le 2\sqrt{C_{\rm G}n/k},
        \qquad
        |N_y^{(v)}-n/k|\le 2\sqrt{C_{\rm G}n/k}.
    \]
    If \(0<q<p\le c'_{\rm pm}\), \(p-q\le c'_{\rm pm}p\),
    \(n/k\ge C_{\rm pm}C_{\rm G}\), and \(p(n/k)\ge C_{\rm pm}\), then any
    integer \(t\) satisfying
    \[
        \left|t-\left(pN_y^{(v)}+4p\sqrt{C_{\rm G}n/k}\right)\right|\le2
    \]
    also satisfies
    \[
        \Prob(M_x^{(v)}=t\mid X_{[n]})
        \ge
        \frac{c_{\rm pm}}{\sqrt{p\,n/k}}
        \exp\left[
            -C_{\rm pm}\left(\frac{(p-q)^2}{p}\frac{n}{k}+C_{\rm G}\right)
        \right].
    \]
\end{lemma}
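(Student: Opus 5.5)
Conditionally on \(X_{[n]}\), the variable \(M_x^{(v)}\) is a sum of \(N:=N_x^{(v)}\) independent Bernoulli\((q)\) variables, because \(x\neq y=X_v\). Hence \(M_x^{(v)}\sim\mathrm{Bin}(N,q)\), and the statement is a local lower bound for a binomial point mass at a point \(t\) lying somewhat above the mean. I would write the exact mass
\[
\Prob(M=t)=\binom{N}{t}q^t(1-q)^{N-t}
\]
and apply Stirling's formula with explicit error bounds (Robbins' form). This gives
\[
\Prob(M=t)\ge \frac{c}{\sqrt{N\frac tN(1-\frac tN)}}\exp\bigl(-N\,\mathrm{KL}(t/N\,\|\,q)\bigr),
\]
valid for \(1\le t\le N-1\). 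The hypotheses \(p\,n/k\ge C_{\rm pm}\) and \(n/k\ge C_{\rm pm}C_{\rm G}\) ensure \(N\asymp n/k\) and \(t\asymp p\,n/k\ge 1\). Together with \(p\le c'_{\rm pm}\), they also give \(t/N\le 2p\le 1/2\). The prefactor is therefore at least \(c/\sqrt{p\,n/k}\).

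The next step is to bound the exponent. Put \(\hat\theta:=t/N\). Then
\[
t=pN_y^{(v)}+4p\sqrt{C_{\rm G}n/k}+O(1),
\]
and since \(|N_x^{(v)}-N_y^{(v)}|\le 4\sqrt{C_{\rm G}n/k}\), this yields
\[
\hat\theta=p+O\!\left(p\sqrt{C_{\rm G}k/n}\right)+O(k/n).
\]
By design the shift \(4p\sqrt{C_{\rm G}n/k}\) keeps \(\hat\theta\ge p\), although positivity is not needed. I would use the chi-square bound for Bernoulli divergences, \(\mathrm{KL}(\hat\theta\|q)\le (\hat\theta-q)^2/(q(1-q))\). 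Since \(q\ge(1-c'_{\rm pm})p\ge p/2\), this gives
\[
N\,\mathrm{KL}\le C\frac{n}{k}\cdot\frac{(p-q)^2+p^2C_{\rm G}k/n+k^2/n^2}{p}
\le C\left(\frac{(p-q)^2}{p}\frac nk+pC_{\rm G}+\frac{k}{pn}\right).
\]
The last two terms are at most \(C\,C_{\rm G}\) and \(1/C_{\rm pm}\), respectively. Both are absorbed into \(C_{\rm pm}C_{\rm G}\), using \(C_{\rm G}\ge 1\).

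The main obstacle is the bookkeeping in this step. I need to confirm that \(t\) is a legitimate integer in \([1,N-1]\), and that the \(O(1)\) rounding term and the size fluctuations do not spoil the \(1/\sqrt{p\,n/k}\) prefactor or the exponent. I would handle this by choosing \(C_{\rm pm}\) large so that \(p\,n/k\) dominates every \(O(1)\) and \(\sqrt{C_{\rm G}n/k}\) correction. If the chi-square bound turns out to be awkward when \(\hat\theta\) falls below \(q\), I would fall back on the ratio argument. The argument compares \(\Prob(M=t)\) with the mass at the mode \(\lfloor (N+1)q\rfloor\), which is at least \(c/\sqrt{Nq}\) by the local CLT or Stirling. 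It then bounds the product of the consecutive ratios \(\frac{(N-j)q}{(j+1)(1-q)}\) over the intermediate \(j\), which gives \(\exp(-C(t-Nq)^2/(Nq))\). This leads to the same bound.
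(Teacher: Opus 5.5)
Your proposal is correct and follows essentially the same route as the paper. The paper invokes a standard local binomial lower bound $\Prob(M=t)\ge c\,\mu^{-1/2}\exp(-C(t-\mu)^2/\mu)$, with $\mu=qN_x^{(v)}$, on the window $|t-\mu|\le\mu/2$, and then bounds $|t-\mu|\le(p-q)n/k+8p\sqrt{C_{\rm G}n/k}+2$; you derive that same local bound explicitly from Robbins' Stirling estimate and $\mathrm{KL}\le\chi^2$ before doing the same bookkeeping (and since $\mathrm{KL}(\hat\theta\|q)\le(\hat\theta-q)^2/(q(1-q))$ holds for every $\hat\theta$, your fallback ratio argument is unnecessary).
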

\begin{proof}
Conditional on \(X_{[n]}\), \(M_x^{(v)}\sim{\rm Bin}(N_x^{(v)},q)\). Let
\(\mu:=qN_x^{(v)}\). The assumptions imply \(q\ge(1-c'_{\rm pm})p\) and
\(N_x^{(v)}\ge(n/k)/2\), after increasing \(C_{\rm pm}\). Hence
\(\mu\ge c p(n/k)\).

The standard local binomial lower bound gives, for every integer \(t\) with
\(|t-\mu|\le \mu/2\),
\[
    \Prob(M_x^{(v)}=t\mid X_{[n]})
    \ge
    \frac{c}{\sqrt{\mu}}
    \exp\left(-C\frac{(t-\mu)^2}{\mu}\right).
\]
Using the assumptions on \(N_x^{(v)}\) and \(N_y^{(v)}\),
\[
    |t-\mu|
    \le
    (p-q)\frac{n}{k}
    +8p\sqrt{C_{\rm G}n/k}
    +2.
\]
Increasing \(C_{\rm pm}\), using \(p(n/k)\ge C_{\rm pm}\), and decreasing
\(c'_{\rm pm}\) gives \(|t-\mu|\le \mu/2\) and
\[
    \frac{(t-\mu)^2}{\mu}
    \le
    C\left(\frac{(p-q)^2}{p}\frac{n}{k}+C_{\rm G}\right).
\]
Since \(\mu\asymp p(n/k)\), the claimed estimate follows.
\end{proof}

\subsection{Second moment}

For the second moment estimates, condition on a realization of \(X_{[n]}\)
satisfying \({\cal E}_{\cal G}\). For \(v\in V_*\), define
\[
    {\cal E}'_v
    :=
    \left\{
        \max_{j\in I_2}\bigl(M_j^{(v)}-\alpha N_j\bigr)
        >
        pN_{X_v}^{(v)}-\alpha N_{X_v}^{(v)}
    \right\},
    \qquad
    \beta_v:=\Prob({\cal E}'_v\mid X_{[n]}).
\]
Thus \({\cal E}_v\) is the intersection of the within-block lower-tail event
\(\{M_{X_v}^{(v)}\le pN_{X_v}^{(v)}\}\) and the competing-block event
\({\cal E}'_v\). Let
\[
    Y:=\sum_{v\in V_*}{\bf 1}_{{\cal E}_v}.
\]

\begin{lemma}[Second moment reduction]
    \label{lem:sbm-second-moment-reduction}
    On \({\cal E}_{\cal G}\),
    \[
        \E[Y^2\mid X_{[n]}]
        \le
        \E[Y\mid X_{[n]}]^2
        +
        \E[Y\mid X_{[n]}]
        +
        C\sum_{v\in V_*}\beta_v^2.
    \]
    Consequently, if
    \[
        \sum_{v\in V_*}\beta_v^2
        =
        o\left(\E[Y\mid X_{[n]}]^2\right)
        +
        O\left(\E[Y\mid X_{[n]}]\right)
    \]
    and \(\E[Y\mid X_{[n]}]\to\infty\), then
    \[
        \Prob(Y>0\mid X_{[n]})\to1.
    \]
\end{lemma}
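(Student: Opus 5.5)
The plan is to prove the variance inequality by expanding $\E[Y^2\mid X_{[n]}]$ as a sum over ordered pairs $(v,w)\in V_*\times V_*$ and splitting the pairs by how much the events ${\cal E}_v$ and ${\cal E}_w$ share in randomness. Conditionally on $X_{[n]}$, ${\cal E}_v$ is a function of the row $(A_{v,i})_{i\neq v}$ alone. For $v\neq w$ the two rows share exactly one edge, $A_{v,w}$. The diagonal terms $v=w$ contribute $\sum_v\Prob({\cal E}_v\mid X_{[n]})=\E[Y\mid X_{[n]}]$. For the off-diagonal terms we must bound the effect of the single shared edge.

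\textbf{Decoupling step.} For $v\neq w$, let $\widetilde{\cal E}_v$ be the event obtained from ${\cal E}_v$ by replacing $A_{v,w}$ with an independent copy. Equivalently, and more cleanly, condition on $A_{v,w}$: given $A_{v,w}$, the remaining parts of the two rows are independent. Hence
\[
\Prob({\cal E}_v\cap{\cal E}_w\mid X_{[n]})=\E\bigl[\Prob({\cal E}_v\mid A_{v,w},X_{[n]})\,\Prob({\cal E}_w\mid A_{v,w},X_{[n]})\bigr].
\]
Changing the single edge $A_{v,w}$ moves exactly one count $M^{(v)}_{X_w}$ by at most one. Therefore $\Prob({\cal E}_v\mid A_{v,w}=a)$ differs from $\Prob({\cal E}_v)$ only through the event that this count sits exactly on the relevant threshold.

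We split into cases according to where $X_w$ lies. If $X_w\notin I_2\cup\{X_v\}$, the edge is irrelevant and the two events are conditionally independent. If $X_w=X_v$, the edge enters the within-block condition $M^{(v)}_{X_v}\le pN^{(v)}_{X_v}$. If $X_w\in I_2$, it enters the competing event ${\cal E}'_v$. In the last two cases the crude bound $\Prob({\cal E}_v\mid A_{v,w}=a)\le \Prob({\cal E}_v)+\Prob(\text{boundary})$ should be replaced by a multiplicative comparison. Because $A_{v,w}$ is Bernoulli$(p)$ or Bernoulli$(q)$ with $p\le c_{\rm sbm}$, the conditional probability given $A_{v,w}=0$ is within a factor $1+O(p)$ of the unconditional one. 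Given $A_{v,w}=1$, which has probability at most $p$, we bound the conditional probability by the probability for the shifted threshold. Shifting a binomial threshold by one changes the probability by a bounded factor, by a local-limit or ratio estimate of the type used in Lemma~\ref{lem:sbm-binomial-point-mass}.

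\textbf{Collecting the bounds.} These bounds give $\Prob({\cal E}_v\cap{\cal E}_w)\le(1+Cp)\Prob({\cal E}_v)\Prob({\cal E}_w)$ for pairs with $X_w\in I_2\cup\{X_v\}$, and exact factorization otherwise. Summing the off-diagonal terms gives $\E[Y]^2+Cp\sum_{v}\Prob({\cal E}_v)\cdot\#\{w:X_w\in I_2\cup\{X_v\}\}$ in place of the claimed $C\sum_v\beta_v^2$ term. Since $\Prob({\cal E}_v)\le\beta_v$, an extra term of the form $C\sum_v\beta_v^2$ arises naturally if the correlation is organized through the competing events ${\cal E}'_v,{\cal E}'_w$. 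Two such events share the blocks in $I_2$, and the shared maxima produce a covariance bounded by the product $\beta_v\beta_w$ restricted to pairs with a common argmax, summed diagonally. I would therefore bound the cross term by $C\beta_v\beta_w$ only for the $O(1)$-structured dependent pairs and use Cauchy--Schwarz, $\beta_v\beta_w\le(\beta_v^2+\beta_w^2)/2$, to land in $C\sum_v\beta_v^2$.

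\textbf{The consequence.} This is Paley--Zygmund, or Chebyshev: $\Prob(Y=0)\le \mathrm{Var}(Y)/\E[Y]^2$. Under the hypotheses this ratio is $o(1)+O(1/\E Y)+1/\E Y\to0$.

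\textbf{Main obstacle.} The main obstacle is the bookkeeping of dependent pairs with $X_w\in I_2$. There are $\Theta(n)$ of them per $v$, so the $O(p)$ multiplicative error must be shown to sum to something absorbed into $\E[Y]^2$ times $o(1)$ or into $\sum\beta_v^2$. I would first try to show the per-pair excess is $O(p\,\beta_v\beta_w\cdot k/n)$-like using the one-edge influence, which would suffice.
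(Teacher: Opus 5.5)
Your frame is the right one: expand $\E[Y^2\mid X_{[n]}]$ over ordered pairs, note that for $v\neq w$ the two rows share only $A_{vw}$, and finish with Chebyshev. However, the proposal does not prove the stated inequality, for two reasons. First, the dependency structure is misread. Both $v,w$ range over $V_*$, so $X_v,X_w\in I_1$, and $I_1\cap I_2=\emptyset$; the case $X_w\in I_2$ never occurs. Conditionally on $X_{[n]}$, $\mathcal{E}'_v$ is a function of the edges from $v$ to $\bigcup_{j\in I_2}B_j$, and $\mathcal{E}'_w$ of the edges from $w$ to those same vertices. These edge sets are disjoint; the shared block sizes $N_j$ are deterministic given $X_{[n]}$. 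So there is no ``shared maxima'' covariance and no $\Theta(n)$ dependent partners per $v$. Pairs with $X_v\neq X_w$ factor exactly. The only dependent pairs are the $N_{X_v}-1$ same-block partners, and they interact only through $A_{vw}$ in the within-block condition.

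Second, for those same-block pairs your comparison $\Prob(\mathcal{E}_v\cap\mathcal{E}_w\mid X_{[n]})\le(1+Cp)\Prob(\mathcal{E}_v\mid X_{[n]})\Prob(\mathcal{E}_w\mid X_{[n]})$ is true but too lossy. Since $\Prob(\mathcal{E}_v\mid X_{[n]})\asymp\beta_v$ and each $v$ has $\asymp n/k$ partners, it only bounds the excess by something of order $p(n/k)\sum_v\beta_v^2$. That is worse than the claimed $C\sum_v\beta_v^2$ by a factor $\asymp pn/k\ge C_{\mathrm{sbm}}$, which is unbounded. Bounding the excess by $Cp\,\E[Y\mid X_{[n]}]^2$ instead yields only $\Prob(Y=0\mid X_{[n]})\le Cp+o(1)$, not the limit. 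The Cauchy--Schwarz patch over ``$O(1)$ dependent partners'' cannot repair this.

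The missing idea is that the covariance created by a single shared Bernoulli$(p)$ edge is \emph{quadratic} in that edge's influence. Set $p_a:=\Prob(\mathcal{E}_v\mid X_{[n]},A_{vw}=a)$; by symmetry inside the block this also equals $\Prob(\mathcal{E}_w\mid X_{[n]},A_{vw}=a)$. The excess over independence is then exactly $p(1-p)(p_1-p_0)^2$. Write $M^{(v)}_{X_v}=A_{vw}+R_v$ with $R_v\sim\mathrm{Bin}(m-1,p)$, $m:=N^{(v)}_{X_v}$, and $t:=\lfloor pm\rfloor$, and use that $\mathcal{E}'_v$ is independent of $(A_{vw},R_v)$. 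This gives $|p_1-p_0|=\beta_v\Prob(R_v=t)\le C\beta_v/\sqrt{mp(1-p)}$ by the local binomial upper bound. So each ordered same-block pair contributes at most $C\beta_v^2/m$. Summing over the $m$ partners and then over $v$ gives $C\sum_{v\in V_*}\beta_v^2$.

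This per-pair order $\beta_v^2/m$ is sharp, because $t$ lies within $O(1)$ of the mean of $R_v$. So the target $O(p\,\beta_v\beta_w k/n)$ in your last sentence fails for small $p$, but it is not needed. Once the inequality is established, your Chebyshev step for the consequence is correct.
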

\begin{proof}
If \(v,w\in V_*\) and \(X_v\neq X_w\), then \({\cal E}_v\) and \({\cal E}_w\)
depend on disjoint sets of edges. Indeed, \({\cal E}_v\) only uses edges from
\(v\) to \(B_{X_v}\setminus\{v\}\) and to \(\bigcup_{j\in I_2}B_j\), and
similarly for \(w\). Since \(X_v,X_w\in I_1\) and \(I_1\cap I_2=\emptyset\), the
edge \(A_{vw}\) is not used by either event. Hence these events are conditionally
independent given \(X_{[n]}\).

The only nontrivial dependence comes from pairs \(v\neq w\) in the same selected
block. For such a pair, condition on \(A_{vw}\). The remaining edge sets used by
\({\cal E}_v\) and \({\cal E}_w\) are disjoint. Define
\[
    p_1(v,w):=\Prob({\cal E}_v\mid X_{[n]},A_{vw}=1),
    \qquad
    p_0(v,w):=\Prob({\cal E}_v\mid X_{[n]},A_{vw}=0).
\]
By symmetry inside the fixed block, the same two conditional probabilities apply
to \({\cal E}_w\). Therefore the excess over the independent product is exactly
\[
    p(1-p)(p_1(v,w)-p_0(v,w))^2.
\]

Let \(m:=N_{X_v}^{(v)}\) and \(t:=\lfloor pm\rfloor\). Write
\[
    M_{X_v}^{(v)}=A_{vw}+R_v,
    \qquad
    R_v\sim{\rm Bin}(m-1,p).
\]
The event \({\cal E}'_v\) is independent of both \(A_{vw}\) and \(R_v\). Thus
\[
    p_1(v,w)=\Prob(R_v\le t-1)\beta_v,
    \qquad
    p_0(v,w)=\Prob(R_v\le t)\beta_v,
\]
and hence
\[
    |p_1(v,w)-p_0(v,w)|=\beta_v\Prob(R_v=t).
\]
The standard local binomial upper bound gives
\[
    \Prob(R_v=t)\le \frac{C}{\sqrt{mp(1-p)}}.
\]
Thus
\[
    p(1-p)(p_1(v,w)-p_0(v,w))^2\le \frac{C\beta_v^2}{m}.
\]
Since \(m\asymp n/k\) on \({\cal E}_{\cal G}\), summing over ordered same-block
pairs gives an excess bounded by \(C\sum_{v\in V_*}\beta_v^2\). Combining this
with the different-block factorization yields the displayed second-moment bound.
The final conclusion follows from Paley--Zygmund:
\[
    \Prob(Y>0\mid X_{[n]})
    \ge
    \frac{\E[Y\mid X_{[n]}]^2}{\E[Y^2\mid X_{[n]}]}.
\]
\end{proof}

\begin{cor}
    \label{cor:sbm-second-moment-condition}
    Work on \({\cal E}_{\cal G}\), and assume the hypotheses of
    Proposition~\ref{prop:sbm-partition-lower-bound}, with \(C_{\rm sbm}\)
    sufficiently large and \(c_{\rm sbm}\) sufficiently small. Then
    \[
        \sum_{v\in V_*}\beta_v^2
        \le
        C\E[Y\mid X_{[n]}],
    \]
    and
    \[
        \E[Y\mid X_{[n]}]
        \ge
        cn\min\left\{
            1,\,
            \frac{k}{\sqrt{p\,n/k}}
            \exp\left[
                -C\left(\frac{(p-q)^2}{p}\frac{n}{k}+C_{\rm G}\right)
            \right]
        \right\}.
    \]
    In particular, \(\E[Y\mid X_{[n]}]\to\infty\), and therefore
    \[
        \Prob(Y>0\mid X_{[n]})\to1.
    \]
\end{cor}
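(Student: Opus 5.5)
The plan is to compute $\E[Y\mid X_{[n]}]$ by factoring each bad-row probability into a within-block part and a competing-block part. Then $\beta_v$ is bounded from below using the local binomial point mass, Lemma~\ref{lem:sbm-binomial-point-mass}, and the result is fed into Lemma~\ref{lem:sbm-second-moment-reduction}.

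\emph{Step 1 (factorization).} Fix $X_{[n]}$ on ${\cal E}_{\cal G}$ and $v\in V_*$, and write $y=X_v\in I_1$. The event $\{M_y^{(v)}\le pN_y^{(v)}\}$ uses only edges from $v$ into $B_y$. The event ${\cal E}'_v$ uses only edges from $v$ into $\bigcup_{j\in I_2}B_j$. Since $I_1\cap I_2=\emptyset$, these edge sets are disjoint and the two events are conditionally independent, so
\[
\Prob({\cal E}_v\mid X_{[n]})=\Prob\bigl(\mathrm{Bin}(N_y^{(v)},p)\le pN_y^{(v)}\bigr)\,\beta_v .
\]
On ${\cal E}_{\cal G}$ we have $N_y^{(v)}\asymp n/k$, so the mean $pN_y^{(v)}$ is at least a large constant because $pn/k\ge C_{\rm sbm}$. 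A standard fact (a binomial lies at or below its mean with probability bounded below, e.g.\ via Berry--Esseen or the median theorem) then gives a universal $c>0$ with $\Prob({\cal E}_v\mid X_{[n]})\ge c\beta_v$. Hence $\E[Y\mid X_{[n]}]\ge c\sum_{v\in V_*}\beta_v$. Since $\beta_v\le1$, this already yields the first claim:
\[
\sum_v\beta_v^2\le\sum_v\beta_v\le C\,\E[Y\mid X_{[n]}].
\]

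\emph{Step 2 (lower bound on $\beta_v$).} For $j\in I_2$ we have $N_j^{(v)}=N_j$. The events $\{M_j^{(v)}-\alpha N_j>pN_y^{(v)}-\alpha N_y^{(v)}\}$ use disjoint edge sets, so they are independent over $j$. The threshold equals $pN_y^{(v)}+\alpha(N_j-N_y^{(v)})$. Both $N_j$ and $N_y^{(v)}=N_y-1$ lie within $2\sqrt{C_{\rm G}n/k}$ of $n/k$, and $\alpha\le p$ by Lemma~\ref{lem:sbm-alpha-bound}. Therefore the integer
\[
t=\bigl\lceil pN_y^{(v)}+4p\sqrt{C_{\rm G}n/k}\bigr\rceil+1
\]
strictly exceeds the threshold and lies within $2$ of the centering required in Lemma~\ref{lem:sbm-binomial-point-mass}. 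That lemma applies once $c_{\rm sbm}\le c'_{\rm pm}$ and $C_{\rm sbm}$ is large enough relative to $C_{\rm pm}C_{\rm G}$. It gives
\[
\Prob\bigl(M_j^{(v)}=t\bigm| X_{[n]}\bigr)\ge\pi:=\frac{c_{\rm pm}}{\sqrt{pn/k}}\exp\Bigl[-C_{\rm pm}\Bigl(\frac{(p-q)^2}{p}\frac nk+C_{\rm G}\Bigr)\Bigr].
\]
By independence over the $k/4$ indices in $I_2$,
\[
\beta_v\ge1-(1-\pi)^{k/4}\ge c\min\{1,k\pi\}.
\]
Summing over $|V_*|\ge cn$ and using Step 1 gives the displayed lower bound on $\E[Y\mid X_{[n]}]$.

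\emph{Step 3 (divergence and conclusion).} Insert $\frac{(p-q)^2}{p}\frac nk\le c_{\rm sbm}\log n$ and $\sqrt{pn/k}\le\sqrt n$. This gives
\[
nk\pi\ge c\,k\,n^{1/2-C_{\rm pm}c_{\rm sbm}},
\]
which tends to $\infty$ once $c_{\rm sbm}<1/(4C_{\rm pm})$. The other branch of the minimum contributes $cn\to\infty$. Thus $\E[Y\mid X_{[n]}]\to\infty$, and $\sum_v\beta_v^2=O(\E[Y\mid X_{[n]}])$ by Step 1. Lemma~\ref{lem:sbm-second-moment-reduction} then yields $\Prob(Y>0\mid X_{[n]})\to1$.

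\emph{Main obstacle.} The delicate step is Step 2: checking that the hypotheses of Lemma~\ref{lem:sbm-binomial-point-mass} hold uniformly. This requires that the shift $\alpha(N_j-N_y^{(v)})$ is dominated by the $4p\sqrt{C_{\rm G}n/k}$ margin, including the off-by-one from removing $v$. The remaining steps are bookkeeping of constants.
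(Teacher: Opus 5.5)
Your proposal is correct and follows the paper's proof essentially step for step. It uses the same factorization of $\Prob({\cal E}_v\mid X_{[n]})$ over disjoint edge sets, with the binomial median bound, to get $\sum_v\beta_v^2\le\sum_v\beta_v\le C\,\E[Y\mid X_{[n]}]$; the same integer $t$ within $2$ of $pN_{X_v}^{(v)}+4p\sqrt{C_{\rm G}n/k}$ fed into Lemma~\ref{lem:sbm-binomial-point-mass}, together with independence over $I_2$; and then Lemma~\ref{lem:sbm-second-moment-reduction}. Your Step 3 only makes explicit the divergence $nk\pi\gtrsim k\,n^{1/2-C_{\rm pm}c_{\rm sbm}}$, which the paper leaves as ``for $c_{\rm sbm}$ sufficiently small.''
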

\begin{proof}
The two parts of \({\cal E}_v\) depend on disjoint sets of edges and are
independent conditional on \(X_{[n]}\). Since
\(M_{X_v}^{(v)}\sim{\rm Bin}(N_{X_v}^{(v)},p)\), the binomial median bound gives
\[
    \Prob(M_{X_v}^{(v)}\le pN_{X_v}^{(v)}\mid X_{[n]})\ge c.
\]
Therefore
\[
    \Prob({\cal E}_v\mid X_{[n]})\ge c\beta_v.
\]
Since \(0\le\beta_v\le1\),
\[
    \sum_{v\in V_*}\beta_v^2
    \le
    \sum_{v\in V_*}\beta_v
    \le
    C\E[Y\mid X_{[n]}].
\]

It remains to lower bound \(\E[Y\mid X_{[n]}]\). Fix \(v\in V_*\) and
\(j\in I_2\). Since \(X_v,j\in{\cal G}\), Lemma~\ref{lem:sbm-binomial-point-mass}
applies after adjusting universal constants. By Lemma~\ref{lem:sbm-alpha-bound},
\(\alpha\le p\). On \({\cal E}_{\cal G}\),
\[
    N_j-N_{X_v}^{(v)}
    \le
    4\sqrt{C_{\rm G}n/k},
\]
after increasing \(C_{\rm G}\) and using \(p(n/k)\ge C_{\rm sbm}\). Hence
\[
    pN_{X_v}^{(v)}+\alpha\bigl(N_j-N_{X_v}^{(v)}\bigr)
    \le
    pN_{X_v}^{(v)}+4p\sqrt{C_{\rm G}n/k}.
\]
Choose an integer \(t_{v,j}\) such that
\[
    t_{v,j}
    >
    pN_{X_v}^{(v)}+\alpha\bigl(N_j-N_{X_v}^{(v)}\bigr),
    \qquad
    \left|t_{v,j}-\left(pN_{X_v}^{(v)}+4p\sqrt{C_{\rm G}n/k}\right)\right|\le2.
\]
Then \(\{M_j^{(v)}=t_{v,j}\}\subseteq{\cal E}'_v\). Lemma
\ref{lem:sbm-binomial-point-mass} gives
\[
    \Prob(M_j^{(v)}=t_{v,j}\mid X_{[n]})\ge \varrho,
\]
where
\[
    \varrho
    :=
    \frac{c}{\sqrt{p\,n/k}}
    \exp\left[
        -C\left(\frac{(p-q)^2}{p}\frac{n}{k}+C_{\rm G}\right)
    \right].
\]
The events \(\{M_j^{(v)}=t_{v,j}\}\), \(j\in I_2\), depend on disjoint edge sets,
so
\[
    \beta_v
    \ge
    1-(1-\varrho)^{k/4}
    \ge
    c\min\{1,k\varrho\}.
\]
Together with \(|V_*|\ge cn\) and
\(\Prob({\cal E}_v\mid X_{[n]})\ge c\beta_v\), this proves the stated lower
bound on \(\E[Y\mid X_{[n]}]\).

Finally, the assumption
\[
    \frac{(p-q)^2}{p}\frac{n}{k}\le c_{\rm sbm}\log n
\]
implies, for \(c_{\rm sbm}\) sufficiently small, that the displayed lower bound
for \(\E[Y\mid X_{[n]}]\) tends to infinity. Lemma
\ref{lem:sbm-second-moment-reduction} then gives
\(\Prob(Y>0\mid X_{[n]})\to1\).
\end{proof}

\begin{proof}[Proof of Proposition~\ref{prop:sbm-partition-lower-bound}]
By Lemma~\ref{lem:sbm-good-blocks}, the event \({\cal E}_{\cal G}\) has
probability at least \(1/2\). On this event, fix \(I_1,I_2\) and \(V_*\) as
above. By Corollary~\ref{cor:sbm-second-moment-condition}, conditional on any
realization of \(X_{[n]}\) satisfying \({\cal E}_{\cal G}\),
\[
    \Prob(Y>0\mid X_{[n]})\to1.
\]
On \(\{Y>0\}\), there exists \(v\in V_*\) such that \({\cal E}_v\) occurs. By
Lemma~\ref{lem:sbm-bad-row-better-partition}, the true partition is not a global
partition MAP solution. Therefore
\[
    \Prob\left(\widehat\Pi^{\rm MAP}({\bf A})\neq\Pi(X_{[n]})\right)
    \ge
    \Prob({\cal E}_{\cal G})
    \inf_{x_{[n]}\text{ satisfying }{\cal E}_{\cal G}}
    \Prob(Y>0\mid X_{[n]}=x_{[n]})
    \ge c
\]
for all sufficiently large \(n\). Since the global partition MAP is Bayes-optimal
for partition recovery, the same lower bound holds for every partition estimator.
This proves the proposition, after setting \(c_{\rm fail}=c\).
\end{proof}
\section{Proof of the coarse fuzzy-window oracle lemma}
\label{app:coarse-window-oracle-proof}

\begin{proof}[Proof of Lemma~\ref{lem:coarse-output-generates-window-oracle}]
Set
\[
m_0:=\frac12\phi(c_0r)n,
\qquad
\varepsilon_{\rm link}:=
\sqrt{\frac{\Lambda\log n}{\sp m_0}}.
\]
The assumption
\[
\sp n\phi(c_0r)r_\rmp^2\ge C_{\ora}\Lambda\log n
\]
implies, after increasing \(C_{\ora}\), that
\[
\varepsilon_{\rm link}\le c_{\ora}r_\rmp.
\]
By Lemma~\ref{lem:fixed-v-navigation}, a union bound over
\(|V_0||W|\le n^2\) gives, with conditional probability
\(1-n^{-\Omega(\Lambda)}\),
\[
\left|\cn{A_z}{v}-\acn{A_z}{v}\right|
\le
\varepsilon_{\rm link}
\qquad
(z\in V_0,\ v\in W).
\]
We work on this event.

We first record the deterministic inclusions
\[
\left\{
v\in W:
\D{X_v}{X_z}\le \frac{\ell_\rmp}{64L_\rmp}r_\rmp
\right\}
\subseteq
W_z
\subseteq
\left\{
v\in W:
\D{X_v}{X_z}<\frac{1}{16}r_\rmp
\right\}
\qquad (z\in V_0).
\]
Fix \(z\in V_0\). Since \(A_z\) is a \((C_0r,0)\)-cluster with center \(X_z\),
every \(u\in A_z\) satisfies \(\D{X_u}{X_z}<C_0r\).
If
\[
\D{X_v}{X_z}\le \frac{\ell_\rmp}{64L_\rmp}r_\rmp,
\]
then, after taking \(c_{\ora}\) small enough, all distances
\(\D{X_u}{X_v}\) lie in the local bi-Lipschitz window and
\[
\acn{A_z}{v}
\ge
\rmp(0)-L_\rmp C_0r-\frac{\ell_\rmp}{64}r_\rmp.
\]
Using \(|\widehat{\rmp}(0)-\rmp(0)|\le C_0r\) and the concentration event,
\[
\cn{A_z}{v}
\ge
\widehat{\rmp}(0)
-(L_\rmp C_0+C_0)r-\varepsilon_{\rm link}
-\frac{\ell_\rmp}{64}r_\rmp.
\]
The choices of \(c_{\ora}\) and \(C_{\ora}\) ensure that the last three error
terms are at most \(\ell_\rmp r_\rmp/32\), so \(v\in W_z\).

Conversely, if
\[
\D{X_v}{X_z}\ge \frac{1}{16}r_\rmp,
\]
then, again taking \(c_{\ora}\) small enough,
\[
0<
\frac{1}{16}r_\rmp-C_0r
\le
r_\rmp.
\]
For every \(u\in A_z\), monotonicity and the lower local bi-Lipschitz bound give
\[
\rmp(\D{X_u}{X_v})
\le
\rmp(0)
-
\ell_\rmp\left(\frac{1}{16}r_\rmp-C_0r\right).
\]
Averaging and using the estimator and concentration errors,
\[
\cn{A_z}{v}
\le
\widehat{\rmp}(0)
-\frac{\ell_\rmp}{16}r_\rmp
+(\ell_\rmp C_0+C_0)r+\varepsilon_{\rm link}.
\]
With \(c_{\ora}\) small and \(C_{\ora}\) large, the final two error terms are at
most \(\ell_\rmp r_\rmp/64\), and hence
\[
\cn{A_z}{v}
<
\widehat{\rmp}(0)-\frac{\ell_\rmp}{32}r_\rmp.
\]
Thus \(v\notin W_z\), proving the displayed inclusions.

We now verify the fuzzy-oracle property. If \(\mathcal O_{\win}(v,w)=1\), then
there is \(z\in V_0\) with \(v,w\in W_z\). The outer inclusion gives
\[
\D{X_v}{X_z}<\frac1{16}r_\rmp,
\qquad
\D{X_w}{X_z}<\frac1{16}r_\rmp,
\]
so \(\D{X_v}{X_w}<r_\rmp/8=\lambda_{\win}r_\rmp\). Therefore
\[
\D{X_v}{X_w}>\lambda_{\win}r_\rmp
\quad\Longrightarrow\quad
\mathcal O_{\win}(v,w)=0.
\]

Conversely, suppose
\[
\D{X_v}{X_w}\le
\alpha_{\win}r_\rmp
=
\frac{\ell_\rmp}{128L_\rmp}r_\rmp.
\]
Choose \(z\in V_0\) with \(\D{X_v}{X_z}\le C_0r\), using the \(C_0r\)-net
assumption. For \(c_{\ora}\) small enough,
\[
C_0r\le \frac{\ell_\rmp}{128L_\rmp}r_\rmp,
\]
and hence
\[
\D{X_v}{X_z}\le \frac{\ell_\rmp}{64L_\rmp}r_\rmp,
\qquad
\D{X_w}{X_z}\le \frac{\ell_\rmp}{64L_\rmp}r_\rmp.
\]
The inner inclusion gives \(v,w\in W_z\), so
\(\mathcal O_{\win}(v,w)=1\). This proves that \(\mathcal O_{\win}\) is an
\(\bigl(\ell_\rmp/(128L_\rmp),1/8\bigr)\)-\fuzzywindoworacle{}.
\end{proof}

\section{From cluster extraction to distance estimation} \label{sec:cluster-to-distance}
We now explain how the clusters from Theorem~\ref{theor:main} yield distance
estimates. The arguments in this section are simply variants from what appeared in \cite{HuangJiradilokMossel2024GeometryRGG,HuangJiradilokMossel2025RiemannianMetrics,FeffermanMartyRen2025NoisyDistances,HuangJiradilokMossel2026BeyondVolumetric}. Here we give a self-contained presentation, with the necessary modifications to fit the current setting.

If \(U_v\) and \(U_w\) are exact clusters
around \(X_v\) and \(X_w\), then every pair \(u\in U_v\), \(u'\in U_w\) has
\[
\D{X_u}{X_{u'}}
=
\D{X_v}{X_w}+O(r).
\]
Thus the \pairaverage{} \(\pav{U_v,U_w}\) estimates
\(\rmp(\D{X_v}{X_w})\), up to the cluster radius and the fluctuation of the
edge average.  If \(\rmp\) is known and invertible on the relevant range, this
gives distance estimates.

We use this in conjunction with the uniform pair-average event
\[
\Enn{\bfV}{\lambda_{\rm pair}}{m_{\rm pair}},
\]
where
\[
m_{\rm pair}:=\left\lceil \frac{1}{13}\phi(r)n\right\rceil,
\qquad
\lambda_{\rm pair}:=r.
\]
Under the scale condition of Theorem~\ref{theor:main}, Lemma~\ref{lem:uniform-pair-average} imply that
\[
\Enn{\bfV}{\lambda_{\rm pair}}{m_{\rm pair}}
\]
holds with probability \(1-o(1)\), after increasing the constant in
Theorem~\ref{theor:main}.  Thus, with high probability, both the cluster
extraction guarantee and this pair-average event hold simultaneously.

\begin{lemma}[Pair averages from exact clusters]
\label{lem:pair-averages-from-exact-clusters}
Let \(W\subseteq\bfV\), and suppose that for every \(v\in W\) we are given
\(U_v\subseteq W\) satisfying
\[
|U_v|\ge m,
\qquad
X_u\in B(X_v,r_{\rm cl})
\quad (u\in U_v).
\]
Assume the event
\[
\Enn{W}{\lambda}{m}
\]
holds.  Set
\[
\varepsilon_{\rm pair}:=2L_\rmp r_{\rm cl}+\lambda.
\]
Then, for every \(v,w\in W\) such that
\[
\D{X_v}{X_w}+2r_{\rm cl}\le r_\rmp,
\]
we have
\[
\left|
\pav{U_v,U_w}
-
\rmp(\D{X_v}{X_w})
\right|
\le
\varepsilon_{\rm pair}.
\]
If \(\rmp\) is globally \(L_\rmp\)-Lipschitz on
\([0,\operatorname{diam}(M)]\), then the same bound holds for every
\(v,w\in W\).
\end{lemma}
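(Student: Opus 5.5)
The argument splits the error with the triangle inequality:
\[
\left|\pav{U_v,U_w}-\rmp(\D{X_v}{X_w})\right|
\le
\left|\pav{U_v,U_w}-\apav{U_v,U_w}\right|
+
\left|\apav{U_v,U_w}-\rmp(\D{X_v}{X_w})\right|.
\]
The first term is at most \(\lambda\) directly by the event \(\Enn{W}{\lambda}{m}\), since \(U_v,U_w\subseteq W\) both have size at least \(m\). The overlap between \(U_v\) and \(U_w\) is allowed in that event, and \(\mathcal D(U_v,U_w)\) is nonempty because \(m\ge1\) and, if \(U_v=U_w\), then \(m\ge2\) is implicit in \(\Enn{}{}{}\) being nontrivial. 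If necessary, I would note that the pair averages are defined only when \(|\mathcal D|>0\) and treat the degenerate case separately.

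For the second term, I use a pairwise bound rather than Lemma~\ref{lem:pair-average-link-window}, since that lemma gives a factor \(4\) rather than \(2\). Fix \((u,u')\in\mathcal D(U_v,U_w)\). Since \(X_u\in B(X_v,r_{\rm cl})\) and \(X_{u'}\in B(X_w,r_{\rm cl})\), the triangle inequality gives
\[
\bigl|\D{X_u}{X_{u'}}-\D{X_v}{X_w}\bigr|<2r_{\rm cl}.
\]
Under the hypothesis \(\D{X_v}{X_w}+2r_{\rm cl}\le r_\rmp\), both \(\D{X_u}{X_{u'}}\) and \(\D{X_v}{X_w}\) lie in \([0,r_\rmp]\). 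The upper local Lipschitz bound of Assumption~\ref{assump:link-function} then yields
\[
\bigl|\rmp(\D{X_u}{X_{u'}})-\rmp(\D{X_v}{X_w})\bigr|\le 2L_\rmp r_{\rm cl}.
\]
Averaging over \(\mathcal D(U_v,U_w)\) bounds the second term by \(2L_\rmp r_{\rm cl}\). Summing the two terms gives \(\varepsilon_{\rm pair}\).

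In the globally Lipschitz case the same pointwise estimate holds without any window condition, because all distances lie in \([0,\operatorname{diam}(M)]\). There is no real obstacle here. The only care needed is checking that \(\D{X_u}{X_{u'}}\le\D{X_v}{X_w}+2r_{\rm cl}\le r_\rmp\), which is the whole reason for the hypothesis, and noting that the pair average uses ordered off-diagonal pairs, which does not affect the averaging bound.
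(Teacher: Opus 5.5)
Your proof is correct and is essentially the paper's proof. You bound $\bigl|\pav{U_v,U_w}-\apav{U_v,U_w}\bigr|\le\lambda$ via the uniform pair-average event, and you bound each summand pointwise by $2L_\rmp r_{\rm cl}$ using the triangle inequality and the Lipschitz bound on $[0,r_\rmp]$ (or on $[0,\operatorname{diam}(M)]$ in the global case), then average; your side remark on the nonemptiness of $\mathcal D(U_v,U_w)$ is harmless, since it holds whenever $m\ge2$, as in every application, and the paper leaves it implicit.
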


\begin{proof}
Fix \(v,w\in W\), and set
\[
d_{vw}:=\D{X_v}{X_w}.
\]
For every \((u,u')\in\mathcal D(U_v,U_w)\), the triangle inequality gives
\[
\left|
\D{X_u}{X_{u'}}-d_{vw}
\right|
\le
\D{X_u}{X_v}+\D{X_{u'}}{X_w}
<
2r_{\rm cl}.
\]
If \(d_{vw}+2r_{\rm cl}\le r_\rmp\), then all these distances lie in the local
bi-Lipschitz window. Hence the local Lipschitz bound gives
\[
\left|
\rmp(\D{X_u}{X_{u'}})-\rmp(d_{vw})
\right|
\le
2L_\rmp r_{\rm cl}.
\]
Averaging over \(\mathcal D(U_v,U_w)\), we obtain
\[
\left|
\apav{U_v,U_w}
-
\rmp(d_{vw})
\right|
\le
2L_\rmp r_{\rm cl}.
\]
On \(\Enn{W}{\lambda}{m}\),
\[
\left|
\pav{U_v,U_w}-\apav{U_v,U_w}
\right|
\le
\lambda.
\]
Combining the two inequalities proves the claim.  In the globally Lipschitz
case, the same argument applies without the restriction
\(d_{vw}+2r_{\rm cl}\le r_\rmp\).
\end{proof}

\begin{cor}[Distance recovery with known global link]
\label{cor:distance-recovery-known-global-link}
Assume the hypotheses of Theorem~\ref{theor:main}.  Suppose in addition that
\(\rmp\) is known and bi-Lipschitz on
\([0,\operatorname{diam}(M)]\).  Then, with probability \(1-o(1)\), one can
construct estimates \(\widehat d(v,w)\) for all \(v,w\in\bfV\) such that
\[
\left|
\widehat d(v,w)-\D{X_v}{X_w}
\right|
\le
C r
\qquad (v,w\in\bfV),
\]
where \(C\) depends only on the model parameters.
\end{cor}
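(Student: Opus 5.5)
Our plan is to combine the clusters from Theorem~\ref{theor:main} with Lemma~\ref{lem:pair-averages-from-exact-clusters} in its globally Lipschitz form, and then invert the known link. We run the procedure of Theorem~\ref{theor:main} on \(\bfV\). With probability \(1-o(1)\) it produces, for every \(v\in\bfV\), a set \(U_v\ni v\) with \(|U_v|\ge \frac1{13}\phi(r)n\) and \(X_{U_v}\subseteq B(X_v,C'_{\tref{theor:main}}r)\). Before intersecting events, we check that \(\Enn{\bfV}{r}{m_{\rm pair}}\) also holds with probability \(1-o(1)\). This follows from Lemma~\ref{lem:uniform-pair-average} with \(\varphi=\phi(r)/13\) and \(\lambda=r\), via the sufficient condition of Remark~\ref{rem:uniform-pair-average-condition}. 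That condition reads \(\sp n\phi(r)r^2/(13\Lambda\log n)\ge C_{\tref{lem:uniform-pair-average}}\), and it is implied by the scale hypothesis of Theorem~\ref{theor:main} once \(C_{\tref{theor:main}}\) is increased. A union bound then puts us on the intersection of the two events.

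On this event we apply Lemma~\ref{lem:pair-averages-from-exact-clusters} with \(W=\bfV\), \(m=m_{\rm pair}\), \(r_{\rm cl}=C'_{\tref{theor:main}}r\) and \(\lambda=r\). Let \(L'\) and \(\ell'\) denote the bi-Lipschitz constants of \(\rmp\) on \([0,\operatorname{diam}(M)]\); these are part of the model parameters. Since \(\rmp\) is globally \(L'\)-Lipschitz, the lemma gives, for all \(v,w\),
\[
|\pav{U_v,U_w}-\rmp(\D{X_v}{X_w})|\le \varepsilon:=2L'C'_{\tref{theor:main}}r+r .
\]
This step has two details to handle. First, the lemma is stated with \(L_\rmp\), but its proof only uses a Lipschitz bound on the relevant range, so \(L'\) can be substituted. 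Second, when \(v=w\), or when \(U_v\) and \(U_w\) overlap, \(\mathcal D(U_v,U_w)\) still has positive size because \(m_{\rm pair}\ge2\). The diagonal convention of Definition~\ref{def:pair-averages} is already built into \(\Enn{\cdot}{\cdot}{\cdot}\), so overlapping sets need no separate treatment.

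It remains to define the estimator and bound its error. Let \(T:=\operatorname{diam}(M)\). Since \(\rmp\) is strictly decreasing and continuous on \([0,T]\), it is a bijection onto \([\rmp(T),\rmp(0)]\). We set \(\widehat d(v,w):=\rmp^{-1}(\Pi(\pav{U_v,U_w}))\), where \(\Pi\) is the projection onto \([\rmp(T),\rmp(0)]\); if \(T\) is not known, we take \(\widehat d(v,w):=\sup\{t\ge0:\rmp(t)\ge \pav{U_v,U_w}\}\) instead. Projection onto an interval is \(1\)-Lipschitz and fixes \(\rmp(\D{X_v}{X_w})\), so \(|\Pi(\pav{U_v,U_w})-\rmp(\D{X_v}{X_w})|\le\varepsilon\). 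The inverse \(\rmp^{-1}\) is \((1/\ell')\)-Lipschitz, which gives \(|\widehat d(v,w)-\D{X_v}{X_w}|\le \varepsilon/\ell'=Cr\).

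The main obstacle is the boundary behaviour of the inversion, namely the cases where \(\pav{U_v,U_w}\) falls outside the range of \(\rmp\) or \(\rmp\) is only known on \([0,\infty)\). The projection or sup definition above handles this. No new probabilistic input is needed beyond the two events already established.
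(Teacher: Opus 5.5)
Your proposal is correct and is essentially the paper's proof: intersect the extraction event of Theorem~\ref{theor:main} with \(\Enn{\bfV}{r}{m_{\rm pair}}\), apply Lemma~\ref{lem:pair-averages-from-exact-clusters} in its globally Lipschitz form, then project \(\pav{U_v,U_w}\) onto \(\rmp([0,\operatorname{diam}(M)])\) and invert using the \(1/\ell\)-Lipschitz inverse. The one thing to discard is your fallback \(\sup\{t\ge0:\rmp(t)\ge\pav{U_v,U_w}\}\): it is undefined when the average exceeds \(\rmp(0)\), and it can be arbitrarily large (even \(+\infty\)) when the average falls below \(\rmp(\operatorname{diam}(M))\) and \(\rmp\) flattens beyond \(\operatorname{diam}(M)\). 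If \(\operatorname{diam}(M)\) is not given, instead project onto \(\rmp([0,R])\) for any known \(R\ge\operatorname{diam}(M)\) on which \(\rmp\) is bi-Lipschitz.
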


\begin{proof}
Work on the event where Theorem~\ref{theor:main} holds and where
\(\Enn{\bfV}{r}{m_{\rm pair}}\) holds.  Then every output cluster satisfies
\[
U_v\text{ is a }(C_{\tref{theor:main}}'r,0)\text{-cluster with center }X_v,
\]
and
\[
|U_v|\ge m_{\rm pair}.
\]
Applying Lemma~\ref{lem:pair-averages-from-exact-clusters} with
\[
r_{\rm cl}=C_{\tref{theor:main}}'r,
\qquad
\lambda=r,
\]
gives
\[
\left|
\pav{U_v,U_w}
-
\rmp(\D{X_v}{X_w})
\right|
\le
C r
\qquad (v,w\in\bfV).
\]
Since \(\rmp\) is known and globally bi-Lipschitz, its inverse is
\(1/\ell_\rmp\)-Lipschitz on its range.  Define \(\widehat d(v,w)\) by
projecting \(\pav{U_v,U_w}\) onto \(\rmp([0,\operatorname{diam}(M)])\) and
then applying \(\rmp^{-1}\).  Projection can only decrease the error to the
range, and inversion gives the stated bound.
\end{proof}

\begin{cor}[Local distance recovery with known local link]
\label{cor:local-distance-recovery-known-link}
Assume the hypotheses of Theorem~\ref{theor:main}.  Suppose that \(\rmp\) is
known and bi-Lipschitz on \([0,r_\rmp]\).  Then, with probability \(1-o(1)\),
one can construct a set of certified local pairs \(\mathcal P_{\loc}\subseteq
\bfV^2\) and estimates \(\widehat d_{\loc}(v,w)\) for
\((v,w)\in\mathcal P_{\loc}\) such that
\[
\left|
\widehat d_{\loc}(v,w)-\D{X_v}{X_w}
\right|
\le
C r
\qquad ((v,w)\in\mathcal P_{\loc}),
\]
and
\[
\D{X_v}{X_w}\le \frac12 r_\rmp
\quad\Longrightarrow\quad
(v,w)\in\mathcal P_{\loc}.
\]
Moreover, every \((v,w)\in\mathcal P_{\loc}\) satisfies
\[
\D{X_v}{X_w}\le \frac12 r_\rmp+C r.
\]
\end{cor}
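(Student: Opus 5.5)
The plan mirrors Corollary~\ref{cor:distance-recovery-known-global-link}, but now every estimate must be certified as lying inside the local window. Work on the intersection of the success event of Theorem~\ref{theor:main} and the uniform pair-average event \(\Enn{\bfV}{r}{m_{\rm pair}}\). Both hold with probability \(1-o(1)\), as discussed before Lemma~\ref{lem:pair-averages-from-exact-clusters}. On this event every \(U_v\) is a \((C'_{\tref{theor:main}}r,0)\)-cluster with center \(X_v\) and \(|U_v|\ge m_{\rm pair}\). Set \(r_{\rm cl}:=C'_{\tref{theor:main}}r\) and \(\varepsilon:=2L_\rmp r_{\rm cl}+r=O(r)\). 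Lemma~\ref{lem:pair-averages-from-exact-clusters} then gives
\[
\bigl|\pav{U_v,U_w}-\rmp(\D{X_v}{X_w})\bigr|\le\varepsilon
\]
whenever \(\D{X_v}{X_w}+2r_{\rm cl}\le r_\rmp\). Since \(\sqrt r\le c\,r_\rmp\), we have \(2r_{\rm cl}\le r_\rmp/4\) for small \(c\), so this covers all pairs with \(\D{X_v}{X_w}\le\tfrac34 r_\rmp\).

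\textbf{Certification rule.} Define
\[
\mathcal P_{\loc}:=\bigl\{(v,w):\pav{U_v,U_w}\ge\rmp(\tfrac12 r_\rmp)-\varepsilon\bigr\}.
\]
For pairs in \(\mathcal P_{\loc}\), set \(\widehat d_{\loc}(v,w)\) by projecting \(\pav{U_v,U_w}\) onto \(\rmp([0,r_\rmp])\) and applying the inverse of \(\rmp|_{[0,r_\rmp]}\). This inverse is \(1/\ell_\rmp\)-Lipschitz and is available because \(\rmp\) is known.

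The \emph{completeness} claim is immediate. If \(\D{X_v}{X_w}\le\tfrac12 r_\rmp\), the lemma and monotonicity give
\[
\pav{U_v,U_w}\ge\rmp(\D{X_v}{X_w})-\varepsilon\ge\rmp(\tfrac12 r_\rmp)-\varepsilon .
\]

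\textbf{Soundness.} This is the main obstacle, because for a far pair the lemma gives no control on \(\pav{U_v,U_w}\). I would not rely on it there. Instead, bound \(\apav{U_v,U_w}\) directly using monotonicity, as in Lemma~\ref{lem:far-representatives-pair-average-drop}. Suppose \(\D{X_v}{X_w}>\tfrac12 r_\rmp+C r\). Then every cross pair has latent distance greater than
\[
\tfrac12 r_\rmp+Cr-2r_{\rm cl}\;\ge\;\tfrac12 r_\rmp+\tfrac C2 r .
\]
Since \(\rmp\) is non-increasing and bi-Lipschitz on \([0,r_\rmp]\), and \(\tfrac12 r_\rmp+\tfrac C2 r\le r_\rmp\), each cross term is at most
\[
\rmp(\tfrac12 r_\rmp)-\ell_\rmp\tfrac C2 r .
\]
Hence \(\pav{U_v,U_w}\le\rmp(\tfrac12 r_\rmp)-\ell_\rmp\tfrac C2 r+r\), which is below the threshold once \(C\) is large compared with \(\varepsilon/(\ell_\rmp r)\). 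This proves \(\D{X_v}{X_w}\le\tfrac12 r_\rmp+Cr\) for every certified pair.

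\textbf{Accuracy.} Every certified pair has \(\D{X_v}{X_w}\le\tfrac12 r_\rmp+Cr\le\tfrac34 r_\rmp\), so it lies in the regime where Lemma~\ref{lem:pair-averages-from-exact-clusters} applies. Projection onto \(\rmp([0,r_\rmp])\) does not increase the distance to \(\rmp(\D{X_v}{X_w})\), which lies in that range. Inverting \(\rmp\) then gives
\[
\bigl|\widehat d_{\loc}(v,w)-\D{X_v}{X_w}\bigr|\le\varepsilon/\ell_\rmp=O(r),
\]
which is the required error bound.
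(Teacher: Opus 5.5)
Your proposal is correct and follows essentially the same argument as the paper. It works on the same event, defines $\mathcal P_{\loc}$ by the threshold $\rmp(\tfrac12 r_\rmp)-\varepsilon$, gets soundness from monotonicity and the lower bi-Lipschitz bound applied to all cross pairs, gets completeness from Lemma~\ref{lem:pair-averages-from-exact-clusters}, and gets accuracy by projection followed by inversion; the only cosmetic difference is that you make the $2r_{\rm cl}$ offset explicit in the soundness step, whereas the paper states the certified radius as $\tfrac12 r_\rmp+2\varepsilon_{\rm pair}/\ell_\rmp$.
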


\begin{proof}
Again work on the intersection of the extraction event and
\(\Enn{\bfV}{r}{m_{\rm pair}}\).  Set
\[
r_{\rm cl}:=C'_{\tref{theor:main}}r,
\qquad
\varepsilon_{\rm pair}:=2L_\rmp r_{\rm cl}+r.
\]
By decreasing \(c_{\tref{theor:main}}\), we may assume
\[
2r_{\rm cl}+\frac{2\varepsilon_{\rm pair}}{\ell_\rmp}
\le
\frac14 r_\rmp.
\]
Define
\[
\mathcal P_{\loc}
:=
\left\{
(v,w):v\neq w,\
\pav{U_v,U_w}
\ge
\rmp(r_\rmp/2)-\varepsilon_{\rm pair}
\right\}.
\]
For \((v,w)\in\mathcal P_{\loc}\), first note that
\[
\D{X_v}{X_w}
\le
\frac12 r_\rmp+\frac{2\varepsilon_{\rm pair}}{\ell_\rmp}.
\]
Indeed, if the distance were larger, then every pair
\((u,u')\in\mathcal D(U_v,U_w)\) would have distance at least slightly above
\(r_\rmp/2\), and the lower bi-Lipschitz bound together with
\(\Enn{\bfV}{r}{m_{\rm pair}}\) would force
\[
\pav{U_v,U_w}<\rmp(r_\rmp/2)-\varepsilon_{\rm pair},
\]
a contradiction.  Thus
\[
\D{X_v}{X_w}+2r_{\rm cl}\le r_\rmp,
\]
and Lemma~\ref{lem:pair-averages-from-exact-clusters} applies:
\[
\left|
\pav{U_v,U_w}
-
\rmp(\D{X_v}{X_w})
\right|
\le
\varepsilon_{\rm pair}.
\]
Project \(\pav{U_v,U_w}\) onto \(\rmp([0,r_\rmp])\) and invert \(\rmp\) on
\([0,r_\rmp]\).  Since the inverse is \(1/\ell_\rmp\)-Lipschitz, the distance
error is at most
\[
\frac{\varepsilon_{\rm pair}}{\ell_\rmp}
\le Cr.
\]

Finally, if
\[
\D{X_v}{X_w}\le r_\rmp/2,
\]
then the same pair-average estimate gives
\[
\pav{U_v,U_w}
\ge
\rmp(\D{X_v}{X_w})-\varepsilon_{\rm pair}
\ge
\rmp(r_\rmp/2)-\varepsilon_{\rm pair},
\]
so \((v,w)\in\mathcal P_{\loc}\).  This proves the claim.
\end{proof}

\begin{defi}[Chain property at scale \(\rho_0\)]
\label{def:chain-property-rho0}
Let \((\mathcal X,\rho)\) be a metric space, let
\[
0<\rho_0\le{\rm diam}(\mathcal X),
\]
and let \(\eta\ge0\). We say that \((\mathcal X,\rho)\) satisfies the
\((\rho_0,\eta)\)-chain property if, for every \(x,y\in\mathcal X\) with
\(\rho(x,y)>\rho_0\), there is a chain
\[
p_0=x,\ p_1,\ldots,p_k=y
\]
such that
\[
\rho(p_i,p_{i+1})\le\rho_0
\qquad (0\le i<k),
\]
and
\[
\left|
\sum_{i=0}^{k-1}\rho(p_i,p_{i+1})-\rho(x,y)
\right|
\le\eta.
\]
\end{defi}

\begin{lemma}[Metric extension from local estimates]
\label{lem:metric-extension-local-estimates}
Let \((\mathcal X,\rho)\) be a finite metric space with
\[
0<\rho_0\le{\rm diam}(\mathcal X).
\]
Let \(\eta,\varepsilon\ge0\). Assume that \((\mathcal X,\rho)\) satisfies the
\((\rho_0,\eta)\)-chain property. Let
\(\mathcal P\subseteq\mathcal X\times\mathcal X\) be symmetric and suppose that
for every \((x,y)\in\mathcal P\) we are given an estimate
\(\widehat\rho(x,y)\) satisfying
\[
\left|\widehat\rho(x,y)-\rho(x,y)\right|\le\varepsilon.
\]
Assume also that
\[
\rho(x,y)\le\rho_0
\quad\Longrightarrow\quad
(x,y)\in\mathcal P.
\]
Define the weighted graph \(G_{\mathcal P}\) on \(\mathcal X\) by joining
\((x,y)\in\mathcal P\) with edge weight
\[
\widehat\rho(x,y)+\varepsilon.
\]
Let \(\rho_{\rm sp}\) be the shortest-path metric on \(G_{\mathcal P}\). Then,
for every \(x,y\in\mathcal X\),
\[
\rho(x,y)
\le
\rho_{\rm sp}(x,y)
\le
\rho(x,y)+\eta+C\frac{{\rm diam}(\mathcal X)}{\rho_0}\varepsilon,
\]
where \(C>0\) is a universal constant.
\end{lemma}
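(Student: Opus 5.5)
The proof has two inequalities, a lower bound valid for all paths and an upper bound obtained from a good chain.

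\emph{Lower bound.} Every edge \((x,y)\) of \(G_{\mathcal P}\) has weight
\[
\widehat\rho(x,y)+\varepsilon \ge \rho(x,y).
\]
Along any path \(x=q_0,\dots,q_m=y\) in \(G_{\mathcal P}\), the triangle inequality then gives
\[
\text{total weight}\ \ge\ \sum_i\rho(q_i,q_{i+1})\ \ge\ \rho(x,y).
\]
Taking the infimum over paths yields \(\rho_{\rm sp}(x,y)\ge\rho(x,y)\). If \(x,y\) are not connected, \(\rho_{\rm sp}=\infty\), and the upper bound below rules this out.

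\emph{Upper bound, easy case.} If \(\rho(x,y)\le\rho_0\), then \((x,y)\in\mathcal P\), so
\[
\rho_{\rm sp}(x,y)\le\widehat\rho(x,y)+\varepsilon\le\rho(x,y)+2\varepsilon.
\]
This is within the claim, since \({\rm diam}(\mathcal X)/\rho_0\ge1\).

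\emph{Upper bound, general case.} Suppose \(\rho(x,y)>\rho_0\). The chain property provides \(p_0=x,\dots,p_k=y\) with every step \(\le\rho_0\) and total length \(\le\rho(x,y)+\eta\). Each step is an edge of \(G_{\mathcal P}\), so the path weight is at most
\[
\sum_i\bigl(\rho(p_i,p_{i+1})+2\varepsilon\bigr)\le\rho(x,y)+\eta+2k\varepsilon.
\]
It remains to bound \(k\).

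\emph{Controlling \(k\) (the main obstacle).} The chain property bounds total length, not the number of steps, so the chain could contain many tiny steps. The fix is to shorten it greedily. Starting from \(p_0\), jump to the furthest index \(j\) with \(\rho(p_i,p_j)\le\rho_0\). Shortcutting only decreases total length by the triangle inequality, and every new step is still \(\le\rho_0\), so the new pairs remain in \(\mathcal P\).

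In the shortened chain \(s_0,\dots,s_{k'}\), two consecutive steps cannot both be short. Indeed, if \(s_t\to s_{t+1}\to s_{t+2}\), maximality means some point beyond \(s_{t+1}\) was at distance \(>\rho_0\) from \(s_t\). Then
\[
\rho(s_t,s_{t+1})+\rho(s_{t+1},s_{t+2})>\rho_0,
\]
after possibly checking against the next original point. Pairing up consecutive steps gives
\[
\frac{k'-1}{2}\,\rho_0 < \text{length}\le\rho(x,y)+\eta.
\]
Hence
\[
k'\le 1+\frac{2(\rho(x,y)+\eta)}{\rho_0}.
\]

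Bounding \(\eta\) in this expression is harmless. Either assume \(\eta\le{\rm diam}(\mathcal X)\), or note that if \(\eta>{\rm diam}(\mathcal X)\) one may instead use a trivial bound. In any case \(\rho(x,y)\le{\rm diam}(\mathcal X)\), so with \(\eta\lesssim{\rm diam}(\mathcal X)\) we get \(k'\le C\,{\rm diam}(\mathcal X)/\rho_0\).

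If \(\eta\) is huge, the cleanest route is to cap it. The claimed bound then follows from a cruder argument, or by noting that a chain of length \(\rho(x,y)+\eta\) can be replaced by a greedy subchain whose length is at most \(\rho(x,y)+\eta\) and whose step count is controlled as above; the term \(2\eta\varepsilon/\rho_0\) is then absorbed by reading the constant carefully. I would handle this edge case last.

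Plugging \(k'\) into the general-case estimate gives
\[
\rho_{\rm sp}(x,y)\le\rho(x,y)+\eta+C\,\frac{{\rm diam}(\mathcal X)}{\rho_0}\,\varepsilon,
\]
which is the claim.
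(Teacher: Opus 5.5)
Your lower bound, the easy case, and the greedy shortcutting are correct. Together they supply the step count that the paper leaves implicit: its proof only cites an external lemma and asserts that the accumulated errors are controlled by ${\rm diam}(\mathcal X)/\rho_0$. If you take $j$ to be the \emph{largest} index with $\rho(p_i,p_j)\le\rho_0$, the consecutive-steps claim is immediate: $s_{t+2}$ lies beyond $s_{t+1}$, so $\rho(s_t,s_{t+2})>\rho_0$, and the triangle inequality finishes. What your argument actually proves is
\[
\rho_{\rm sp}(x,y)\le\rho(x,y)+\eta+2\varepsilon+\frac{4\varepsilon\,(\rho(x,y)+\eta)}{\rho_0},
\]
which yields the lemma with $C=10$ whenever $\eta\le{\rm diam}(\mathcal X)$.

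The gap is your final paragraph. The term $\eta\varepsilon/\rho_0$ cannot be absorbed, and no cruder argument will help, because the statement is false for $\eta\gg{\rm diam}(\mathcal X)$. Take $\mathcal X=\{p_0,\dots,p_N\}$ with $\rho(p_i,p_j)=\rho_0$ if $|i-j|=1$ and $\rho(p_i,p_j)=\tfrac32\rho_0$ if $|i-j|\ge2$. This is a metric because any two nonzero distances sum to at least $2\rho_0$, and ${\rm diam}(\mathcal X)=\tfrac32\rho_0$. The only steps of length at most $\rho_0$ join consecutive points, so the chain property holds with $\eta=(N-\tfrac32)\rho_0$. Let $\mathcal P$ consist of the diagonal and the consecutive pairs, with $\widehat\rho(p_i,p_{i+1})=\rho_0+\varepsilon$. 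Then $G_{\mathcal P}$ is a path and $\rho_{\rm sp}(p_0,p_N)=N\rho_0+2N\varepsilon$. The claimed bound equals $N\rho_0+\tfrac32C\varepsilon$, so it fails once $N>3C/4$.

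So the lemma needs an extra hypothesis such as $\eta\le{\rm diam}(\mathcal X)$, or an error term $C({\rm diam}(\mathcal X)+\eta)\varepsilon/\rho_0$. The paper's sketch carries the same implicit restriction. It is harmless in Corollary~\ref{cor:global-distance-by-chaining}, where $M$ is geodesic and $\eta$ is small. You should state this restriction rather than claim the large-$\eta$ case.
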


\begin{proof}
This is the standard shortest-path extension from local metric estimates; see
\cite[Lemma~8.1]{HuangJiradilokMossel2025RiemannianMetrics}. The added \(\varepsilon\) in each edge weight makes
every edge length an upper bound on the true distance, while the chain property
supplies a path whose accumulated local errors are controlled by
\({\rm diam}(\mathcal X)/\rho_0\).
\end{proof}

\begin{cor}[Global distances by chaining local estimates]
\label{cor:global-distance-by-chaining}
Assume the conclusions of
Corollary~\ref{cor:local-distance-recovery-known-link}, and suppose that
\((M, {\rm d})\) satisfies the \((r_\rmp/2,\eta)\)-\chainproperty{}.  Define a weighted graph on
\(\bfV\) with edge set \(\mathcal P_{\loc}\) and edge weights
\[
\widehat d_{\loc}(v,w)+Cr,
\]
where \(C\) is the constant from
Corollary~\ref{cor:local-distance-recovery-known-link}.  Let
\(\widehat d_{\rm sp}\) be the induced shortest-path metric.  Then
\[
\D{X_v}{X_w}
\le
\widehat d_{\rm sp}(v,w)
\le
\D{X_v}{X_w}
+
\eta
+
C\frac{\operatorname{diam}(M)}{r_\rmp}r
\qquad (v,w\in\bfV).
\]
In particular, if \(M\) is a geodesic space and the sampled points are
sufficiently dense at scale \(r\), the chain-property error \(\eta\) is controlled by
the sampling resolution, and local distance recovery extends to global distance
recovery.
\end{cor}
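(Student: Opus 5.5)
The plan is to combine Corollary~\ref{cor:local-distance-recovery-known-link} with the metric extension Lemma~\ref{lem:metric-extension-local-estimates}, applied to the finite metric space $\mathcal X=\{X_v:v\in\bfV\}$ (as a multiset, or after identifying vertices with coincident latent points, which changes nothing since such pairs are certified and estimated with error $Cr$). We work on the event where Corollary~\ref{cor:local-distance-recovery-known-link} holds. That event gives a symmetric set $\mathcal P_{\loc}$ such that every pair with $\D{X_v}{X_w}\le r_\rmp/2$ belongs to it, with estimates of error at most $\varepsilon:=Cr$. If needed, we first symmetrize $\mathcal P_{\loc}$ and $\widehat d_{\loc}$ by taking, say, the minimum over the two orders; this preserves the error bound.

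\textbf{Lower bound.} Each edge of the graph has weight $\widehat d_{\loc}(v,w)+Cr\ge \D{X_v}{X_w}$. By the triangle inequality, every path therefore has length at least the true distance between its endpoints. This gives $\D{X_v}{X_w}\le\widehat d_{\rm sp}(v,w)$ directly.

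\textbf{Upper bound.} We invoke Lemma~\ref{lem:metric-extension-local-estimates} with $\rho_0=r_\rmp/2$. The one subtlety is that the chain property is assumed for $M$, but the lemma needs chains whose points lie in the sample $\mathcal X$.

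\textbf{Main obstacle: chains in $M$ versus chains in the sample.} This is the step I expect to be the main difficulty. I would handle it by snapping each chain point $p_i\in M$ to a sampled point within distance $\delta$. On the occupancy event $\Ept{\bfV}{r}$ (Lemma~\ref{lem:uniform-lower-occupancy}, which holds under the scale condition of Theorem~\ref{theor:main}), every ball of radius $r$ contains a sample, so we may take $\delta=r$.

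Snapping changes each consecutive distance by at most $2r$. Consecutive snapped points are then within $r_\rmp/2+2r$, which is slightly too large. I would fix this by applying the chain property at the slightly smaller scale $r_\rmp/2-2r$. Alternatively, I would note that Corollary~\ref{cor:local-distance-recovery-known-link} in fact certifies pairs up to a margin: rerunning its proof with threshold $\rmp(r_\rmp/2+4r)$ still fits within the window after shrinking $c_{\tref{theor:main}}$.

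We can also thin the chain to consecutive steps of length at least $r_\rmp/4$, by merging short steps via the triangle inequality. The number of steps is then $O(\operatorname{diam}(M)/r_\rmp)$, so the accumulated snapping error is $O(\operatorname{diam}(M)r/r_\rmp)$.

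\textbf{Final bound.} Summing the per-edge errors $2\varepsilon$ over $O(\operatorname{diam}(M)/r_\rmp)$ edges, and adding $\eta$ plus the snapping error, gives the stated bound $\eta+C\operatorname{diam}(M)r/r_\rmp$ after enlarging $C$. Pairs with $\D{X_v}{X_w}\le r_\rmp/2$ are handled directly by a single edge.

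\textbf{Geodesic case.} For the closing remark, a geodesic space satisfies the chain property with $\eta=0$, by subdividing a geodesic into pieces of length at most $r_\rmp/2$. Only the sampling-density snapping error then remains, and it is already absorbed into the $Cr\operatorname{diam}(M)/r_\rmp$ term.
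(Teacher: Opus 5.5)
Your plan follows the paper's proof. Both apply Lemma~\ref{lem:metric-extension-local-estimates} to the sampled points and transfer the chain property from $M$ to $X_{\bfV}$ by snapping chain points to nearby samples on an occupancy event, paying $O(r)$ per step over $O(\operatorname{diam}(M)/r_\rmp)$ steps. The paper's proof stops at asserting that the samples inherit the chain property. You correctly notice that snapping inflates step lengths to about $r_\rmp/2+2r$, while Corollary~\ref{cor:local-distance-recovery-known-link} only guarantees certification of pairs at distance at most $r_\rmp/2$.

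This is a genuine issue, not a technicality. Take $M=[0,s]\cup\{s+r_\rmp/2\}\subset\mathbb R$ with $s\le r_\rmp/2$; it has the $(r_\rmp/2,0)$-chain property. Almost surely no sample in $[0,s]$ lies within $r_\rmp/2$ of the isolated point. So a $\mathcal P_{\loc}$ consisting only of pairs at distance $\le r_\rmp/2$ satisfies the stated conclusions of that corollary, yet yields a disconnected graph.

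Of your two repairs, only the second is valid.

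\begin{itemize}
\item Invoking the chain property at scale $r_\rmp/2-2r$ is not licensed by the hypothesis. A $(\rho_0,\eta)$-chain property does not imply one at a smaller scale: in $\{0,r_\rmp/2,r_\rmp\}\subset\mathbb R$, every chain from $0$ to $r_\rmp$ has a step of length $r_\rmp/2$. This repair works only in special cases such as geodesic $M$.
\item Lowering the threshold in the proof of Corollary~\ref{cor:local-distance-recovery-known-link} to $\rmp(r_\rmp/2+C_1r)-\varepsilon_{\rm pair}$ does work. Every pair at distance $\le r_\rmp/2+C_1r$ is then certified, while certified pairs still satisfy $\D{X_v}{X_w}+2r_{\rm cl}\le r_\rmp$ once $c_{\tref{theor:main}}$ is shrunk. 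This supplies exactly the margin the argument needs. Present it as a modification of that corollary, since it relies on how $\mathcal P_{\loc}$ is constructed rather than only on its stated conclusions.
\end{itemize}

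Two smaller fixes are also needed.

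\begin{itemize}
\item Your ``merge short steps'' thinning can create steps longer than the allowed scale. Instead, thin the chain in $M$ before snapping by shortcutting $p_i\to p_{i+2}$ whenever $\D{p_i}{p_{i+2}}\le r_\rmp/2$. Afterwards any two consecutive steps total more than $r_\rmp/2$, which bounds the number of steps.
\item Snap at radius $6r$ rather than $r$, because the main scale condition controls $\phi(r)$, not the $\phi(r/6)$ required by the occupancy event at radius $r$. Only constants change.
\end{itemize}
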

\begin{proof}
One thing that is subtle besides directly applying the above lemma together with Corollary~\ref{cor:local-distance-recovery-known-link} is to justify that the sampled points $X_{\bfV}$ also have the chain property. This is implied by $\Ept{\bfV}(r)$, which holds with high probability under the scale condition of Theorem~\ref{theor:main}, after adjusting constants. This automatically implies the chain property for the sampled points with $\eta$ increased by $C\tfrac{{\rm diam}(M)}{r_\rmp} r$, because every point along the chain has a sampled point within distance \(r\).
\end{proof}

\section{Auxiliary probability tools and proofs}
\label{app:auxiliary-probability-proofs}

\begin{lemma}[Chernoff inequality for \(0\)-\(1\) random variables]
\label{lem:chernoff-01-tform}
Let \(X_1,\dots,X_N\) be independent random variables taking values in
\(\{0,1\}\), and set
\[
S_N:=\sum_{i=1}^N X_i,
\qquad
\mu:=\mathbb E S_N.
\]
Then for every \(t\ge0\),
\[
\mathbb P(S_N\ge \mu+t)
\le
\exp\!\left(-\frac{t^2}{2\mu+t}\right),
\]
and for every \(0\le t\le\mu\),
\[
\mathbb P(S_N\le \mu-t)
\le
\exp\!\left(-\frac{t^2}{2\mu}\right).
\]
\end{lemma}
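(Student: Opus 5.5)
The plan is to go through the moment generating function in the standard Chernoff way, starting with the upper tail. For $\lambda>0$, Markov's inequality gives $\mathbb P(S_N\ge\mu+t)\le e^{-\lambda(\mu+t)}\prod_i\mathbb E e^{\lambda X_i}$. Write $p_i:=\mathbb E X_i$. Each factor is $1+p_i(e^\lambda-1)\le\exp(p_i(e^\lambda-1))$, so the product is at most $\exp(\mu(e^\lambda-1))$. Choosing $\lambda=\log(1+t/\mu)$ then yields the classical bound
\[
\mathbb P(S_N\ge\mu+t)\le\exp\!\bigl(-\mu\,h(t/\mu)\bigr),\qquad h(x):=(1+x)\log(1+x)-x.
\]
The case $\mu=0$ is trivial, since then $S_N=0$ almost surely.

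The remaining step is the elementary inequality $h(x)\ge \frac{x^2}{2+x}$ for $x\ge0$. Multiplying through by $\mu$ with $x=t/\mu$ turns $\mu h(t/\mu)\ge t^2/(2\mu+t)$ into exactly the stated form. To prove the inequality, set $g(x):=h(x)-x^2/(2+x)$ and note $g(0)=0$. Compute $g'(x)=\log(1+x)-\frac{x^2+4x}{(2+x)^2}$; then $g'(0)=0$ and
\[
g''(x)=\frac1{1+x}-\frac{8}{(2+x)^3}.
\]
This is nonnegative because $(2+x)^3\ge 8(1+x)$, which expands to $x^3+6x^2+4x\ge0$. So $g'\ge0$, hence $g\ge0$. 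This calculus check is the only step with any content, and it is routine.

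For the lower tail, apply Markov to $e^{-\lambda S_N}$ with $\lambda>0$. The same estimate gives $\mathbb E e^{-\lambda S_N}\le\exp(\mu(e^{-\lambda}-1))$. For $0\le t<\mu$, choose $\lambda=-\log(1-t/\mu)$ to obtain $\exp(-\mu\,h(-t/\mu))$, with $h$ as above extended to $(-1,0]$. It then suffices to show $h(-y)\ge y^2/2$ for $y\in[0,1)$. Let $k(y):=h(-y)-y^2/2=(1-y)\log(1-y)+y-y^2/2$. Then $k(0)=0$ and $k'(y)=-\log(1-y)-y\ge0$.

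The endpoint $t=\mu$ follows by continuity, or directly from $\mathbb P(S_N=0)=\prod_i(1-p_i)\le e^{-\mu}\le e^{-\mu/2}$.
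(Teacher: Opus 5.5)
Your proof is correct. The paper does not prove this lemma itself; it only cites \cite{Ver18}, where Chernoff's inequality is derived by the same moment-generating-function argument you use (Markov's inequality applied to $e^{\pm\lambda S_N}$, the bound $1+x\le e^x$, and the optimal choice of $\lambda$). Your calculus checks $h(x)\ge x^2/(2+x)$ and $h(-y)\ge y^2/2$ are exactly what converts that textbook bound into the constants stated here, and both are verified correctly.
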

This standard form may be found, for instance, in \cite{Ver18}.

\begin{rem}
\label{rem:chernoff-form}
Suppose we want an upper-tail failure probability at most \(\exp(-L)\),
where \(L>0\). A convenient sufficient condition is
\[
\mathbb P\left(S_N\ge(\sqrt{\mu}+\sqrt L)^2\right)\le \exp(-L).
\]
Indeed, the upper-tail Chernoff bound shows that it is enough to take
\[
t\ge \frac{L+\sqrt{L^2+8\mu L}}{2},
\]
and \(t=L+2\sqrt{\mu L}\) is a valid choice.
\end{rem}

\begin{lemma}[Bernstein's inequality]
\label{lem:bernstein}
Let \(Y_1,\dots,Y_N\) be independent mean-zero random variables such that
\(|Y_i|\le M\) almost surely for all \(i\), and let
\[
\sigma^2:=\sum_{i=1}^N\operatorname{Var}(Y_i).
\]
Then for every \(t>0\),
\[
\mathbb P\!\left(\left|\sum_{i=1}^N Y_i\right|>t\right)
\le
2\exp\!\left(-\frac{t^2}{2(\sigma^2+Mt/3)}\right).
\]
\end{lemma}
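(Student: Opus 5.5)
Bernstein's inequality is classical, so I would derive it by the Chernoff/exponential-moment method rather than from any earlier result in the paper. First, for a single bounded centered variable \(Y\) with \(|Y|\le M\) and variance \(v\), bound the moment generating function. Expanding \(e^{\lambda Y}\), the linear term vanishes, so
\[
\mathbb E e^{\lambda Y}=1+\sum_{j\ge2}\frac{\lambda^j\mathbb E Y^j}{j!}
\le 1+v\sum_{j\ge2}\frac{\lambda^jM^{j-2}}{j!},
\]
using \(|\mathbb E Y^j|\le M^{j-2}v\). The inequality \(j!\ge 2\cdot3^{j-2}\) then bounds the tail series by a geometric series. For \(0\le\lambda<3/M\) this gives
\[
\mathbb E e^{\lambda Y}\le 1+\frac{v\lambda^2/2}{1-\lambda M/3}\le\exp\!\Big(\frac{v\lambda^2/2}{1-\lambda M/3}\Big).
\]

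Next, use independence to multiply these bounds, so that the moment generating function of \(S=\sum_i Y_i\) is at most \(\exp\big(\sigma^2\lambda^2/(2(1-\lambda M/3))\big)\). Markov's inequality applied to \(e^{\lambda S}\) then yields
\[
\mathbb P(S>t)\le\exp\!\Big(-\lambda t+\frac{\sigma^2\lambda^2}{2(1-\lambda M/3)}\Big).
\]
Rather than optimizing exactly, I would plug in \(\lambda=t/(\sigma^2+Mt/3)\), which automatically satisfies \(\lambda<3/M\). With this choice \(1-\lambda M/3=\sigma^2/(\sigma^2+Mt/3)\), so the quadratic term equals \(t^2/(2(\sigma^2+Mt/3))\). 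The exponent therefore collapses to \(-t^2/(2(\sigma^2+Mt/3))\).

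Finally, apply the same bound to the variables \(-Y_i\), which satisfy identical hypotheses, to control \(\mathbb P(S<-t)\). A union bound over the two tails produces the factor \(2\).

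The only delicate step is the moment-generating-function estimate, specifically the factorial inequality \(j!\ge2\cdot3^{j-2}\). This is immediate by induction, since the ratio of consecutive terms is \(j\ge3\). Everything after that is routine algebra.
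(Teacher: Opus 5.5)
Your proof is correct. It is the standard exponential-moment argument: bound $\mathbb{E}e^{\lambda Y_i}\le\exp\bigl(\tfrac{\lambda^2\operatorname{Var}(Y_i)/2}{1-\lambda M/3}\bigr)$ using $j!\ge 2\cdot 3^{j-2}$, then take $\lambda=t/(\sigma^2+Mt/3)$. The paper does not prove this lemma; it only cites it as a standard fact (Vershynin's textbook), and your argument is the standard one.

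One small fix: when $\sigma^2=0$ your choice gives $\lambda=3/M$, not $\lambda<3/M$, so the geometric-series bound is unavailable there. Handle that case separately; it is trivial, since then every $Y_i=0$ almost surely and the probability is zero.
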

This is the standard Bernstein inequality; see, for instance,
\cite[Theorem 2.9.5]{Ver18}.

\begin{lemma}[Net cardinality from lower regularity]
\label{lem:appendix-net-cardinality}
Let \((M,{\rm d},\mu)\) satisfy \lowerphiregularity{} up to scale \(r_\mu\). If
\(\delta\in(0,r_\mu]\) and \(\mathcal N\subset M\) is a maximal
\(\delta\)-separated set, then \(\mathcal N\) is a \(\delta\)-net of \(M\) and
\[
|\mathcal N|\le \frac1{\phi(\delta/2)}.
\]
\end{lemma}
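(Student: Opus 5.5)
Both claims come from one standard argument: build a maximal separated set greedily, and use lower \(\phi\)-regularity to turn separation into a count of disjoint balls, each carrying mass at least \(\phi(\delta/2)\).

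The net property follows from maximality alone. If some \(x\in M\) had \(\D{x}{p}\ge\delta\) for every \(p\in\mathcal N\), then \(\mathcal N\cup\{x\}\) would still be \(\delta\)-separated, contradicting maximality. Hence every \(x\in M\) has \(\D{x}{p}<\delta\) for some \(p\in\mathcal N\), so \(\mathcal N\) is a \(\delta\)-net. Here "\(\delta\)-separated" means pairwise distances at least \(\delta\); if the convention is strict separation, the same argument gives \(\D{x}{p}\le\delta\), which is all that the net definition needs.

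For the cardinality bound, consider the open balls \(B(p,\delta/2)\), \(p\in\mathcal N\). They are pairwise disjoint. Indeed, if \(y\in B(p,\delta/2)\cap B(p',\delta/2)\) with \(p\neq p'\), the triangle inequality gives \(\D{p}{p'}<\delta\), contradicting separation. By Assumption~\ref{assump:support-convention} we have \(M=\operatorname{supp}(\mu)\), so every \(p\in\mathcal N\) is a support point. Moreover \(\delta/2\in(0,r_\mu]\), so lower \(\phi\)-regularity (Definition~\ref{def:lower-phi-regularity}) gives \(\mu(B(p,\delta/2))\ge\phi(\delta/2)\).

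Summing over \(\mathcal N\) and using disjointness together with \(\mu(M)=1\) gives
\[
|\mathcal N|\,\phi(\delta/2)\le \mu\Bigl(\bigcup_{p\in\mathcal N}B(p,\delta/2)\Bigr)\le 1 .
\]
If \(\phi(\delta/2)>0\), this is exactly \(|\mathcal N|\le 1/\phi(\delta/2)\). If \(\phi(\delta/2)=0\), the bound is trivially true, reading \(1/0\) as \(+\infty\). Finiteness of \(\mathcal N\) is not needed in advance: applying the displayed inequality to every finite subset of \(\mathcal N\) shows \(\mathcal N\) has at most \(1/\phi(\delta/2)\) elements, so it is finite.

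The one point needing care is that \(\mathcal N\) lies in the support, which is automatic under Assumption~\ref{assump:support-convention}. I expect no genuine obstacle beyond that.
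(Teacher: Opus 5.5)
Your proof is correct and is the same argument as the paper's: maximality gives the net property, and the pairwise disjoint open balls \(B(p,\delta/2)\), \(p\in\mathcal N\), each carry mass at least \(\phi(\delta/2)\), so summing against \(\mu(M)=1\) gives \(|\mathcal N|\le 1/\phi(\delta/2)\). Your extra remarks fill in details the paper leaves implicit: that points of \(\mathcal N\) are support points under the support convention, the degenerate case \(\phi(\delta/2)=0\), and finiteness of \(\mathcal N\) via its finite subsets.
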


\begin{proof}
Maximality gives the net property: otherwise a point at distance \(>\delta\)
from all points of \(\mathcal N\) could be added to \(\mathcal N\). If
\(\mathcal N=\{x_1,\ldots,x_N\}\), then the balls \(B(x_i,\delta/2)\) are
pairwise disjoint. Hence
\[
1=\mu(M)
\ge
\sum_{i=1}^N \mu(B(x_i,\delta/2))
\ge
N\phi(\delta/2),
\]
which proves the cardinality bound.
\end{proof}

\begin{proof}[Proof of Lemma~\ref{lem:uniform-lower-occupancy}]
Let \(\mathcal N\) be a maximal \(r/3\)-separated set. By
Lemma~\ref{lem:appendix-net-cardinality},
\[
|\mathcal N|\le \frac1{\phi(r/6)}.
\]
For \(x_i\in\mathcal N\), let
\[
S_i:=|\{v\in W:X_v\in B(x_i,r/3)\}|.
\]
Then \(S_i\) is binomial with mean at least \(n_\star\phi(r/3)\). The lower
Chernoff bound gives
\[
\mathbb P\left(S_i<\frac12 n_\star\phi(r/3)\right)
\le
\exp\left(-\frac{n_\star\phi(r/3)}8\right).
\]
Taking a union bound over \(\mathcal N\) and using
\(\phi(r/3)\ge\phi(r/6)\ge \Lambda \log n_\star/n_\star\), we get
\begin{align*}
\mathbb P\left(
\exists x_i\in\mathcal N:S_i<\frac12 n_\star\phi(r/3)
\right)
\le&
\frac1{\phi(r/6)}
\exp\left(-\frac{n_\star\phi(r/3)}8\right) \\
\le &
\frac{n_\star}{\Lambda\log n_\star}
\exp\left(-\frac{\Lambda}{8}\log(n_\star)\right)\\
\le &
\exp\left(
    \log(n_\star) - \tfrac{1}{8}\Lambda \log(n_\star)
\right)
\le
\exp\left(-\frac{\Lambda}{16}\log(n_\star)\right)\,,
\end{align*}
provided \(\Lambda\) is large enough than some universal constant.

On the complementary event, fix \(x\in M\). Since \(\mathcal N\) is an
\(r/3\)-net, choose \(x_i\in\mathcal N\) with \(\D{x}{x_i}\le r/3\). Then
\[
B(x_i,r/3)\subseteq B(x,r),
\]
because the balls are open. Therefore
\[
|\{v\in W:X_v\in B(x,r)\}|
\ge
S_i
\ge
\frac12 n_\star\phi(r/3).
\]
This is exactly \(\Ept{W}{r}\).
\end{proof}

\begin{proof}[Proof of Lemma~\ref{lem:fixed-v-navigation}]
Fix \(X_U=x_U\) and \(X_v=x_v\), and write \(m:=|U|\),
\[
a_u:=\rmp(\D{x_u}{x_v}),
\qquad
\xi_u:=\widetilde Z_{u,v}-a_u,
\qquad
B_u:=B_{u,v}.
\]
Then the \(B_u\)'s are independent Bernoulli\((\sp)\), the \(\xi_u\)'s are
independent centered \(\Ksg\)-subgaussian variables, and these two families are
independent. Also \(|a_u|\le M_\rmp\). Since
\[
Z_{u,v}=B_u(a_u+\xi_u),
\]
we have
\[
\cn{U}{v}-\acn{U}{v}
=
\frac1{\sp m}\sum_{u\in U}(Z_{u,v}-\sp a_u)
=T_1+T_2,
\]
where
\[
T_1:=\frac1{\sp m}\sum_{u\in U}(B_u-\sp)a_u,
\qquad
T_2:=\frac1{\sp m}\sum_{u\in U}B_u\xi_u.
\]

For \(T_1\), set \(Y_u=(B_u-\sp)a_u\). Then the \(Y_u\)'s are independent,
mean-zero, \(|Y_u|\le M_\rmp\), and
\[
\sum_{u\in U}\operatorname{Var}(Y_u)\le \sp m M_\rmp^2.
\]
Bernstein's inequality with \(s=t\sp m/2\) gives
\[
\mathbb P\left(|T_1|>\frac t2\mid X_U=x_U,X_v=x_v\right)
\le
2\exp\bigl(-c_1\sp m\min\{t^2,1\}\bigr).
\]

For \(T_2\), let \(N_B:=\sum_{u\in U}B_u\). By Chernoff,
\[
\mathbb P(N_B>2\sp m\mid X_U=x_U,X_v=x_v)
\le
\exp(-c\sp m).
\]
On the event \(N_B\le2\sp m\), conditioning on the \(B_u\)'s, the sum
\(\sum_u B_u\xi_u\) is subgaussian with parameter at most
\(\Ksg\sqrt{2\sp m}\). Thus, again with \(s=t\sp m/2\),
\[
\mathbb P\left(|T_2|>\frac t2\mid X_U=x_U,X_v=x_v\right)
\le
3\exp\bigl(-c_2\sp m\min\{t^2,1\}\bigr).
\]
Combining the two bounds and adjusting constants proves the claim.
\end{proof}

\begin{proof}[Proof of Lemma~\ref{lem:uniform-navigation-event}]
For each fixed \(v\in V\), Lemma~\ref{lem:fixed-v-navigation} gives
\[
\mathbb P\left(
\left|\cn{U}{v}-\acn{U}{v}\right|>\epsnav{U}{n_\star}
\ \middle|\ X_U=x_U,\ X_V=x_V
\right)
\le
C_{\rm link}
\exp\left(
-c_{\rm link}\sp |U|
\min\{\epsnav{U}{n_\star}^2,1\}
\right).
\]
Since \(\sp |U|\ge \Lambda \log n_\star\), we have
\(\epsnav{U}{n_\star}\le1\), and the exponent is at most
\(-c_{\rm link}\Lambda \log n_\star\). A union bound over
\(|V|\le n_\star\) gives
\[
\mathbb P\left(
\Enavi{U}{V}{n_\star}^{\,c}
\ \middle|\ X_U=x_U,X_V=x_V
\right)
\le
e^{- \tfrac{1}{2}c_{\rm link}\Lambda \log n_\star}\,,
\]
provided that $\Lambda$ is larger than some universal constants depending on \(C_{\rm link}\) and \(c_{\rm link}\).
\end{proof}

\begin{proof}[Proof of Lemma~\ref{lem:uniform-pair-average}]
Fix a realization \(X_W=x_W\). Let \(U_1,U_2\subseteq W\) have
\[
|U_1|=a,\qquad |U_2|=b,\qquad a,b\ge m.
\]
For each unordered pair \(e=\{u,v\}\subseteq W\), define
\[
w_e:=
\mathbf 1_{\{u\in U_1,\ v\in U_2\}}
+
\mathbf 1_{\{v\in U_1,\ u\in U_2\}}.
\]
Then \(w_e\in\{0,1,2\}\), and
\[
D:=\sum_{e\subseteq W}w_e
=|U_1||U_2|-|U_1\cap U_2|
=|\mathcal D(U_1,U_2)|.
\]
For \(e=\{u,v\}\), write
\[
a_e:=\rmp(\D{x_u}{x_v}),
\qquad
\xi_e:=\widetilde Z_{u,v}-a_e,
\qquad
B_e:=B_{u,v}.
\]
As in the fixed-vertex proof,
\[
\pav{U_1,U_2}-\apav{U_1,U_2}
=
\frac1{\sp D}\sum_{e\subseteq W}w_e(Z_e-\sp a_e)
=T_1+T_2,
\]
where
\[
T_1:=\frac1{\sp D}\sum_{e\subseteq W}w_e(B_e-\sp)a_e,
\qquad
T_2:=\frac1{\sp D}\sum_{e\subseteq W}w_eB_e\xi_e.
\]

For \(T_1\), set \(Y_e=w_e(B_e-\sp)a_e\). Then
\[
|Y_e|\le2M_\rmp,
\qquad
\sum_{e\subseteq W}\operatorname{Var}(Y_e)
\le
\sp M_\rmp^2\sum_{e\subseteq W}w_e^2
\le
2\sp D M_\rmp^2.
\]
Bernstein's inequality with \(s=\lambda\sp D/2\) gives
\[
\mathbb P\left(|T_1|>\frac\lambda2\mid X_W=x_W\right)
\le
2\exp(-c_1\lambda^2\sp D).
\]

For \(T_2\), let \(W_B:=\sum_{e\subseteq W}w_eB_e\). Since
\(\mathbb E[W_B\mid X_W=x_W]=\sp D\) and \(0\le w_eB_e\le2\), Bernstein's
inequality gives
\[
\mathbb P(W_B>2\sp D\mid X_W=x_W)
\le
\exp(-c_2\sp D).
\]
On the event \(W_B\le2\sp D\), conditioning on the \(B_e\)'s,
\[
\sum_{e\subseteq W}w_eB_e\xi_e
\]
is subgaussian with parameter at most
\[
\Ksg\left(\sum_{e\subseteq W}w_e^2B_e\right)^{1/2}
\le
2\Ksg\sqrt{\sp D}.
\]
Therefore
\[
\mathbb P\left(|T_2|>\frac\lambda2\mid X_W=x_W\right)
\le
3\exp(-c_3\lambda^2\sp D).
\]
Combining the two bounds,
\[
\mathbb P\left(
\bigl|\pav{U_1,U_2}-\apav{U_1,U_2}\bigr|>\lambda
\ \middle|\ X_W=x_W
\right)
\le
4\exp(-c_4\lambda^2\sp D).
\]

It remains to union-bound over \(U_1,U_2\). Since \(m\ge2\),
\[
D=ab-|U_1\cap U_2|\ge ab-\min\{a,b\}\ge\frac{ab}{2}.
\]
For \(\alpha=a/n_\star\) and \(\beta=b/n_\star\), the number of pairs
\((U_1,U_2)\) with these cardinalities is at most
\[
\binom{n_\star}{a}\binom{n_\star}{b}
\le
\exp\left((\alpha+\beta)n_\star\log(e/\varphi)\right),
\]
because \(\alpha,\beta\ge\varphi\). Also
\(\alpha+\beta\le2\alpha\beta/\varphi\). Thus, if
\[
\sp n_\star\ge \frac{C\log(e/\varphi)}{\varphi\lambda^2}
\]
with \(C\) large enough, then each cardinality class contributes at most
\[
4\exp(-c_5\lambda^2\sp\alpha\beta n_\star^2).
\]
Summing over at most \(n_\star^2\) cardinality pairs and using
\(\alpha\beta\ge\varphi^2\), we obtain
\[
\mathbb P\left(
\Enn{W}{\lambda}{m}^{\,c}
\ \middle|\ X_W=x_W
\right)
\le
4n_\star^2
\exp(-c_5\lambda^2\sp\varphi^2n_\star^2).
\]
The assumed lower bound on \(\sp n_\star\), together with \(m\ge2\), implies
that the right-hand side is at most
\[
\exp\{-c\varphi n_\star\log(e/\varphi)\}
\]
after increasing \(C\) and decreasing \(c\). This proves the lemma.
\end{proof}

\begin{proof}[High-probability consequence of Lemma~\ref{lem:uniform-pair-average}]
The failure probability in Lemma~\ref{lem:uniform-pair-average} is
\[
\exp\{-c\varphi n_\star\log(e/\varphi)\}.
\]
If \(\varphi n_\star\log(e/\varphi)\gg\log n_\star\), this is
\(n_\star^{-\omega(1)}\).
\end{proof}
\end{document}